\definecolor{webbrown}{rgb}{0.65, 0.16, 0.16}
\numberwithin{equation}{section}
\renewenvironment{abstract}{%
\small\begin{center}
\begin{minipage}{.8\textwidth}
}
{\par\noindent\end{minipage}\end{center}\vspace{3 em}}
\renewcommand\@maketitle{%
\hfill
\begin{center}\begin{minipage}{0.9 	\textwidth}
\centering
\vskip 6em
\let\footnote\thanks 
{\LARGE\bfseries \@title \par }
\vspace{1 em}
%\hrulefill
\vskip 1 em
{\large \@author \par}
\vspace{3.5 em}

\end{minipage}\end{center}
\par
}
\patchcmd{\section}{\normalfont}{\normalfont\Large}{}{}
\contentspage \hspace{1cm}}%% <filler-page-format>
\newcommand{\dd}{\mathrm{d}}
\DeclareMathOperator*{\Res}{Res}
\DeclareMathOperator{\Aut}{Aut}
\newcommand{\BM}{\big[\begin{smallmatrix}}
\newcommand{\EM}{\end{smallmatrix}\big]}
\theoremstyle{plain}
\newtheorem{thm}{Theorem}[section]
\newtheorem{prop}[thm]{Proposition}
\newtheorem{conj}[thm]{Conjecture}
\newtheorem{lem}[thm]{Lemma}
\newtheorem{cor}[thm]{Corollary}
\theoremstyle{definition}
\newtheorem{defn}[thm]{Definition}
\newtheorem{rem}[thm]{Remark}
\title{Whittaker vectors at finite energy scale, topological recursion and Hurwitz numbers}
\address[\textsc{Ga\"etan Borot}]{ \hfill\hfill \linebreak
	\textup{Institut für Mathematik und Institut für Physik, Humboldt-Universität zu Berlin, \hfill\hfill \linebreak Unter den Linden 6, 10099 Berlin, Germany.}  \hfill \hfill \linebreak  \textit{Visiting at:} \textup{Institut des Hautes \'Etudes Scientifiques \hfill\hfill \linebreak Le Bois-Marie, 35 route de Chartres, 91893 Bures-sur-Yvette, France.} \hfill\hfill \linebreak \textit{Email:} \textup{gaetan.borot@hu-berlin.de}}
\address[\textsc{Nitin Kumar Chidambaram}]{\hfill\hfill\linebreak
	\textup{School of Mathematics, University of Edinburgh \hfill\hfill\linebreak James Clerk Maxwell Building, Peter Guthrie Tait Rd, Edinburgh EH9 3FD, United Kingdom. \hfill \hfill \linebreak \textit{Email:} \textup{nitin.chidambaram@ed.ac.uk}}}
\address[\textsc{Giacomo Umer}]{ \hfill\hfill\linebreak
	\textup{Institut für Mathematik, Humboldt-Universität zu Berlin \hfill\hfill\linebreak Unter den Linden 6, 10099 Berlin, Germany. }\hfill\hfill\linebreak \textit{Email:} \textup{umergiac@hu-berlin.de}}
\author{Ga\"etan Borot, Nitin Kumar Chidambaram, Giacomo Umer}
	\thanks{blblba}
\begin{document}

\maketitle

\begin{abstract}
	We upgrade the results of Borot--Bouchard--Chidambaram--Creutzig \cite{BBCC} to show that the Gaiotto vector in $4d$ $\mathcal{N} = 2$ pure supersymmetric gauge theory admits an analytic continuation with respect to the energy scale (which can therefore be taken to be finite, instead of infinitesimal), and is computed by topological recursion on the (ramified) half Seiberg--Witten spectral curve. This has a number of interesting consequences for the Gaiotto vector: relations to intersection theory on $\overline{\mathcal{M}}_{g,n}$ in at least two different ways, Hurwitz numbers, quantum curves, and (almost complete) description of the correlators as analytic functions of $\hslash$ (instead of formal series). The same method is used to establish analogous results for the more general Whittaker vector constructed in the recent work of Chidambaram--Do{\l}{\k{e}}ga--Osuga \cite{CDO24}.
\end{abstract}

	\vspace{-.2 cm}
\date{}

\tableofcontents

\newpage

%–––––––––––––––––––––––––––––––––––––––––––%
\section{Motivation and overview}
\label{sec:intro}

\medskip

%–––––––––––––––––––––––––––––––––––––––––––%
Schiffman--Vasserot \cite{SV} and Maulik--Okounkov \cite{MO} pro\-ved that the equivariant cohomology (suitably interpreted) of the moduli space of rank $r$ torsion free sheaves on $\mathbb{P}^2$  framed at infinity (a.k.a. the moduli space of instantons) is a Verma module for the principal $\mathcal{W}(\mathfrak{gl}_r)$-algebra. We consider the Whittaker vector in  the   $\mathcal{W}(\mathfrak{gl}_r)$-algebra module $\mathcal V_\Lambda $ (where we adjoin a formal variable $\Lambda$ to the Verma module) satisfying
\begin{equation}
	\label{Whittakercondition} \forall (i,k) \in [r] \times \mathbb{Z}_{> 0} \qquad W_k^i\ket{\mathfrak{G}} = \delta_{i,r}\delta_{k,1}\Lambda^r \ket{\mathfrak{G}}.
\end{equation} 
The equivariant parameters $(\epsilon_1,\epsilon_2)$ for the action of the torus $\left(\mathbb C^*\right)^2$ on $ \mathbb{P}^2$ specify the level of $\mathcal{W}(\mathfrak{gl}_r)$, and
\[
\epsilon_1 = - \epsilon_2 = \hslash^{\frac{1}{2}}
\]
corresponds to the self-dual level. The equivariant parameters $\mathbf{Q} = (Q_1,\ldots,Q_r)$ for the action of the Cartan of $\mathfrak{gl}_r$ specify the highest weight $(\mathbf{Q}/\epsilon_1 - \text{Weyl vector})$. The moduli space of sheaves with second Chern class $d$ carries a fundamental class $\ket{1^d}$ in equivariant cohomology, and it is encoded precisely in the Whittaker vector characterised by \eqref{Whittakercondition}
\[
\ket{\mathfrak{G}} = \sum_{d \geq 0} \Lambda^{rd} \ket{1^d}.
\]
In particular, the Whittaker vector exists in $\mathcal V_\Lambda $ and is unique. These results provide a mathematical ground for the celebrated Alday--Gaiotto--Tachikawa conjecture \cite{AGT}, and $\ket{\mathfrak{G}}$ is sometimes called the Gaiotto vector. The Poincar\'e pairing in cohomology matches the Kac--Shapovalov form on the Verma module, and the squared-norm reconstructs the  (instanton part of the) Nekrasov partition function
\[
Z_{\text{Nek}} = \langle \mathfrak{G}\ket{\mathfrak{G}} = \sum_{d \geq 0} \Lambda^{2rd} \langle 1^{d}\ket{1^{d}},
\]
which counts instantons in $\mathcal{N} = 2$ pure supersymmetric gauge theory on $\mathbb{R}^4$ with gauge group $\text{U}_r$. The $\text{SU}_r$ theory can be retrieved from the $\text{U}_r$ theory by specialising to $\sum_{a = 1}^{r} Q_a = 0$ and removing an explicit $\text{U}_1$ contribution. We prefer to work with $\text{U}_r$. In particular, we work with the $\mathcal{W}(\mathfrak{gl}_r)$-algebra  whose generators include one of conformal weight $i = 1$ in \eqref{Whittakercondition} and we do not impose that the sum of the $Q_a$ vanishes.

In gauge theory the parameter $\Lambda$ is interpreted as an energy scale. The Gaiotto vector $\ket{\mathfrak{G}}$ and the Nekrasov partition function are thus proved to exist at least if $\Lambda$ is considered as a formal parameter near $0$. Based on the well-known free field presentation of $\mathcal{W}(\mathfrak{gl}_r)$,  the Whittaker constraints \eqref{Whittakercondition} were realised as Airy structures in \cite{BBCC}, permitting the reconstruction of the all-order $\hslash = (-\epsilon_1\epsilon_2) \rightarrow 0$ expansion of the Gaiotto vector via a topological recursion. In the self-dual case, this is a variant of the Chekhov--Eynard--Orantin topological recursion on the unramified spectral curve defined by
\[
\prod_{a = 1}^{r} \bigg(y - \frac{Q_a}{x}\bigg) = 0,
\]
see Section~\ref{S2}.

The purpose of this article is to extend these results in the self-dual case to understand the analytic properties of the Gaiotto vector in the parameter $\Lambda$. In particular we want to be able to set $\Lambda$ to a finite non-zero value. We show in \cref{th:main} (proved in Section~\ref{S3}) that the coefficients in the $\hslash$-expansion of the Gaiotto vector are the all-order series expansion of meromorphic multi-differentials on the algebraic curve
\begin{equation}
	\label{finiteLambdacurve}
	S_{\Lambda} : \qquad \prod_{a = 1}^{r} \bigg(y - \frac{Q_a}{x}\bigg) + \frac{(-\Lambda)^{r}}{x^{r + 1}} = 0.
\end{equation}
which depends analytically on $\Lambda \in \mathbb{C}^*$.   Moreover, we show that these multi-differentials are computed by the Chekhov--Eynard--Orantin topological recursion on the ramified spectral curve defined by $S_{\Lambda}$. We will refer to this  curve $S_\Lambda$ as the half Seiberg--Witten curve, following the terminology used in \cite{DHS09}.\footnote{We note that the curve $S_\Lambda$ is neither the UV curve nor the Seiberg--Witten curve. Rather, $S_\Lambda$ is a degenerate version of the Seiberg--Witten curve as we explain further in \cref{rem:SW}.} A word of warning: 
in the gauge theory literature, our Coulomb branch parameters $Q_1,\cdots, Q_r$ are usually denoted $\mathrm{a}_1,\ldots,\mathrm{a}_r$.

A technical novelty is that we show directly that the $\mathcal{W}$-constraints at finite $\Lambda$, although they do not form an Airy structure, can still be solved by topological recursion \emph{after analytic continuation}. While the Whittaker constraints for formal $\Lambda$ led to topological recursion on an unramified spectral curve $S_0$, at small enough non-zero $\Lambda$ they become $\mathcal{W}$-constraints around $x = \infty$ on a ramified spectral curve $S_{\Lambda}$ which imply $(r -1)$ copies of Virasoro constraints (one copy at each ramification point). The latter form an Airy structure and its solution is precisely given by topological recursion on $S_{\Lambda}$. As the family $S_{\Lambda}$ depends analytically on $\Lambda$, we can then take $\Lambda$ to be any value in $\mathbb{C}^*$. We comment more precisely on the role of analytic continuation in Section~\ref{S2ana}.

The method of our proof can be adapted without difficulty for the other gauge groups treated in \cite{BBCC}. The non self-dual case, more precisely for $\alpha = \epsilon_1 + \epsilon_2$ finite and generic while $\hslash = - \epsilon_1\epsilon_2$ is kept formal, should also be governed by topological recursion where the spectral curve \eqref{finiteLambdacurve} is replaced by the D-module on $\mathbb{P}^1$ generated by
\begin{equation}
	\label{Dmodule}
	\left((\epsilon_1 + \epsilon_2)\partial_x - \frac{Q_1 - (\epsilon_1 + \epsilon_2)}{x}\right) \cdots \left((\epsilon_1 + \epsilon_2)\partial_x - \frac{Q_r - (\epsilon_1 + \epsilon_2)}{x}\right) + \frac{(-\Lambda)^r}{x^{r + 1}}.
\end{equation} 
This was established for formal $\Lambda$ in \cite{BBCC}, which involves the D-module \eqref{Dmodule} where $\Lambda$ is set to $0$. The study of finite $\Lambda$ for the non self-dual case is technically more demanding as we have to deal with solutions of finite-order ODEs instead of meromorphic functions on an algebraic curve. This is left for the future.

This technical novelty is interesting beyond gauge theory: it allows understanding how (simpler) constraints at the ramification points can arise from (more complicated) constraints at $\infty$ in various problems of enumerative geometry. Recently, Chidambaram, Do{\l}{\k{e}}ga and Osuga  constructed another Whittaker vector for $\mathcal{W}(\mathfrak{gl}_r)$ defined for formal $\Lambda$ (we call it the CDO vector) that encodes $b$-Hurwitz numbers, which are counts of so-called generalised branched covers of the $2$-sphere  \cite{CD22}. Generalised branched covers allow for non-orientable coverings, and the count of orientable branched covers, i.e. classical Hurwitz theory, corresponds to $b = 0$ and to the self-dual level for $\mathcal{W}(\mathfrak{gl}_r)$. In this case, \cref{th:main2} shows the existence of analytic continuation to $\Lambda \in \mathbb{C}^*$ and the computation of these Hurwitz numbers by the topological recursion on the spectral curve
\[
\prod_{a = 1}^{r} \left(\frac{P_a}{x} + y\right) + \frac{1}{\Lambda^r} \prod_{a = 1}^{r -1} \left(\frac{Q_a}{x} - y\right) = 0.
\]
This result was announced as \cite[Theorem 5.1]{CDO24} and we prove it in Section~\ref{S4}. It was used in \cite{CDO24} to give an alternative proof of the recent celebrated result of \cite{ACEH3,BDBKS} that topological recursion computes rationally weighted classical Hurwitz numbers.

For both the Gaiotto and the CDO vector, the added value of having established topological recursion on ramified spectral curves is that we can benefit from a rich and well-developed theory to derive several remarkable consequences. This is discussed in \cref{S5}. First, it allows us to give several representations of these vectors in terms of intersection numbers on $\overline{\mathcal{M}}_{g,n}$ (\cref{S51}), and establish relations to Hurwitz theory (\cref{S5H} with \cref{cor:HforG} and \cref{cor:HforZ}). In particular, for $r = 2$ the Gaiotto vector is very explicitly expressed in terms of intersection indices of triple Hodge classes, or of the deformed Theta class, see \cref{pr:intGai2}. These relations can be considered as a multifold interpretation of 4d supersymmetric gauge theory in terms of curve counting; as we comment in Section~\ref{SGT}, it is different in nature from the 2d Yang--Mills/Hurwitz theory correspondence studied by Gross and Taylor \cite{GrossTaylor,GrossTaylor2} and  more recently Novak \cite{GTNovak}. Second, we derive in \cref{pr:QGai} and \cref{pr:CDOQ} the quantum curves associated to those Whittaker vectors, and we discuss in detail the construction of a basis of analytic solutions. The only step which we do not complete is the analytic description of the connection coefficients, see Remark~\ref{connecrem}. Third, we obtain determinantal formulae for the correlation functions in Proposition~\ref{bispiq}, with kernels given in terms of the previously discussed bases of functions in Proposition~\ref{bispiq2}. Fourth, this analysis and the relation to gauge theory leads us in \cref{S54} to formulate conjectures for the topological recursion free energies of the half Seiberg--Witten and CDO spectral curves. These conjectures have been proved by Hock \cite{Hockproo} after the first version of this article was released.

\medskip

\subsubsection*{Notations}

\medskip

For any positive integer $i$, we use the notation $[i] := \{1,2,\ldots, i\}$. We also use the notation $z_{[i]}$ to denote the set $\{z_1, z_2, \ldots, z_i\}$. The symbol $\sqcup$ stands for a disjoint union. The algebra of formal Laurent series in a variable, say  $ \hslash$, is denoted $\mathbb{C}(\!(\hslash)\!)$.

\medskip

\subsubsection*{Acknowledgements}

\medskip

We thank the anonymous referee for their careful reading of the paper and the suggested improvements. We thank V.~Bouchard and T.~Creutzig for discussions on related topics, S.~Shadrin for his remarks on integrability and determinantal formulae, V.~Fantini for remarks on asymptotics of solutions of generalised hypergeometric differential equations, A.~Giacchetto and D.~Scazzuso for bringing relevant references to our attention, and R.~Kramer for his remark on the proof of the main theorem. G.B. thanks the IH\'ES for hospitality and excellent working conditions allowing the completion of this project. N.K.C. acknowledges the support of the ERC Starting Grant 948885, and the Royal Society University Research Fellowship. G.U. thanks the Berlin Mathematical School for financial support.

\vspace{1cm}

\section{Background and main results}

\label{S2}

\vspace{0.5cm}

\subsection{\texorpdfstring{$\mathcal{W}$}{W}-algebras and formal Whittaker vectors}

\label{S12}

\medskip

\subsubsection{Heisenberg algebra and \texorpdfstring{$\mathcal{W}(\mathfrak{gl}_r)$}{W(glr)}-algebra}
\label{S121}
\medskip

Let $r \geq 2$. We work over the field of Laurent series in the parameter $\hslash$. Let us consider $r$ copies of the Heisenberg algebra, generated by $(J^a_k)_{k \in \mathbb{Z}}^{a \in [r]}$ with relations
\[
[J^a_k,J^b_l] = \hslash k \delta_{a,b}\delta_{k + l,0}.
\]
We introduce the $1$-form valued fields
\[
\mathcal{J}\big(\begin{smallmatrix} a \\ x \end{smallmatrix}\big) = \sum_{k \in \mathbb{Z}} \frac{J_k^a \dd x}{x^{k + 1}}.
\]
The principal $\mathcal{W}(\mathfrak{gl}_r)$-algebra at self-dual level is a vertex operator algebra freely and strongly generated by fields $\mathcal{W}^i(x)$ with $i \in [r]$ of conformal weight $i$. The Virasoro field is $\mathcal{W}^2(x)$. By convention these fields are forms of degree $i$:
\[
\mathcal{W}^i(x) = \sum_{k \in \mathbb{Z}} \frac{W_k^i (\dd x)^{i}}{x^{k + i}}.
\] 
The generating fields  $\mathcal{W}^i(x)$ can be realised in terms of the Heisenberg fields as elementary symmetric polynomials
\[
\mathcal{W}^i(x) = \sum_{1 \leq a_1 < \cdots < a_i \leq r} \prod_{j = 1}^{i} \mathcal{J}\big(\begin{smallmatrix} a_j \\ x\end{smallmatrix}\big),
\]
or equivalently  
\begin{equation}\label{eq:Wgens}
	\prod_{a = 1}^{r} \Big(u + \mathcal{J}\big(\begin{smallmatrix} a \\ x \end{smallmatrix}\big)\Big) = \sum_{i = 0}^{r} u^{r - i} \mathcal{W}^i(x),
\end{equation}
with the convention $\mathcal{W}^0(x) = 1$.

\medskip

\subsubsection{Gaiotto vector for \texorpdfstring{$\Lambda = O(\hslash^{1/2})$}{Lambda = O(h1/2)}}
\label{S122}

\medskip

We consider the Verma module for the Heisenberg and for the \texorpdfstring{$\mathcal{W}(\mathfrak{gl}_r)$}{W(glr)}-algebra
\[
\mathcal{V} = \mathbb{C}(\mathbf Q)[T]\big[\!\big[(J_{-k}^{a})_{k \in \mathbb{Z}_{> 0}}^{a \in [r]}\big]\!\big](\!(\hslash^{\frac{1}{2}})\!)\,,
\] where we let $J_0^a$ act by the scalar $Q_a$ for each $a \in [r]$, and $\mathbf Q$ denotes the set of variables $\{Q_1,\cdots, Q_r\}$.
This is a graded vector space with $\text{deg}(J_k^a) = \text{deg}(\hslash^{\frac{1}{2}}) = 1$. We denote $\mathcal{V}_{\geq 0}$ (resp. $\mathcal{V}_{> 0}$) the subspace generated by monomials of non-negative (resp. positive) degrees. We first consider vectors in $\mathcal{V}_{\geq 0}$ of the form
\begin{equation} 
	\label{G1form}
	\ket{\mathfrak{G}_T} = \exp\left(\sum_{\substack{(g,n) \in \frac{1}{2}\mathbb{Z}_{\geq 0} \times \mathbb{Z}_{> 0} \\ 2g - 2 + n > 0}} \frac{\hslash^{g - 1}}{n!} \sum_{\substack{a_1,\ldots,a_n \in [r] \\ k_1,\ldots,k_n \in \mathbb{Z}_{> 0}}} F_{g,n}\big[\begin{smallmatrix} a_1 & \cdots & a_n \\ k_1 & \cdots & k_n \end{smallmatrix}\big] \prod_{j = 1}^{n} \frac{J_{-k_j}^{a_j}}{k_j}\right) \in 1 + \mathcal{V}_{> 0},
\end{equation}
satisfying the Whittaker condition
\begin{equation}
	\label{G1Whit} \forall (i,k) \in [r] \times \mathbb{Z}_{> 0}\qquad W_k^i \ket{\mathfrak{G}_T} = \delta_{i,r}\delta_{k,1} \hslash^{\frac{r}{2}} T \ket{\mathfrak{G}_T}.
\end{equation}
Notice that we have set the energy scale to be $\Lambda^r = \hslash^{\frac{r}{2}} T$ here. 

\medskip

\subsubsection{Correlators and unramified topological recursion}
\label{S123}

\medskip

We consider the  curve $C = \bigsqcup_{a = 1}^{r} C^a$ where $C^a = \mathbb{P}^1$, equipped with the forgetful maps $x : C \rightarrow \mathbb{P}^1$ (which forgets the label $a$) and $\mathfrak{c} : C \rightarrow [r]$ (which only remembers the label $a$). We view points $z \in C$ as pairs $\big(\begin{smallmatrix} \mathfrak{c}(z) \\ x(z) \end{smallmatrix}\big)$. We denote the point $\big(\begin{smallmatrix} a \\ \infty \end{smallmatrix}\big)$ in $ C$ by  $\infty_a$. As the fibres of $x$ play a special role, we introduce the notation
\[
\mathfrak{f}(z) = x^{-1}(x(z)),\qquad \mathfrak{f}'(z) = \mathfrak{f}(z) \setminus \{z\}.
\]
If $z_1,\ldots,z_n$ is a $n$-tuple of points in $C$ and $J \subseteq [n]$, we denote $z_{J} = (z_j)_{j \in J}$. 

The coefficients of $\ket{\mathfrak{G}_T}$ can be repackaged in terms of a collection of generating series, indexed by $(g,n) \in \mathbb{Z}_{\geq 0} \times \mathbb{Z}_{> 0}$ called correlators. They are defined as
\begin{multline}
		\label{codef}
		w_{g,n}(z_1,\ldots,z_n)   = \sum_{k_1,\ldots,k_n \in \mathbb{Z}_{> 0}} F_{g,n}\big[\begin{smallmatrix} \mathfrak{c}(z_1) & \cdots & \mathfrak{c}(z_n) \\ k_1 & \cdots & k_n \end{smallmatrix}\big] \prod_{j = 1}^{n} \frac{\dd x(z_j)}{x(z_j)^{k_j + 1}} \\
		 \quad + \delta_{g,0}\delta_{n,1} Q_{\mathfrak{c}(z_1)} \frac{\dd x(z_1)}{x(z_1)} + \delta_{g,0}\delta_{n,2} \delta_{\mathfrak{c}(z_1),\mathfrak{c}(z_2)} \frac{\dd x(z_1)\dd x(z_2)}{(x(z_1) - x(z_2))^2}. 
\end{multline}

We will also need the expressions
\begin{equation}
	\label{wprime}
	\widehat{W}_{g,i;n}(z_{[i]};v_{[n]}) = \sum_{\substack{\mathbf{L} \vdash i \\  \mathbf{N} \vdash_{\mathbf{L}} [n] \\ g : \mathbf{L} \rightarrow \frac{1}{2}\mathbb{Z}_{\geq 0}}}^{\text{no}\,\,w_{0,1}} \delta_{g,i + \sum_{L} (g_L - 1)} \prod_{L \in \mathbf{L}} w_{g_L,\# L + \# N_L}(z_{L},v_{N_L}).
\end{equation}
The notation $\mathbf{L} \vdash [i]$ means that $\mathbf{L}$ is a set of pairwise disjoint non-empty subsets of $[i]$ whose union is $[i]$. The notation $\mathbf{N} \vdash_{\mathbf{L}} [n]$ means a map associating to each $L \in \mathbf{L}$ a (possibly empty) subset $N_L \subseteq [n]$, such that the $(N_L)_{L \in \mathbf{L}}$  are pairwise disjoint and their union is $[n]$. The logic behind these expressions becomes clear by writing them for low values of $i$:
\begin{equation*}
	\begin{split}
		\widehat{W}_{g,1;n}(z;v_{[n]}) & = w_{g,1+n}(z,v_{[n]}), \\
		\widehat{W}_{g,2;n}(z_1,z_2;v_{[n]}) & = w_{g-1,2+n}(z_1,z_2,v_{[n]}) + \sum_{\substack{g_1 + g_2 = g \\ J_1 \sqcup J_2 = [n]}}^{\text{no}\,\,w_{0,1}} w_{g_1,1+\#J_1}(z_1,v_{J_1}) w_{g_2,1+\#J_2}(z_2,v_{J_2}).
	\end{split}
\end{equation*}
The exclusion of $w_{0,1}$ factors from the sum in \eqref{wprime} has the effect that $\widehat{W}_{g,i;n}$ only involves $w_{h,m}$ with $2h - 2 + m < 2g - 2 + (1 + n)$.

\begin{thm} \cite[Theorem 5.10]{BBCC} Assume that $Q_1,\ldots,Q_r \in \mathbb{C}$ are pairwise distinct and $T \in \mathbb{C}$. There exists a unique $\ket{\mathfrak{G}_T}$ of the form \eqref{G1form} satisfying the Whittaker constraints \eqref{G1Whit}. The coefficients $F_{g,n}\big[\begin{smallmatrix} a_1 & \cdots & a_n \\ k_1 & \cdots & k_n \end{smallmatrix}\big]$ vanish if $(k_1 + \cdots + k_n)r > 2g$. In particular, for any $(g,n) \in \frac{1}{2} \mathbb{Z}_{\geq 0} \times \mathbb{Z}_{> 0}$ the correlators $w_{g,n}$ defined in \eqref{codef} are meromorphic $n$-differentials on $C$. Besides, they are computed by the unramified topological recursion for $2g - 2 + n > 0$ 
	\begin{multline*}
			w_{g,n}(z_1,\ldots,z_n)  = \sum_{a = 1}^{r} \Res_{z = \infty_a} \sum_{\{z\} \subseteq Z \subseteq \mathfrak{f}(z)}  \frac{-\int^{z}_{\infty_{a}} w_{0,2}(\cdot,z_1)}{\prod_{z' \in \mathfrak{f}(z) \setminus Z} \big(w_{0,1}(z') - w_{0,1}(z)\big)}\,\widehat{W}_{g,\#Z;n - 1}(Z;z_2,\ldots,z_n), \\
			\quad + \delta_{g,\frac{r}{2}}\delta_{n,1} \frac{T}{\prod_{b \neq \mathfrak{c}(z_1)} (Q_b - Q_{\mathfrak{c}(z_1)})} \,\frac{\dd x(z_1)}{x(z_1)^2},
	\end{multline*}
	where the factors in the denominator of the first line should be understood as
	\[
	w_{0,1}(z') - w_{0,1}(z) = (Q_{\mathfrak{c}(z')} - Q_{\mathfrak{c}(z)}) \frac{\dd x(z)}{x(z)},
	\]
	since $x(z) = x(z')$ for $z' \in \mathfrak{f}(z)$. 
\end{thm}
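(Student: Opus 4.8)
The plan is to recognise the Whittaker constraints \eqref{G1Whit} as a (higher) Airy structure built from the $\mathcal{W}(\mathfrak{gl}_r)$-algebra, to solve it by the general theory of such structures, and then to repackage the resulting recursion as the stated unramified topological recursion on $C$. First I would present $\mathcal{V}$ as a Fock space, identifying it with the space of formal power series in the variables $(x_{a,k})_{a \in [r],\, k \in \mathbb{Z}_{>0}}$ with coefficients in $\mathbb{C}(\mathbf{Q})[T](\!(\hslash^{\frac{1}{2}})\!)$, via $J_{-k}^{a} \leftrightarrow k\,x_{a,k}$ and $J_{k}^{a} \leftrightarrow \hslash\,\partial_{x_{a,k}}$ for $k > 0$, together with $J_0^a \leftrightarrow Q_a$, so that a vector of the form \eqref{G1form} is exactly the Airy-structure partition function $Z = \exp\!\big(\sum_{2g-2+n>0} \tfrac{\hslash^{g-1}}{n!}\sum_{a_j,k_j} F_{g,n}\big[\begin{smallmatrix} a_1 & \cdots & a_n \\ k_1 & \cdots & k_n \end{smallmatrix}\big]\prod_{j} x_{a_j,k_j}\big)$. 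Substituting the free-field realisation \eqref{eq:Wgens} into the modes, each $W^i_k$ becomes an explicit differential operator on this Fock space, a sum of normally ordered monomials of degree at most $i$ in the pairs $(x, \hslash\partial_x)$ with the appropriate powers of $\hslash$; for $k > 0$ it has no constant term.

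The crucial point is to triangularise this system. The part of $W^i_k$ linear in the modes equals $\sum_{a=1}^{r} e_{i-1}\big(\mathbf{Q} \setminus \{Q_a\}\big)\,J_k^a$, and since $Q_1, \ldots, Q_r$ are pairwise distinct the matrix $\big(e_{i-1}(\mathbf{Q} \setminus \{Q_a\})\big)_{i,a \in [r]}$ is invertible — its columns are the coefficient vectors of the linearly independent Lagrange-type polynomials $u \mapsto \prod_{b \neq a}(u + Q_b)$. Taking the corresponding dual linear combinations of the $W^i_k$ and subtracting $\delta_{k,1}\hslash^{\frac{r}{2}}T$ times the relevant coefficient (admissible because $\hslash^{\frac{r}{2}}T$ is a genuine positive power of $\hslash$ for $r \geq 2$), I would obtain operators $H_{a,k}$, $a \in [r]$, $k \in \mathbb{Z}_{>0}$, that annihilate $\ket{\mathfrak{G}_T}$, with leading part $\hslash\,\partial_{x_{a,k}}$ and every other monomial either of degree $\geq 2$ in $(x, \hslash\partial_x)$ or carrying an extra power of $\hslash$. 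One must then check that the $H_{a,k}$ close under the Lie bracket and satisfy the admissibility (grading) axiom of a higher Airy structure: closure is immediate from the vertex-algebra structure of $\mathcal{W}(\mathfrak{gl}_r)$ together with the fact that the shift is by a constant, whereas admissibility demands a careful count of how the powers of $\hslash$ produced by the realisation and by normal ordering balance against the degrees in $(x, \hslash\partial_x)$. This verification — which is where distinctness of the $Q_a$ is indispensable — is the content of \cite{BBCC}; granting it, the general existence-and-uniqueness theorem for (higher) Airy structures yields a unique $Z \in 1 + \mathcal{V}_{>0}$, hence a unique $\ket{\mathfrak{G}_T}$ of the form \eqref{G1form} satisfying \eqref{G1Whit}, and a recursion that computes the $F_{g,n}$ by induction on $2g - 2 + n$.

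To obtain the geometric statement, I would multiply the abstract recursion for $F_{g,n}\big[\begin{smallmatrix} a_1 & \cdots & a_n \\ k_1 & \cdots & k_n \end{smallmatrix}\big]$ by $\prod_j \tfrac{\dd x(z_j)}{x(z_j)^{k_j+1}}$ and sum over $k_1, \ldots, k_n$, turning it into a relation among the differentials $w_{g,n}$ of \eqref{codef}. The elementary-symmetric structure \eqref{eq:Wgens} of the generators $\mathcal{W}^i$ is precisely what reassembles the $r$ families $(H_{a,k})_a$ into a single loop equation on $C$, whose resolution is the sum over subsets $\{z\} \subseteq Z \subseteq \mathfrak{f}(z)$ with weight $\big(\prod_{z' \in \mathfrak{f}(z) \setminus Z}(w_{0,1}(z') - w_{0,1}(z))\big)^{-1}$ and recursion kernel $-\int_{\infty_a}^{z} w_{0,2}(\cdot, z_1)$; the residue sits at $x = \infty$ because the positive modes $W^i_k$ probe the expansion there, and the inhomogeneity $\hslash^{\frac{r}{2}}T$ feeds only into $(g,n) = (\tfrac{r}{2}, 1)$, giving the claimed $\delta_{g,\frac{r}{2}}\delta_{n,1}$ correction. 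Finally, the vanishing of $F_{g,n}\big[\begin{smallmatrix} a_1 & \cdots & a_n \\ k_1 & \cdots & k_n \end{smallmatrix}\big]$ when $(k_1 + \cdots + k_n)r > 2g$ — equivalently, the $w_{g,n}$ being genuine meromorphic $n$-differentials rather than formal series — I would prove by induction on $2g - 2 + n$: the bound holds in the base cases (there $F_{0,1} = F_{0,2} = 0$, so $w_{0,1}(z) = Q_{\mathfrak{c}(z)}\tfrac{\dd x}{x}$ and $w_{0,2}$ is the Bergman kernel, i.e. exactly the data of the curve $\prod_a(y - Q_a/x) = 0$) as well as for the $T$-term at $(\tfrac{r}{2}, 1)$, and it is stable under the recursion; the inductive step reduces to a pole count showing that taking the residue at $\infty_a$ produces no coefficients outside the allowed range.

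The genuine difficulty is the second step: showing that the recombined free-field $\mathcal{W}(\mathfrak{gl}_r)$ Whittaker constraints, after the energy scale has been absorbed, really do satisfy the admissibility axiom of a higher Airy structure, the pairwise distinctness of $Q_1, \ldots, Q_r$ being the essential hypothesis. Steps one and three are then, respectively, routine bookkeeping and a combinatorially involved but purely formal disentangling computation, with existence, uniqueness and the recursion supplied by the general Airy-structure theorem.
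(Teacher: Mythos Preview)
The paper does not prove this theorem at all: it is stated as a quotation of \cite[Theorem 5.10]{BBCC} and serves purely as background for the new results in Sections~\ref{S3}--\ref{S5}. So there is no ``paper's own proof'' to compare your proposal against.

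That said, your outline is an accurate high-level summary of the strategy actually carried out in \cite{BBCC}: realise the positive $\mathcal{W}(\mathfrak{gl}_r)$-modes in the Heisenberg Fock space, triangularise using the invertibility of the matrix $\big(e_{i-1}(\mathbf{Q}\setminus\{Q_a\})\big)_{i,a}$ (which is exactly where pairwise distinctness of the $Q_a$ enters), verify the higher Airy structure axioms, invoke the Kontsevich--Soibelman existence/uniqueness theorem, and then translate the resulting abstract recursion into the residue formula on $C$. You correctly flag the admissibility verification as the genuine work and explicitly grant it by citation, so your write-up is really a proof \emph{plan} rather than a proof; but as a plan it matches what \cite{BBCC} does. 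The vanishing for $(k_1+\cdots+k_n)r>2g$ is in fact a direct homogeneity argument (the only source of $T$ is the $\hslash^{r/2}T$ shift, and the recursion is degree-preserving), which is slightly cleaner than the inductive pole count you sketch, but either route works.
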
 
This formula is indeed a recursion on $2g - 2 + n >0$.

\medskip

\subsubsection{Gaiotto vector for formal \texorpdfstring{$\Lambda$}{Lambda}}
\label{S124}

\medskip

As the energy scale in \eqref{G1Whit} has been set to $\Lambda^r = \hslash^{\frac{r}{2}}T$, and  $\hslash$ is a formal parameter near $0$, we are treating $\ket{\mathfrak{G}_T}$ as a formal expansion as $\Lambda  \to 0$. In this article, we would like to understand the analytic properties of the Gaiotto vector as a function of the energy scale $\Lambda$. The first step is to consider $\Lambda$ directly as a formal parameter independently of $\hslash$. From the homogeneity of the Whittaker constraints \eqref{G1Whit} it can be inferred \cite[Lemma 4.5]{BBCC} that the coefficients $F_{g,n}\big[\begin{smallmatrix} a_1 & \cdots & a_n \\ k_1 & \cdots & k_n \end{smallmatrix}\big]$ are proportional to $T^{k_1 + \cdots + k_n}$. As they vanish for $(k_1 + \cdots + k_n)r > 2g$, we can define
\begin{equation}
	\label{homogen}
	\Phi_{g,n}\big[\begin{smallmatrix} a_1 & \cdots & a_n \\ k_1 & \cdots & k_n \end{smallmatrix}\big] = \Lambda^{r(k_1 + \cdots + k_n)}\,F_{g + (k_1 + \cdots + k_n)\frac{r}{2},n}\big[\begin{smallmatrix} a_1 & \cdots & a_n \\ k_1 & \cdots & k_n \end{smallmatrix}\big]\big|_{T = 1}.
\end{equation}
As the only possible source of half-integer powers of $\hslash$ in $\ket{\mathfrak{G}_T}$ was the prefactor $\hslash^{\frac{r}{2}}$ of $T$ in equation \eqref{G1Whit}, the $\Phi_{g,n}$ vanish for non-integer $g$. With these coefficients we can introduce the vector 
\begin{equation}
	\label{Gformalexp}
	\ket{\Gamma_{\Lambda}} = \exp\left(\sum_{(g,n) \in \mathbb{Z}_{\geq 0} \times \mathbb{Z}_{> 0}} \frac{\hslash^{g - 1}}{n!} \sum_{\substack{a_1,\ldots,a_n \in [r] \\ k_1,\ldots,k_n \in \mathbb{Z}_{> 0}}} \Phi_{g,n}\big[\begin{smallmatrix} a_1 & \cdots & a_n \\ k_1 & \cdots & k_n \end{smallmatrix}\big] \prod_{j = 1}^{n} \frac{J_{-k_j}^{a_j}}{k_j}\right)
\end{equation}
belonging to the Verma module
\begin{equation}
	\label{VLambda}
	\mathcal{V}_{\Lambda} =  \mathbb{C}(\mathbf Q) \big[\kern-2.2 pt\big[(J_{-k}^{a})_{k \in \mathbb{Z}_{> 0}}^{a \in [r]}\big]\kern-2.2 pt \big][\![\Lambda^r]\!](\!(\hslash)\!).
\end{equation}
Note that we do not require $\ket{\Gamma_{\Lambda}} $ to live in the non-negative degree part of $\mathcal{V}_{\Lambda} $. This vector satisfies the Whittaker condition
\begin{equation}
	\label{WhitG2}
	\forall (i,k) \in [r] \times \mathbb{Z}_{> 0} \qquad W_k^i \ket{\Gamma_{\Lambda}} = \delta_{i,r}\delta_{k,1} \Lambda^r \ket{\Gamma_{\Lambda}}.
\end{equation}
Two important differences with the vector $ \ket{\mathfrak{G}_T}$ from equation \eqref{G1form} are however that
\begin{itemize} 
	\item $(g,n) = (0,1)$ and $(0,2)$ do contribute to the sum, i.e., we have non-zero $\Phi_{0,1}$ and $\Phi_{0,2}$;
	\item $\Phi_{g,n}\big[\begin{smallmatrix} a_1 & \cdots & a_n \\ k_1 & \cdots & k_n \end{smallmatrix}\big]$ can be non-zero for infinitely many indices $k_1,\ldots,k_n \in \mathbb{Z}_{> 0}$.
\end{itemize} The second condition forces us to treat $\Lambda$ as a formal parameter. Despite these differences, we can still use equation \eqref{codef} to introduce the correlators
\begin{multline}
		\label{phidef} 
		\phi_{g,n}(z_1,\ldots,z_n)  = \sum_{k_1,\ldots,k_n \in \mathbb{Z}_{> 0}} \Phi_{g,n}\big[\begin{smallmatrix} \mathfrak{c}(z_1) & \cdots & \mathfrak{c}(z_n) \\ k_1 & \cdots & k_n \end{smallmatrix}\big] \prod_{j = 1}^{n} \frac{\dd x(z_j)}{x(z_j)^{k_j + 1}} \\
 	+ \delta_{g,0}\delta_{n,1} Q_{\mathfrak{c}(z_1)} \frac{\dd x(z_1)}{x(z_1)} + \delta_{g,0}\delta_{n,2} \delta_{\mathfrak{c}(z_1),\mathfrak{c}(z_2)} \frac{\dd x(z_1)\dd x(z_2)}{(x(z_1) - x(z_2))^2}.
\end{multline}
They are now defined as germs of meromorphic $n$-differentials in the $n$-th product of the formal neighbourhood of $L := \bigsqcup_{a = 1}^{r} \{\infty_a\} \subset C$. More precisely, the $\phi_{g,n}$ for $2g - 2 + n > 0$ are germs of holomorphic $n$-differentials, $\phi_{0,1}$ is the germ of a meromorphic differential having a simple pole with residue $Q_a$ at $\infty_a$, and $\phi_{0,2}$ is the germ of a meromorphic bi-differential with a double pole on the diagonal.

\vspace{0.5cm}

\subsection{Main results at finite \texorpdfstring{$\Lambda$}{Lambda}}
\label{S13}

\medskip

Our main result is that the correlators \eqref{phidef} upgrade to meromorphic multi-differentials on a ramified, genus $0$ spectral curve, which we will refer to as the half Seiberg--Witten curve. Moreover, these multi-differentials  depend analytically on $\Lambda \in \mathbb{C}^*$, and are computed by the usual Chekhov--Eynard--Orantin topological recursion. 

\medskip

\subsubsection{Ramified topological recursion}

\medskip

We give a lightning introduction to the topological recursion in the form considered by Chekhov, Eynard and Orantin --- for more details, see \cite{EORev}. In the rest of the text, topological recursion without further precision will always mean this version, and it should be distinguished from the unramified topological recursion of \cref{S123}.

The initial data is called a \textit{spectral curve}, consisting of a quadruple $(S, x, \omega_{0,1}, \omega_{0,2})$, where $S$ is a Riemann surface, $x$ is a meromorphic function on $S$ that defines a branched covering $x : S \to \mathbb P^1$, $\omega_{0,1}$ is a meromorphic differential on $S$ and $\omega_{0,2}$ is a fundamental bi-differential, i.e. a symmetric meromorphic bi-differential on $S^2$ with a double pole having bi-residue $1$ on the diagonal, and no other poles.

 We  define $ \operatorname{Ram}(S)$ to be the set of all the ramification points of the branched covering $x : S \to \mathbb P^1$ except those that are also poles of $\omega_{0,1} $. We exclude the ramification points that are also poles of $ \omega_{0,1}$ from $ \operatorname{Ram}(S)$  as they do not contribute in the topological recursion formula. We further assume that $ \operatorname{Ram}(S)$ only contains simple ramification points. Then, near any ramification point $\rho \in \operatorname{Ram}(S)$, we have the local involution which exchanges the two sheets, and we denote this by $\sigma_{\rho}$.

Given a spectral curve $(S, x, \omega_{0,1}, \omega_{0,2})$  the \textit{topological recursion}  constructs $n$-differentials on $S$ called $\omega_{g,1+n}$ for any $2g-2 + (1+n) > 0$ by the following formula 
\begin{multline}
	\label{TRome}
	\omega_{g,1+n}(\zeta_0,\ldots,\zeta_n)  = \sum_{\rho \in \operatorname{Ram}( S)} \Res_{\zeta = \rho} \frac{\frac{1}{2} \int_{\sigma_{\rho}(\zeta)}^{\zeta} \omega_{0,2}(\cdot,\zeta_0)}{\omega_{0,1}(\zeta) - \omega_{0,1}(\sigma_{\rho}(\zeta))} \bigg( \omega_{g-1,2+n}(\zeta,\sigma_{\rho}(\zeta),\zeta_{[n]})   \\ 
	\quad + \sum_{\substack{g_1 + g_2 = g \\ J_1 \sqcup J_2 = [n]}}^{\text{no}\,\,\omega_{0,1}} \omega_{g_1,1+\# J_1}(\zeta,\zeta_{J_1}) \omega_{g_2,1+\#J_2}(\sigma_{\rho}(\zeta),\zeta_{J_2})\bigg).
\end{multline}
By construction, in each variable $\zeta_i$, the  $\omega_{g,n}$ only have poles at the ramification points. 

The spectral curves considered in this article will always have genus $0$, i.e. $ S \simeq \mathbb{P}^1$ with $\zeta$ a global coordinate. Then, there is a unique fundamental bi-differential, namely
\begin{equation}
	\label{02std}
	\omega_{0,2}(\zeta_1,\zeta_2) = \frac{\dd \zeta_1 \dd \zeta_2}{(\zeta_1 - \zeta_2)^2}.
\end{equation}
The latter is invariant under changes of global coordinates on $\mathbb{P}^1$, i.e. action of $\text{PSL}_2(\mathbb{C})$ by M\"obius transformations. In this context, the data of two functions $x(\zeta),y(\zeta)$ fully specifies a spectral curve, by taking $\omega_{0,1}(\zeta) = y(\zeta)\dd x(\zeta)$ and $\omega_{0,2}$ equal to \eqref{02std}.

\medskip

\subsubsection{Gaiotto vector and topological recursion}

\medskip

Consider the analytic family\footnote{We consider $Q_1,\ldots,Q_r \in \mathbb{C}$ to be fixed pairwise distinct, but we could equally well formulate the results by letting them vary, using instead of $S$ the larger family $\widehat{S} \rightarrow \{(\Lambda,\mathbf{Q}) \in \mathbb{C}^* \times \mathbb{C}^r\,\,|\,\,\prod_{b \neq a} (Q_b - Q_a) \neq 0\big\}$.} of curves $\pi : S \rightarrow \mathbb{C}^*_\Lambda$ defined by the vanishing locus in $\mathbb P^1_x \times \mathbb P^1_y \times \mathbb C^*_\Lambda$ of
\begin{equation}\label{eq:Gaiottocurve}
	\prod_{a = 1}^{r} \bigg(y - \frac{Q_a}{x}\bigg) + \frac{(-\Lambda)^{r}}{x^{r + 1}} = 0.
\end{equation}
Here $\Lambda$ is the parameter of the base of the family. The fibre $S_{\Lambda} $ over a fixed  $\Lambda \in \mathbb{C}^*$ is a smooth genus $0$ curve, which can be uniformised by $\zeta \in \mathbb{P}^1$:
\[
\left\{\begin{array}{lll} x(\zeta) & =  - \dfrac{\Lambda^r}{\prod_{a = 1}^{r} (Q_a - \zeta)} \\[12pt] y(\zeta) & =  \dfrac{\zeta}{x(\zeta)} = -\dfrac{\zeta}{\Lambda^r} \prod_{a = 1}^{r} (Q_a - \zeta) \end{array}\right.
\]  We will refer to the fibre $S_\Lambda$ as the \textit{half Seiberg--Witten curve}.
The map $x : S_{\Lambda} \rightarrow \mathbb{P}^1$ defines a branched cover of degree $r$. Since $Q_1,\ldots,Q_r$ are pairwise distinct, $x$ has $(r - 1)$ simple ramification points and a ramification point of  index $r$ at $x = 0$, and $x = \infty$ is not a branch point. As defined earlier, $ \operatorname{Ram}(S_\Lambda)$ is the set of all ramification points except the one at $x = 0$. The formal neighbourhood $L \subset  C$ mentioned in Section~\ref{S124} is canonically identified with the formal neighbourhood of $x^{-1}(\infty) \subset S_{\Lambda}$ by considering $1/x(\zeta)$ as a local coordinate near the latter. 

\begin{rem}\label{rem:SW}
	We note that the genus $0$ half Seiberg--Witten curve $S_\Lambda$ is neither the usual Seiberg--Witten curve of pure gauge theory (which is of genus $r-1$) nor the genus $0$ UV curve one that encounters in the  class $S $ construction of gauge theories. The Seiberg--Witten curve of pure $U_r$ gauge theory in our notation is  defined by the equation
	\[
	\prod_{a=1}^r \left(y-\frac{Q_a}{x} \right) + \frac{(-\Lambda)^r}{x^{r+1}} + \frac{(-\Lambda)^r}{x^{r-1}}  = 0.
	\] From this equation, we see that the half Seiberg--Witten curve can be realised as a degenerate limit of the Seiberg--Witten curve. 
	
	The terminology half Seiberg--Witten curve can be motivated as follows. Following the philosophy of \cite{AGT}, one can view the theory associated to the Gaiotto state to be ``half of" pure gauge theory: we are looking at the theory on a genus $0$ Riemann surface with one irregular singularity (at $\infty$ say) instead of having two irregular singularities (both at $0$ and $\infty$). The half Seiberg--Witten curve appears naturally in \cite{DHS09} when trying to understand the instanton partition function in a free fermion formalism. 
\end{rem}

Let  $K_{\pi}$ be the sheaf of holomorphic differentials relative to $\pi : S \rightarrow \mathbb{C}^*_\Lambda$. Its local sections are locally defined holomorphic differentials on $S_{\Lambda}$ varying analytically in $\Lambda \in \mathbb{C}^*$. If $D \subset S$ is a divisor transverse to the fibres, then $K_{\pi}(D)$ is the sheaf of meromorphic differentials relative to $\pi$ with poles on $D$. Concretely, its global sections are meromorphic differentials on $S_{\Lambda}$ with the location and maximal order of poles specified by $D$. For instance,  $y \dd x$ defines an element of $H^0(K_{\pi}(x^{-1}(\infty)),S)$: it is indeed a meromorphic differential on $S_{\Lambda}$ with simple poles at the $r$ poles of $x$ (with residues $-Q_1,\ldots,-Q_r$) and varying analytically with $\Lambda$. If we want to allow poles on $D$ of arbitrary order, we use
\[
K_{\pi}(*D) = \lim_{d \rightarrow \infty} K_{\pi}(dD) .
\]
We are particularly interested in this sheaf when $D$ is the ramification divisor.
\[
\operatorname{Ram}(S) = \bigsqcup_{\Lambda \in \mathbb{C}^*} \operatorname{Ram}(S_{\Lambda}).
\]
Let $\pi_n : S^{[n]} \rightarrow \mathbb{C}^*_\Lambda$ be the fibre product of $n$ copies of $S$ over the base of the family, i.e.,
\[
S^{[n]} = \big\{(s_1,\ldots,s_n) \in S^n \quad \big|\quad \pi(s_1) = \cdots = \pi(s_n)\big\},
\]
with $\pi_n$ being the obvious projection to the common value $\Lambda = \pi(s_1) = \cdots = \pi(s_n)$. We define $\Delta \subset S^{[2]}$ to be the diagonal. If $\text{pr}_m : S^{[n]} \rightarrow S$ is the projection on the $m$-th factor and $\mathcal{F}$ is a sheaf on $S$, we use the notation $\mathcal{F}^{\boxtimes n} := \text{pr}_1^*(\mathcal{F}) \otimes \cdots \text{pr}_n^*(\mathcal{F})$ for its $n$-variable version.

Our central result is the following.
\begin{thm}
	\label{th:main} Assume that $Q_1,\ldots,Q_r \in \mathbb{C}$ are pairwise distinct. For any $(g,n) \in \mathbb{Z}_{\geq 0} \times \mathbb{Z}_{> 0}$, there exists $\omega_{g,n}$ which is an element of
	\begin{itemize}
		\item $H^0\big(K_{\pi}(x^{-1}(\infty),S\big)$ if $(g,n) = (0,1)$;
		\item $H^0\big(K_{\pi}^{\boxtimes 2}(2\Delta),S^{[2]}\big)$ if $(g,n) = (0,2)$;
		\item $H^0\big(K_{\pi}(*\operatorname{Ram}(S))^{\boxtimes n},S^{[n]}\big)$ if $2g - 2 + n > 0$;
	\end{itemize}
	such that $\phi_{g,n}$ in \eqref{phidef} is the all-order series expansion of $\omega_{g,n}$ as $\Lambda \rightarrow 0$ and $z_1,\ldots,z_n \rightarrow x^{-1}(\infty) \cong L$ using $1/x(\zeta_j)$ as local coordinate. Besides, for any fixed $\Lambda \in \mathbb{C}^*$ we have
	\[
	\omega_{0,1}(\zeta) = y(\zeta) \dd x(\zeta),\qquad \omega_{0,2}(\zeta_1,\zeta_2) = \frac{\dd \zeta_1 \dd \zeta_2}{(\zeta_1 - \zeta_2)^2},
	\]
	and for $2g - 2 + n > 0$, the $\omega_{g,n} $ are constructed by topological recursion  \eqref{TRome} on the half Seiberg--Witten spectral curve \eqref{eq:Gaiottocurve}.
\end{thm}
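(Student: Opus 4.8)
The plan is to fix $\Lambda \in \mathbb{C}^*$, reorganise the Whittaker constraints \eqref{WhitG2} into the defining data of the Chekhov--Eynard--Orantin recursion \eqref{TRome} on $S_\Lambda$, and only afterwards let $\Lambda$ vary and compare with the formal expansion as $\Lambda \to 0$. I begin with the ramification structure of $x : S_\Lambda \to \mathbb{P}^1$: computing the discriminant in $y$ of \eqref{eq:Gaiottocurve} and using that the $Q_a$ are pairwise distinct shows that, away from a discrete (possibly empty) subset $B \subset \mathbb{C}^*$, the branched cover $x$ has exactly $r-1$ simple ramification points besides the totally ramified point over $x = 0$, at which $\omega_{0,1} = y\,\dd x$ is polar; hence $\operatorname{Ram}(S_\Lambda)$ consists of those $r-1$ points and the totally ramified point contributes nothing to \eqref{TRome}. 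For the base cases $(g,n) \in \{(0,1),(0,2)\}$ I would extract the leading ($\hslash^{-1}$) part of \eqref{WhitG2}, rewrite it through the elementary-symmetric presentation \eqref{eq:Wgens} as the factorisation $\prod_a(y - y^{(a)}(x)) = \prod_a(y - Q_a/x) + (-\Lambda)^r/x^{r+1}$ of the defining polynomial of $S_\Lambda$ over its $r$ sheets $y^{(a)}$, and resum $\phi_{0,1}$ in powers of $1/x$ using the homogeneity \eqref{homogen}; this identifies $\phi_{0,1}$ with the germ at $L$ of $y\,\dd x$ and, similarly, $\phi_{0,2}$ with the germ of the standard bidifferential \eqref{02std}.

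The heart of the proof is the case $2g-2+n>0$, where a priori the $\phi_{g,n}$ are only germs of holomorphic $n$-differentials near $L$. I would argue by induction on $2g-2+n$ that (i) the Whittaker constraints \eqref{WhitG2} force each such germ to extend to a global meromorphic $n$-differential $\omega_{g,n}$ on the rational curve $S_\Lambda$ with poles, in each variable, only along $\operatorname{Ram}(S_\Lambda)$ --- here one uses that $S_\Lambda$ has genus $0$ so a multidifferential is pinned down by its principal parts, that the lower $\omega_{h,m}$ are global and meromorphic by induction, and that no poles arise at $x=0$ (since $\omega_{0,1}$ is polar there) nor at $x = \infty$ (since the $\phi_{g,n}$ are holomorphic near $L$); and (ii) rewriting \eqref{WhitG2} through \eqref{eq:Wgens} as relations among symmetric functions of the sheets of $y$ and localising near each simple ramification point produces precisely the linear and quadratic loop equations there, i.e.\ $r-1$ commuting copies of Virasoro constraints --- one per ramification point --- which form a quantum Airy structure. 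Combining (i) and (ii) with the equivalence between the topological recursion \eqref{TRome} and the system of linear and quadratic loop equations supplemented by the projection property (see \cite{EORev} and references therein) identifies the $\omega_{g,n}$ with the topological recursion differentials of $S_\Lambda$, and in particular establishes uniqueness.

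To conclude, I would let $\Lambda$ vary. For $\Lambda \in \mathbb{C}^* \setminus B$ the ramification points move analytically in $\Lambda$ (implicit function theorem) and the right-hand side of \eqref{TRome} is a finite sum of residues of expressions rational in the spectral-curve data, so $\omega_{g,n}$ is analytic in $\Lambda$; the a priori bound on the order of poles then extends $\omega_{g,n}$ across $B$ by Riemann's removable-singularity theorem, yielding the claimed membership in $H^0\big(K_\pi(*\operatorname{Ram}(S))^{\boxtimes n}, S^{[n]}\big)$ (and the stated sheaves for $(0,1),(0,2)$ are explicit). Since each $\omega_{g,n}$ is rational in the uniformising coordinate and in $\Lambda^r$, with poles confined to $\operatorname{Ram}(S_\Lambda)$ away from $L$, its expansion near $L$ in the local coordinate $1/x(\zeta_j)$ is a convergent power series in $\Lambda^r$; reading the equivalence ``Whittaker constraints $\Leftrightarrow$ topological recursion on $S_\Lambda$'' order by order in $\Lambda^r$ shows that this expansion solves the formal ($\Lambda \to 0$) loop equations near $L$, whose unique solution is $\phi_{g,n}$ of \eqref{phidef} by \cite[Theorem 5.10]{BBCC} together with the homogeneity \cite[Lemma 4.5]{BBCC}. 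Therefore $\phi_{g,n}$ is the $\Lambda \to 0$ expansion of $\omega_{g,n}$, as asserted.

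I expect step (i)--(ii) to be the main obstacle: promoting the Whittaker constraints, which naturally pin down the $\omega_{g,n}$ only near $L$, to a global system on $S_\Lambda$ that is an Airy structure at the $r-1$ simple ramification points. The difficulty is intrinsic, because at $\Lambda \neq 0$ the $\mathcal{W}$-constraints are \emph{not} themselves an Airy structure (unlike at $\Lambda = 0$), so there is no off-the-shelf existence--uniqueness theorem to invoke for them directly; one must instead exhibit, through this analytic-continuation mechanism, the \emph{different} Airy structure sitting at the ramification points and check that its topological recursion solution reproduces the analytic continuation of the formal Gaiotto correlators.
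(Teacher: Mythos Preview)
Your proposal is correct and follows essentially the same route as the paper: translate the Whittaker constraints into constraints on the $\phi_{g,n}$, treat $(0,1)$ and $(0,2)$ directly, then for stable $(g,n)$ prove by induction the analytic continuation with poles only on $\operatorname{Ram}(S_\Lambda)$, derive the linear and quadratic loop equations at the simple ramification points, and invoke uniqueness of their solution to identify with the topological recursion output. Two small points: the bad set $B \subset \mathbb{C}^*$ is unnecessary because the ramification locus of $x(\zeta) = -\Lambda^r/\prod_a(Q_a - \zeta)$ is independent of $\Lambda$; and your step (i) is where the actual work lies --- the paper implements it by extracting from the $\mathcal{W}$-constraints an explicit contour-integral formula for $\phi_{g,1+n}$ in terms of lower correlators (so that global meromorphy is inherited inductively), rather than by a principal-parts argument. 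Your final paragraph comparing to the formal $\phi_{g,n}$ is also more circuitous than needed: since the $\omega_{g,n}$ are \emph{constructed} as analytic continuations of the $\phi_{g,n}$, no separate matching step is required.
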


\begin{rem}
	\cref{th:main} gives a gauge theory interpretation to the $\omega_{g,n}$ obtained by  topological recursion on the half Seiberg--Witten spectral curve, i.e., they  essentially give the genus expansion of the Gaiotto vector $\ket{\Gamma_{\Lambda}}$. These correlators are different from the topological recursion correlators of the  Seiberg--Witten curve (see \cref{rem:SW}), whose gauge-theoretic interpretation is unknown. On the other hand, the generating function of the free energies $F_g$ constructed by topological recursion on the Seiberg--Witten curve is expected to coincide with the (instanton part of the) Nekrasov instanton partition function $Z_{\text{Nek}}$ \cite{HK10} (see also \cref{sec:gauge}).
\end{rem}

The proof is given in Section~\ref{S3}. The strategy is to start from the Whittaker constraints \eqref{WhitG2} for the Gaiotto vector $\ket{\Gamma_{\Lambda}}$ that we translate into certain constraints on the $\phi_{g,n}$. We then show by induction on $2g - 2 + n \geq -1$ that these constraints have a unique solution, and thus $\phi_{g,n} $ can be upgraded to a meromorphic $n$-differential on the half Seiberg--Witten  curve $S_{\Lambda}$, and we locate all possible poles. Then, we show that the Whittaker constraints after analytic continuation away from $\infty$ imply the linear and quadratic loop equations of \cite{BEO,BSblob}. As $S_{\Lambda}$ has genus $0$, the loop equations have a unique solution given by the topological recursion \eqref{TRome}.

The Whittaker constraints \eqref{WhitG2} where $\Lambda^r = O(\hslash) $ or $\Lambda$ is a formal parameter yields (shifted) Airy structures  to which the Kontsevich--Soibelman theorem can be applied to establish existence and uniqueness of the solution and its reconstruction by topological recursion \cite{KSTR,ABCD,BBCCN18}. In contrast, the formalism of Airy structures cannot be applied to understand whether this solution can be upgraded to an analytic function of $\Lambda$. In other words, if we consider the constraints \eqref{WhitG2} with $\Lambda \in \mathbb C^*$, the solution cannot be constructed using Airy structures. The proof of \cref{th:main} shows how the analytic behaviour in such degree $0$ terms can be understood, and this involves analytic continuation of the correlators away from the formal neighbourhood where they were initially defined (see \cref{S2ana} for more details).

\vspace{0.5cm}

\subsection{Generalisation: the CDO vector}

\medskip

The method we develop to prove \cref{th:main} is flexible enough to be applicable to other shifted Airy structures (i.e., ones with degree $0$ terms). As a demonstration of this principle, in \cref{S4} we obtain an analogue of \cref{th:main} for the Whittaker vector constructed in \cite{CDO24} by Chidambaram, Do{\l}{\k{e}}ga and Osuga that encodes $b$-Hurwitz numbers.

We restrict to the $b = 0$ case, which corresponds to the self-dual level for the $\mathcal{W}$-algebra. Consider two sets of parameters $\mathbf P = \{P_1,\ldots, P_r\}$ and $\mathbf Q  = \{Q_1,\ldots, Q_{r-1}\}$, and a certain representation $\widetilde{\mathcal{V}}_{\Lambda} = \mathbb{C}(\!(\mathbf P, \mathbf Q, \Lambda^r)\!)\big[\!\big[(J_{-k}^{a})_{k \in \mathbb{Z}_{> 0}}^{a \in [r]}\big]\!\big](\!(\hslash)\!)$ of $\mathcal W(\mathfrak{gl}_r)$, that is defined using the assignment \eqref{eq:Jrep}. Then, for any pairwise disjoint $Q_1,\ldots, Q_{r-1} \in \mathbb C$, \cite{CDO24} constructed a Whittaker vector $\ket{\Gamma_{\Lambda}^{\text{CDO}}} \in \widetilde{\mathcal{V}}_{\Lambda}$  of the form
\begin{equation*}
	\ket{\Gamma_{\Lambda}^{\text{CDO}}} =\exp\left( \sum_{(g,n) \in \mathbb{Z}_{\geq 0} \times \mathbb{Z}_{> 0}}   \frac{\hslash^{g-1}}{n!} \sum_{\substack{a_1,\ldots,a_n \in [r] \\ k_1,\ldots,k_n \in \mathbb{Z}_{> 0}}} \Phi_{g,n}\left[\begin{smallmatrix} a_1 & \cdots & a_n \\ k_1 & \cdots & k_n \end{smallmatrix}\right] \prod_{j = 1}^{n} \frac{J^{a_j}_{-k_j}}{k_j} \right),  
\end{equation*}
 satisfying the constraints
\[
\forall (i,k) \in [r] \times \mathbb{Z}_{\geq 0} \qquad \widetilde{W}^i_k \ket{\Gamma_{\Lambda}^{\text{CDO}}}= (-1)^i e_i(P_1,\cdots, P_r) \delta_{k,0} \ket{\Gamma_{\Lambda}^{\text{CDO}}},
\] where $ \widetilde{W}^i_k$ denote the modes of $\mathcal{W}(\mathfrak{gl}_r)$ in the representation $\widetilde{\mathcal{V}}_{\Lambda}$ and $e_i$ denotes the $i$-th elementary symmetric polynomial in the entries.

To state our result we first describe the spectral curve. Consider the analytic family of curves $\pi : \mathcal{S} \rightarrow \mathbb{C}^*_{\Lambda}$ cut out in $\mathbb{P}^1_x \times \mathbb{P}^1_y \times \mathbb{C}_{\Lambda}^*$ by
\[
\prod_{a = 1}^{r} \left(\frac{P_a}{x} + y\right) + \frac{1}{\Lambda^r} \prod_{a = 1}^{r - 1} \left(\frac{Q_a}{x} - y\right) = 0.
\]
The fibre over a fixed $\Lambda \in \mathbb{C}^*$ is a smooth genus $0$ curve, which can be uniformised by $\zeta \in \mathbb{P}^1$:
\begin{equation}
\label{eq:CDOcurve}\left\{\begin{array}{lll} x(\zeta) = -\Lambda^r \dfrac{\prod_{a = 1}^{r} (P_a + \zeta)}{\prod_{a = 1}^{r - 1}(Q_a - \zeta)}, \\[12pt] y(\zeta) = \dfrac{\zeta}{x(\zeta)} = - \dfrac{\zeta}{\Lambda^r} \dfrac{\prod_{a = 1}^{r - 1} (Q_a - \zeta)}{\prod_{a = 1}^r (P_a + \zeta)} .\end{array}\right.
\end{equation}
\begin{rem}
	The family of curves $ \mathcal S$ defined by \eqref{eq:CDOcurve} also admits an interpretation as a half Seiberg--Witten curve for a different gauge theory. The Seiberg--Witten family of curves for $U_{r-1}$ gauge theory with  $ r $ fundamental hypermultiplets with mass parameters $P_1,\cdots,P_r$  and energy scale $\Lambda^{-r}$ is known to be (see \cite[Section 11.6]{Tac13} for the $SU_{r-1}$ version)
	\[
	(-1)^r \frac{1}{\Lambda^{r^2}x^{r-2}} +   \prod_{a=1}^{r} \left(\frac{P_a}{x} + y\right) + \frac{1}{\Lambda^r } \prod_{a=1}^{r-1} \left(\frac{Q_a}{x} - y\right) = 0.
	\] From this, we clearly see that \eqref{eq:CDOcurve} can be obtained as a degenerate limit.
\end{rem}

We have the following topological recursion result for the correlators defined from $\ket{\Gamma^{\text{CDO}}_\Lambda}$.
\begin{thm}
\label{th:main2} Assume that $P_1,\ldots,P_{r},Q_1,\ldots,Q_{r -1}$ are generic (more precisely, that they belong to the set introduced in \cref{Rset}). For any $(g,n) \in \mathbb{Z}_{\geq 0} \times \mathbb{Z}_{> 0}$, there exists $\omega_{g,n}$ which is an element of
\begin{itemize}
\item $H^0\big(K_{\pi}(x^{-1}(\infty),\mathcal{S}\big)$ if $(g,n) = (0,1)$;
\item $H^0\big(K_{\pi}^{\boxtimes 2}(2\Delta),\mathcal{S}^{[2]}\big)$ if $(g,n) = (0,2)$;
\item $H^0\big(K_{\pi}(*\operatorname{Ram}(\mathcal{S}))^{\boxtimes n},\mathcal{S}^{[n]}\big)$ if $2g - 2 + n > 0$;
\end{itemize}
such that $\phi_{g,n}$ (defined in \eqref{eq:phidef2} from $\ket{\Gamma^{\text{CDO}}_\Lambda}$ analogously to \eqref{phidef})  is the all-order series expansion of $\omega_{g,n}$ as $\Lambda \rightarrow 0$ and $z_1,\ldots,z_n \rightarrow x^{-1}(\infty) \cong L$ using $1/x(\zeta_j)$ as local coordinate. Besides, for any fixed $\Lambda \in \mathbb{C}^*$ we have
\[
\omega_{0,1}(\zeta) = y(\zeta) \dd x(\zeta),\qquad \omega_{0,2}(\zeta_1,\zeta_2) = \frac{\dd \zeta_1 \dd \zeta_2}{(\zeta_1 - \zeta_2)^2},
\]
and for $2g - 2 + n > 0$, the $\omega_{g,n} $ are constructed by the topological recursion  \eqref{TRome} on the CDO spectral curve \eqref{eq:CDOcurve}. 
\end{thm}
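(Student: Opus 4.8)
\section*{Proof proposal for \texorpdfstring{\cref{th:main2}}{Theorem}}

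The proof runs along the same lines as that of \cref{th:main} in Section~\ref{S3}, with the half Seiberg--Witten curve \eqref{eq:Gaiottocurve} replaced by the CDO curve \eqref{eq:CDOcurve}; the plan is therefore to indicate the points where the argument must be adapted rather than to repeat it. First I would translate the Whittaker constraints $\widetilde{W}^i_k \ket{\Gamma^{\text{CDO}}_\Lambda} = (-1)^i e_i(P_1,\ldots,P_r)\delta_{k,0}\ket{\Gamma^{\text{CDO}}_\Lambda}$ into equations for the correlators $\phi_{g,n}$ of \eqref{eq:phidef2}. Using the packaging \eqref{eq:Wgens}, the constraints for all $i \in [r]$ are equivalent to a single statement about the action of $\prod_{a=1}^{r}\big(u + \mathcal{J}\big(\begin{smallmatrix} a \\ x \end{smallmatrix}\big)\big)$, which after the bosonisation/normal-ordering manipulation used for \cref{th:main} becomes, order by order in $\hslash$, a hierarchy of \emph{abstract loop equations} for the $\phi_{g,n}$ in the formal neighbourhood of $L = \bigsqcup_a\{\infty_a\}$. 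The genus-zero one-point part (together with the $k=0$ constraint) reconstructs exactly the curve equation \eqref{eq:CDOcurve} with $\phi_{0,1} = y\,\dd x$, the genus-zero two-point part forces $\phi_{0,2}$ to be the standard bidifferential, and for $2g-2+n>0$ the remaining equations determine the polar part at $x=\infty$ of a prescribed symmetric combination of lower correlators. The eigenvalue $(-1)^i e_i(P)$ is precisely what produces the numerator $\prod_{a}\big(P_a/x + y\big)$ in \eqref{eq:CDOcurve}, just as the eigenvalue $\Lambda^r$ produced the term $(-\Lambda)^r/x^{r+1}$ in \eqref{eq:Gaiottocurve}.

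Second, I would run the induction on $2g-2+n \geq -1$. The cases $(0,1)$ and $(0,2)$ are settled by the previous point. For the inductive step, the constraints determine $\phi_{g,n}$ uniquely in the formal neighbourhood of $L$ from the lower correlators; to upgrade it to a global meromorphic $n$-differential on $\mathcal{S}_\Lambda$ one argues, as in Section~\ref{S3}, that the only poles it can acquire away from $L$ lie at the ramification points in $\operatorname{Ram}(\mathcal{S}_\Lambda)$, since the constraints are regular elsewhere and $\mathcal{S}_\Lambda$ has genus $0$; one checks simultaneously that the resulting differential depends analytically on $\Lambda$ near $0$, which, combined with the fact that the recursion defining it is rational in $\Lambda$, yields analyticity on all of $\mathbb{C}^*$, and that its $\Lambda\to 0$ expansion reproduces $\phi_{g,n}$ as defined in \eqref{eq:phidef2}. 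Here the genericity hypothesis on $(\mathbf{P},\mathbf{Q})$ --- membership in the set of \cref{Rset} --- is used to ensure that $x : \mathcal{S}_\Lambda \to \mathbb{P}^1$ has only simple, pairwise distinct ramification points, and that the poles of $x$ (the points over $x=\infty$, now sitting over both $\zeta = Q_a$ and $\zeta = \infty$) are disjoint from $\operatorname{Ram}(\mathcal{S}_\Lambda)$ and from the zeros of $x$; otherwise spurious poles could appear and the pole bookkeeping would break down.

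Third, having globalised the $\phi_{g,n}$, I would show that the $\mathcal{W}$-constraints --- which a priori hold only in the formal neighbourhood of $x=\infty$ --- imply, after \emph{analytic continuation}, the linear and quadratic loop equations of \cite{BEO,BSblob} at each $\rho \in \operatorname{Ram}(\mathcal{S}_\Lambda)$. Concretely, expanding $\prod_{a}\big(u + \mathcal{J}\big(\begin{smallmatrix} a \\ x\end{smallmatrix}\big)\big)$ near a simple ramification point $\rho$ and extracting the parts invariant, resp. anti-invariant, under the local involution $\sigma_\rho$ yields exactly the quadratic, resp. linear, loop equation: this is the mechanism by which the single $\mathcal{W}(\mathfrak{gl}_r)$-constraint at $\infty$ becomes several copies of Virasoro constraints, one at each ramification point. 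Since $\mathcal{S}_\Lambda$ has genus $0$, the linear and quadratic loop equations together with the pole structure established in the second step have a unique solution, which by \cite{BEO} is the topological recursion \eqref{TRome} on \eqref{eq:CDOcurve}; hence $\omega_{g,n} := \phi_{g,n}$ is that solution, and both the analyticity and the cohomological membership statements follow, completing the proof.

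I expect the main obstacle to be the second step, i.e.\ controlling the poles and the analyticity in $\Lambda$ of the globalised $\phi_{g,n}$. Compared with \cref{th:main}, the CDO cover is genuinely more complicated --- the preimages of $x=\infty$ are spread over several distinct points of $\mathcal{S}_\Lambda$ rather than concentrated at one totally ramified point, and the representation $\widetilde{\mathcal{V}}_\Lambda$ is built from Laurent (not polynomial) series in $\mathbf{P},\mathbf{Q},\Lambda^r$ --- so one must argue carefully that no pole of $\phi_{g,n}$ hides at those points or at the zeros of $x$, and that the formal $\Lambda\to 0$ identification with \eqref{eq:phidef2} survives globalisation. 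This is exactly where the precise genericity set of \cref{Rset} is needed, and where the bulk of the technical work sits; by contrast, the translation of constraints (first step) and the loop-equation argument (third step) are, once the framework of \cref{th:main} is in place, essentially formal.
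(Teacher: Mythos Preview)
Your proposal is correct and follows essentially the same route as the paper: translate the $\mathcal{W}$-constraints into constraints on the $\phi_{g,n}$, handle $(0,1)$ and $(0,2)$ explicitly to recover the curve and the standard bidifferential, then globalise the stable correlators by induction and conclude via the abstract loop equations on the genus-$0$ curve. One correction to your expectations: the paper finds that the stable step (your second step) carries over \emph{verbatim} from \cref{prop:stablecorrelators} and \cref{prop:ALE} with no change, whereas the actual new work sits in the unstable cases --- because the CDO constraints involve the modes $k \geq 0$ rather than $k > 0$, the right-hand side of the analogue of \eqref{constraints} is $O\big((\dd x)^i/x^{i-1}\big)$ instead of $O\big((\dd x)^i/x^{i}\big)$, which forces the projection $p_{\geq -(i-1)}$ in place of $p_{\geq -i}$ in the $(0,2)$ argument and accounts for the extra degree in the numerator of $x(\zeta)$. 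Your remark that the preimages of $x=\infty$ behave differently is also off: in both the Gaiotto and CDO cases $x=\infty$ is unramified with $r$ distinct preimages; the genuine geometric difference is rather at $x=0$, which is totally ramified for Gaiotto but generically unramified for CDO.
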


\vspace{0.5cm}

\subsection{Comments}

\medskip

There are several ways to look at this result in a broader context, which make the technique of the proof of potential interest beyond the example of the Gaiotto or CDO vectors.

\medskip
 
\subsubsection{Role of analytic continuation}
 
\medskip

\label{S2ana}

It may be useful to insist on the role of analytic continuation, to explain in which sense the topological recursion statements of this article are non-trivial and interesting.

On the one hand, partition functions of shifted Airy structures (i.e., with degree $0$ terms), like Whittaker vectors, are of the form \eqref{Gformalexp}. From such partition functions, one can define a system of correlators as in \eqref{phidef}, that are meromorphic multi-differentials on a (local) curve which is a finite collection of formal discs. We may ask whether these correlators are the germs of meromorphic multi-differentials defined on a (global) connected curve containing those formal discs. If this is the case, the next question is to determine the location of the poles and the behaviour at these poles. This is hopefully a step towards the reconstruction of the multi-differentials on the global curve, from which the original partition function can be retrieved by series expansion on the formal discs. Quite often, reconstruction is possible by the Chekhov--Eynard--Orantin topological recursion (or some variants of it) on a ramified spectral curve, and for this we can rely on the theory of abstract loop equations \cite{BSblob}. These are the steps we follow to prove \cref{th:main} and \cref{th:main2}.

On the other hand, correlators generated by the topological recursion $(\omega_{g,n})_{g,n}$ are systems of meromorphic multi-differentials on the spectral curve $(S,x,y,\omega_{0,2})$, that have poles at the divisor of ramification points $\text{Ram}$ of $x$. If we choose a finite set of points $R \subset S$ and local coordinates $\eta$ near these points, one can associate a partition function $Z_{R,\eta}$ of the form \eqref{Gformalexp}  that faithfully encode the germ of $\omega_{g,n}$ near $R$ using the chosen local coordinates. If we choose $R = \text{Ram} $, the $\omega_{g,n}$ are characterised by the property that $Z_{\text{Ram},\eta}$ is the partition function of an Airy structure \cite{KSTR,ABCD,BBCCN18} (for any choice of local coordinate). For instance, if all the ramification points are simple, the topological recursion is quadratic (see \eqref{TRome}) and $Z_{\text{Ram},\eta}$ is the  solution of several copies --- one for each ramification point --- of Virasoro constraints. Then, we say that topological recursion solves Virasoro constraints \emph{at} the ramification points. 

An interesting feature is that, if the spectral curve is connected, the uniqueness of analytic continuations implies that $Z_{\{p\},\eta}$ near some point $p \in S$ fully determines $\omega_{g,n}$ and thus the $Z_{R,\eta}$ for any other choices of $(R,\eta)$. It is often the case that the enumerative information in $\omega_{g,n}$ is stored away from ramification points, e.g. in $Z_{R,\eta}$ where $R$ is a subset of the poles of $x$. A natural question is then to find constraints directly on $Z_{R,\eta}$ that are implied by topological recursion for the $\omega_{g,n}$. If such constraints exist, we say that they live \emph{at} $R$. Quite often, they are again $\mathcal{W}$-constraints, but possibly for a different $\mathcal{W}$-algebra or a different representation of it. This means that $Z_{R,\eta}$ generates a $\mathcal{W}$-algebra module, which we can consider as being ``obtained by analytic continuation'' from the $\mathcal{W}$-algebra module generated by $Z_{\text{Ram},\eta}$. 

It is quite interesting to find the constraints at $R$ where the enumerative information is stored, precisely because they bear directly on the enumerative information. Yet, deriving them from topological recursion can turn out to be a non-trivial task: there is no  general theory to do so and it strongly depends on the global geometry of the spectral curve. The procedure of globalisation described in \cite{TRlimits} for algebraic singularities goes in this direction, and \cite[Section 1.4]{TRlimits} describes some of the new $\mathcal{W}$-algebras and modules that can be obtained by analytic continuation. Another example is the case of Virasoro constraints for the Gromov--Witten theory of $\mathbb{P}^1$ found in \cite{GivVir,OPVir}. \cite{BNP1} showed how these Virasoro constraints for Gromov--Witten theory $\mathbb{P}^1$ living at $\infty$ arise from topological recursion on a spectral curve with logarithmic singularities, and thus satisfy a different set of Virasoro constraints at the ramification points.

In this language, \cref{th:main} (resp. \cref{th:main2}) shows that the $\mathcal{W}(\mathfrak{gl}_r)$-constraints at $\infty$ for the Gaiotto (resp. CDO) vector imply $(r - 1)$ copies of Virasoro constraints at the ramification points, and the solution of the latter is known to be given by the topological recursion. The converse process of deriving back the $\mathcal{W}(\mathfrak{gl}_r)$-constraints at $\infty$ from the topological recursion is a priori not easy --- in particular, the limit $\Lambda \rightarrow 0$ would not be covered by the results of \cite{TRlimits}. But, it can be done by following the proof of \cref{S3} (resp. \cref{S4}) backwards.

\medskip

\subsubsection{Comparison to other models}

\medskip

Our strategy of analytic continuation exhibits similarities and differences with strategies previously employed  in the study of matrix models and Hurwitz theory.

In the $1$-hermitian matrix model, the starting point (replacing the Whittaker constraints) is the Dyson--Schwinger equations (Virasoro constraints at $\infty$). From there, analytic continuation on the spectral curve and then abstract loop equations from which topological recursion follows, has been established in \cite{ACM,E1MM,BEO,BSblob}: this is how  topological recursion was invented. But this case is easier to analyse as  these Schwinger--Dyson equations are only quadratic (instead of degree $r$). There exist matrix models satisfying $\mathcal{W}(\mathfrak{gl}_r)$-constraints for $r > 2$, but we are not aware of topological recursion being derived (directly) for them when $r \geq 4$.

In classical Hurwitz theory, several authors took the route of first proving analytic continuation on the spectral curve starting from the cut-and-join equation (constraints living at $\infty$, not necessarily quadratic), from which they deduced the abstract loop equations and concluded that topological recursion holds. This strategy was employed in \cite{EMS,BHSLM,BKLPS,DKPS,DoKarev,BDKLM20}. However, cut-and-join equations are a priori quite different from $\mathcal{W}$-constraints and do not give  (shifted) Airy structures per se. The relation between cut-and-join equations and  $\mathcal{W}$-constraints has been better understood in the recent work of \cite{CDO24}, and provides a different proof of topological recursion for weighted Hurwitz numbers, which is in line with the philosophy of this paper.

\vspace{1cm}

%–––––––––––––––––––––––––––––––––––––––––––%
\section{The Gaiotto vector: proof of \texorpdfstring{\cref{th:main}}{Theorem 2.2}}
\label{S3}
%–––––––––––––––––––––––––––––––––––––––––––%
\medskip

In this section we show that the correlators $\phi_{g,n}$, defined in \eqref{phidef} from the Gaiotto vector $\ket{\Gamma_{\Lambda}}$ at self-dual level $\kappa = 1$, can be analytically continued to meromorphic differential forms $\omega_{g,n}$ on the half Seiberg--Witten  curve $S_\Lambda$ and furthermore the latter satisfy the Chekhov--Eynard--Orantin topological recursion. 

\vspace{0.5cm}

\subsection{\texorpdfstring{$\mathcal W$}{W}-constraints}

\medskip

Recall that we have introduced the correlators $\phi_{g,n}$ from the genus expansion of the Whittaker vector $\ket{\Gamma_{\Lambda}}$ in \eqref{phidef}. These correlators $\phi_{g,n}$ are germs of meromorphic $n$-differentials in the $n$-th product of the formal neighbourhood of $L := \bigsqcup_{a = 1}^{r} \{\infty_a\} \subset C $, where $C$ is the unramified curve of degree $r$ defined in \cref{S123}. 

The strategy of the proof consists is to use the $\mathcal W$-constraints \eqref{WhitG2} to fix the correlators $\phi_{g,n}$  uniquely. In addition, the $\mathcal W$-constraints will give a formula for the correlators $\phi_{g,n}$ implying that they analytically continue to meromorphic differentials on the half Seiberg--Witten  curve $S_\Lambda$. Finally, by showing that these analytic continuations satisfy the abstract loop equations, we use the results of \cite{BEO,BSblob} to prove that they coincide with the topological recursion correlators $\omega_{g,n}$.

In order to understand the implication of the $\mathcal W$-constraints \eqref{WhitG2} for the correlators $\phi_{g,n}$ we  need to introduce some  operators and notation. First, let us define the operator $\operatorname{ad}_{g,n} $ following \cite[Section 5.1.2]{BBCC} which transforms  formal series into differentials on the curve $C$.
\begin{defn} Consider a formal series $f \in \mathbb C\big[\!\big[(J^a_{-k})^{a\in[r]}_{k \in \mathbb Z_{> 0}}\big]\!\big](\!(\hslash )\!)$. We define 
\begin{equation*}
	\operatorname{ad}_{g,n}(f) = [\hslash^g] \sum_{\substack{a_1,\ldots,a_n \in [r] \\ k_1, \ldots, k_n \in \mathbb{Z}_{>0}}} \left(k_1\frac{ \partial}{\partial J^{a_1}_{-k_1}} \, \cdots \, k_n \frac{\partial}{\partial J^{a_n}_{-k_n}} f \right)_{J^{a_j}_{-k_j} = 0} \frac{  \delta_{\mathfrak{c}(w_1),a_1}  \dd w_1}{w_1^{k_1+1}}  \cdots\,\,  \frac{ \delta_{\mathfrak{c}(w_n),a_n}  \dd w_n}{w_n^{k_n+1}}.
\end{equation*}
\end{defn} The operator $\operatorname{ad}_{g,n}$ picks  the terms of order $\hslash^g$ that are homogeneous of degree $n$ in the variables $J^{a_j}_{-k_j}$, and replaces these variables by the corresponding $1$-forms $ \delta_{\mathfrak{c}(w_j),a_j} \frac{k_j \dd w_j} {w_j^{k_j+1}}$ on $C$. Then, we define the following combinations of the correlators.
\begin{defn}\label{def:Omega}
	For any $g,n,i \in \mathbb Z_{\geq 0}$,  assuming that $z_j = \left(\begin{smallmatrix} a_{j} \\ x \end{smallmatrix}\right)$ for $j \in [i]$, define 
	\begin{equation}
			\Omega_{g,i;n}(z_{[i]};w_{[n]})  := \operatorname{ad}_{g,n}\left( \ket{\Gamma_{\Lambda}}^{-1} \bigg( \prod_{j=1}^i \mathcal{J}\big(\begin{smallmatrix} a_{j} \\ x \end{smallmatrix}\big) \bigg) \ket{\Gamma_{\Lambda}}  \right).
	\end{equation}
	Given the form \eqref{Gformalexp} of the vector $\ket{\Gamma_{\Lambda}}$ in the completed polynomial algebra in the negative $J$s, it admits an inverse.
\end{defn} The \textit{differentials} $\Omega_{g,i;n}(z_{[i]};w_{[n]})$ are $(n+i)$-differentials on  the $(n+i)$-th product of the formal neighbourhood of $L$ in $C$. The purpose of defining them  is to extract finite combinations of the correlators  from the  $\mathcal W$-algebra action on the Whittaker vector $\ket{\Gamma_{\Lambda}}$.
\begin{lem} \label{lem:Omegaexp}We have the following explicit expression for the $\Omega_{g,i;n}$ in terms of the correlators $\phi_{g,n}$,
	\begin{equation}
		\label{Omegagin}
		\Omega_{g,i;n}(z_{[i]};w_{[n]}) = \sum_{\substack{\mathbf{L} \vdash [i] \\ \sqcup_{L \in \mathbb{L}} N_{L} = [n] \\ i + \sum_{L} (g_{L} - 1) = g}} \prod_{L \in \mathbf{L}} \phi_{g_L,\#L+\#N_L}(z_{L},w_{N_L}).
	\end{equation} 
\end{lem}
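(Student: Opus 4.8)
The plan is to prove \eqref{Omegagin} by a direct computation of the conjugated product of Heisenberg fields. Write $\ket{\Gamma_{\Lambda}} = e^{S}$ with $S = \sum_{(g,n)}\frac{\hslash^{g-1}}{n!}\sum_{a_j,k_j}\Phi_{g,n}\big[\begin{smallmatrix} a_1 & \cdots & a_n \\ k_1 & \cdots & k_n \end{smallmatrix}\big]\prod_{j=1}^{n}\frac{J^{a_j}_{-k_j}}{k_j}$, a power series in the mutually commuting creation operators $J^{b}_{-l}$, $l > 0$. Then $e^{-S}$ is its inverse in the completed polynomial algebra, and unwinding \cref{def:Omega} gives $\Omega_{g,i;n}(z_{[i]};w_{[n]}) = \operatorname{ad}_{g,n}\big(\widehat{\mathcal{J}}_{1}\cdots\widehat{\mathcal{J}}_{i}\cdot 1\big)$, where $\widehat{\mathcal{J}}_{j} := e^{-S}\,\mathcal{J}\big(\begin{smallmatrix} a_j \\ x \end{smallmatrix}\big)\,e^{S}$ and $z_j = \big(\begin{smallmatrix} a_j \\ x \end{smallmatrix}\big)$. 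The key input is the conjugation identity: because $S$ is a function of commuting variables, $[J^{a}_{k},S] = \hslash k\,\partial_{J^{a}_{-k}}S$ for $k > 0$ again commutes with $S$, so the Baker--Campbell--Hausdorff series terminates and $e^{-S}J^{a}_{k}e^{S} = J^{a}_{k} + \hslash k\,\partial_{J^{a}_{-k}}S$ for $k > 0$, while $e^{-S}J^{a}_{k}e^{S} = J^{a}_{k}$ for $k \leq 0$ (the zero mode acting by the scalar $Q_a$). Summing against $\dd x / x^{k+1}$ this says $\widehat{\mathcal{J}}_{j} = \mathcal{J}^{\leq 0}_{j} + \mathcal{J}^{+}_{j} + D_{j}$, where $\mathcal{J}^{\leq 0}_{j}$ is multiplication by the non-positive modes of $\mathcal{J}\big(\begin{smallmatrix} a_j \\ x \end{smallmatrix}\big)$ (including the scalar $Q_{a_j}\,\dd x/x$), $\mathcal{J}^{+}_{j} = \sum_{k>0}\frac{\dd x}{x^{k+1}}\hslash k\,\partial_{J^{a_j}_{-k}}$ is the annihilation derivation, and $D_{j}$ is multiplication by the function $\mathcal{J}^{+}_{j}(S)$.

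A short computation from the shape of $S$ identifies these pieces with the correlators: $\operatorname{ad}_{g,m}$ applied to $\mathcal{J}^{+}_{j}(S)$ (in $m$ auxiliary variables $w_N$) equals the $\Phi$-part of $\phi_{g,1+m}(z_j,w_N)$; the zero-mode summand $Q_{a_j}\,\dd x/x$ of $\mathcal{J}^{\leq 0}_{j}$ contributes the $\delta_{g,0}$-decoration of $\phi_{0,1}(z_j)$; and whenever a derivation (an $\operatorname{ad}$-derivative in an auxiliary variable, or an $\mathcal{J}^{+}_{j}$) hits the creation part of a field $\mathcal{J}^{\leq 0}_{\ell}$ it produces $\hslash$ times the inhomogeneous ($\phi_{0,2}$) kernel between the two points involved. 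In particular the case $i=1$ is immediate, $\Omega_{g,1;n} = \phi_{g,1+n}$. For general $i$, I would expand $\widehat{\mathcal{J}}_{1}\cdots\widehat{\mathcal{J}}_{i}\cdot 1$ by pushing every derivation $\mathcal{J}^{+}_{j}$ to the right --- where it kills $1$ --- via repeated Leibniz moves $\mathcal{J}^{+}_{j}M = M\mathcal{J}^{+}_{j} + \mathcal{J}^{+}_{j}(M)$. Each derivation then lands either on the creation part of some $\mathcal{J}^{\leq 0}_{\ell}$ (a direct contraction, carrying one $\hslash$) or on a copy of $S$, producing higher derivatives of $S$ (each differentiation carrying one $\hslash$, while each $(h,\cdot)$-piece of $S$ carries $\hslash^{h-1}$). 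Reorganizing the terms, the indices $[i]$ split into the blocks $L$ of a set partition $\mathbf{L}$ --- the fields of a block being tied together by chains of derivatives converging on a common $S$-factor and/or by direct contractions --- and, after the $\operatorname{ad}_{g,n}$-derivatives in the $w$-variables allotted to $L$ are performed and the surviving $J$'s are set to $0$, the block contributes exactly one factor $\phi_{g_L,\#L+\#N_L}(z_{L},w_{N_L})$, distinct blocks contributing independently. The $\hslash$-bookkeeping --- one power per contraction and $\hslash^{h-1}$ per $S$-piece hit --- collapses to the single constraint $i + \sum_{L\in\mathbf{L}}(g_L-1) = g$, which is \eqref{Omegagin}.

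The step I expect to be the main obstacle is this last combinatorial reorganization. One must check that the Leibniz expansion reproduces the sum over set partitions \emph{with multiplicity exactly one} per partition --- the symmetry factors $1/n!$ built into $S$ and the $n!$ orderings implicit in $\operatorname{ad}_{g,n}$ have to cancel cleanly --- and that the a priori distinct mechanisms by which a block can form (a chain of contractions onto an $S$-factor, a direct contraction, an isolated zero mode) all assemble into the single uniform family $\phi_{g_L,\#L+\#N_L}$ with the right genus labels, in particular reconstructing the inhomogeneous $\phi_{0,1}$ and $\phi_{0,2}$ pieces, which have a different origin from the generic correlators. This bookkeeping is of the same nature as in \cite[Section 5.1.2]{BBCC}. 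An alternative that systematizes it is induction on $i$: peeling off $\widehat{\mathcal{J}}_{i}$ via the Leibniz identity relates $\Omega_{g,i;n}$ to the $\Omega_{g',i-1;n'}$ together with explicit contraction terms, and one verifies that the right-hand side of \eqref{Omegagin} satisfies the same recursion, with the $i=1$ case as base.
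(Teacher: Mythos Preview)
Your proposal is correct and follows exactly the standard approach that the paper defers to: the paper's own ``proof'' simply omits the argument, pointing to \cite[Lemma 5.4]{BBCC} (and \cite{BBCCN18,BKS23}) and noting that the only novelty here is the presence of the unstable $\phi_{0,1},\phi_{0,2}$ terms. Your sketch --- conjugating each Heisenberg field by $e^{S}$, identifying the resulting pieces with the $\Phi$-part, the $Q_a\,\dd x/x$ zero-mode decoration, and the $\phi_{0,2}$-contraction kernel, then Leibniz-expanding and reorganizing into set partitions --- is precisely the content of those references, and you correctly flag that the inhomogeneous $(0,1)$ and $(0,2)$ pieces must be tracked separately, which is the one point where the present lemma differs from \cite[Lemma 5.4]{BBCC}.
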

\begin{proof}
	We omit the proof as the statement is a slight variant of \cite[Lemma 5.4]{BBCC} to include the unstable $\phi_{0,1}, \phi_{0,2}$ terms which are present in the Whittaker vector $\ket{\Gamma_{\Lambda}}$. See also \cite[Section 2]{BBCCN18} and \cite[Section 4]{BKS23}.
\end{proof} The formula \eqref{Omegagin}  shows that $\Omega_{g,i;n}$  contains precisely $i$ summands that involve the correlator $\phi_{g,1+n}$. It will be useful in the following to consider the expression  \eqref{Omegagin} where we remove these terms. More precisely, we define $\widehat{\Omega}_{g,i;n}$ as
	\begin{equation}\label{eq:Omega'def}
		\widehat{\Omega}_{g,i;n}(z_{[i]};w_{[n]}) := \sum_{\substack{\mathbf{L} \vdash [i] \\ \sqcup_{L \in \mathbb{L}} N_{L} = [n] \\ i + \sum_{L} (g_{L} - 1) = g}} \prod_{L \in \mathbf{L}} \phi_{g_L,\#L+\#N_L}(z_{L},w_{N_L}) - \sum_{j=1}^i \phi_{g,1+n}(z_j,w_{[n]})\prod_{l \neq j} \phi_{0,1}(z_l),
	\end{equation}
so that it does not involve any correlators $\phi_{g,1+n}$. The  $\mathcal{W}$-constraints \eqref{WhitG2} on $\ket{\Gamma_{\Lambda}}$ are equivalent  to the following restrictions on the $\Omega_{g,i;n}$.

\begin{lem}\label{lem:Omegaconst}
	The $\Omega_{g,i;n}$ satisfy the following condition for any $i \in [r]$:
	\begin{equation}
		\label{constraints}
		\sum_{\substack{Z \subseteq \mathfrak{f}(z) \\ \#Z = i}} \Omega_{g,i;n}(Z;w_{[n]}) = \delta_{g,0} \delta_{i,r} \delta_{n,0} \frac{(\Lambda \dd x)^{r}}{x^{r + 1}} + O\left(\frac{(\dd x)^{i}}{x^{i}}\right),
	\end{equation} where $x = x(z)$ and we recall that $\mathfrak{f}(z) = x^{-1}\left(x(z)\right)$ is the full fibre over the point $x(z)$. By $O\big(\frac{(\dd x)^i}{x^i}\big)$ we mean a quantity containing only terms of the form $x^k(\dd x)^i$ for $k \geq -i$; the $O$-notation is therefore understood as if the variable $x$ were approaching $0$.
\end{lem}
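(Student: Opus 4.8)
The plan is to translate the Whittaker constraints \eqref{WhitG2} into the language of the $\Omega_{g,i;n}$ via the operator $\operatorname{ad}_{g,n}$ and the free-field realisation of $\mathcal{W}(\mathfrak{gl}_r)$. The first move is to observe that the symmetrised sum over fibre subsets is nothing but $\operatorname{ad}_{g,n}$ applied to the conjugated $\mathcal{W}$-current: since the fibre $\mathfrak{f}(z)=x^{-1}(x(z))$ contains exactly one point on each sheet $C^a$, a subset $Z\subseteq\mathfrak{f}(z)$ with $\#Z=i$ is the same datum as a choice $1\le a_1<\cdots<a_i\le r$, and comparing with the realisation $\mathcal{W}^i(x)=\sum_{a_1<\cdots<a_i}\prod_{j=1}^i\mathcal{J}\big(\begin{smallmatrix} a_j\\x\end{smallmatrix}\big)$ from \eqref{eq:Wgens} together with the linearity of conjugation by $\ket{\Gamma_{\Lambda}}$ and of $\operatorname{ad}_{g,n}$, \cref{def:Omega} gives $\sum_{Z}\Omega_{g,i;n}(Z;w_{[n]})=\operatorname{ad}_{g,n}\big(\ket{\Gamma_{\Lambda}}^{-1}\mathcal{W}^i(x)\ket{\Gamma_{\Lambda}}\big)$. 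The legitimacy of these manipulations in the completed polynomial algebra in the $J^a_{-k}$ is exactly what underlies \cref{lem:Omegaexp}.

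Next I would split $\mathcal{W}^i(x)=\sum_{k\in\Ints}\frac{W_k^i(\dd x)^i}{x^{k+i}}$ into $\mathcal{W}^i_+(x)=\sum_{k>0}\frac{W_k^i(\dd x)^i}{x^{k+i}}$, which gathers all modes producing a pole of order strictly larger than $i$ at $x=0$, and $\mathcal{W}^i_-(x)=\sum_{k\le 0}\frac{W_k^i(\dd x)^i}{x^{k+i}}$, which only contains powers $x^{-(k+i)}(\dd x)^i$ with $-(k+i)\ge -i$ and is therefore $O\big(\frac{(\dd x)^i}{x^i}\big)$. The Whittaker condition \eqref{WhitG2} controls exactly the first piece: $\mathcal{W}^i_+(x)\ket{\Gamma_{\Lambda}}=\delta_{i,r}\frac{\Lambda^r(\dd x)^r}{x^{r+1}}\ket{\Gamma_{\Lambda}}$, so upon conjugating it contributes the $x$-dependent scalar $\delta_{i,r}\frac{\Lambda^r(\dd x)^r}{x^{r+1}}$. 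For the second piece, conjugation by the $x$-independent element $\ket{\Gamma_{\Lambda}}$ commutes with multiplication by powers of $x$, so $\ket{\Gamma_{\Lambda}}^{-1}\mathcal{W}^i_-(x)\ket{\Gamma_{\Lambda}}$ is again $O\big(\frac{(\dd x)^i}{x^i}\big)$. Finally, $\operatorname{ad}_{g,n}$ acts only on the operator content and leaves the prefactors in $x$ untouched, hence it preserves the property of being $O\big(\frac{(\dd x)^i}{x^i}\big)$, while on the $\hslash$- and $J$-free scalar $\delta_{i,r}\frac{\Lambda^r(\dd x)^r}{x^{r+1}}$ it returns $\delta_{g,0}\delta_{n,0}$ times it. Combining these with the first step gives precisely \eqref{constraints}, and reading the whole chain backwards (the pole of order $>i$ at $x=0$ isolating the modes $W_k^i$ with $k>0$) yields the converse implication asserted before the lemma.

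I do not expect a genuine obstacle here: the argument is essentially a bookkeeping exercise once the free-field realisation is in place. The one point that requires care is making the identification of the first step and the term-by-term handling of the current $\mathcal{W}^i(x)$ rigorous in the completed algebra --- in particular that $\ket{\Gamma_{\Lambda}}^{-1}\mathcal{W}^i_-(x)\ket{\Gamma_{\Lambda}}$ defines a well-formed element to which $\operatorname{ad}_{g,n}$ can be applied --- but this is exactly the formalism developed in \cite{BBCC} and recalled through \cref{def:Omega} and \cref{lem:Omegaexp}, so it can simply be invoked.
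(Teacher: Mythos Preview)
Your proof is correct and follows essentially the same approach as the paper: rewrite the Whittaker constraints \eqref{WhitG2} as the identity $\ket{\Gamma_{\Lambda}}^{-1}\mathcal{W}^i(x)\ket{\Gamma_{\Lambda}}=\delta_{i,r}\frac{(\Lambda\dd x)^r}{x^{r+1}}+O\big(\frac{(\dd x)^i}{x^i}\big)$ via the free-field realisation \eqref{eq:Wgens}, then apply $\operatorname{ad}_{g,n}$ and invoke \cref{def:Omega}. You spell out in more detail the mode-splitting $\mathcal{W}^i=\mathcal{W}^i_++\mathcal{W}^i_-$ that underlies this identity, and you also note the converse direction, but the argument is the same.
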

\begin{proof}
	The $\mathcal{W}$-constraints \eqref{WhitG2} that the Whittaker vector $\ket{\Gamma_{\Lambda}}$ satisfies can be written in the following  compact form using the expression \eqref{eq:Wgens} for the generators $\mathcal W^i(x)$ as elementary symmetric polynomials in the Heisenberg $1$-forms $\mathcal{J}$. For any $i \in [r]$, we have
	\begin{equation}\label{eq:Wconst}
		\ket{\Gamma_{\Lambda}}^{-1} \left(\sum_{1 \leq a_{1} < \cdots < a_{i} \leq r} \prod_{j=1}^i \mathcal{J}\left(\begin{smallmatrix} a_{j} \\ x \end{smallmatrix}\right) \right) \ket{\Gamma_{\Lambda}}  = \delta_{i,r} \frac{(\Lambda \dd x)^{r}}{x^{r + 1}} +  O\left(\frac{\left(\dd x\right)^{i}}{x^{i}}\right).
	\end{equation}  Applying the operator $\operatorname{ad}_{g,n}$ to the above equation, and using the definition of $\Omega_{g,i;n}$ from \cref{def:Omega} proves the lemma.
\end{proof}

\vspace{0.5cm}

\subsection{The spectral curve}

\medskip 
Let us treat the unstable correlators $\phi_{0,1}$ and $\phi_{0,2}$ first in order to obtain the spectral curve.
Recall the family of curves over $\Lambda \in \mathbb C^*$ that we have considered previously in \eqref{eq:Gaiottocurve}:
\begin{equation}\label{eq:Gaiottocurve2}
	\prod_{a = 1}^{r} \bigg(y - \frac{Q_a}{x}\bigg) + \frac{(-\Lambda)^{r}}{x^{r + 1}} = 0.
\end{equation} If $Q_1,\ldots, Q_r$ are pairwise distinct, the fibre  $S_\Lambda$  known as the half Seiberg--Witten  curve is a smooth curve of genus zero which admits the following explicit parametrisation with  coordinate $\zeta \in \mathbb P^1$.
\begin{equation}\label{eq:Gcurve}
x(\zeta) =  - \dfrac{\Lambda^r}{\prod_{a = 1}^{r} (Q_a - \zeta)}\,, \qquad y(\zeta)  =  -\dfrac{\zeta}{\Lambda^r} \prod_{a = 1}^{r} (Q_a - \zeta) \,.
\end{equation} Recall that to complete the description of the spectral curve, we define 
\begin{equation}
	\omega_{0,1}(\zeta) = y(\zeta)\dd x(\zeta),\qquad \omega_{0,2}(\zeta_1,\zeta_2) = \frac{\dd \zeta_1 \dd \zeta_2}{(\zeta_1 - \zeta_2)^2}.
\end{equation} We denote the ramification points of the branched covering $x : S_\Lambda \to \mathbb P^1$ excluding the one of index $r$ at $x = 0$ by $\operatorname{Ram}(S_\Lambda) \subset S_\Lambda$. These ramification points are all simple as long as  $Q_1,\ldots,Q_r$ are pairwise distinct. For each ramification point $\rho \in \operatorname{Ram}(S_\Lambda) $, we denote the associated  deck transformation (of degree two) by $\sigma_\rho $.

With this setup, we show that the unstable correlator $\phi_{0,1}$ can be analytically continued to the meromorphic differential $\omega_{0,1}$ on the half Seiberg--Witten  curve $S_\Lambda$. 
\begin{lem}\label{lem:01} Assume that $Q_1,\ldots, Q_r$ are pairwise distinct. The all-order series expansion of the meromorphic form $\omega_{0,1}(\zeta) $ on $S_{\Lambda}$ when $\zeta $ is near $ x^{-1}(\infty) \cong L $ with $1/x(\zeta)$ as a local coordinate, and then all-order series expansion as $\Lambda \rightarrow 0$, is given by $\phi_{0,1}(\zeta)$. Explicitly, we have
	\[
		\omega_{0,1}(\zeta)  \approx \phi_{0,1}\big(\begin{smallmatrix} \mathfrak{c}(\zeta) \\ x(\zeta) \end{smallmatrix}\big),
	\] where $\approx$ is our notation to indicate an identity of all-order expansions.
\end{lem}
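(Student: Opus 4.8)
The plan is to extract the relevant constraint on $\phi_{0,1}$ from Lemma~\ref{lem:Omegaconst} and then recognize it as the defining equation of the spectral curve. Concretely, I would first specialize the $\mathcal{W}$-constraints \eqref{constraints} to the stable-free sector $g = 0$, $n = 0$, and all $i \in [r]$. By Lemma~\ref{lem:Omegaexp}, in this regime $\Omega_{0,i;0}(z_{[i]})$ is the sum over set partitions $\mathbf{L}\vdash[i]$ with $\sum_{L}(g_L - 1) = -i$, which forces every block to carry $g_L = 0$ and hence only $\phi_{0,1}$ and $\phi_{0,2}$ contribute; since there are no $w$-variables, only the $\phi_{0,1}$ factors survive and we get $\Omega_{0,i;0}(z_{[i]}) = \prod_{j=1}^{i}\phi_{0,1}(z_j)$. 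Therefore \eqref{constraints} reads
\begin{equation*}
  \sum_{\substack{Z\subseteq\mathfrak{f}(z)\\ \#Z = i}} \prod_{z'\in Z}\phi_{0,1}(z') = \delta_{i,r}\,\frac{(\Lambda\,\dd x)^r}{x^{r+1}} + O\!\left(\frac{(\dd x)^i}{x^i}\right).
\end{equation*}

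Next I would package these $r$ identities (one for each $i$) into a single generating-function statement using \eqref{eq:Wgens}: writing $\phi_{0,1}(z') = Y(z')\,\dd x$ for the fibre of $x$, the left-hand sides are the elementary symmetric polynomials in the $r$ values $\{Y(z') : z'\in\mathfrak{f}(z)\}$, so
\begin{equation*}
  \prod_{z'\in\mathfrak{f}(z)}\bigl(u + Y(z')\bigr) = \sum_{i=0}^{r} u^{r-i}\Bigl(\delta_{i,r}\,\frac{\Lambda^r}{x^{r+1}} + O(x^{-i})\Bigr)\,(\dd x)^i / (\dd x)^i
\end{equation*}
after dividing through by the appropriate power of $\dd x$. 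Each value $Y(z')$ is a germ of a meromorphic function near $\infty_a$ with the prescribed simple pole $Q_a/x$ from the $\delta_{g,0}\delta_{n,1}$ term of \eqref{phidef}, so $Y(z') = Q_{\mathfrak{c}(z')}/x + (\text{holomorphic at }\infty_a)$. Comparing with the candidate parametrisation \eqref{eq:Gcurve}, I would set $Y(\zeta) := y(\zeta) = \zeta/x(\zeta)$ and verify that $y\,\dd x$ is exactly a meromorphic differential on $S_\Lambda$ with simple poles at the $r$ poles of $x$ and residues $-Q_a$: indeed $x(\zeta)$ has a simple pole at $\zeta = Q_a$, where $y(\zeta) = \zeta/x(\zeta)$ vanishes to first order, making $y\,\dd x$ regular there; the poles of $y\,\dd x$ lie at the preimages of $x = \infty$, i.e.\ $\zeta = \infty$ and the $r-1$ finite zeros of $\prod_a(Q_a - \zeta)$ — wait, more cleanly, the $r$ points of $x^{-1}(\infty)$, with the residue of $y\,\dd x = \zeta\,\dd x / x$ at each being $-Q_a$ by a direct residue computation. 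Plugging $y(\zeta)$ into \eqref{eq:Gaiottocurve2} confirms the equation holds identically in $\zeta$, so $\{Y(z') : z'\in\mathfrak{f}(z)\}$ coincides with the $r$ roots in $y$ of \eqref{eq:Gaiottocurve2} at fixed $x$; this forces the generating-function identity above, hence all $r$ constraints, to be satisfied with the sign bookkeeping $(-\Lambda)^r$ versus $\Lambda^r$ matching because the product is $\prod(y - Q_a/x)$ rather than $\prod(y + Q_a/x)$.

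Finally, since both $\phi_{0,1}$ (the germ cut out uniquely by the leading constraint and the pole normalization — uniqueness here follows because a germ of a meromorphic function at $r$ points with fixed principal parts is determined once we know it is the branch of an algebraic function) and $\omega_{0,1} = y\,\dd x$ satisfy the same defining equation \eqref{eq:Gaiottocurve2} with the same singular behaviour at $x^{-1}(\infty)$, the two all-order expansions coincide: $\omega_{0,1}(\zeta) \approx \phi_{0,1}\big(\begin{smallmatrix}\mathfrak{c}(\zeta)\\ x(\zeta)\end{smallmatrix}\big)$. I would also note that the expansion is bigraded — first in $1/x$ near $\infty_a$, then in $\Lambda$ — and that at $\Lambda = 0$ one recovers $y = Q_a/x$ on the $a$-th sheet, consistent with the unramified curve $C$ of \cref{S123} and with the $\phi_{0,1}$ normalization in \eqref{phidef}. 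The main obstacle I anticipate is not the algebra but the careful justification that the $O((\dd x)^i/x^i)$ error terms in \eqref{constraints} are exactly what is needed to pin down $Y$ as an \emph{algebraic} function (rather than merely a formal germ), i.e.\ that the constraints at all $i\in[r]$ together with the pole data leave no freedom; this is where one uses that the symmetric functions of the fibre are rational in $x$ with controlled poles, forcing $Y$ to lie on the curve \eqref{eq:Gaiottocurve2}.
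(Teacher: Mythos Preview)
Your overall strategy matches the paper's: specialise Lemma~\ref{lem:Omegaconst} to $(g,n)=(0,0)$, obtain $\Omega_{0,i;0}=\prod_j\phi_{0,1}(z_j)$, package the constraints for all $i\in[r]$ as a generating function, and identify it with the curve equation \eqref{eq:Gaiottocurve2}. However, the step you flag as the ``main obstacle'' is in fact the crux of the argument, and you have not actually carried it out. Your middle paragraph instead verifies that $\omega_{0,1}$ satisfies the constraints and then invokes a uniqueness claim --- but the uniqueness claim (``a germ \ldots\ is determined once we know it is the branch of an algebraic function'') presupposes that $\phi_{0,1}$ lies on an algebraic curve, which is precisely what must be shown. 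As written, the argument is circular.

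The paper resolves this directly and simply. From \eqref{phidef} one knows $\phi_{0,1}\big(\begin{smallmatrix} a \\ x \end{smallmatrix}\big) = Q_a\,\dd x/x + O(x^{-2})\,\dd x$, so the left-hand side of your constraint is a formal series in $1/x$ with leading term $e_i(\mathbf{Q})\,(\dd x)^i/x^i$ and all remaining terms of order $x^{-k}(\dd x)^i$ with $k>i$. The constraint says the right-hand side contains \emph{no} such terms except $\delta_{i,r}\,\Lambda^r\,(\dd x)^r/x^{r+1}$. Hence the $O$-term is forced to be exactly $e_i(\mathbf{Q})\,(\dd x)^i/x^i$, and the generating function becomes the \emph{exact} identity
\[
\prod_{a=1}^r\!\Big(u+\phi_{0,1}\big(\begin{smallmatrix} a \\ x \end{smallmatrix}\big)\Big) \;=\; \frac{(\Lambda\,\dd x)^r}{x^{r+1}} + \prod_{a=1}^r\!\Big(u + \frac{Q_a\,\dd x}{x}\Big).
\]
Setting $u=-\phi_{0,1}\big(\begin{smallmatrix} b \\ x \end{smallmatrix}\big)$ gives the curve equation \eqref{eq:Gaiottocurve2} directly. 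Once you have this, your detour through verifying that $\omega_{0,1}$ satisfies the constraints becomes unnecessary: both $\phi_{0,1}$ and $\omega_{0,1}$ are among the $r$ branches of the same algebraic equation, and the leading behaviour $Q_a/x$ picks out the matching branch.
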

\begin{proof}
If $g = 0$ and $n = 0$, for any $i \in [r]$, \cref{lem:Omegaconst} gives the following relation
\begin{equation}\label{eq:2.7}
\sum_{1 \leq a_{1} < \cdots < a_{i} \leq r} \prod_{j=1}^{i} \phi_{0,1}\big(\begin{smallmatrix} a_j \\ x \end{smallmatrix}\big) = \delta_{i,r}  \frac{(\Lambda\dd x)^{r}}{x^{r + 1}} + O\left(\frac{(\dd x)^{i}}{x^{i}}\right),
\end{equation} where we have used the explicit expression for $\Omega_{0,i;0}$ from \cref{lem:Omegaexp} and we recall that the $O(\cdots)$ here means terms of the form $(\dd x)^ix^{k}$ for $k \geq -i$. As $\phi_{0,1}\big(\begin{smallmatrix} a \\ x \end{smallmatrix}\big)  = \frac{Q_a \dd x }{x} + \cdots$ is a formal power series in $1/x$, we can determine the right-hand side of \eqref{eq:2.7} explicitly. Indeed, the coefficient of the term of order $\frac{(\dd x)^i}{x^{i}}$ is precisely the elementary symmetric polynomial $e_i(Q_1,\ldots, Q_r)$ and there are no terms of lower order. Thus, \eqref{eq:2.7} becomes
\begin{equation*}
	\sum_{1 \leq a_{1} < \cdots < a_{i} \leq r} \prod_{j=1}^{i} \phi_{0,1}\big(\begin{smallmatrix} a_j \\ x \end{smallmatrix}\big)  = \delta_{i,r}  \frac{(\Lambda\dd x)^{r}}{x^{r + 1}} + e_i(Q_1,\ldots, Q_r) \frac{(\dd x)^{i}}{x^{i}},
\end{equation*} which can be put into generating series form as 
\begin{equation}\label{eq:2.9}
	\prod_{a = 1}^r \left(u + \phi_{0,1}\big(\begin{smallmatrix} a \\ x \end{smallmatrix}\big)\right) = \frac{(\Lambda\dd x)^r}{x^{r + 1}} + \prod_{a = 1}^r \left(u  + \frac{Q_a \dd x}{x}\right).
\end{equation} By substituting  $ u = -   \phi_{0,1}\big(\begin{smallmatrix} b \\ x \end{smallmatrix}\big)$ for any $b \in [r]$, we obtain  the following equation
\begin{equation*}
0= \frac{(-\Lambda \dd x)^r}{x^{r + 1}} + \prod_{a = 1}^r \left(  u'-  \frac{Q_a \dd x}{x}\right),
\end{equation*} whose $r$ independent solutions are given by $u' = \phi_{0,1}\big(\begin{smallmatrix} b \\ x \end{smallmatrix}\big)$ for any $b \in [r]$. As the above equation gives a formula for the  $ \phi_{0,1}$ in terms of $x(z)$, we see that it analytically continues to a meromorphic differential on the curve $S_\Lambda$. To show that the analytic continuation of  $\phi_{0,1} $ matches  $\omega_{0,1}$, note that \eqref{eq:Gaiottocurve2}  provides a set  of $r$ independent solutions as well --- indeed, take $u' = \omega_{0,1}(\zeta)$, where $\omega_{0,1} $ is considered as an expansion in $1/x(\zeta)$ with $\zeta $ near $x^{-1}(\infty)\cong L$. These two different sets of $r$ solutions must coincide, and by analysing the leading coefficient in the expansion of $\omega_{0,1}$, we get the claim.
\end{proof}

Let us turn now to the statement for $(g,n) = (0,2)$. It will turn out to be useful to define the following projection operators.
\begin{defn}
For $i \in \mathbb{Z}$, define the \textit{projection operator}
\begin{equation*}
\begin{split}
	p_{\geq -i} : \mathbb C\left[\!\left[ x(z)^{\pm 1}\right]\!\right] (\dd  x(z))^i &\to   \mathbb C\left[\!\left[ x(z)^{\pm 1}\right]\!\right] (\dd  x(z))^i \,, \\ \sum_{n \in \mathbb Z} x(z)^{n} (\dd  x(z))^i &\mapsto  \sum_{n \geq -i} x(z)^{n} (\dd  x(z))^i.
\end{split}
\end{equation*}
It acts on a formal power series in $  x(z)^{\pm 1}$ and keeps only the terms of order $x(z)^{-i}$ and higher. Analogously, we define the  projection operator $p_{\leq -i} $ which  keeps only the terms of order $x(z)^{-i}$ and lower.
\end{defn}

\begin{lem}\label{lem:02}
	Assume that $Q_1,\ldots, Q_r$ are pairwise distinct. The series expansion of $\omega_{0,2}(\zeta_1,\zeta_2) $ as $\zeta_1,\zeta_2 $ is near $ x^{-1}(\infty) \cong L $ with $1/x(\zeta_1), 1/x(\zeta_2) $ as a local coordinate is given by $\phi_{0,2}(\zeta_1,\zeta_2)$:
	\[
	\omega_{0,2}(\zeta_1,\zeta_2) \approx \phi_{0,2}\big(\begin{smallmatrix} \mathfrak c(\zeta_1)& \mathfrak c(\zeta_1) \\ x(\zeta_1) & x(\zeta_2) \end{smallmatrix}\big),
	\] where $\omega_{0,2}(\zeta_1,\zeta_2) $ is considered as an expansion in $1/x(\zeta_i)$ near $\zeta_i  = Q_{a_i}$.
\end{lem}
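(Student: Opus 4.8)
\textbf{Proof plan for \cref{lem:02}.}
The plan is to extract the $(g,n)=(0,2)$ instance of the $\mathcal W$-constraints in \cref{lem:Omegaconst} and bootstrap from the already-established formula for $\phi_{0,1}$ in \cref{lem:01}. Setting $g=0$, $n=2$ (with no $w$-insertions, only the two $z$-variables, or rather with one spectator variable $w$ and one $z$-variable, depending on how $\Omega_{0,i;n}$ is bookkept), \cref{lem:Omegaexp} expresses $\Omega_{0,i;2}$ in terms of products of $\phi_{0,1}$'s and a single factor of $\phi_{0,2}$; the constraint \eqref{constraints} then reads, after dividing by the appropriate product of $\phi_{0,1}$-differences, as a linear equation that isolates $\phi_{0,2}(z,w)$ as a sum over the fibre $\mathfrak f(z)$. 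Concretely, I would differentiate the generating-series identity \eqref{eq:2.9} (or rather its $\ket{\Gamma_\Lambda}$-conjugated analogue with one extra Heisenberg insertion) with respect to the spectator variable; since $\phi_{0,1}\big(\begin{smallmatrix} a \\ x\end{smallmatrix}\big)$ satisfies the polynomial equation coming from \eqref{eq:Gaiottocurve2}, implicit differentiation in the spectral-curve variable $\zeta$ yields precisely the Bergman-kernel-type expression for $\phi_{0,2}$.

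The key steps, in order: (i) write down \eqref{constraints} for $g=0$, $i\in[r]$, $n=2$ and use \cref{lem:Omegaexp} to turn it into an identity among the $\phi_{0,\bullet}$; (ii) observe that the only genuinely new unknown is $\phi_{0,2}$, which appears linearly, multiplied by products $\prod_{l\ne j}\phi_{0,1}(z_l)$; (iii) package the $r$ equations ($i=1,\dots,r$) into a single generating-function statement by taking $\prod_a(u+\cdot)$ as in \eqref{eq:2.9}, differentiate with respect to the second point, and solve for $\phi_{0,2}$; (iv) recognise the resulting closed formula as the pullback to $C$ (near $L$) of the standard bi-differential $\tfrac{\dd\zeta_1\dd\zeta_2}{(\zeta_1-\zeta_2)^2}$ on $S_\Lambda$, using that the branches $y=\phi_{0,1}\big(\begin{smallmatrix} a\\ x\end{smallmatrix}\big)$ are exactly the $r$ sheets of $x:S_\Lambda\to\mathbb P^1$ near $x=\infty$, as proved in \cref{lem:01}. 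For step (iv) the cleanest route is: on a genus-zero curve the fundamental bi-differential is unique, so it suffices to check that the candidate differential obtained from the constraints is symmetric, has a double pole with bi-residue $1$ on the diagonal $\{\zeta_1=\zeta_2\}$, and no other poles in either variable; the double-pole structure on the diagonal is visible from the $\delta_{\mathfrak c(z_1),\mathfrak c(z_2)}\tfrac{\dd x_1\dd x_2}{(x_1-x_2)^2}$ term built into the definition \eqref{phidef}, while absence of spurious poles (in particular at the ramification points, where naively $\phi_{0,1}(z)-\phi_{0,1}(z')\to 0$) follows because the apparent poles cancel between the sum over the fibre $\mathfrak f(z)$ and the derivative of $x(\zeta)$.

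The main obstacle I anticipate is precisely the pole analysis at the ramification points in step (iv): the formula for $\phi_{0,2}$ coming directly from the $\mathcal W$-constraints is a priori only a germ near $L\subset C$, and one must argue that its analytic continuation to $S_\Lambda$ acquires no poles at $\operatorname{Ram}(S_\Lambda)$ nor at $x^{-1}(0)$ beyond what is allowed, i.e.\ that it lies in $H^0(K_\pi^{\boxtimes 2}(2\Delta),S^{[2]})$. This is the genus-zero manifestation of the statement that $\omega_{0,2}$ for topological recursion is the Bergman kernel; concretely, one writes $\phi_{0,2}$ in the $\zeta$-coordinate via $x=x(\zeta)$, $\dd x=x'(\zeta)\dd\zeta$, and checks that the would-be pole at $x'(\zeta)=0$ is killed by the numerator. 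An alternative, perhaps more robust, way to finish that avoids writing the formula explicitly: once the constraints pin down $\phi_{0,2}$ uniquely as a germ of a meromorphic bi-differential on $S_\Lambda$ with a double pole on $\Delta$ and with a prescribed principal part there, and once one checks it has no residue on $\Delta$ and no other poles near $L$, uniqueness of the normalised fundamental bi-differential on $\mathbb P^1$ forces the identification with $\tfrac{\dd\zeta_1\dd\zeta_2}{(\zeta_1-\zeta_2)^2}$; the delicate point that remains is verifying no-other-poles globally, which one handles by the same symmetric-function manipulation as in \cref{lem:01}, exploiting that the product over branches $\prod_{a}(u+\phi_{0,1})$ is a \emph{polynomial} in $u$ with coefficients rational in $x$, so its logarithmic derivative in the spectator variable has controlled poles.

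Finally, to match the two all-order expansions ($\Lambda\to 0$ after $1/x\to 0$, vs.\ the reverse order implicit in viewing $\omega_{0,2}$ as a section of $K_\pi^{\boxtimes 2}$), I would invoke the same leading-coefficient comparison used at the end of \cref{lem:01}: both $\phi_{0,2}$ and the $\zeta$-expansion of $\omega_{0,2}$ solve the same recursive system of linear relations dictated by \eqref{constraints} with the same initial term $\delta_{\mathfrak c(z_1),\mathfrak c(z_2)}\tfrac{\dd x_1\dd x_2}{(x_1-x_2)^2}$, hence agree order by order. This completes the identification $\omega_{0,2}(\zeta_1,\zeta_2)\approx\phi_{0,2}\big(\begin{smallmatrix}\mathfrak c(\zeta_1) & \mathfrak c(\zeta_2)\\ x(\zeta_1) & x(\zeta_2)\end{smallmatrix}\big)$.
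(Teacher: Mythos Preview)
Your skeleton is right --- extract the $(g,n)=(0,1)$ constraint (not $(0,2)$: one spectator $w$, fibre variables $z_{[i]}$, so $\Omega_{0,i;1}$), isolate $\phi_{0,2}$, identify with $\omega_{0,2}$ --- but step (iii) glosses over the crucial difficulty. The constraint reads
\[
\sum_{\substack{Z\subseteq\mathfrak f(z),\ \#Z=i\\ z'\in Z}} \phi_{0,2}(z',w)\prod_{z''\in Z\setminus\{z'\}}\phi_{0,1}(z'') \;=\; O\!\left(\frac{(\dd x)^i}{x^i}\right),
\]
and the right-hand side is an \emph{unknown} holomorphic $i$-differential in $x$; it is not zero, and it is not obtained by differentiating \eqref{eq:2.9} in any way (that identity has no $w$-dependence to differentiate). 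Packaging into $\prod_a(u+\cdot)$ and specialising $u=-\phi_{0,1}(z)/\dd x$ still leaves this unknown on the right. The paper's device, absent from your plan, is to apply the projection $p_{\geq -i}$ to the left-hand side: after projection only the \emph{known} singular part $\delta_{\mathfrak c(z'),\mathfrak c(w)}\frac{\dd x(z')\dd x(w)}{(x(z')-x(w))^2}$ of $\phi_{0,2}$ survives (the regular tail is $O(x^{-2}\dd x)$ and the $\phi_{0,1}$-products are $O(x^{-(i-1)}(\dd x)^{i-1})$), which determines the $O(\cdot)$ term explicitly. A short manipulation then yields the closed formula \eqref{eq:02explicit} for $\phi_{0,2}$ purely in terms of $\phi_{0,1}$ and $x$, and analytic continuation follows from \cref{lem:01}.

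For step (iv), you correctly note that uniqueness of the fundamental bi-differential on $\mathbb P^1$ would finish, and you flag pole-cancellation at ramification points as the main obstacle. The paper mentions this route but instead does a three-line direct computation in the uniformising coordinate: with $D(\zeta)=\prod_a(Q_a-\zeta)$ one has $x(\zeta_1)-x(\zeta_2)=-\Lambda^r D(\zeta_1)^{-1}D(\zeta_2)^{-1}\prod_{\zeta_2'\in\mathfrak f(\zeta_2)}(\zeta_2'-\zeta_1)$ and $\dd x(\zeta_1)=\Lambda^r D(\zeta_1)^{-2}\prod_{\zeta_1'\in\mathfrak f'(\zeta_1)}(\zeta_1'-\zeta_1)\,\dd\zeta_1$, and substituting into \eqref{eq:02explicit} collapses everything to $\frac{\dd\zeta_1\dd\zeta_2}{(\zeta_1-\zeta_2)^2}$. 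So the pole analysis you anticipate is bypassed entirely; the real work is upstream, in the projection step you skipped.
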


\begin{proof}
	\cref{lem:Omegaconst} when $(g,n) = (0,1)$, in combination with the explicit formula for  $\Omega_{0,i;2} $ proved in \cref{lem:Omegaexp}, imposes the following restriction on the correlator $\phi_{0,2}$ for any $i \in [r]$:
\begin{equation}
\label{constrg0n1}
\sum_{\substack{Z \subseteq \mathfrak{f}(z) \\ \#Z = i \\ z' \in Z}} \phi_{0,2}(z', w) \prod_{z''  \subseteq Z \setminus \{z'\}} \phi_{0,1}(z'') = O\bigg(\frac{(\dd x(z))^{i}}{x(z)^{i}}\bigg).
\end{equation}
We consider the left-hand side of the above equation as a formal series in $x(z') $ ($ = x(z)$), where we first expand in $x(w)$ near $\infty$, and then in $x(z) $ near $\infty$. In other words, we expand in the region $|x(z)| < |x(w)|$. Our goal is to determine the right-hand side of \eqref{constrg0n1} explicitly. Applying the projection operator $p_{\geq-i} $ to the left-hand side of \eqref{constrg0n1} yields
\begin{equation}\label{eq:2.11}
 p_{\geq -i}\left( \sum_{\substack{\{z'\} \subseteq Z \subseteq \mathfrak{f}(z) \\ \#Z = i \\ \mathfrak{c}(z') = \mathfrak{c}(w)}} \frac{\dd x(z')\dd x(w)}{(x(z')-x(w))^{2}} \prod_{z''  \in Z \setminus \{z'\}} \phi_{0,1}(z'')  \right),
\end{equation}
which follows from the simple observations
\begin{equation}
\begin{split} & \quad p_{\geq{-(i-2)}} \left(\prod_{z''  \in Z \setminus \{z\}} \phi_{0,1}(z'') \right) = 0, \\
& \quad p_{\geq{-1}} \left( \phi_{0,2}(z', w) \right) = p_{\geq{-1}} \left( \delta_{\mathfrak{c}(z'),\mathfrak{c}(w)} \frac{\dd x(z')\dd x(w)}{(x(z')-x(w))^{2}} \right).
\end{split}
\end{equation}
Thus, all the other terms disappear upon applying the projection $p_{\geq -i}$. For any subset $Z \subseteq \mathfrak{f}(z)$, we denote $\mathfrak{c}(Z) = \{\mathfrak{c}(z)\,\,|\,\,z \in Z\}$. Then, given a non-empty subset $Z \subseteq \mathfrak{f}(z)$ and an element $z' \in Z$, we decompose
\[
\prod_{z'' \subseteq Z \setminus \{z'\}} \frac{\phi_{0,1}(z'')}{\dd x(z'')} = \sum_{j\geq \#Z-1} c^{\mathfrak{c}(Z),\mathfrak{c}(z')}_j x(z)^{-j}.
\]
The coefficients $c^{\mathfrak{c}(Z),\mathfrak{c}(z')}_j$ only depend on the set $\mathfrak{c}(Z)$ and the element $\mathfrak{c}(z')$, and not directly on $Z$ or $z'$.
Using this notation, we rewrite the right-hand side of \eqref{eq:2.11} as follows:
\begin{equation}
\label{eq:2.12}
	p_{\geq -i}\left[  \left(\dd x(z)\right)^i \dd_{w}\left(\sum_{\substack{Z \subseteq \mathfrak{f}(z) \\ \#Z = i }} \sum_{\substack{z' \in Z \\ \mathfrak{c}(z')=\mathfrak{c}(w)}}   \sum_{j\geq i-1}c^{\mathfrak{c}(Z),\mathfrak{c}(z')}_j \bigg(\frac{x(z)^{-j} -  x(w)^{-j}}{x(z)-x(w)} + \frac{x(w)^{-j}}{x(z) - x(w)}\bigg)\right)\right].
\end{equation}
Recall that we always expand in the region $|x(z)| < |x(w)|$, and hence the last term remains unchanged after applying the projection $p_{\geq -i}$ with $i \geq 1$. Let us compute the  result of applying the projection to the first term. We have
\begin{equation*}
\begin{split}
p_{\geq -i}  \left[ \sum_{j\geq i-1}c^{\mathfrak{c}(Z),\mathfrak{c}(z')}_j  \frac{x(z)^{-j} -  x(w)^{-j}}{x(z)-x(w)}\right] 
&= p_{\geq -i}  \left[   \sum_{j\geq i-1}\frac{c^{\mathfrak{c}(Z),\mathfrak{c}(z')}_j}{x(w)^j} \frac{1}{x(z)^j} \frac{x(w)^{j} -  x(z)^{j}}{x(z)-x(w)} \right] \\
&=    \left(\sum_{j\geq i-1}\frac{c^{\mathfrak{c}(Z),\mathfrak{c}(z')}_j}{x(w)^j} \right) \frac{1}{x(z)^i} \frac{x(w)^{i} -  x(z)^{i}}{x(z)-x(w)},
\end{split}
\end{equation*}
where in order to get the last line, we simply remove all the terms of order strictly lower than $x(z)^{-i}$. Applying this simplification to \eqref{eq:2.12} gives the following simplified expression for the right-hand side of  \eqref{constrg0n1}
\begin{equation}
 \left(\dd x(z)\right)^i \dd_{w}\left(\sum_{\substack{\{w\} \subseteq W \subseteq \mathfrak{f}(w) \\ \#W = i}}  \left( \prod_{w''  \in W \setminus \{w\}} \frac{\phi_{0,1}(w'')}{\dd x(w'')}  \right) \frac{x(w)^{i}}{x(z)^i (x(z)-x(w))} \right),
\end{equation}
where we have replaced the sum over $Z \subseteq \mathfrak{f}(z)$ by a sum over subsets $W \subseteq \mathfrak{f}(w)$, as the coefficients $c_j$ do not directly depend on the $Z$ as previously noted. Thus, we have fully determined the right-hand side of \eqref{constrg0n1}, and  equation \eqref{constrg0n1} takes the following equivalent form:
\begin{multline*}
\sum_{\substack{Z \subseteq \mathfrak{f}(z) \\ \#Z = i \\ z' \in Z}} \frac{\phi_{0,2}(z', w)}{\dd x(z')} \prod_{z''  \in Z \setminus \{z'\}} \frac{\phi_{0,1}(z'')}{\dd x(z'')} \\
	 =  \dd_{x(w)}\left(\sum_{\substack{\{w\} \subseteq W \subseteq \mathfrak{f}(w) \\ \#W = i}}  \left( \prod_{w''  \in W \setminus \{w\}} \frac{\phi_{0,1}(w'')}{\dd x(w'')}  \right) \frac{x(w)^{i}}{x(z)^i (x(z)-x(w))} \right).
\end{multline*}
Finally, just as in \cref{lem:01}, we consider the generating series by applying  $\sum_{i =1}^{r} u^{r-i}$ to the above equation and specialise the equation to $u=-\frac{\phi_{0,1}(z)}{\dd x(z)}$, which gives
\begin{multline}\label{eq:02explicit}
	 \frac{\phi_{0,2}(z, w) }{\dd x(z)} \prod_{z''  \in \mathfrak{f}'(z) } \left(\frac{\phi_{0,1}(z'')- \phi_{0,1}(z)}{\dd x(z)}\right) 
	\\ = \dd_{w}\left(  \frac{x(w)}{ x(z)^r(x(z)-x(w))} \prod_{w''  \in \mathfrak{f}'(w)}  \left(\frac{x(w)\phi_{0,1}(w'')}{\dd x(w'')}  - \frac{x(z)\phi_{0,1}(z)}{\dd x(z)} \right)  \right).
\end{multline}
This equation gives an expression for $\phi_{0,2}(z, w)$ entirely in terms of $\phi_{0,1} $ and $x$. As \cref{lem:01} states that $\phi_{0,1}$ analytically continues to the meromorphic differential $\omega_{0,1}$ on the half Seiberg--Witten  curve $S_\Lambda$, and  $x$ analytically continues to the function $x(\zeta)$ on $S_\Lambda$ by definition, we  conclude that $ \phi_{0,2}(z, w)$  analytically continues to a meromorphic bi-differential on $S_{\Lambda}$, say $\tilde{\phi}_{0,2}(\zeta_1,\zeta_2)$.

There are two ways to show that $\tilde{\phi}_{0,2}$ coincides with the bi-differential $\omega_{0,2}(\zeta_1,\zeta_2) = \frac{\dd \zeta_1 \dd \zeta_2}{(\zeta_1 - \zeta_2)^2}$. The first way is to check that the pole structure match, i.e., the only pole is a double pole on the diagonal with bi-residue $1$. Here we prefer a second way, which is a direct computation. Using $x(\zeta)y(\zeta) = \zeta$  we can rewrite
\begin{align*}
 \tilde{\phi}_{0,2}&(\zeta_1,\zeta_2)  \\
& =  \prod_{\zeta_1'  \in \mathfrak{f}'(\zeta_1)}\frac{1}{y(\zeta_1')- y(\zeta_1)} \cdot \dd_{\zeta_2}\left(  \frac{\dd x(\zeta_1) x(\zeta_2)}{ x(\zeta_1)^r(x(\zeta_1)-x(\zeta_2))} \prod_{\zeta_2'  \in \mathfrak{f}'(\zeta_2) }  \left( x(\zeta_2')y(\zeta_2') - x(\zeta_1)y(\zeta_1) \right)  \right) \\
&   = \prod_{\zeta_1'  \in \mathfrak{f}'(\zeta_1)}\frac{1}{ \zeta_1'- \zeta_1} \cdot \dd_{\zeta_2}\left(  \frac{\dd x(\zeta_1) x(\zeta_2)}{ x(\zeta_1)(x(\zeta_1)-x(\zeta_2))} \prod_{\zeta_2'  \in \mathfrak{f}'(\zeta_2) }  \left( \zeta_2' - \zeta_1 \right)  \right),
\end{align*} where  the notation $\mathfrak f'(\zeta)$ now means $x\circ x^{-1}(\zeta) \setminus \{\zeta\}$ and $x$ is the meromorphic function on $S_\Lambda $ from  \eqref{eq:Gaiottocurve2}. Setting $D(\zeta) = \prod_{a = 1}^{r} (Q_a - \zeta)$ we have $x(\zeta) = -\Lambda^r/D(\zeta)$, and thus
\begin{align*}
x(\zeta_1) - x(\zeta_2) & = \frac{\Lambda^r(D(\zeta_2) - D(\zeta_1))}{D(\zeta_1)D(\zeta_2)} = -\frac{\Lambda^r}{D(\zeta_1)D(\zeta_2)} \prod_{\zeta_2' \in \mathfrak{f}(\zeta_2)} (\zeta_2' - \zeta_1) \\
\dd x(\zeta_1) & = \dd \zeta_1 \lim_{\zeta_2 \rightarrow \zeta_1} \frac{x(\zeta_1) - x(\zeta_2)}{\zeta_1 - \zeta_2} = \frac{\Lambda^r \dd \zeta_1}{D(\zeta_1)^2} \prod_{\zeta_1' \in \mathfrak{f}'(\zeta_1)} (\zeta_1' - \zeta_1). 
\end{align*}
Then:
\[
\tilde{\phi}_{0,2}(\zeta_1,\zeta_2) = \frac{\Lambda^r  \dd \zeta_1}{D(\zeta_1)^2\dd x(\zeta_1)} \dd_{\zeta_2}\left(\frac{\dd x(\zeta_1) D(\zeta_1)}{D(\zeta_2)}\,\frac{-D(\zeta_1)D(\zeta_2)}{\Lambda^r (\zeta_2 - \zeta_1)}\right) = \frac{\dd \zeta_1\dd \zeta_2}{(\zeta_1 - \zeta_2)^2}.
\qedhere \]
\end{proof}
This completes the proof of \cref{th:main} for the unstable correlators.

\vspace{0.5cm}

\subsection{Stable correlators and topological recursion}

\medskip

Finally, we turn to the stable correlators $\phi_{g,n}$ where $2g-2+n>0$. We proceed in two steps. First, we show that the correlators $\phi_{g,n}$ analytically continue to meromorphic differentials on the half Seiberg--Witten  curve $S_\Lambda$ with poles only at the ramification points $\operatorname{Ram}(S_\Lambda)$. Second, we show that these analytically continued correlators satisfy the linear and quadratic abstract loop equations. Then, as the system of correlators $\omega_{g,n}$ constructed by the topological recursion is the unique solution to the abstract loop equations \cite{BEO, BSblob,TRlimits}, we  conclude that the analytic continuation of the stable $\phi_{g,n}$ coincides with $\omega_{g,n}$.

\begin{prop} 
\label{prop:stablecorrelators}
	Assume that $Q_1,\ldots, Q_r$ are pairwise distinct. If $2g-2+n >0$, then the correlators $\phi_{g,n} $ which are $n$-differentials on a formal neighbourhood of $ L \cong x^{-1}(\infty)$ in $C$  admit an analytic continuation as meromorphic $n$-differentials on the half Seiberg--Witten  curve $S_\Lambda$. Moreover, these analytic continuations, denoted $\tilde{\phi}_{g,n}$, only have poles at the ramification points $\operatorname{Ram}(S_\Lambda)$ of the spectral curve.
\end{prop}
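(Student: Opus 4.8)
The plan is an induction on $2g-2+n\geq 1$, using \cref{lem:01,lem:02} for the two unstable correlators as the base case and assuming the statement for all $\phi_{g',n'}$ with $2g'-2+n'<2g-2+n$. Fix a stable $(g,n)$. Starting from the $\mathcal W$-constraints of \cref{lem:Omegaconst} and the explicit expansion of \cref{lem:Omegaexp}, I would separate inside $\Omega_{g,i;n-1}$ the $i$ summands that carry the top correlator $\phi_{g,n}$, so that the constraint for each $i\in[r]$ reads
\[
\sum_{\substack{Z\subseteq\mathfrak f(z)\\ \#Z=i}}\ \sum_{z'\in Z}\phi_{g,n}(z',w_{[n-1]})\!\!\prod_{z''\in Z\setminus\{z'\}}\!\!\phi_{0,1}(z'')\ +\ \sum_{\substack{Z\subseteq\mathfrak f(z)\\ \#Z=i}}\widehat\Omega_{g,i;n-1}(Z;w_{[n-1]})\ =\ O\!\left(\frac{(\dd x)^i}{x^i}\right),
\]
the explicit $\delta$-term of \eqref{constraints} being absent because $(g,n)$ is stable. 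Since for stable $(g,n)$ the germ $\phi_{g,n}$ is holomorphic near $L^n$, in the distinguished variable $\phi_{g,n}/\dd x$ is a series in $1/x$ of order at most $x^{-2}$, so the first sum lies in the kernel of $p_{\geq-i}$; applying $p_{\geq-i}$ to the identity therefore \emph{computes} the a priori unknown right-hand side as $p_{\geq-i}$ of the $\widehat\Omega$-sum --- this is the step where the bookkeeping caused by the $\delta$-term of $\phi_{0,2}$ inside $\widehat\Omega$ (and by the choice of expansion order) appears, entirely as in the proof of \cref{lem:02}. Subtracting, passing to the generating series $\sum_{i=1}^r u^{r-i}(\cdots)$, recognising the left-hand side as $\sum_{z'\in\mathfrak f(z)}\frac{\phi_{g,n}(z',w_{[n-1]})}{\dd x}\prod_{z''\in\mathfrak f'(z')}\!\bigl(u+\frac{\phi_{0,1}(z'')}{\dd x}\bigr)$ and specialising $u=-\phi_{0,1}(z)/\dd x(z)$ (which annihilates every sheet except $z$), I would obtain a closed formula giving $\phi_{g,n}(z,w_{[n-1]})$ as a universal rational expression in $x$, $\phi_{0,1}$, $\phi_{0,2}$ and the lower correlators, with denominator $\prod_{z''\in\mathfrak f'(z)}\bigl(\phi_{0,1}(z'')-\phi_{0,1}(z)\bigr)$, which under the uniformisation \eqref{eq:Gcurve} equals a constant times $D(\zeta)^{r-1}D'(\zeta)$ with $D(\zeta)=\prod_{a=1}^r(Q_a-\zeta)$. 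By \cref{lem:01,lem:02} and the induction hypothesis every ingredient of this formula analytically continues to a meromorphic object on $S_\Lambda$ (resp.\ its fibre powers), and finite sums, products, the truncations $p_{\leq-i-1}$ and division by that polynomial preserve meromorphy; this produces the analytic continuation $\tilde\phi_{g,n}$.

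It then remains to locate the poles of $\tilde\phi_{g,n}$. The denominator vanishes only at the $r$ points $\zeta=Q_a$ of $L=x^{-1}(\infty)$ and at the $r-1$ simple ramification points; the numerator can contribute poles only at $x^{-1}(0)$ (through the negative powers of $\phi_{0,1}=y\,\dd x$ and the truncations $p_{\leq-i-1}$), on loci of the form $\{x(\zeta)=x(w_j)\}$ (through the $\phi_{0,2}$-factors in $\widehat\Omega$), and at poles inherited from the lower correlators, which by induction already lie in $\operatorname{Ram}(S_\Lambda)$. I would then argue three points. (i) There is no pole at $L$: near $L$ the continuation $\tilde\phi_{g,n}$ coincides with the holomorphic germ it was built from, so the zero of the numerator there absorbs the zero of the denominator. (ii) There is no pole at $x^{-1}(0)$, which is excluded from $\operatorname{Ram}(S_\Lambda)$: this follows from an order count showing that the vanishing of $\dd x$ (to order $r-1$) together with the vanishing of $1/(D(\zeta)^{r-1}D'(\zeta))$ at $\zeta=\infty$ (to order $r^2-1$) outweighs the poles produced by $y^{r-i}$ and by the truncations --- the $\mathcal W$-constraint bound $O\bigl((\dd x)^i/x^i\bigr)$ is exactly what caps the order of the relevant symmetric combinations over the fibre. (iii) The remaining candidate poles on $\{x(\zeta)=x(w_j)\}$, which arise from the diagonal pole of $\phi_{0,2}$, are spurious: once the $z$-variable is expanded after the spectator variables (as in \cref{lem:02}) the most singular part of $\phi_{0,2}$ is removed by $p_{\leq-i-1}$, and using in addition the symmetry of $\phi_{g,n}$ in its $n$ arguments --- which persists under analytic continuation --- one concludes that the polar divisor of $\tilde\phi_{g,n}$ is contained in $\{\zeta_i\in\operatorname{Ram}(S_\Lambda)\text{ for some }i\}$. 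Hence $\tilde\phi_{g,n}\in H^0\bigl(K_\pi(*\operatorname{Ram}(S))^{\boxtimes n},S^{[n]}\bigr)$ and, in particular, $\tilde\phi_{g,n}$ has no diagonal poles.

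The step I expect to be the genuine obstacle is (ii): regularity at the totally ramified point $x^{-1}(0)$. Unlike the simple ramification points, this point sits at $y=\infty$, the fibre coalesces to order $r$, and several of the building blocks --- $\phi_{0,1}$ itself, and $\phi_{0,2}$ restricted to pairs of distinct sheets --- individually blow up there, so the cancellation emerges only after the symmetric sums over the fibre are taken and the precise form of the $\mathcal W$-constraint bound is used. Everything else is a formal induction modelled closely on the treatment of $\phi_{0,1}$ and $\phi_{0,2}$ in \cref{lem:01,lem:02}.
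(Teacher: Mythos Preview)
Your outline matches the paper's proof closely: separate $\phi_{g,n}$ from $\widehat\Omega$, observe that the $\phi_{g,n}$ part lies in the kernel of $p_{\geq-i}$, pass to the generating series in $u$, specialise $u=-\phi_{0,1}(z)/\dd x$, and induct. The divergence is entirely in how the pole analysis is executed.

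The paper does not attempt a direct order count on the final expression. Instead it rewrites the truncation as a contour integral
\[
p_{\leq -i-1}\!\left[\sum_{\substack{Z\subseteq\mathfrak f(z)\\ \#Z=i}}\widehat\Omega_{g,i;n}(Z;w_{[n]})\right]
=(\dd x(z))^i\oint_\gamma\frac{\dd x(v)}{x(z)-x(v)}\left(\frac{x(v)}{x(z)}\right)^{\!i}\sum_{\substack{V\subseteq\mathfrak f(v)\\ \#V=i}}\frac{\widehat\Omega_{g,i;n}(V;w_{[n]})}{(\dd x(v))^i},
\]
with $\gamma$ a small circle around $x=\infty$ lying in the region $|x(v)|<|x(z)|,|x(w_j)|$. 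After analytic continuation one \emph{flips} $\gamma$, picking up only residues in the exterior of the contour. This kills two birds at once. For your step~(iii): the fibre-coincidence loci $x(v)=x(w_j)$ sit at large $|x|$, hence \emph{inside} $\gamma$, so they contribute nothing upon flipping; no separate argument about $p_{\leq-i-1}$ ``removing the singular part of $\phi_{0,2}$'' is needed --- and that claim of yours is in fact the weakest link, since in the formal regime $\phi_{0,2}$ contributes non-negative powers of $x(z)$ and is not obviously annihilated termwise. For your step~(ii): regularity at the excluded ramification point $\zeta=\infty$ becomes an order count on the \emph{integrand}, not on the summed expression. Using that $x(\zeta)\sim\zeta^{-r}$ and that the lower $\tilde\phi_{g',n'}$ are holomorphic at $\zeta=\infty$ by induction, the integrand behaves like
\[
\frac{\dd\zeta}{\zeta^{r+1}}\cdot\frac{1}{\zeta^{ri}}\cdot\frac{\zeta^{i(r+1)}}{\zeta^{2i}}=\frac{\dd\zeta}{\zeta^{r+i+1}},
\]
which is holomorphic for every $i\geq1$; this is much shorter than the balance of orders you sketch between $\dd x$, $D(\zeta)^{r-1}D'(\zeta)$ and $y^{r-i}$.

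So the overall plan is right, but you should expect to introduce the contour-integral rewriting of $p_{\leq-i-1}$: it is not cosmetic but the mechanism that makes both (ii) and (iii) go through without ad hoc cancellations. Your step~(i) (no pole at $L$) is correct and the paper does not even isolate it, since holomorphy at $x^{-1}(\infty)$ is immediate from the formal construction.
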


\begin{proof}
	Let us start by  extracting the terms containing $\phi_{g,1+n}$ from  the $\Omega_{g,i;n}$ using the  $\widehat{\Omega}_{g,i;n} $ defined in equation \eqref{eq:Omega'def}. For any $2g-2+(1+n)>0$, and $i \in [r]$, we have for a fixed $z$
	\begin{equation*}
		\sum_{\substack{Z \subseteq \mathfrak{f}(z) \\ \#Z = i}}  \Omega_{g,i;n}(Z;w_{[n]}) = \sum_{\substack{\{z\} \subseteq Z \subseteq \mathfrak{f}(z) \\ \#Z = i}} \phi_{g,1+n}(z,w_{[n]})\prod_{\substack{z' \in Z \setminus\{z\}}} \phi_{0,1}(z')+ \sum_{\substack{\\ Z \subseteq \mathfrak{f}(z) \\ \#Z = i}} \widehat{\Omega}_{g,i;n}(Z;w_{[n]}).
	\end{equation*} The first sum on the right-hand side involving the $ \phi_{g,1+n}$ vanishes upon applying  the projection $p_{\geq -i}$. This means that the constraints of \cref{lem:Omegaconst} can be written explicitly as 
	\begin{equation}\label{eq:projconst}
			\sum_{\substack{\{z\} \subseteq Z \subseteq \mathfrak{f}(z) \\ \#Z = i}} \phi_{g,1+n}(z,w_{[n]})\prod_{z' \in Z \setminus\{z\}} \phi_{0,1}(z') = - p_{\leq -i-1} \left[ \sum_{\substack{Z \subseteq \mathfrak{f}(z) \\ \#Z = i}} \widehat{\Omega}_{g,i;n}(Z;w_{[n]})\right].
	\end{equation}
	The usual trick of applying $\sum_{i=1}^{r}u^{r-i}$ to  the above equation to get the generating series, and then specialising to $u=- \phi_{0,1}(z)$ gives the following equation
	 \begin{multline}\label{eq:extractphi}
		 \quad \phi_{g,1+n}(z,w_{[n]}) \\
		 = \frac{-1}{\prod_{z'\in \mathfrak f'(z)} \left( \phi_{0,1}(z') - \phi_{0,1}(z)\right)} \sum_{i=1}^r \left(-\phi_{0,1}(z)\right)^{r-i} p_{\leq -i-1} \left[ \sum_{\substack{\\ Z \subseteq \mathfrak{f}(z) \\ \#Z = i}} \widehat{\Omega}_{g,i;n}(Z;w_{[n]})\right].
	 \end{multline}
	 We can rewrite the projection appearing on the right-hand side as 
	\[
	 p_{\leq -i-1} \left[ \sum_{\substack{\\ Z \subseteq \mathfrak{f}(z) \\ \#Z = i}} \widehat{\Omega}_{g,i;n}(Z;w_{[n]})\right] = (\dd x(z))^i \oint_{\gamma} \frac{\dd x(v)}{x(z)-x(v)} \left(\frac{x(v)}{x(z)} \right)^i \sum_{\substack{\\ V \subseteq \mathfrak{f}(v) \\ \#V = i}} \frac{\widehat{\Omega}_{g,i;n}(V;w_{[n]})}{(\dd x(v))^i},
	\]
	where we choose the contour $\gamma$ in the  $x(v)$-plane to be centred at $x(v) = \infty$ such that   $|x(v)| < |x(z)|$, i.e., the point $x(z)$ lies inside the contour $\gamma$. Also recall our standing assumption on $ \widehat{\Omega}_{g,i;n}(V;w_{[n]}) $ that  $ |x(v)| < |x(w_j)| $ for any $ j \in [n]$, which implies that  $x(w_j)$ are inside this contour $\gamma$. From this rewriting as a contour integral, we see from \eqref{eq:extractphi} that $\phi_{g,1+n}$ can be expressed in terms of  $x$ and $\phi_{g',1+n'}$ with $2g'-2+(1 + n') < 2g-2+(1+n)$. We know from \cref{lem:01} and \cref{lem:02} that the unstable correlators $\phi_{0,1} , \phi_{0,2} $ analytically continue to the meromorphic differentials $\omega_{0,1}, \omega_{0,2}$ respectively.  Thus, by induction on $(2g-2+n)$, we see that $\phi_{g,1+n}(z,w_{[n]})$ analytically continues to a meromorphic n-differential on the curve $S_\Lambda$. Let us denote these analytically continued differentials by $\tilde{\phi}_{g,1+n}(\zeta_0,\zeta_{[n]})$,
	\begin{multline}
	\label{eq:phitilde}
	 \quad \tilde{\phi}_{g,1+n}(\zeta_0,\zeta_{[n]})  = \\ \frac{y(\zeta_0)^r \dd x(\zeta_0)}{\prod_{\zeta_0'\in \mathfrak f'(\zeta_0)} \left( y(\zeta_0) - y(\zeta'_0)\right)} \sum_{i=1}^r \oint_{\gamma} \frac{\dd x(\zeta)}{x(\zeta_0)-x(\zeta)} \left(\frac{-x(\zeta)}{\zeta_0}\right)^i \sum_{\substack{Z \subseteq \mathfrak{f}(\zeta) \\ \#Z = i}} \frac{\widehat{\Omega}_{g,i;n}(Z;\zeta_{[n]})}{(\dd x(\zeta))^i},
	\end{multline} where we have used the fact that $x(\zeta)y(\zeta)=\zeta$. In this formula $\widehat{\Omega}_{g,i;n}$ is a combination of the $\tilde{\phi}_{g',1+n'}$ with $2g'-2+ (1 + n') < 2g-2+(1+n)$, and the contour $\gamma $ is now a contour in the $x(\zeta)$-plane centred at $x(\zeta) = \infty $ such that $|x(\zeta)|<|x(\zeta_0)|$. 
	
	To understand the poles of the $\tilde{\phi}_{g,1+n}$, we flip the contour to evaluate the residues outside $\gamma$. By the induction hypothesis, the only possible poles in the integrand of \eqref{eq:phitilde} in the variable $\zeta$ that are outside the contour $\gamma$ are at the ramification points -- i.e., the points $\operatorname{Ram}(S_\Lambda)$ and the index $r$ ramification point at $x(\zeta) = 0$. The other possible poles are at $x(\zeta) = x(\zeta_j)$ for $j \in 0 \cup [n] $, all of which lie inside the contour $\gamma$. Moreover, by induction  the integrand does not have a pole at the point $x(\zeta) = 0$ where $\zeta = \infty$. Indeed, as the $\tilde{\phi}_{g',1+n'}$ with $2g'-2+ (1 + n') < 2g-2+(1+n)$ are assumed to be holomorphic at $\zeta = \infty$, near $\zeta = \infty$ (where $x = 0$) the integrand behaves as 
	\[
		\frac{d\zeta}{\zeta^{r+1} } \times \frac{1}{\zeta^{r i}} \times \frac{\zeta^{i(r+1)}}{ \zeta^{2i}} = \frac{d \zeta}{\zeta^{r+i+1}},
	\]  as $x(\zeta)$ behaves as $\frac{1}{\zeta^r}$. Thus the integrand  is holomorphic at the ramification point $x = 0$. Finally, by evaluating the contour integral, we see that the differential $\tilde{\phi}_{g,1+n}$ only has poles at  the ramification points $\operatorname{Ram}(S_\Lambda)$ in the variables $\zeta_0$ and $\zeta_{[n]}$.
\end{proof}

Let us now show that the analytically continued correlators $\tilde{\phi}_{g,n}$ on the spectral curve $S_\Lambda$ satisfy the abstract loop equations, as considered in \cite{BSblob, BEO}.
\begin{prop}\label{prop:ALE}
For any $g,n \geq 0$, and any ramification point $\rho \in \operatorname{Ram}(S_\Lambda)$ of the spectral curve $S_\Lambda$, the analytically continued correlators $\tilde{\phi}_{g,1+n}$ satisfy the linear and quadratic loop equations, i.e., the two expressions
\begin{equation*}
\begin{split}
  &   \tilde{\phi}_{g,1+n}(\zeta_0, \zeta_{[n]}) + \tilde{\phi}_{g,1+n}(\sigma_{\rho}(\zeta_0), \zeta_{[n]}), \\
  & \tilde{\phi}_{g-1,2+n}(\zeta_0, \sigma_{\rho}(\zeta_0), \zeta_{[n]}) + \sum_{\substack{g_1+g_2=g \\ J_1 \sqcup J_2= [n]}} \tilde{\phi}_{g_1, 1+\#J_1}(\zeta_0, \zeta_{J_1}) \tilde{\phi}_{g_2, 1+\#J_2}\left(\sigma_{\rho}(\zeta_0), \zeta_{J_2}\right),\end{split}
  \end{equation*}
are holomorphic as $\zeta_0$ approaches $\operatorname{Ram}(S_\Lambda)$.
\end{prop}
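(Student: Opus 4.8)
The plan is to derive the loop equations directly from the $\mathcal{W}$-constraints \eqref{constraints}, exploiting that the two expressions in the statement are special evaluations of the combinations $\Omega_{g,i;n}$. By \cref{lem:Omegaexp} one has $\Omega_{g,1;n}(\zeta_0;\zeta_{[n]}) = \tilde{\phi}_{g,1+n}(\zeta_0,\zeta_{[n]})$ and
\[
\Omega_{g,2;n}\bigl(\zeta_0,\sigma_{\rho}(\zeta_0);\zeta_{[n]}\bigr) = \tilde{\phi}_{g-1,2+n}\bigl(\zeta_0,\sigma_{\rho}(\zeta_0),\zeta_{[n]}\bigr) + \sum_{\substack{g_1+g_2=g \\ J_1\sqcup J_2=[n]}} \tilde{\phi}_{g_1,1+\#J_1}(\zeta_0,\zeta_{J_1})\,\tilde{\phi}_{g_2,1+\#J_2}\bigl(\sigma_{\rho}(\zeta_0),\zeta_{J_2}\bigr),
\]
so that the linear loop equation quantity is $\Omega_{g,1;n}(\zeta_0;\zeta_{[n]}) + \Omega_{g,1;n}(\sigma_{\rho}(\zeta_0);\zeta_{[n]})$ and the quadratic one is $\Omega_{g,2;n}(\zeta_0,\sigma_{\rho}(\zeta_0);\zeta_{[n]})$, where $\Omega_{g,i;n}$ is understood to be built from the analytic continuations $\tilde{\phi}_{g',n'}$ supplied by \cref{lem:01}, \cref{lem:02} and \cref{prop:stablecorrelators}.

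The first step is to record what \cref{lem:Omegaconst} yields on $S_{\Lambda}$. Being symmetric in the fibre $\mathfrak{f}(z)$, the combination $\sum_{Z\subseteq\mathfrak{f}(z),\,\#Z=i}\Omega_{g,i;n}(Z;\zeta_{[n]})$ descends to a meromorphic section over $\mathbb{P}^1_x$; by \eqref{constraints} its poles are confined to $x\in\{0,\infty\}\cup\{x(\zeta_j):j\in[n]\}$ — in particular \eqref{constraints} forces the poles of the individual $\tilde{\phi}$'s along $\operatorname{Ram}(S_{\Lambda})$ to cancel in the symmetric sum. Since the $r-1$ points of $\operatorname{Ram}(S_{\Lambda})$ lie over finite nonzero values of $x$ ($x=0$ carries the index-$r$ ramification, excluded from $\operatorname{Ram}$, and $x=\infty$ is unramified), this section is holomorphic at $x=x(\rho)$ as soon as $\zeta_{[n]}$ is chosen away from $x^{-1}(x(\rho))$. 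For the linear loop equation I would then fix $\rho\in\operatorname{Ram}(S_{\Lambda})$ and, for $\zeta_0$ near $\rho$, split $\mathfrak{f}(\zeta_0)=\{\zeta_0,\sigma_{\rho}(\zeta_0)\}\sqcup R$ with $R$ the remaining $r-2$ preimages of $x(\zeta_0)$; using $\sigma_{\rho}^{*}\dd x=\dd x$ to compare the two summands as $1$-forms in $\zeta_0$,
\[
\tilde{\phi}_{g,1+n}(\zeta_0,\zeta_{[n]}) + \tilde{\phi}_{g,1+n}\bigl(\sigma_{\rho}(\zeta_0),\zeta_{[n]}\bigr) = \sum_{\substack{Z\subseteq\mathfrak{f}(\zeta_0)\\ \#Z=1}}\Omega_{g,1;n}(Z;\zeta_{[n]}) \;-\; \sum_{\zeta'\in R}\tilde{\phi}_{g,1+n}(\zeta',\zeta_{[n]}),
\]
where the first term on the right is holomorphic at $\rho$ by the previous sentences (indeed \eqref{constraints} with $i=1$ makes it vanish for $2g-2+(1+n)>0$ and equal $\bigl(\sum_{a=1}^{r}Q_a\bigr)\dd x/x$ for $(g,n)=(0,0)$) and each $\tilde{\phi}_{g,1+n}(\zeta',\zeta_{[n]})$ with $\zeta'\in R$ is holomorphic there because $\zeta'$ stays away from $\operatorname{Ram}(S_{\Lambda})$ and from the spectators; hence the left-hand side is holomorphic at $\rho$. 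The unstable cases $(g,n)\in\{(0,0),(0,1)\}$ are immediate ($\omega_{0,1}=y\,\dd x$ is holomorphic at $\rho$, and $\omega_{0,2}$ has no pole off the diagonal). Alternatively this can be read off the explicit formula \eqref{eq:phitilde}: under $\sigma_{\rho}$ the only factor that degenerates is $y(\zeta_0)-y(\sigma_{\rho}(\zeta_0))$, with a \emph{simple} zero at $\rho$ since $\dd y$ does not vanish there, so the $\sigma_{\rho}$-symmetrisation is regular.

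For the quadratic loop equation I would use the same splitting and expand
\[
\sum_{\substack{Z\subseteq\mathfrak{f}(\zeta_0)\\ \#Z=2}}\Omega_{g,2;n}(Z;\zeta_{[n]}) = \Omega_{g,2;n}\bigl(\zeta_0,\sigma_{\rho}(\zeta_0);\zeta_{[n]}\bigr) + \sum_{\zeta'\in R}\Bigl[\Omega_{g,2;n}(\zeta_0,\zeta';\zeta_{[n]}) + \Omega_{g,2;n}\bigl(\sigma_{\rho}(\zeta_0),\zeta';\zeta_{[n]}\bigr)\Bigr] + \sum_{\{\zeta',\zeta''\}\subseteq R}\Omega_{g,2;n}(\zeta',\zeta'';\zeta_{[n]}).
\]
The left-hand side is holomorphic at $\rho$ by the first step. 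Expanding $\Omega_{g,2;n}$ with \cref{lem:Omegaexp}, each bracket in the middle sum is a combination of linear-loop-equation quantities in the \emph{first} variable, namely $\tilde{\phi}_{g-1,2+n}(\zeta_0,\zeta',\zeta_{[n]})+\tilde{\phi}_{g-1,2+n}(\sigma_{\rho}(\zeta_0),\zeta',\zeta_{[n]})$ together with, for each splitting $g_1+g_2=g$, $N_1\sqcup N_2=[n]$, the product $\bigl[\tilde{\phi}_{g_1,1+\#N_1}(\zeta_0,\zeta_{N_1})+\tilde{\phi}_{g_1,1+\#N_1}(\sigma_{\rho}(\zeta_0),\zeta_{N_1})\bigr]\,\tilde{\phi}_{g_2,1+\#N_2}(\zeta',\zeta_{N_2})$; each bracketed factor is holomorphic at $\rho$ by the linear loop equation just proved (including the unstable $\tilde{\phi}_{0,1}$ and $\tilde{\phi}_{0,2}$, whose $\sigma_{\rho}$-symmetrisations are regular at $\rho$), and it is multiplied by a factor regular at $\zeta_0=\rho$; the last sum is holomorphic at $\rho$ because all its arguments stay away from $\operatorname{Ram}(S_{\Lambda})$, from each other, and from the spectators. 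Subtracting, $\Omega_{g,2;n}(\zeta_0,\sigma_{\rho}(\zeta_0);\zeta_{[n]})$ is holomorphic at $\rho$, which is the quadratic loop equation.

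The hard part will be the combinatorics of this last step, and especially the degenerate configurations in which two ramification points of $x$ lie over the same value of $x$: this can happen even for pairwise distinct $Q_a$, and there the ``both in $R$'' sum is no longer manifestly regular, while the appeal to the linear loop equation for the ``cross'' terms becomes tangled with it. I would deal with this either by first proving the loop equations on the dense open locus of $(Q_a)$ for which $x$ has pairwise distinct critical values and then obtaining the general case by continuity (both the $\phi_{g,n}$ and the topological recursion correlators on $S_{\Lambda}$ depend continuously on $(Q_a)$), or by replacing the explicit fibre splitting with the loop-equation-polynomial formalism of \cite{BEO,BSblob,TRlimits}, which encodes all of \eqref{constraints} simultaneously and delivers the linear and quadratic loop equations at every simple ramification point at once. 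A secondary point is to make sure the formal identity \eqref{constraints}, which lives near $x^{-1}(\infty)$, is correctly promoted to the global statement used in the first step; this is fine because, once \cref{prop:stablecorrelators} is available, both sides of \eqref{constraints} are meromorphic on $S_{\Lambda}$ and so agree on the whole curve.
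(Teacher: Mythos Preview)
Your proposal is correct and follows the same route as the paper: both derive the linear loop equation from the $i=1$ constraint of \cref{lem:Omegaconst} via the fibre splitting $\mathfrak{f}(\zeta_0)=\{\zeta_0,\sigma_\rho(\zeta_0)\}\sqcup R$, and then the quadratic one from the $i=2$ constraint with the same splitting, reducing the cross terms to instances of the linear loop equation. Your caveat about two ramification points sharing an $x$-value is a refinement the paper does not address explicitly, and your genericity-plus-continuity fix is the natural remedy.
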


\begin{proof}
	The  $ i =1$ case of the constraints on $\Omega_{g,i;n}$ obtained in \eqref{eq:projconst} gives
	\[
	\sum_{z' \in \mathfrak{f}(z)} \phi_{g,1+n}(z',w_{[n]}) = 0\,.
	\] Passing to the analytic continuation, we can write for each ramification point $\rho \in \operatorname{Ram}(S_\Lambda)$:
	\[
	\tilde{\phi}_{g,1+n}(\zeta_0,\zeta_{[n]}) + 	\tilde{\phi}_{g,1+n}(\sigma_\rho(\zeta_0),\zeta_{[n]})  = -\left( \sum_{\zeta'_0 \in \mathfrak f'(\zeta_0) \setminus \{\sigma_\rho(\zeta_0)\}}  \tilde{\phi}_{g,1+n}(\zeta_0',\zeta_{[n]})  \right).
	\]
	As the $\tilde{\phi}_{g,n}$ only have poles at the ramification points, the right-hand side clearly has no poles at $\zeta  = \rho$, and this proves the linear loop equation. 
	
	Let us turn to the quadratic loop equation now. Consider the analytic continuation of the statement of \cref{lem:Omegaconst} for $i = 2$:
		  \begin{multline}\label{eq:i=2}\frac{1}{2} \sum_{\substack{\zeta_0',\zeta_0''\in\mathfrak{f}(\zeta_0)\\ \zeta'_0 \neq \zeta_0''}} \tilde{\phi}_{g-1,2+n}(\zeta'_0, \zeta''_0, \zeta_{[n]}) + \frac{1}{2} \sum_{\substack{\zeta_0',\zeta_0''\in\mathfrak{f}(\zeta_0)\\ \zeta_0' \neq \zeta_0''}}  \sum_{\substack{g_1+g_2=g \\ J_1 \sqcup J_2 = [n]}} \tilde{\phi}_{g_1, 1+\#J_1}(\zeta_0', \zeta_{J_1}) \tilde{\phi}_{g_2,1+ \#J_2}(\zeta''_0, \zeta_{J_2}) \\
		  = \delta_{g,0} \delta_{r,2} \delta_{n,0} \frac{\Lambda^r (\dd x(\zeta_0))^{2}}{x(\zeta_0)^{3}} + O\left(\frac{(\dd x(\zeta_0))^{2}}{x(\zeta_0)^{2}}\right).
\end{multline}
The factor of $2$ is due to the fact that we sum over ordered pairs $\zeta_0',\zeta_0''$ instead of pairs $\{\zeta_0',\zeta_0''\}$. We claim that the right-hand side is regular at any ramification point $\rho \in \operatorname{Ram}(S_\Lambda)$, as there are no terms of order $ \frac{(\dd x(\zeta))^{2}}{x(\zeta)^{k}}  $ for any $k \geq 4$ . Indeed, the series expansion at $x(\zeta) = \infty$ of a pole at any ramification point (note that $x = \infty$ is not a branch point), would create holomorphic terms of the form $\frac{(\dd x(\zeta))^{2}}{x(\zeta)^{k}} $ for arbitrarily large $k \gg 0$. Now, we consider the terms of interest for the quadratic loop equations appearing in the left-hand side of equation \eqref{eq:i=2}. The first sum can be split into
		  	\begin{multline*}
\tilde{\phi}_{g-1,2+n}(\zeta_0, \sigma_\rho(\zeta_0), \zeta_{[n]}) + \frac{1}{2} \sum_{\substack{\zeta_0',\zeta_0'' \in\mathfrak{f}'(\zeta_0) \setminus \{\sigma_{\rho}(\zeta_0)\} \\ \zeta_0' \neq \zeta_0'' } } \tilde{\phi}_{g-1,2+n}(\zeta_0',\zeta_0'' , \zeta_{[n]}) \\ 
 + \sum_{\zeta_0' \in\mathfrak{f}'(\zeta_0) \setminus \{\sigma_{\rho}(\zeta_0)\}} \left(\tilde{\phi}_{g-1,2+n}(\zeta_0, \zeta'_0, \zeta_{[n]}) +\tilde{\phi}_{g-1,2+n}(\sigma_{\rho}(\zeta_0), \zeta_0', \zeta_{[n]})\right),
\end{multline*}
		   where the second line is regular at the ramification point $\rho$ as the $ \tilde{\phi}_{g-1,2+n} $ only have poles at the ramification points, and the third line is regular at $ \rho$ thanks to the linear loop equation. The second sum in \eqref{eq:i=2} now, which can be rewritten in a similar fashion:
		   \begin{multline*}
		     \sum_{\substack{g_1+g_2=g \\ J_1 \sqcup J_2 = [n]}} \bigg(  \tilde{\phi}_{g_1, 1+\#J_1}(\zeta_0, \zeta_{J_1}) \tilde{\phi}_{g_2, 1+\#J_2}(\sigma_\rho(\zeta_0), \zeta_{J_2}) \\
		      + \frac{1}{2} \sum_{\substack{\zeta'_0,\zeta_0'' \in\mathfrak{f}'(\zeta_0) \setminus \{\sigma_{\rho}(\zeta_0)\} \\ \zeta_0' \neq \zeta_0''}} \tilde{\phi}_{g_1, 1+\#J_1}(\zeta'_0, \zeta_{J_1}) \tilde{\phi}_{g_2, 1+\#J_2}(\zeta_0'', \zeta_{J_2})  \\   + \sum_{\zeta_0' \in\mathfrak{f}'(\zeta_0) \setminus \{\sigma_{\rho}(\zeta_0)\}}  \phi_{g_1,1+\#J_1}(\zeta_0',\zeta_{J_1})\left( \tilde{\phi}_{g_2, 1+\#J_2}(\zeta_0,\zeta_{J_2}) + \tilde{\phi}_{g_2,1+\# J_2}(\sigma_\rho(\zeta_0), \zeta_{J_2})\right) \bigg).
\end{multline*}
			Again, the second line is clearly regular at $\rho$, while the third line is regular thanks to the linear loop equation. Putting  this together proves the quadratic loop equation. 
\end{proof}

\begin{proof}[Proof of \cref{th:main}]
	We are now in position to finish the proof. For the unstable correlators, i.e., $(g,n) = (0,1)$ and $(g,n) = (0,2)$, we have already proved the theorem in \cref{lem:01} and \cref{lem:02} respectively. As for the stable correlators $\phi_{g,1+n}$ with $2g - 2 + (1 + n) > 0$, we proved in \cref{prop:stablecorrelators} that they admit analytic continuations $\tilde{\phi}_{g,1+n}$ with poles only at the ramification points $\operatorname{Ram}(S_\Lambda)$. In \cref{prop:ALE} we showed that the correlators $\tilde{\phi}_{g,1+n}$ satisfy the abstract loop equations. Besides, the linear loop equation implies that for any ramification point $\rho \in \operatorname{Ram}(S_\Lambda)$
	\begin{equation}
	\label{Res0}
	\Res_{\zeta = \rho} \tilde{\phi}_{g,1+n}(\zeta,\zeta_{[n]}) = 0
	\end{equation}
	for reasons of parity with respect to $\sigma_{\rho}$. Since $S_{\Lambda} \simeq \mathbb{P}^1$, we  have by the Cauchy residue formula
\begin{equation}
\label{normphig}
\begin{split}
	\tilde{\phi}_{g,1+n}(\zeta_0,\zeta_{[n]}) & = \sum_{\rho \in \operatorname{Ram}(S_\Lambda)} \Res_{\zeta = \rho} \dd \zeta_0 \left(\frac{1}{\zeta_0 - \zeta} - \frac{1}{\zeta_0 - \rho}\right) \tilde{\phi}_{g,1+n}(\zeta,\zeta_{[n]}) \\
	& = \sum_{\rho \in \operatorname{Ram}(S_\Lambda)} \Res_{\zeta = \rho}  \left(\int_{\rho}^{\zeta} \omega_{0,2}(\cdot,\zeta_0)\right) \tilde{\phi}_{g,1+n}(\zeta,\zeta_{[n]}). 
	\end{split}
	\end{equation} 
	where the term $\frac{1}{\zeta_0 - \rho}$ does not contribute to the residue due to \eqref{Res0}. According to \cite{BEO,TRlimits}, meromorphic differentials with poles only at the ramification points that satisfy the abstract loop equations and the normalisation property \eqref{normphig} are uniquely reconstructed by the topological recursion. Thus, the differentials $\tilde{\phi}_{g,1+n}$ coincide with the topological recursion correlators $\omega_{g,1+n}$.
\end{proof}

\vspace{1cm}

\section{The CDO vector: proof of \texorpdfstring{\cref{th:main2}}{Theorem 2.3}}
\label{S4}

\medskip

In this section, we study a different set $\mathcal W$-constraints that are relevant for Hurwitz theory. These constraints can be viewed as a generalisation of the ones characterising the Gaiotto vector that we studied previously in Section~\ref{S3}, and we will prove that the associated partition function can be computed via the topological recursion on a ramified spectral curve.  The half Seiberg--Witten  curve \eqref{eq:Gaiottocurve} of pure gauge theory  can be recovered as a limiting case.

\vspace{0.5cm}

\subsection{The Airy structure}

\medskip

We describe the Airy structure with degree zero terms constructed and studied in \cite{CDO24} by Chidambaram, Do{\l}{\k{e}}ga and Osuga, based on the modes of the $\mathcal{W}$-algebra $\mathcal{W}(\mathfrak{gl}_r)$. While the Airy structure exists at arbitrary shifted level $\kappa$ --- and this is crucial for the study of $b$-Hurwitz numbers in \cite{CDO24} --- we restrict here to the case of self-dual level $\kappa = 1$.

Consider two tuples of parameters $\mathbf P = (P_1,\cdots, P_r)$ and $\mathbf Q = (Q_1,\ldots, Q_{r - 1})$ and introduce the automorphism of the Heisenberg VOA which sends the modes $J^a_k$ to $\widetilde{J}^a_k$ with
\begin{equation}
\label{eq:Jrep}
\begin{split}
\widetilde{J}^a_{k} = \left\{\begin{array}{lr}  J^a_k & k > 0; \\[8pt]
						J^a_k  +(-\Lambda)^{-r}\delta_{a,r} \delta_{k,-1} & k < 0; \\[8pt]
						Q_{a} & k = 0 \,\,\text{and}\,\,a \in [r-1]; \\[8pt]  -\big(|\mathbf{P}| + |\mathbf{Q}|\big) & k = 0\,\,\text{and}\,\,a = r.
		\end{array}
		\right.
	\end{split}
	\end{equation}
	where we denote $|\mathbf{P}| = e_1(\mathbf{P}) = \sum_{a = 1}^{r} P_a$ and likewise for $|\mathbf{Q}|$.
We consider a variant of the Verma module \eqref{VLambda} where we allow $\Lambda$ to be inverted
\[
\widetilde{\mathcal{V}}_{\Lambda} = \mathbb{C}(\!(\mathbf P, \mathbf Q, \Lambda^r)\!)\big[\!\big[(J_{-k}^{a})_{k \in \mathbb{Z}_{> 0}}^{a \in [r]}\big]\!\big](\!(\hslash)\!).
\]
We let the Heisenberg algebra act on $\widetilde{\mathcal{V}}_{\Lambda}$ with the modified modes \eqref{eq:Jrep}, and this restricts to a representation of the $\mathcal{W}(\mathfrak{gl}_r)$-algebra by the modes
\[
\widetilde{W}^i_k = \sum_{1 \leq a_1<\cdots < a_i \leq r} \sum_{\substack{k_1,\ldots, k_i \in \mathbb Z\\ k_1 + \cdots + k_i = k }}  \widetilde{J}^{a_1}_{k_1} \cdots \widetilde{J}^{a_i}_{k_i} \,.
\] 
From this representation, after shifting the modes $\widetilde{W}^i_0$ to $\widetilde{W}^i_0 +(-1)^{i+1} e_i(\mathbf{P}) $, the authors of \cite{CDO24} construct a shifted Airy structure. More precisely, applying the fundamental theorem of Airy structures, they obtain the following result.
\begin{thm}\label{thm:genZ}
	Assume that $Q_1,\ldots, Q_{r-1} $ are pairwise distinct. There exists a unique  $\ket{\Gamma_{\Lambda}^{\text{CDO}}} \in \widetilde{\mathcal{V}}_{\Lambda}$ of the form
	\begin{equation}
	\label{CDOfor}
	\ket{\Gamma_{\Lambda}^{\text{CDO}}} =\exp\left( \sum_{(g,n) \in \mathbb{Z}_{\geq 0} \times \mathbb{Z}_{> 0}}   \frac{\hslash^{g-1}}{n!} \sum_{\substack{a_1,\ldots,a_n \in [r] \\ k_1,\ldots,k_n \in \mathbb{Z}_{> 0}}} \Phi_{g,n}\left[\begin{smallmatrix} a_1 & \cdots & a_n \\ k_1 & \cdots & k_n \end{smallmatrix}\right] \prod_{j = 1}^{n} \frac{J^{a_j}_{-k_j}}{k_j} \right),  
	\end{equation}
	satisfying the constraints
	\begin{equation}\label{eq:CDOconst}
		\forall (i,k) \in [r] \times \mathbb{Z}_{\geq 0} \qquad \widetilde{W}^i_k \ket{\Gamma_{\Lambda}^{\text{CDO}}}= (-1)^i e_i(\mathbf{P}) \delta_{k,0} \ket{\Gamma_{\Lambda}^{\text{CDO}}}.
	\end{equation}
	Moreover, the coefficients $ \Phi_{g,n}\left[\begin{smallmatrix} a_1 & \cdots & a_n \\ k_1 & \cdots & k_n \end{smallmatrix}\right]  $ belong to the ring $\mathbb C (\mathbf Q) [\mathbf P][\![\Lambda^r]\!]$. In particular, the $\Phi_{g,n}$ are formal \emph{power} series in $\Lambda^r$ (not just Laurent series).
\end{thm}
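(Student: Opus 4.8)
The plan is to recognise \eqref{eq:CDOconst} as the constraint system of a \emph{shifted} Airy structure and then quote the fundamental theorem of Airy structures \cite{KSTR,ABCD,BBCCN18}; this is the route taken in \cite{CDO24}, so I only indicate the shape of the argument, isolate where the hypothesis on the $Q_a$ is used, and add a homogeneity argument for the statement on the coefficient ring. Write $H^i_k := \widetilde W^i_k - (-1)^i e_i(\mathbf P)\,\delta_{k,0}$ for $i\in[r]$, $k\in\mathbb Z_{\geq 0}$, so that \eqref{eq:CDOconst} becomes $H^i_k\ket{\Gamma_{\Lambda}^{\text{CDO}}}=0$. Two points must be checked. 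First, the $H^i_k$ satisfy the Lie-theoretic consistency conditions of a shifted Airy structure: this is immediate from the free-field realisation \eqref{eq:Jrep}, \eqref{eq:Wgens}, since the modes $\widetilde W^i_k$ with $k\geq 0$ already close such a family (a reflection of the vertex-algebra structure of $\mathcal W(\mathfrak{gl}_r)$ --- every mode produced by a commutator has non-negative index), and the central scalars $(-1)^i e_i(\mathbf P)\delta_{k,0}$ only affect the degree-zero data, which the shifted framework is designed to absorb. Second, after a linear change of the Heisenberg variables the $H^i_k$ must take the triangular form of an Airy structure (one Heisenberg derivative, plus terms of higher order in the creation modes and in $\hslash$); this is where the hypothesis enters, since the diagonalisation of the linear part is governed by the zero modes $\widetilde J^a_0=Q_a$ ($a<r$) and $\widetilde J^r_0=-(|\mathbf P|+|\mathbf Q|)$, which is non-degenerate precisely because $Q_1,\ldots,Q_{r-1}$ are pairwise distinct --- exactly as for the $\mathcal W(\mathfrak{gl}_r)$-Airy structures of \cite{BBCC}.

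The only feature absent from \cite{BBCC} is that the Fock representation has been twisted by the automorphism \eqref{eq:Jrep}, which is the composition of (a) the translation $J^r_{-1}\mapsto J^r_{-1}+(-\Lambda)^{-r}$ of a creation mode by a scalar and (b) the specialisation of the zero modes to scalars. Both operations turn an Airy structure into a shifted one and preserve solvability by the fundamental theorem, provided the shifts are admissible, i.e.\ compatible with the conformal grading on the Heisenberg algebra; this is a short verification. Granting it, the fundamental theorem yields a \emph{unique} $\ket{\Gamma_{\Lambda}^{\text{CDO}}}$ of the exponential form \eqref{CDOfor} with $H^i_k\ket{\Gamma_{\Lambda}^{\text{CDO}}}=0$, which is the first assertion; at this stage the recursion only guarantees $\Phi_{g,n}\in\mathbb C(\!(\mathbf P,\mathbf Q,\Lambda^r)\!)$.

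To upgrade this to $\Phi_{g,n}\in\mathbb C(\mathbf Q)[\mathbf P][\![\Lambda^r]\!]$ I would use a homogeneity argument. Equip $\widetilde{\mathcal V}_{\Lambda}$ with the $\mathbb Z$-grading (unrelated to the conformal one) for which $\deg\Lambda=-1$, $\deg\hslash=0$, $\deg P_a=\deg Q_a=0$ and $\deg J^a_{-k}=kr$ for $k>0$ (hence $\deg J^a_k=-kr$ by the Heisenberg relations, and $\deg J^a_0=0$). Using \eqref{eq:Jrep} one checks that every $\widetilde W^i_k$ acts with definite degree $-rk$ --- the scalar $(-\Lambda)^{-r}$ in $\widetilde J^r_{-1}$ was chosen precisely so that $\deg(-\Lambda)^{-r}=\deg J^r_{-1}=r$ --- and that $(-1)^i e_i(\mathbf P)\delta_{k,0}$ has degree $0$, so the constraints $H^i_k\ket{\Gamma_{\Lambda}^{\text{CDO}}}=0$ are homogeneous. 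Since $\ket{\Gamma_{\Lambda}^{\text{CDO}}}$ is their unique solution of the form \eqref{CDOfor} and its lowest term $1$ has degree $0$, it is fixed by the grading automorphism, hence homogeneous of degree $0$; matching degrees in \eqref{CDOfor} forces $\deg\Phi_{g,n}\left[\begin{smallmatrix} a_1 & \cdots & a_n \\ k_1 & \cdots & k_n \end{smallmatrix}\right]=-r(k_1+\cdots+k_n)$, so each coefficient equals $\Lambda^{r(k_1+\cdots+k_n)}$ times an element of degree $0$, i.e.\ an element of $\mathbb C(\mathbf Q)(\mathbf P)$. That this element is polynomial in $\mathbf P$, with $\mathbf Q$-denominators only among the $Q_a-Q_b$, follows by inspecting the recursion: $\mathbf P$ enters solely through $\widetilde J^r_0=-(|\mathbf P|+|\mathbf Q|)$ and the polynomials $e_i(\mathbf P)$, while the only inversions performed are those of the non-degenerate linear part controlled by $Q_1,\ldots,Q_{r-1}$. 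Since $k_1+\cdots+k_n\geq n\geq 1$ the exponent of $\Lambda$ is strictly positive and grows with $\sum_j k_j$, so the $\Lambda$-dependence of $\ket{\Gamma_{\Lambda}^{\text{CDO}}}$ assembles into a formal power series in $\Lambda^r$; this gives the stated ring.

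The content here is verification rather than invention --- the construction belongs to \cite{CDO24} --- and the main obstacle is the bookkeeping: normal-ordering the twisted modes $\widetilde W^i_k$ and exhibiting the explicit linear change of variables putting the $H^i_k$ into Airy form (where the distinctness of $Q_1,\ldots,Q_{r-1}$ is essential), checking simultaneously the admissibility of the creation-mode shift $(-\Lambda)^{-r}$ and of the degree-zero shifts on the right-hand side for the conformal grading, and verifying that the fundamental-theorem recursion introduces no denominator in $\mathbf P$ (nor in $\mathbf Q$ beyond the $Q_a-Q_b$). Each of these is routine but must be done carefully; all are carried out in detail in \cite{CDO24}.
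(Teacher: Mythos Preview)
Your proposal is correct and follows essentially the same route as the paper: the paper's proof is the single sentence ``This is \cite[Theorem 3.10 and Corollary 3.12]{CDO24} specialised to $L_i = (-1)^i e_i(\mathbf{P})$'', and you are sketching precisely the content of that citation (shifted Airy structure plus fundamental theorem) with the added homogeneity argument for the $\Lambda^r$-power-series property. Your explicit grading argument --- showing each $\Phi_{g,n}$ is in fact a \emph{monomial} $\Lambda^{r(k_1+\cdots+k_n)}$ times an element of $\mathbb C(\mathbf Q)[\mathbf P]$ --- is a clean way to make the coefficient-ring claim transparent, and is consistent with how the analogous statement is handled for the Gaiotto vector in \eqref{homogen}.
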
 
\begin{proof} This is  \cite[Theorem 3.10 and Corollary 3.12]{CDO24} specialised to $L_i = (-1)^i e_i(\mathbf{P})$. 
\end{proof}

A few remarks are in order. The vector $\ket{\Gamma_{\Lambda}^{\text{CDO}}}$ is not the highest weight vector in the representation $\widetilde{\mathcal{V}}_\Lambda$, but rather a Whittaker-type vector. These Whittaker constraints \eqref{eq:CDOconst} bear on the action of all non-negative modes of the $\mathcal{W}$-algebra, while \eqref{WhitG2} involved only the positive modes. The $(i,k) = (1,0)$ constraint is trivially satisfied since
\[
\widetilde{W}_0^i =  \sum_{a = 1}^{r} \widetilde{J}^a_0 = - \sum_{a = 1}^{r} P_a = -e_1(\mathbf{P}),
\]
but the other ones are non-trivial and determine $\ket{\Gamma_{\Lambda}^{\text{CDO}}}$.  Besides, $\ket{\Gamma_{\Lambda}^{\text{CDO}}}$ has unstable terms as well, i.e., non-zero terms $\Phi_{0,1}$ and $\Phi_{0,2}$. Then, we can construct correlators
\begin{equation}\label{eq:phidef2}
\begin{split}
	\Phi_{g,n}(z_1,\ldots,z_n) & = \sum_{k_1,\ldots,k_n \in \mathbb{Z}_{> 0}} \Phi_{g,n}\big[\begin{smallmatrix} \mathfrak{c}(z_1) & \cdots & \mathfrak{c}(z_n) \\ k_1 & \cdots & k_n \end{smallmatrix}\big] \prod_{j = 1}^{n} \frac{\dd x(z_j)}{x(z_j)^{k_j + 1}} \\
	& \quad + \delta_{g,0}\delta_{n,1} \left( \widetilde{J_0}^{\mathfrak{c}(z_1)} \frac{\dd x(z_1)}{x(z_1)} + (-\Lambda)^{-r} \dd x(z_1) \right)+ \delta_{g,0}\delta_{n,2} \delta_{\mathfrak{c}(z_1),\mathfrak{c}(z_2)} \frac{\dd x(z_1)\dd x(z_2)}{(x(z_1) - x(z_2))^2}.
\end{split}
\end{equation}
As in \eqref{phidef}, they are germs of meromorphic $n$-differentials in the $n$-th product of the formal neighbourhood of $L \cong x^{-1}(\infty) = \bigsqcup_{a = 1}^{r} \{\infty_a\}$ in $C$, where $C$ is the same unramified curve of degree $r$ that appears in \cref{S123} in the context of the Gaiotto vector. More precisely, the $\phi_{g,n}$ for $2g - 2 + n > 0$ are germs of holomorphic $n$-differentials, $\phi_{0,1}(z)$ is the germ of  a meromorphic differential having a simple pole with residue given by the scalar $\widetilde{J}^a_0$ at $z = \infty_a$, and $\phi_{0,2}$ is the germ of a meromorphic bi-differential with a double pole on the diagonal.

\vspace{0.5cm}

\subsection{The spectral curve}
\label{sec:Hsc}

\medskip

As in the case of the Gaiotto vector $\ket{\Gamma_\Lambda}$, we can prove that the correlators $\phi_{g,n}$ defined in \eqref{eq:phidef2} from $\ket{\Gamma^{\text{CDO}}_\Lambda}$ analytically continue to meromorphic multi-differentials on a certain family of spectral curves and that the latter are computed by topological recursion on this spectral curve. We first describe the relevant family of spectral curves.

Assuming that the $Q_1,\ldots, Q_{r-1} $ are pairwise disjoint, we look at the  locus $\mathcal{S} \subset \mathbb P^1_x \times \mathbb P^1_y \times \mathbb C^*_\Lambda $ cut out by the equation
\begin{equation}\label{eq:curve}
	\prod_{a = 1}^{r} \left(\frac{P_a}{x} + y\right) + \frac{1}{\Lambda^r} \prod_{a = 1}^{r -1} \left(\frac{Q_a}{x} - y\right) = 0.
\end{equation} 
The map $\mathcal{S} \to \mathbb C^*_\Lambda $ defines an analytic family of algebraic curves. In particular, the fibre $\mathcal S_\Lambda$ over any fixed $\Lambda \in \mathbb C^*$ is a smooth genus $0$ curve admitting the following uniformisation by $\zeta \in \mathbb{P}^1$
\begin{equation}\label{eq:algcurve}
\left\{\begin{array}{lll} x(\zeta) & = & - \Lambda^r\, \dfrac{\prod_{a=1}^r (P_a +\zeta)}{\prod_{a=1}^{r-1} (Q_a-\zeta)}, \\[12pt] y(\zeta) & =& \dfrac{\zeta}{x(\zeta)} = - \dfrac{\zeta}{\Lambda^r}\,\dfrac{\prod_{a=1}^{r-1} (Q_a-\zeta)}{\prod_{a=1}^r (P_a +\zeta)}.
\end{array}\right.
\end{equation}

\begin{defn} 
\label{Rset}	Let $\mathcal{R}$ be the set of tuples $(P_1,\ldots,P_r,Q_1,\ldots,Q_{r - 1}) \subset \mathbb C^{2r-1}$ such that
	\begin{itemize}
		\item $Q_1,\ldots,Q_{r - 1} $ are pairwise distinct;
		\item $Q_a \neq P_b $ for any $a \in [r-1]$ and $b \in [r]$;
		\item The branched covering defined by $x : \mathcal S_\Lambda \to \mathbb P^1$ has only simple ramification points.
	\end{itemize}
\end{defn}
If $(\mathbf{P},\mathbf{Q}) \in \mathcal{R}$, then $x$ always has degree $r$ on $\mathcal S_{\Lambda}$ and $x=\infty$ is  not a  branch point. We complete the description of the spectral curve by defining as usual $ \omega_{0,1} $ and $\omega_{0,2}$ on a fibre $\mathcal S_\Lambda $ as 
\[
\omega_{0,1}(\zeta) = y(\zeta)\dd x(\zeta), \qquad \omega_{0,2}(\zeta_1,\zeta_2) = \frac{\dd \zeta_1 \dd \zeta_2}{(\zeta_1-\zeta_2)^2}\,.
\]
Then, we have the following lemma, which shows that the correlator $\phi_{0,1}$ analytically continues to the meromorphic differential $\omega_{0,1}$ on the curve $\mathcal S_\Lambda$.
\begin{lem}\label{lem:01new}
Assume $(\mathbf{P},\mathbf{Q}) \in \mathcal{R}$. The  all-order series expansion of the meromorphic $1$-form $\omega_{0,1}(\zeta)$ on $\mathcal S_{\Lambda}$ as $\zeta $ is near $ x^{-1}(\infty) \cong L$, with $1/x(\zeta)$ used as local coordinate, and then the all-order series expansion as $\Lambda \rightarrow 0$, is given by $\phi_{0,1}$. In other words:
\[
\omega_{0,1}(\zeta) \approx \phi_{0,1} \big(\begin{smallmatrix} \mathfrak{c}(\zeta) \\ x(\zeta) \end{smallmatrix}\big).
\]
\end{lem}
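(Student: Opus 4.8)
The plan is to repeat the argument of \cref{lem:01} almost verbatim, feeding in the CDO constraints \eqref{eq:CDOconst} in place of the Gaiotto constraints \eqref{WhitG2} and tracking the changes induced by the modified representation \eqref{eq:Jrep}: the zero modes now act by $\widetilde{J}^a_0 = Q_a$ for $a\in[r-1]$ and $\widetilde{J}^r_0 = -(|\mathbf P|+|\mathbf Q|)$; the mode $\widetilde{J}^r_{-1}$ carries the extra $c$-number $(-\Lambda)^{-r}$, so that $\phi_{0,1}$ on the $r$-th sheet $C^r$ acquires a term $(-\Lambda)^{-r}\dd x$ and has a double pole at $\infty_r$; and \eqref{eq:CDOconst} constrains \emph{non-negative} modes, with eigenvalue $(-1)^i e_i(\mathbf P)$ at $k=0$, rather than the positive modes.

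\textbf{Extracting the classical part.} I would first extract the $g=n=0$ part of \eqref{eq:CDOconst}. Applying $\operatorname{ad}_{0,0}$ to $\ket{\Gamma^{\text{CDO}}_\Lambda}^{-1}\widetilde{\mathcal W}^i(x)\ket{\Gamma^{\text{CDO}}_\Lambda}$ and arguing as in \cref{lem:Omegaexp,lem:Omegaconst} (with $\mathcal J$ replaced by $\widetilde{\mathcal J}$ throughout), the left-hand side becomes $e_i\big(\phi_{0,1}\big(\begin{smallmatrix}1\\x\end{smallmatrix}\big),\ldots,\phi_{0,1}\big(\begin{smallmatrix}r\\x\end{smallmatrix}\big)\big)$, while on the right-hand side the modes $\widetilde W^i_k$ with $k\geq 1$ contribute $0$, the mode $\widetilde W^i_0$ contributes $(-1)^i e_i(\mathbf P)\,(\dd x)^i/x^i$, and the modes with $k\leq -1$ contribute unknown terms $x^m(\dd x)^i$ with $m\geq 1-i$. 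Since $\phi_{0,1}$ on sheet $a\neq r$ lies in $\tfrac{\dd x}{x}\,\mathbb C[[x^{-1}]]$ and on sheet $r$ in $\dd x\,\mathbb C[[x^{-1}]]$, the left-hand side lies in $\tfrac{(\dd x)^i}{x^{i-1}}\,\mathbb C[[x^{-1}]]$; matching the two sides term by term forces every unknown coefficient to vanish except the one at order $x^{1-i}$, which a leading-order computation identifies with $(-\Lambda)^{-r}e_{i-1}(\mathbf Q)$. This gives, for all $i\in[r]$ (and trivially for $i=0$),
\[
e_i\big(\phi_{0,1}\big(\begin{smallmatrix}1\\x\end{smallmatrix}\big),\ldots,\phi_{0,1}\big(\begin{smallmatrix}r\\x\end{smallmatrix}\big)\big)=(-1)^i e_i(\mathbf P)\frac{(\dd x)^i}{x^i}+(-\Lambda)^{-r}e_{i-1}(\mathbf Q)\frac{(\dd x)^i}{x^{i-1}}.
\]

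\textbf{Generating series and identification.} Multiplying the $i$-th identity by $u^{r-i}$ and summing over $i=0,\ldots,r$ turns the left-hand side into $\prod_{a=1}^r\!\big(u+\phi_{0,1}\big(\begin{smallmatrix}a\\x\end{smallmatrix}\big)\big)$ by \eqref{eq:Wgens} and the right-hand side into $\prod_{a=1}^r\!\big(u-P_a\tfrac{\dd x}{x}\big)+(-\Lambda)^{-r}\dd x\prod_{a=1}^{r-1}\!\big(u+Q_a\tfrac{\dd x}{x}\big)$. Specialising $u=-\phi_{0,1}\big(\begin{smallmatrix}b\\x\end{smallmatrix}\big)$ for each $b\in[r]$, dividing by $(\dd x)^r$, setting $y=\phi_{0,1}\big(\begin{smallmatrix}b\\x\end{smallmatrix}\big)/\dd x$, and rearranging signs (using $\prod_{a=1}^{r-1}(y-\tfrac{Q_a}{x})=(-1)^{r-1}\prod_{a=1}^{r-1}(\tfrac{Q_a}{x}-y)$ and $(-\Lambda)^{-r}=(-1)^r\Lambda^{-r}$) reproduces precisely the defining equation \eqref{eq:curve} of $\mathcal S$. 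Hence each germ $\phi_{0,1}\big(\begin{smallmatrix}b\\x\end{smallmatrix}\big)$ is an algebraic function of $x$ and continues to a meromorphic differential on $\mathcal S_\Lambda$. To identify the continuation with $\omega_{0,1}=y\,\dd x$, I would argue as at the end of \cref{lem:01}: since $(\mathbf P,\mathbf Q)\in\mathcal R$ the map $x$ has degree $r$ and is unramified over $x=\infty$, so the $r$ germs $\phi_{0,1}\big(\begin{smallmatrix}b\\x\end{smallmatrix}\big)/\dd x$ and the $r$ local branches $y(\zeta)$ of the curve near $x=\infty$ are two full solution sets of the degree-$r$ polynomial \eqref{eq:curve} in $y$, hence coincide; matching leading behaviours --- the branch with $y\to(-1)^r\Lambda^{-r}$ being the $r$-th sheet (where $x^{-1}(\infty)$ sits at $\zeta=\infty$) and the branch with $y\sim Q_a/x$ being sheet $a$ (at $\zeta=Q_a$) for $a\in[r-1]$ --- fixes the labelling.

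\textbf{Main obstacle.} The step requiring the most care is exactly the effect of the $(-\Lambda)^{-r}$ shift: because it raises the pole order of the $r$-th germ at $\infty_r$, the leading term of $e_i(\phi_{0,1})$ now sits at order $x^{1-i}$ rather than $x^{-i}$, so one must check that the ``no lower-order terms'' bookkeeping of \cref{lem:01} still closes and that the extra monomial $(-\Lambda)^{-r}e_{i-1}(\mathbf Q)(\dd x)^i/x^{i-1}$ is correctly produced --- this is what supplies the factor $\prod_{a=1}^{r-1}(Q_a/x-y)$, with the correct power of $\Lambda$, in the CDO spectral curve \eqref{eq:curve}. Everything else is formally identical to \cref{S3}.
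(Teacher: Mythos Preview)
Your proposal is correct and follows essentially the same approach as the paper's own proof: both extract the $(g,n)=(0,0)$ part of the CDO constraints, determine the two nonvanishing monomials $(-1)^i e_i(\mathbf P)(\dd x)^i/x^i$ and $(-\Lambda)^{-r}e_{i-1}(\mathbf Q)(\dd x)^i/x^{i-1}$ on the right-hand side from the leading behaviour of $\phi_{0,1}$, pass to the generating series, specialise $u=-\phi_{0,1}\big(\begin{smallmatrix}b\\x\end{smallmatrix}\big)/\dd x$, and recover the defining equation \eqref{eq:curve}. Your write-up is in fact a bit more explicit than the paper about the branch-matching at $x=\infty$ and about why the extra monomial at order $x^{1-i}$ appears, but there is no substantive difference in method.
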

\begin{proof} As the proof closely follows the proof of \cref{lem:01} for the half Seiberg--Witten  curve, we will be brief. By extracting the $(0,1)$-terms from the constraints of \cref{thm:genZ} we get the following expression:
\[
	\prod_{a=1}^r \left(u + \frac{\phi_{0,1} \big(\begin{smallmatrix} a \\ x \end{smallmatrix}\big)}{\dd x} \right) = \sum_{i=0}^r u^{r-i} \left((-1)^i \frac{e_i(\mathbf{P})}{x^i} + O\left(\frac{1}{x^{i-1}}\right) \right).
\] 
The terms of order $O\left(\frac{1}{x^{i-1}}\right)$ can only come from the first two terms appearing in the definition \eqref{eq:phidef2} of $\phi_{0,1}$, i.e., the terms of order $x^{0}\dd x$ and $x^{-1}\dd x$. This gives 
\[
\prod_{a=1}^r \left(u + \frac{ \phi_{0,1} \big(\begin{smallmatrix} a \\ x \end{smallmatrix}\big)}{\dd x} \right) = \sum_{i=0}^r u^{r-i} \left( (-1)^i\frac{e_i(\mathbf{P})}{x^i}  + (-1)^{r}\frac{e_{i-1}(\mathbf{Q})}{\Lambda^r x^{i-1}}\right).
\]
If we substitute $u = -  \frac{\phi_{0,1}\big(\begin{smallmatrix} a \\ x \end{smallmatrix}\big)  }{\dd x}$ for any $a\in[r]$, we get the following algebraic equation for $\phi_{0,1}$
\begin{equation}
\sum_{i=0}^r \left(\phi_{0,1}\big(\begin{smallmatrix} a \\ x \end{smallmatrix}\big)  \right)^{r-i}\left( (-1)^r \frac{e_i(\mathbf{P})}{x^i}  +(-1)^i \frac{e_{i-1}(\mathbf{Q})}{\Lambda^r x^{i-1}}\right)(\dd x)^i = 0, 
\end{equation}
Doing the sum over $i$ we get
\[
(-1)^{r} \prod_{b = 1}^{r} \left(\phi_{0,1}\big(\begin{smallmatrix} a \\ x \end{smallmatrix}\big) + P_b \frac{\dd x}{x}\right) - \frac{1}{\Lambda^r} \prod_{b = 1}^{r - 1} \left(\phi_{0,1}\big(\begin{smallmatrix} a \\ x \end{smallmatrix}\big)  - Q_b \frac{\dd x}{x}\right) = 0,
\]
which matches the equation for $\omega_{0,1}$ from the definition of the family \eqref{eq:curve}. Thus, $\omega_{0,1}$ is the analytic continuation of $\phi_{0,1}$ to $\mathcal S_\Lambda$.
\end{proof}

Let us turn to the other unstable case where $(g,n) = (0,2)$.

\begin{lem}\label{lem:02new}
Assume $(\mathbf{P},\mathbf{Q}) \in \mathcal{R}$. The  all-order series expansion of the bi-differential $\omega_{0,2}(\zeta_1,\zeta_2)$ on $\mathcal S_{\Lambda}$ when $\zeta_1,\zeta_2 $ is near $ x^{-1}(\infty) \cong L$, with $1/x(\zeta_i)$ used as local coordinate, and then the all-order series expansion as $\Lambda \rightarrow 0$, is given by $\phi_{0,2}$. In other words:
\[
\omega_{0,2}(\zeta_1,\zeta_2) \approx \phi_{0,2} \big(\begin{smallmatrix} \mathfrak{c}(\zeta_1) & \mathfrak{c}(\zeta_2) \\ x(\zeta_1) & x(\zeta_2) \end{smallmatrix}\big).
\]
\end{lem}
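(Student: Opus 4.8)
The plan is to follow the proof of \cref{lem:02} essentially line by line, with the modifications dictated by the CDO constraints \eqref{eq:CDOconst}. First I would apply $\operatorname{ad}_{0,2}$ to the $i$-th constraint in \eqref{eq:CDOconst}, written in compact form in terms of the generating fields $\mathcal{J}$ and the shifted modes \eqref{eq:Jrep}. By \cref{lem:Omegaexp} (in the variant already used for \cref{lem:01new}) this produces, for each $i\in[r]$, a linear equation for $\phi_{0,2}$ of the shape
\[
\sum_{\substack{Z\subseteq\mathfrak{f}(z),\ \#Z=i\\ z'\in Z}}\phi_{0,2}(z',w)\prod_{z''\in Z\setminus\{z'\}}\phi_{0,1}(z'')
=\big(\text{explicit quantity built from }\phi_{0,1}\text{ and }x\big),
\]
the analogue of \eqref{constrg0n1}. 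Two features differ from the Gaiotto case and must be handled with care: the CDO constraints control \emph{all} non-negative modes, so the right-hand side has a prescribed expansion down to a lower order in $1/x$ than in \eqref{constrg0n1}; and $\phi_{0,1}$ now carries the extra $x^0\,\dd x$ contribution $(-\Lambda)^{-r}\dd x$ visible in \eqref{eq:phidef2}, coming from the shift $\widetilde{J}^a_{-1}\mapsto\widetilde{J}^a_{-1}+(-\Lambda)^{-r}\delta_{a,r}$ of \eqref{eq:Jrep}, together with the modified zero modes $\widetilde{J}^a_0$. I would use the projection operator $p_{\geq-i}$ exactly as in \cref{lem:02} to pin down the right-hand side completely, keeping track of these extra contributions.

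Second, I would form the generating series $\sum_i u^{r-i}(\cdots)$ and specialise $u=-\phi_{0,1}(z)/\dd x(z)$, precisely as in the passage from \eqref{constrg0n1} to \eqref{eq:02explicit}. This isolates $\phi_{0,2}(z,w)$ and expresses it through $x$, $\phi_{0,1}$, and a $\dd_w$-derivative of an explicit rational expression involving the fibres $\mathfrak{f}'(z)$ and $\mathfrak{f}'(w)$. Since \cref{lem:01new} already provides the analytic continuation of $\phi_{0,1}$ to $\omega_{0,1}=y\,\dd x$ on $\mathcal{S}_\Lambda$, and $x$ continues to the function $x(\zeta)$ by construction, this formula shows at once that $\phi_{0,2}$ analytically continues to a meromorphic bi-differential $\tilde\phi_{0,2}$ on $\mathcal{S}_\Lambda$.

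Third, to identify $\tilde\phi_{0,2}$ with $\omega_{0,2}(\zeta_1,\zeta_2)=\dd\zeta_1\dd\zeta_2/(\zeta_1-\zeta_2)^2$, I would run the same direct computation as in \cref{lem:02}: substitute the uniformisation \eqref{eq:algcurve}, use $x(\zeta)y(\zeta)=\zeta$ to turn the products over $\mathfrak{f}'(\zeta)$ into products of $(\zeta'-\zeta)$, and then use the factorisations of $x(\zeta_1)-x(\zeta_2)$ and $\dd x(\zeta_1)$ that follow from $x(\zeta)=-\Lambda^r N(\zeta)/D(\zeta)$ with $N(\zeta)=\prod_{a=1}^r(P_a+\zeta)$ and $D(\zeta)=\prod_{a=1}^{r-1}(Q_a-\zeta)$; in the Gaiotto case only the denominator $D$ appeared, but here the numerator factors should cancel in the same manner, and the telescoping collapses everything to $\dd\zeta_1\dd\zeta_2/(\zeta_1-\zeta_2)^2$. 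Alternatively, one may instead check that $\tilde\phi_{0,2}$ is symmetric, has a single double pole of biresidue $1$ on the diagonal, and no other poles, which on the genus-$0$ curve $\mathcal{S}_\Lambda$ characterises $\omega_{0,2}$.

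The main obstacle is the bookkeeping in the first two steps. Because the $\widetilde{W}^i_0$-constraints are now non-trivial, the right-hand side of the $(0,2)$-equation is more intricate than \eqref{constrg0n1}: one has to combine correctly the $k=0$ and $k=-1$ shifts of \eqref{eq:Jrep} with the projection, and the order in $1/x$ to which the projection must be carried depends on $i$. Arranging the rational expression inside the $\dd_w$ so that the final computation genuinely telescopes is where the effort lies; the analytic-continuation step and the genus-$0$ uniqueness argument are then routine given \cref{lem:01new}.
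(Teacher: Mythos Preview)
Your proposal follows essentially the same route as the paper, and your third step (the direct computation with $N(\zeta)=\prod_a(P_a+\zeta)$ and $D(\zeta)=\prod_a(Q_a-\zeta)$) is exactly what the paper does. One concrete correction is needed, and you half-anticipate it in your final paragraph: you say you would ``use the projection operator $p_{\geq -i}$ exactly as in \cref{lem:02}'', but because the CDO constraints \eqref{eq:CDOconst} now include the zero modes $\widetilde{W}^i_0$, the analogue of \eqref{constrg0n1} has $O\big((\dd x)^i/x^{i-1}\big)$ on the right-hand side rather than $O\big((\dd x)^i/x^{i}\big)$. The correct projection is therefore $p_{\geq -(i-1)}$, and this shift propagates through the argument so that the powers $x(w)^i/x(z)^i$ present in \eqref{eq:02explicit} drop out, giving the cleaner formula
\[
\tilde{\phi}_{0,2}(\zeta_1,\zeta_2)=\prod_{\zeta_1'\in\mathfrak{f}'(\zeta_1)}\frac{1}{\zeta_1'-\zeta_1}\cdot \dd_{\zeta_2}\!\left(\frac{\dd x(\zeta_1)}{x(\zeta_1)-x(\zeta_2)}\prod_{\zeta_2'\in\mathfrak{f}'(\zeta_2)}(\zeta_2'-\zeta_1)\right),
\]
which then collapses to $\dd\zeta_1\dd\zeta_2/(\zeta_1-\zeta_2)^2$ exactly as you outline.
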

\begin{proof}
	The proof closely follows the proof for the half Seiberg--Witten  curve in \cref{lem:02}. In the first half of the proof of  \cref{lem:02} we showed that the $\phi_{0,2}$ admits an analytic continuation by finding an explicit formula for the $\phi_{0,2}$ in terms of the $\phi_{0,1}$ and $x(z)$. An analogue of this explicit formula \eqref{eq:02explicit} holds in this case as well, although the proof needs to be slightly modified. As the analogue of the constraints \eqref{constrg0n1} has $O\left(\frac{\left(dx(z)\right)^{i}}{x(z)^{i-1}}\right)$ on the right-hand side (instead of $O\left(\frac{\left(dx(z)\right)^{i}}{x(z)^{i}}\right)$), we need to take the projection $p_{\geq -(i-1)}$ in \eqref{eq:2.11}. The rest of the proof goes through with minor changes, and we get the analytic continuation $\tilde{\phi}_{0,2}$, defined as  
	\begin{equation}\label{eq:02explicitnew}
		\tilde{\phi}_{0,2}(\zeta_1,\zeta_2) = \prod_{\zeta_1'  \in \mathfrak{f}'(\zeta_1)}\frac{1}{ \zeta_1'- \zeta_1} \cdot \dd_{\zeta_2}\left(  \frac{\dd x(\zeta_1) }{(x(\zeta_1)-x(\zeta_2))} \prod_{\zeta_2'  \in \mathfrak{f}'(\zeta_2) }  \left( \zeta_2' - \zeta_1 \right)  \right).
	\end{equation}
	Finally, let us evaluate $\tilde{\phi}_{0,2} $ explicitly. Using the notation $D(\zeta) = \prod_{a = 1}^{r - 1} (Q_a - \zeta)$ and $N(\zeta) = \prod_{a = 1}^{r} (P_a + \zeta)$, we have $x(\zeta) = -\Lambda^r N(\zeta)/D(\zeta)$. As $N(\zeta)$ is a polynomial of degree $r$ with leading term $\zeta^r$ and $D$ is a polynomial of degree $r - 1$ with leading term $(-1)^{r - 1}\zeta^{r - 1}$, we find
	\begin{equation*}
		\begin{split}
			x(\zeta_1) - x(\zeta_2) & = \frac{\Lambda^r \big(\frac{N(\zeta_2)}{D(\zeta_2)} D(\zeta_1) - N(\zeta_1)\big)}{D(\zeta_1)} = \frac{-(-\Lambda)^{r}}{D(\zeta_1)} \prod_{\zeta_2' \in \mathfrak{f}(\zeta_2)} (\zeta_2' - \zeta_1), \\ 
			\dd x(\zeta_1) & = \dd \zeta_1 \lim_{\zeta_2 \rightarrow \zeta_1} \frac{x(\zeta_1) - x(\zeta_2)}{\zeta_1 - \zeta_2} = \dd \zeta_1 \frac{(-\Lambda)^{r}}{D(\zeta_1)} \prod_{\zeta_1' \in \mathfrak{f}'(\zeta_1)} (\zeta_1' - \zeta_1). 
		\end{split}
	\end{equation*}
	Thus, we can evaluate \eqref{eq:02explicitnew} as 
	\[
	\tilde{\phi}_{0,2}(\zeta_1,\zeta_2) =  \frac{ (-\Lambda)^r \dd \zeta_1 } {D(\zeta_1) \dd x(\zeta_1)} \dd_{\zeta_2}\left(  \dd x(\zeta_1) \frac{D(\zeta_1) }{-(-\Lambda)^r (\zeta_2-\zeta_1)}   \right) = \frac{\dd \zeta_1  \dd \zeta_2}{(\zeta_1 - \zeta_2)^2}. \qedhere
	\]
\end{proof}

\vspace{0.5cm}

\subsection{Stable correlators and topological recursion}

\medskip

\noindent \textit{Proof of \cref{th:main2}.} In view of \cref{lem:01new} and \cref{lem:02new}, it remains to show that the stable correlators $\phi_{g,n}$ of \eqref{eq:phidef2} admit an analytic continuation on the CDO spectral curve \eqref{eq:curve}, which has poles only at the ramification points and satisfies the abstract loop equations. This is done exactly as in the proof \cref{prop:stablecorrelators} for the analytic continuation, and \cref{prop:ALE} for the location of the poles and the abstract loop equations: the proofs indeed only used the general structure of the Whittaker constraints and is not  affected by the form of the spectral curve. \hfill $\Box$

\begin{rem}\label{rem:limits}
	Note that the CDO spectral curve \eqref{eq:algcurve} appearing in this section is a generalisation of the half Seiberg--Witten  curve \eqref{eq:Gcurve}. Indeed, we can take the limit $ P_1,\ldots, P_r  \rightarrow
	 \infty$ and $\Lambda \rightarrow 0$ such that $\Lambda^r \cdot P_1 \cdots P_r \rightarrow (\Lambda')^r $ in the CDO curve to recover the half Seiberg--Witten  curve at energy scale $\Lambda'$. In fact, this limit falls into the class of allowed limits based on the results of \cite{TRlimits}. This means that the limit of  the correlators $\omega_{g,n}$ constructed by topological recursion on the curve \eqref{eq:algcurve}  matches the correlators $\omega_{g,n}$ constructed by topological recursion on the  \eqref{eq:Gcurve}. However, it is not clear how to take the above limit  of the $\mathcal{W}(\mathfrak{gl}_r)$-module defined by \eqref{eq:Jrep} and the Whittaker conditions  in \cref{thm:genZ} characterizing $\ket{\Gamma^{\text{CDO}}_\Lambda}$  directly, in order to get the Whittaker conditions \eqref{WhitG2} defining the Gaiotto vector.
	 \end{rem}
 
 \vspace{1cm}
 
%–––––––––––––––––––––––––––––––––––––––––––%
\section{Consequences}
\label{S5}

\medskip

Starting from \cref{th:main} and \cref{th:main2}, we can exploit the theory of the topological recursion on ramified spectral curves to derive either properties and new interpretations of the correlators, or even, directly of the Whittaker vectors under consideration. Here we focus on:  two expressions in terms of intersection numbers on $\overline{\mathcal{M}}_{g,n}$, a relation to weighted Hurwitz numbers, quantum curves, and a discussion on free energies.

\vspace{0.5cm}

\subsection{Relation to intersection theory on \texorpdfstring{$\overline{\mathcal{M}}_{g,n}$}{Mgnbar}}
\label{S51}
\medskip

\subsubsection{Intersection theory and topological recursion}
\label{S511}
\medskip

The correlators $(\omega_{g,n})_{g,n}$ produced by the topological recursion on a spectral curve specified as $(S,x,\omega_{0,1},\omega_{0,2})$ can be quite generally expressed in terms of intersection theory on $\overline{\mathcal{M}}_{g,n}$ \cite{Einter,DOSS,BKS23}. For our purposes, it is sufficient to summarise the theory for spectral curves for which $\dd x$ is meromorphic on a compact Riemann surface and only has simple zeros at which $\dd y$ has neither a zero nor a pole\footnote{We also assume that the ramification points taken into account in \eqref{TRome} occur at finite values of $x$. One can reduce to this case by using a twist $(x,y) \mapsto (x^{-1},-yx^2)$, see paragraph D.}. The master formula reads
\begin{equation}
\label{finalformin}
\omega_{g,n}(z_1,\ldots,z_n) = \sum_{\substack{\substack{\rho_1,\ldots,\rho_n \in \text{Ram}(S) \\ \lambda_1,\ldots,\lambda_n \in \operatorname{Ram}(S) \\ m_1,\ldots,m_n \geq 0}}} \left(\int_{\overline{\mathcal{M}}_{g,n}} \Omega_{g,n;\rho_1,\ldots,\rho_n} \prod_{i = 1}^{n} \psi_i^{m_i} R_{\rho_i,\lambda_i}(\psi_i)\right) \prod_{i = 1}^{n} \dd \Xi_{\lambda_i,m_i}(z_i).
\end{equation} 
It involves two ingredients constructed from the spectral curve: a basis of meromorphic $1$-forms $\dd\Xi_{\lambda,m}$ on which we decompose the correlators, and a collection of tautological classes $\Omega_{g,n;\rho_1,\ldots,\rho_n} \in H^{\bullet}(\overline{\mathcal{M}}_{g,n})$ that we call the \emph{TR class} (the $R_{\rho,\lambda}$ factor will be defined along with them). For $i \in [n]$, $\psi_i \in  H^{2}(\overline{\mathcal{M}}_{g,n})$ denotes the standard $\psi$-class which is defined as the first Chern class of the  cotangent line bundle at the $i$-th marked point.  In this setting, the TR class is known to form a semi-simple (perhaps without unit) cohomological field theory  as it can be constructed by a certain Givental action on the trivial cohomological field theory.

\medskip

\noindent \textbf{A. The basis of differentials}

Let us fix a choice of square root $\eta_{\lambda}(z) = \sqrt{2(x(z) - x(\lambda))}$, giving a local coordinate near $\lambda \in \operatorname{Ram}(S)$. The meromorphic $1$-forms are defined by induction on $m \in \mathbb{Z}_{\geq 0}$. We set
\begin{equation}
\label{descdiff}
\dd\Xi_{\lambda,0}(z) = \Res_{z' = \lambda} \frac{\omega_{0,2}(z,z')}{\eta_{\lambda}(z)}, \qquad \dd\Xi_{\lambda,m + 1} = -\dd\left(\frac{\Xi_{\lambda,m}}{\dd x}\right).
\end{equation}
The primary differential $\dd \Xi_{\lambda,0}$ has a double pole at $\lambda$ only, the descendent differential $\dd \Xi_{\lambda,k}$ has a pole of order $(2k + 2)$ at $\lambda$ and poles at $\lambda' \in \operatorname{Ram}(S) \setminus \{\lambda\}$ of order at most $2k$.

\medskip

\noindent \textbf{B. The TR class}

The tautological classes are constructed from formal Laplace transforms using the spectral curve data. Recalling that $\omega_{0,1} = y \dd x$, we introduce for $\rho,\lambda \in \operatorname{Ram}(S)$
\begin{equation}
\begin{split}
T_{\rho}(u) & = \frac{1}{\sqrt{2\pi\,u}} \int_{\gamma_{\rho}} e^{(x(\rho) - x(z))u^{-1}} \dd y(z), \\
R_{\rho,\lambda}(u) & =  \frac{1}{\sqrt{2\pi\,u^{-1}}} \int_{\gamma_{\rho}} e^{(x(\rho) - x(z))u^{-1}} \dd \Xi_{\lambda,0}(z),
\end{split}
\end{equation}
where $\gamma_{\rho}$ are steepest descent paths for $\text{Re}(x/u)$. We are only interested in the definition of $T_{\rho}(u)$ and $R_{\rho,\lambda}(u)$ as formal power series in $u$: this is only sensitive to the germ of $\gamma_{\rho}$ around $\rho$, and for this we can take $u > 0$ and take $x(z) - x(\rho) \in \mathbb{R}_{\geq 0}$ along $\gamma_{\rho}$ (for $R_{\rho,\rho}$ we slightly push this contour off the pole at $\rho$). The orientation is chosen consistently  with the choice of square root so that we have
\begin{equation}
\label{orcon}
\forall k \in \mathbb{Z}_{\geq 0}\qquad \frac{1}{\sqrt{2\pi u}} \int_{\gamma_{\rho}} e^{(x(\rho) - x(z))u^{-1}} (\eta_{\rho}(z))^{2k} \dd \eta_{\rho}(z) = -(2k - 1)!! u^k,
\end{equation}
with the convention $(-1)!! = 1$. The assumptions on the spectral curves are known to imply that 
\begin{equation}
\label{adminRT} T_{\rho}(0) \neq 0 \qquad \text{and}\qquad \sum_{\lambda \in \operatorname{Ram}(S)} R_{\rho_1,\lambda}(u) R_{\rho_2,\lambda}(-u) = \delta_{\rho_1,\rho_2}.
\end{equation}
Then, it makes sense to introduce the formal power series
\begin{equation}
\label{eq53}
\begin{split}
\sum_{m \geq 0} t_{\rho,m} u^m & = -\ln T_{\rho}(u), \\
\qquad B_{\rho_1,\rho_2}(u_1,u_2) & = \frac{\delta_{\rho_1,\rho_2} - \sum_{\lambda \in \operatorname{Ram}(S)} R_{\rho_1,\lambda}(u_1) R_{\rho_2,\lambda}(u_2)}{u_1 + u_2}.
\end{split}
\end{equation}

The tautological class appearing in \eqref{finalformin} is then obtained in two steps. First, one constructs from $T_\rho(u)$ a cohomology class indexed by a single $\rho \in \operatorname{Ram}(S)$
\[
\Upsilon_{g,n;\rho} = \exp\left(\sum_{m \geq 0} t_{\rho,m} \kappa_m\right) \in H^{\bullet}(\overline{\mathcal{M}}_{g,n}).
\]
Since $\kappa_0 = (2g - 2 + n) \in H^0(\overline{\mathcal{M}}_{g,n})$, the determination of the logarithm is irrelevant in \eqref{eq53}.

For the second step, we recall that for each stable graph $G$ of genus $g$ with $n$ labelled leaves, we have an inclusion of boundary strata
\[
\jmath_{G} : \prod_{\text{vertex}\,\,v} \overline{\mathcal{M}}_{g(v),n(v)} \hookrightarrow \overline{\mathcal{M}}_{g,n}.
\]
For each half-edge $h$ incident at a vertex $v$, we have a corresponding $\psi_h \in H^2(\overline{\mathcal{M}}_{g(v),n(v)})$. We consider the set $\text{Stab}_{g,n}(\rho_1,\ldots,\rho_n)$ of stable graphs of genus $g$, whose half-edges $h$ are decorated by $\rho(h) \in \operatorname{Ram}(S)$ such that all half-edges incident to the same vertex $v$ have the same decoration (denoted $\rho(v)$), and with $n$ labelled leaves carrying the decorations $\rho_1,\ldots,\rho_n$.
\begin{equation}
\label{TRclasscon}
\Omega_{g,n;\rho_1,\ldots,\rho_n} = \sum_{G \in \text{Stab}_{g,n}} \frac{1}{\# \text{Aut}(G)} (\jmath_{G})_*\left( \prod_{\text{vertex}\,\,v} \Upsilon_{g(v),n(v);\rho(v)} \prod_{\text{edge}\,\,\{h,h'\}} B_{\rho(h),\rho(h')}(\psi_h,\psi_{h'})\right). 
\end{equation}
This definition makes sense for any power series $T_{\rho}(u)$ and $R_{\rho_1,\rho_2}(u)$ with indices $\rho$ in a given set and satisfying the admissibility condition \eqref{adminRT}, even if they do not come from the Laplace transform of a spectral curve data. In fact, this formula can be viewed as a special case of the Givental action on CohFTs \cite{GivVir}.
\medskip

\noindent \textbf{C. Expansion coefficients}

From the master formula \eqref{finalformin}, we can easily extract the coefficients of the series expansion of the $\omega_{g,n}$ near any point: it suffices to know the series expansion for the basis of $1$-forms. Following \cite[Section 7.5.5]{BKS23}, we discuss the case of expansion near a pole of order $d_p$ (or zero of order $-d_p$) of $\dd x$, using a local coordinate $X$ centred near $p$ such that
\begin{equation}
\label{Xcoo}
\dd x = \left\{\begin{array}{lll} c_p \dfrac{\dd X}{X} && \text{if}\,\,d_p = 1; \\[12pt] \dfrac{\dd X}{X^{d_p}} && \text{if}\,\,d_p \neq 1. \end{array}\right.
\end{equation}
The only task is in fact to compute the series expansion of the primary differentials as $z \rightarrow p$
\[
\dd \Xi_{\lambda,0}(z) \approx \dd\left(\sum_{k \geq 1} \mathsf{S}_{\lambda,0}\big[\begin{smallmatrix} p \\ k \end{smallmatrix}\big] \frac{X^k}{k}\right).
\]
Then, by construction of the descendent differentials in \eqref{descdiff}, we have  the following expansion  \cite[Lemma 7.31]{BKS23}:
\[
\dd \Xi_{\lambda,m}(z) \approx \dd \left(\sum_{k \geq 1} \mathsf{S}_{\lambda,m}\big[\begin{smallmatrix} p \\ k \end{smallmatrix}\big] \frac{X^k}{k}\right).
\]
If $d_p = 1$, we have a simple expression
\begin{equation}
\label{the55} \mathsf{S}_{\lambda,m}\big[\begin{smallmatrix} p \\ k \end{smallmatrix}\big] = (-k/c_p)^m \cdot \mathsf{S}_{\lambda,0}\big(\begin{smallmatrix} p \\ k \end{smallmatrix}\big).
\end{equation}
while if $d_p \neq 1$, we have
\begin{equation}
\label{Shigher}
\mathsf{S}_{\lambda,m}\big[\begin{smallmatrix} p \\ k \end{smallmatrix}\big] = k(k - (d_p - 1)) \cdots (k - (m - 1)(d_p - 1))  \cdot \mathsf{S}_{\lambda,0}\big[\begin{smallmatrix} p \\ k - m(d_p - 1) \end{smallmatrix}\big],
\end{equation}
with the convention that $\mathsf{S}_{\lambda,m}\big[\begin{smallmatrix} p \\ k \end{smallmatrix}\big] = 0$ for $k \leq 0$.

\medskip

\noindent \textbf{D. Twisted TR class}

The topological recursion \eqref{TRome} only depends on the data of $x$ through the ramification points and the local involution.
As a result, we can find transformations of the spectral curve that have no effect on the $(\omega_{g,n})_{g,n}$ computed by the topological recursion: we call them twists. Given a spectral curve $(S,x,\omega_{0,1},\omega_{0,2})$ with $\omega_{0,1} = y \dd x$, examples of twists are $(\tilde{x},\tilde{y}) = (f(x),y/f'(x))$, where $f(x)$ is such that $f'(x)$ is a rational function that does not vanish at branch points of $x$ --- observe that $\omega_{0,1} = y \dd x = \tilde{y} \dd \tilde{x}$. Since they affect $x$ and therefore $T(u)$ and $R(u)$, twists can radically affect the basis of differentials and the TR class, and this leads to many different intersection-theoretic representations of the same correlators.

The twist $(\tilde{x},\tilde{y}) = (\ln x,xy)$ has the interesting property that it converts poles of $\dd x$ into simple poles of $\dd \tilde{x} = \frac{\dd x}{x}$. Then,  we are in the simplest case to express the expansion coefficients of $\omega_{g,n}$ near poles of $x$ in terms of  intersection indices. Indeed, consider a spectral curve $(S,x,\omega_{0,1},\omega_{0,2})$ such that $x$ is meromorphic on a compact Riemann surface, and denote $o_p$ the order of a pole $p$ of $x$. The local coordinate realising \eqref{Xcoo}  near this pole is $X = (-o_p x)^{-1/o_p}$. Let  $(\omega_{g,n})_{g,n}$ be the correlators computed by topological recursion. Then, let $\check{T}(u)$,$\check{R}(u)$ the formal series, $\check{\Omega}_{g,n;\rho_1,\ldots,\rho_n}$ the TR class, and $\check{\mathsf{S}}_{\lambda,0}$ the expansion coefficients of the primary differentials for the spectral curve $(S,\ln x,\omega_{0,1},\omega_{0,2})$. With the help of \eqref{the55}, we get the following expansion for the correlators as $z_i$ approaches a pole $p_i$ of $x$:
\begin{multline}
\label{omint2}
 \omega_{g,n}(z_1,\ldots,z_n) \\
 \approx \dd_1 \cdots \dd_n \left[ \sum_{\substack{\rho_1,\ldots,\rho_n \in \operatorname{Ram}(S) \\ \lambda_1,\ldots,\lambda_n \in \operatorname{Ram}(S) \\ k_1,\ldots,k_n \geq 1}} \left(\int_{\overline{\mathcal{M}}_{g,n}}  \check{\Omega}_{g,n;\rho_1,\ldots,\rho_n} \prod_{i = 1}^{n} \frac{\check{R}_{\rho_i,\lambda_i}(\psi_i)}{1 - o_{p_i}^{-1} k_i\psi_i}\right) \prod_{i = 1}^{n} \check{\mathsf{S}}_{\lambda_i,0}\big[\begin{smallmatrix} p_i \\ k_i\end{smallmatrix}\big] \frac{X_i^{k_i}}{k_i}\right]. 
\end{multline}

\medskip

\subsubsection{Properties and remarkable TR classes}

\medskip

In this section we consider situations where $R$ and $T$ have no indices, i.e., ones corresponding to spectral curves with a single ramification point or  to CohFTs of rank $1$ . We mention properties of the construction of \cref{S511} and examples of TR classes that have a geometric meaning and that will be used to study Gaiotto vectors in \cref{S513}.

\medskip

\noindent \textbf{A. Multiplicativity}

Here we consider situations without indices. Consider formal power series $T(u)$ and $R(u)$ such that
\begin{equation}
\label{compati} T(0) \neq 0 \qquad \text{and} \qquad  R(u)R(-u) = 1.
\end{equation}
Then, we can write
\[
T(u) = \exp\left(-\sum_{m \geq 0} t_m u^m\right) \qquad R(u) = \exp\big(\textsf{r}(u)\big),
\]
where $\mathsf{r}(u)$ is an odd formal power series. The combinatorics of self-intersections of boundary strata in $\overline{\mathcal{M}}_{g,n}$ imply that the TR class \eqref{TRclasscon} can be written
\begin{equation}
\label{MultiTRclass} \Omega_{g,n} \cdot \prod_{i = 1}^{n} R(\psi_i) = \exp\left(\sum_{m \geq 0} t_m \kappa_m + \sum_{i = 1}^{n} \mathsf{r}(\psi_i) - \frac{1}{2} \sum_{\Delta} (\jmath_{\Delta})_* \frac{\mathsf{r}(\psi') + \mathsf{r}(\psi'')}{\psi' + \psi''}\right),
\end{equation}
where the sum ranges over boundary divisors of $\overline{\mathcal{M}}_{g,n}$, $\jmath_{\Delta}$ is the natural inclusion map  and $\psi',\psi''$ are the $\psi$-classes on the two sides of the node --- see e.g. \cite[Lemma 3.10]{the7}.

This leads to an interesting multiplicativity property of TR classes. Imagine that we have formal power series $T^{(i)}(u),R^{(i)}(u)$ for $i = 1,2$ satisfying \eqref{compati}, and we have constructed the corresponding TR classes $\Omega_{g,n}^{(i)}$. Then, the TR class associated to the products $T(u) = T^{(1)}(u)\cdot T^{(2)}(u)$ and $R(u) = R^{(1)}(u) \cdot R^{(2)}(u)$ is the product TR class
\[
\Omega_{g,n} = \Omega_{g,n}^{(1)} \cdot \Omega_{g,n}^{(2)} \in H^{\bullet}(\overline{\mathcal{M}}_{g,n}).
\]

\medskip

\noindent \textbf{B. The deformed $\Theta$ class}

The \textit{Theta class} $\Theta_{g,n}$,  introduced  by Norbury in \cite{Nor23} based on the work of Chiodo \cite{Chi08}, is constructed using the Euler class of a vector bundle on  the moduli space of twisted $2$-spin curves $\overline{\mathcal M}^{(2)}_{g,n}$. We do not dwell on the details of the construction here, but we note the following key properties: 
	\begin{itemize}
		\item $\Theta_{g,n} \in H^{2(2g-2+n)}(\overline{\mathcal M}_{g,n})$ for any $2g - 2 + n > 0$;
		\item The family $(\Theta_{g,n})_{g,n}$ forms a non-semisimple CohFT without a flat unit.
	\end{itemize} These properties, and an explanation of what they mean, can be found in \cite{Nor23, CGG22}. A specific deformation of the Theta class $\Theta_{g,n}[\epsilon]$, depending on a parameter $\epsilon^2 \in \mathbb C$ was constructed and studied in \cite{CGG22}. The deformed Theta class satisfies the following properties:
		\begin{itemize}
		\item $\Theta_{g,n}[\epsilon = 0] = \Theta_{g,n}$.
		\item $\Theta_{g,n}[\epsilon]  \in H^{\bullet}(\overline{\mathcal M}_{g,n})$ is a polynomial of degree $2g - 2 + n$ in $\epsilon^2$, and the coefficient of $\epsilon^{2m}$ belongs to $H^{2(2g - 2 + n - m)}(\overline{\mathcal{M}}_{g,n})$.
		\item The family $(\Theta_{g,n}[\epsilon])_{g,n}$ forms a semisimple CohFT without a flat unit, for any $\epsilon \neq 0$.
	\end{itemize}
If we introduce
	\[
	\exp\left( -\sum_{m \geq 1} s_m u^m \right) = \sum_{m\geq 0} (-1)^m (2m+1)!!\, u^m,
	\] 
an explicit formula for the deformed Theta class, found in \cite[Corollary 3.25]{CGG22} is
\[
\Theta_{g,n}[\epsilon] = (-\epsilon^2)^{2g-2+n} \exp\left( \sum_{m>0} s_m (-\epsilon^2)^{-m} \kappa_m \right) \in H^{\bullet}(\overline{\mathcal{M}}_{g,n}).
\]
This formula has the required properties, in particular it is polynomial in $\epsilon$ for each $g,n$ due to certain tautological relations between $\kappa$ classes, anticipated in \cite{KN24}. In other words, $\Theta^{\epsilon}_{g,n}$ is the TR class associated with
\begin{equation}
\label{TRtheta} T(u) = \sum_{m \geq 0} \frac{(2m + 1)!!}{(-\epsilon^2)^{m + 1}}\,u^m \qquad \text{and} \qquad R(u) = 1.
\end{equation}

\medskip

\noindent \textbf{C. The Hodge class}

\medskip 

The Hodge class $\boldsymbol{\Lambda}[\epsilon] = \sum_{i = 0}^{g} \epsilon^i \boldsymbol{\lambda}_i$ is the Chern polynomial of the Hodge bundle of holomorphic $1$-forms (we use bold symbols to avoid confusion with the letter $\Lambda$ used for the energy scale). It has the property that $\boldsymbol{\Lambda}[\epsilon]\boldsymbol{\Lambda}[-\epsilon] = 1 \in H^{0}(\overline{\mathcal{M}}_{g,n})$. Mumford \cite{Mumford} expressed it as
\begin{equation}
\label{Lambdae} \boldsymbol{\Lambda}[\epsilon] = \exp\left(\sum_{m \geq 1} \frac{B_{m + 1} \epsilon^m}{m(m + 1)}\left(\kappa_m -  \sum_{i = 1}^{n} \psi_i^m + \sum_{\Delta} \frac{1}{2} (\jmath_{\Delta})_* \frac{(\psi')^{2m + 1} + (\psi'')^{2m + 1}}{\psi' + \psi''}\right)\right),
\end{equation}
where $(B_m)_{m \geq 1}$ are the Bernoulli numbers defined by the power series expansion
\[
\frac{t}{e^{t} - 1} = \sum_{m \geq 0} \frac{B_k}{k!} t^k.
\]
Only odd $m$ appear in \eqref{Lambdae} as $B_{k} = 0$ for odd $k \geq 3$. Recalling the Stirling expansion as $u \rightarrow 0$
\[
\Gamma(u^{-1}) \approx \sqrt{2\pi} \,e^{(u^{-1} - 1/2)\ln(u^{-1})  - u^{-1}} \Gamma_{\text{reg}}(u^{-1}) \qquad \text{with} \qquad \Gamma_{\text{reg}}(u^{-1}) = \exp\left(\sum_{m \geq 1} \frac{B_{m + 1} u^m}{m(m + 1)}\right).
\]
and comparing with \eqref{MultiTRclass}, we see that the Hodge class can be realised from the TR class as follows.
\[
T(u) = R(u) = \frac{1}{\Gamma_{\text{reg}}(\epsilon^{-1}u^{-1})} \qquad \mathop{\Longrightarrow}^{\text{TR class}} \qquad \Omega_{g,n} \cdot \prod_{i = 1}^{n} R(\psi_i) = \boldsymbol{\Lambda}[\epsilon].
\]
This was first observed in \cite{Einter}.

\medskip

\subsubsection{Gaiotto vector for $r = 2$}
\label{S513}
\medskip

If the spectral curve has a single ramification point, the intersection-theoretic representation are simpler as all indices $\rho,\lambda$ can be dropped (we keep them in $\dd \Xi$ and $\mathsf{S}$ but drop them from the other notations). This happens for the $r = 2$ half Seiberg--Witten curve\footnote{The half Seiberg--Witten curve has also a ramification point at $\zeta = \infty$, but it does not appear in the topological recursion of \cref{th:main} and can be ignored.}, and we now focus on this case. 

\begin{prop}
\label{pr:intGai2} For rank $r = 2$, the coefficients of the Gaiotto vector \eqref{Gformalexp} satisfy for any $(g,n) \in \mathbb{Z}_{\geq 0} \times \mathbb{Z}_{> 0}$ such that $2g-2+n > 0$ and $a_1,\ldots,a_n \in \{1,2\}$ and $k_1,\ldots,k_n \in \mathbb{Z}_{> 0}$ 
\begin{equation} 
\begin{split}
\label{theform1Phi}  \Phi_{g,n}\big[\begin{smallmatrix} a_1 & \cdots & a_n \\ k_1 & \cdots & k_n \end{smallmatrix}\big] &  \\
& = 2^{2g - 2 + n}  (Q_1 - Q_2)^{2 - 2g - n - 2(k_1 + \cdots + k_n)} \Lambda^{r(k_1 + \cdots + k_n)}  (-1)^{a_1 + \cdots + a_n} \\ 
&  \times \left(\,\,\sum_{m_1,\ldots,m_n \geq 0}  \int_{\overline{\mathcal{M}}_{g,n}} \Theta_{g,n}[1] \prod_{i = 1}^n  \frac{(-1)^{m_i}(2k_i + 2m_i)!}{2^{m_i}\,(k_i + m_i)!(k_i - 1)!} \psi_i^{m_i}\right) \\
& = 2^{3g - 3 + n} (Q_1 - Q_2)^{2 - 2g - n - 2(k_1 + \cdots + k_n)} \Lambda^{r(k_1 + \cdots + k_n)} (-1)^{a_1 + \cdots + a_n} \\
& \quad \times \int_{\overline{\mathcal{M}}_{g,n}} \exp\left(\sum_{m \geq 1} \frac{(-1)^{m + 1}}{m\,2^m} \kappa_m\right) \boldsymbol{\Lambda}[-1]^2  \boldsymbol{\Lambda}\big[\tfrac{1}{2}\big]  \prod_{i = 1}^n \frac{k_i {2k_i \choose k_i}}{1 - k_i \psi_i}.
\end{split}  
\end{equation}

\end{prop}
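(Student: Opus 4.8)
The plan is to specialise \cref{th:main} to $r = 2$ and feed the resulting topological recursion correlators into the intersection-theoretic master formula of \cref{S511}. For $r = 2$ the half Seiberg--Witten curve $S_\Lambda$ of \eqref{eq:Gaiottocurve} has a \emph{single} simple ramification point $\rho$, namely the fixed point $\zeta = \tfrac{Q_1+Q_2}{2}$ of the global deck involution $\sigma\colon\zeta\mapsto Q_1+Q_2-\zeta$ of the degree-$2$ map $x$; the index-$2$ ramification point over $x=0$ is excluded from $\operatorname{Ram}(S_\Lambda)$. Consequently all the $\rho,\lambda$ indices in the master formulae of \cref{S511} collapse, and we land in the index-free situation discussed in \cref{S51}. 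Since, by \eqref{phidef}, the coefficients $\Phi_{g,n}\big[\begin{smallmatrix} a_1 & \cdots & a_n \\ k_1 & \cdots & k_n \end{smallmatrix}\big]$ are exactly the Taylor coefficients of $\phi_{g,n}=\omega_{g,n}$ near the simple poles $\infty_1,\infty_2$ of $x$ in the local coordinates $1/x(\zeta_j)$, the relevant expansion is the twisted master formula \eqref{omint2} with twist $\tilde x=\ln x$: after dropping indices and using $o_{p_i}=1$ it reads, for $2g-2+n>0$,
\begin{equation*}
\omega_{g,n} \approx \dd_1\cdots\dd_n\Bigg[\,\sum_{k_1,\ldots,k_n\geq 1}\bigg(\int_{\overline{\mathcal M}_{g,n}}\check\Omega_{g,n}\prod_{i=1}^{n}\frac{\check R(\psi_i)}{1-k_i\psi_i}\bigg)\prod_{i=1}^{n}\check{\mathsf S}_0\big[\begin{smallmatrix}\infty_{a_i}\\k_i\end{smallmatrix}\big]\frac{X_i^{k_i}}{k_i}\Bigg],
\end{equation*}
where all ingredients refer to the twisted spectral curve $(S_\Lambda,\ln x,y\dd x,\omega_{0,2})$ and $X_i=-1/x(\zeta_i)$ near $\infty_{a_i}$.

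The next step is the explicit evaluation of the three inputs for this twisted curve. The primary differential $\dd\check\Xi_0(z)=\Res_{z'=\rho}\omega_{0,2}(z,z')/\check\eta_\rho(z')$ is a global meromorphic $1$-form on $S_\Lambda$ which is anti-invariant under $\sigma$ (the latter negating $\check\eta_\rho$); since $X=-1/x$ is $\sigma$-invariant and $\sigma$ swaps $\infty_1\leftrightarrow\infty_2$, its expansions at the two points differ by a sign, which, together with the signs in $X_i=-1/x(\zeta_i)$ and in $\tfrac{\dd x}{x^{k+1}}=(-1)^{k+1}\dd(X^{k}/k)$, yields the factor $(-1)^{a_1+\cdots+a_n}$; a residue computation then gives $\check{\mathsf S}_0\big[\begin{smallmatrix}\infty_a\\k\end{smallmatrix}\big]$ proportional to $\Lambda^{2k}(Q_1-Q_2)^{-2k}\,k\binom{2k}{k}$, the binomial being the source of $k_i\binom{2k_i}{k_i}$, resp.\ of $\tfrac{(2k_i+2m_i)!}{(k_i+m_i)!(k_i-1)!}$ after the $\psi_i$-expansion. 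On the twisted curve one has $\tilde y=xy=\zeta$, so $\check T(u)=\tfrac{1}{\sqrt{2\pi u}}\int_{\gamma_\rho}e^{(\ln x(\rho)-\ln x(z))u^{-1}}\dd\zeta$; passing to the coordinate $t$ with $\zeta=\tfrac{Q_1+Q_2}{2}+\tfrac{Q_1-Q_2}{2}t$, so that $x=\tfrac{4\Lambda^2}{(Q_1-Q_2)^2(1-t^2)}$ and $\ln x(\rho)-\ln x(z)=\ln(1-t^2)$, this becomes $\check T(u)=\tfrac{Q_1-Q_2}{2\sqrt{2\pi}}\int(1-us^2)^{1/u}\dd s$, with $\check T(0)=\tfrac{Q_1-Q_2}{2\sqrt2}$; the factor $\check T(0)^{-(2g-2+n)}$ entering $\check\Omega_{g,n}$ (through the $\kappa_0$-part of $\prod_v\Upsilon_{g(v),n(v)}$ and $\sum_v(2g(v)-2+n(v))=2g-2+n$) produces the prefactor $(Q_1-Q_2)^{2-2g-n}$ together with the matching powers of $2$. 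Lastly, $\check R(u)$ is the scalar series with $\check R(u)\check R(-u)=1$ read off from the formal Laplace transform of $\dd\check\Xi_0$ at $\rho$; both $\check T$ and $\check R$ are made explicit by inverting $\check\eta_\rho(z)^2=-2\ln(1-t^2)$.

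The heart of the proof is then to recognise the cohomology class $\check\Omega_{g,n}\cdot\prod_i\check R(\psi_i)$ of \eqref{MultiTRclass} as a product of standard building blocks. By the multiplicativity of TR classes explained in \cref{S51}, it is enough to factor the pair $(\check T(u),\check R(u))$ into: the deformed Theta data \eqref{TRtheta} at $\epsilon=1$ (giving $\Theta_{g,n}[1]$, with $R=1$); the Hodge data $T=R=1/\Gamma_{\mathrm{reg}}(\epsilon^{-1}u^{-1})$ taken for $\epsilon=-1$ twice and $\epsilon=\tfrac12$ once (giving $\boldsymbol\Lambda[-1]^2\boldsymbol\Lambda[\tfrac12]$ by the discussion around \eqref{Lambdae}); and a residual $\kappa$-exponential datum with $R=1$ whose $T$ yields $\exp\!\big(\sum_{m\geq1}\tfrac{(-1)^{m+1}}{m\,2^{m}}\kappa_m\big)$. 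I expect \emph{this power-series factorisation} to be the main obstacle: it requires knowing $\check T$ and $\check R$ to all orders and matching them against the Laplace transforms of the building-block data, and it is precisely here that the remaining numerical constants ($2^{3g-3+n}$, the shift $\epsilon=\tfrac12$, the multiplicities $2$ and $1$ of the Hodge factors, the coefficients $\tfrac{(-1)^{m+1}}{m2^m}$) get pinned down; the orientation convention \eqref{orcon} must be tracked carefully throughout.

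Granting the factorisation, substituting it and the value of $\check{\mathsf S}_0$ into the master formula above and reading off the coefficient of $\prod_j\tfrac{\dd x(z_j)}{x(z_j)^{k_j+1}}$ yields the second equality of \eqref{theform1Phi}, while the first equality follows by instead keeping $\Theta_{g,n}[1]$ bare and absorbing the Hodge classes, the $\kappa$-exponential, and $\check R(\psi_i)$ into a single modified series in $\psi_i$ --- again an instance of \eqref{MultiTRclass} combined with the tautological relations among $\Theta$, $\kappa$, $\boldsymbol\lambda$ and $\psi$ classes that underlie the polynomiality of $\Theta_{g,n}[\epsilon]$ in $\epsilon$ noted after \eqref{TRtheta}. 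Finally, the dependence $\Lambda^{r(k_1+\cdots+k_n)}$ is forced by the homogeneity \eqref{homogen} of the coefficients $\Phi_{g,n}$.
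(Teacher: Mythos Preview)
Your overall plan --- specialise \cref{th:main} to $r=2$ and apply the master formula of \cref{S511} --- is right, and your computation of $\check T(0)$ and of the $(-1)^{a_1+\cdots+a_n}$ sign is on track. But the core step is mis-designed: the two formulae in \eqref{theform1Phi} do \emph{not} arise from a single factorisation of $(\check T,\check R)$. In the paper they come from two \emph{different} twists, and your attempt to get both from the logarithmic twist alone breaks down at a concrete point.

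For the logarithmic twist $\check x=\ln x$, the Beta-integral calculation gives
\[
\check T(u)=-\tfrac{Q_1-Q_2}{2\sqrt2}\,\frac{1}{1+u/2}\,\frac{\Gamma_{\mathrm{reg}}(u^{-1})^2}{\Gamma_{\mathrm{reg}}(2u^{-1})},\qquad
\check R(u)=\frac{\Gamma_{\mathrm{reg}}(u^{-1})^2}{\Gamma_{\mathrm{reg}}(2u^{-1})}.
\]
There is \emph{no} factor of the form $\sum_{m\ge0}(2m+1)!!\,(-u)^m$ in $\check T$, so the Theta data of \eqref{TRtheta} simply is not a factor of $(\check T,\check R)$. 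By multiplicativity (paragraph A of \cref{S51}) the resulting TR class is exactly $\exp\!\big(\sum_{m\ge1}\tfrac{(-1)^{m+1}}{m\,2^m}\kappa_m\big)\boldsymbol\Lambda[-1]^2\boldsymbol\Lambda[\tfrac12]$ times the prefactor $(-(Q_1-Q_2)/2\sqrt2)^{2-2g-n}$, together with the simple $\psi$-series $\tfrac{1}{1-k_i\psi_i}$ coming from \eqref{the55} since $\check d_{Q_a}=1$. This is precisely the \emph{second} equality, but it leaves the first unproved.

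The first equality comes from the different twist $(\tilde x,\tilde y)=(x^{-1},-x^2y)$. Here $\tilde x$ is a polynomial of degree $2$ in $\zeta$, so the Laplace transforms are exact Gaussians: one finds $\tilde R(u)\equiv1$ and
\[
\tilde T(u)=\sqrt{-\Lambda^2/2}\sum_{m\ge0}\Big(\tfrac{-2\Lambda^2}{(Q_1-Q_2)^2}\Big)^{m+1}(2m+1)!!\,u^m,
\]
which \emph{is} (a rescaling of) the Theta data \eqref{TRtheta}; hence $\tilde\Omega_{g,n}$ is proportional to $\Theta_{g,n}[\epsilon]$. The peculiar $\psi$-series $\tfrac{(2k_i+2m_i)!}{(k_i+m_i)!(k_i-1)!}$ then comes from \eqref{Shigher} with $\tilde d_{Q_a}=0$ (simple zeroes of $\tilde x$), not from any $R$-factor. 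Your last paragraph, which tries to recover this from the triple-Hodge side by ``absorbing'' Hodge and $\kappa$-classes into the $\psi$-series, would amount to proving a nontrivial tautological identity that you neither state nor justify; the paper avoids this entirely by doing the second, independent twist computation.
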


\begin{proof} Recall the half Seiberg--Witten  curve for $r = 2$:
\[
x(\zeta) = -\frac{\Lambda^2}{(Q_1 - \zeta)(Q_2 - \zeta)},\qquad y(\zeta) = \frac{\zeta}{x(\zeta)},\qquad \omega_{0,2}(\zeta_1,\zeta_2) = \frac{\dd \zeta_1 \dd \zeta_2}{(\zeta_1 - \zeta_2)^2}.
\]
The unique ramification point that is relevant in the topological recursion \cref{th:main} is located at $\rho = \frac{1}{2}(Q_1 + Q_2)$.

We first apply the twist $(\tilde{x},\tilde{y}) = (x^{-1},-x^2y)$. The advantage in doing so is that the Laplace transforms defining $R$ and $T$ reduce to Gaussian integrals with respect to $\zeta$. The associated primary differential is
\begin{equation}
\label{primartil}
\dd\tilde{\Xi}_{\rho,0}(\zeta) = \Res_{\zeta' = \rho} \frac{\dd \zeta \dd \zeta'}{(\zeta - \zeta')^2}\,\frac{1}{\sqrt{\tilde{x}''(\rho)}\,(\zeta' - \rho) + o(\zeta' - \rho)} = \frac{\sqrt{-\Lambda^r/2}\, \dd \zeta}{(\zeta - \rho)^2}.
\end{equation}
The function $\tilde{x}$ has a simple zero at $\zeta = Q_a$ (that is, $\tilde{d}_{Q_a} = 0$), so the local coordinate of \eqref{Xcoo} is $\tilde{X} = \tilde{x} = x^{-1}$. The expansion coefficient of the primary differentials using this coordinate are
\begin{equation*}
\begin{split}
\tilde{\mathsf{S}}_{\rho,0}\big[\begin{smallmatrix} Q_a \\ k \end{smallmatrix}\big] & = \Res_{\zeta = Q_a} \frac{\sqrt{-\Lambda^r/2} \,\dd \zeta}{(\zeta - \rho)^2} (\tilde{x}(\zeta))^{-k} \\
& = -\sqrt{-\Lambda^r/2} \Res_{\zeta = Q_a} \frac{k \tilde{x}'(\zeta)}{\zeta - \rho}\,(\tilde{x}(\zeta))^{-(k + 1)} \\
& = \sqrt{-2\Lambda^r} \Res_{\zeta = Q_a} \frac{k\,(-\Lambda^r)^{k}\,\dd \zeta}{((\zeta - Q_1)(\zeta - Q_2))^{k + 1}}  \\
& = (-1)^{a - 1}\,\frac{\sqrt{-2\Lambda^r}}{Q_1 - Q_2} \left(\frac{\Lambda^r}{(Q_{1} - Q_{2})^2}\right)^{k} \frac{(2k)!}{k!(k - 1)!},
\end{split}
\end{equation*}
From \eqref{Shigher} we deduce
\begin{equation*}
\begin{split}
\tilde{\mathsf{S}}_{\rho,m}\big[\begin{smallmatrix} Q_a \\ k \end{smallmatrix}\big] & = \frac{(k + m - 1)!}{(k - 1)!} \cdot \tilde{\mathsf{S}}_{\rho,0}\big[\begin{smallmatrix} Q_a \\ k + m \end{smallmatrix}\big] \\
& =  (-1)^{a - 1} \cdot \frac{\sqrt{-2\Lambda^r}}{Q_1 - Q_2} \left(\frac{\Lambda^r}{(Q_1 - Q_2)^2}\right)^{k + m} \frac{(2k + 2m)!}{(k + m)!(k - 1)!}.
\end{split}
\end{equation*}
Now we turn to the associated formal series $\tilde{T}(u)$ and $\tilde{R}(u)$. We compute
\begin{equation*}
\begin{split} 
\tilde{T}(u) & = \frac{1}{\sqrt{2\pi u}} \int_{\gamma_{\rho}} e^{(x^{-1}(\rho) - x^{-1}(\zeta))u^{-1}} \dd \tilde{y} \\
& = \frac{1}{u\sqrt{2\pi u}} \int_{\gamma_{\rho}} \exp\left(\frac{(\zeta - \rho)^2}{\Lambda^r u}\right) \tilde{y} \dd \tilde{x} \\ 
& = \frac{1}{u\sqrt{2\pi u} }\int_{\gamma_{\rho}} \exp\left(\frac{(\zeta - \rho)^2}{\Lambda^r u}\right) \frac{2 \zeta (\rho - \zeta) \dd \zeta}{(Q_1 - \zeta)(Q_2 - \zeta)} \\ 
& = - \frac{1}{u\sqrt{2\pi u}} \int_{\gamma_{\rho}} \exp\left(\frac{(\zeta - \rho)^2}{\Lambda^r u}\right)  \sum_{m \geq 0} \left(\frac{2}{Q_1 - Q_2}\right)^{2m + 2} (\zeta - \rho)^{2(m + 1)} \dd \zeta  \\ 
& = \sqrt{-\Lambda^r/2} \sum_{m \geq 0} \left(\frac{-2\Lambda^r}{(Q_1 - Q_2)^2}\right)^{m + 1} (2m + 1)!!\,u^m,
\end{split}  
\end{equation*}
where we used \eqref{orcon}. After integration by parts, $\tilde{R}(u)$ is precisely a  Gaussian integral, and since we know that $\tilde{R}(u) = 1 + O(u)$, we must have $\tilde{R}(u) = 1$ (this can be checked by direct computation). Comparing with \eqref{TRtheta},  the associated TR class is
\[
\tilde{\Omega}_{g,n} = (-\Lambda^r/2)^{1 - g - n/2} \cdot \Theta_{g,n}\big[\tfrac{Q_1 - Q_2}{\sqrt{2\Lambda^r}}\big].
\]
Putting all ingredients in \eqref{finalformin} and comparing its series expansion as $\zeta_i \rightarrow Q_{a_i}$ in the coordinate $x^{-1}$ with the definition of the correlators from the Gaiotto vector in \eqref{phidef}, we arrive at
\begin{multline*}
\Phi_{g,n}\big[\begin{smallmatrix} a_1 & \cdots & a_n \\ k_1 & \cdots & k_n \end{smallmatrix}\big]  =  \sum_{m_1,\ldots,m_n \geq 0} \prod_{i = 1}^n \left( \frac{(-1)^{a_i - 1} \sqrt{-2\Lambda^r}}{Q_1 - Q_2}\,\left(\frac{\Lambda^r}{(Q_1 - Q_2)^2}\right)^{k_i + m_i} \frac{(2k_i + 2m_i)!}{(k_i + m_i)!(k_i - 1)!}\right) \\
  \quad \times \big(-\tfrac{\Lambda^r}{2}\big)^{1 - g - n/2} \int_{\overline{\mathcal{M}}_{g,n}} \Theta_{g,n}\big[\tfrac{Q_1 - Q_2}{\sqrt{2\Lambda^r}}\big] \prod_{i = 1}^{n} \psi_i^{m_i} .
\end{multline*}
The coefficient of $\epsilon^{2m}$ in the deformed Theta class $\Theta_{g,n}[\epsilon]$ has cohomological degree $2(2g - 2 + n - m)$, hence must be completed by a total of $\sum_{i = 1}^{n} m_i = g - 1 + m$ classes $\psi$ in order to give a non-zero contribution. Therefore, the total power of $\Lambda^r$ is $(k_1 + \cdots + k_n)$, and the power of $(Q_1 - Q_2)$ is $2 - 2g - n - 2(k_1 + \cdots + k_n)$. Collecting the powers of $(-1)$ and $2$ as well, we can replace the argument of the deformed Theta class by $\epsilon = 1$ to arrive at \eqref{theform1Phi}.

The second representation comes from the TR class after logarithmic twist. Let $\check{x}(\zeta) = \ln x(\zeta)$ and $\check{y}(\zeta) = x(\zeta)y(\zeta) = \zeta$.  As $\zeta \rightarrow \rho$ we have
\begin{equation*}
\begin{split} \check{x}(\zeta) & = -\ln \tilde{x}(\zeta) = -\ln\left(\tilde{x}(\rho) + \frac{1}{2} \tilde{x}''(\rho) (\zeta - \rho)^2 + o(\zeta - \rho)^2\right) \\
& = -\ln \tilde{x}(\rho) - \frac{\tilde{x}''(\rho)}{2\tilde{x}(\rho)} (\zeta - \rho)^2 + o(\zeta - \rho)^2,
\end{split}
\end{equation*}
This implies that primary differential after the logarithmic twist is a simple rescaling of \eqref{primartil}:
\[
\dd \check{\Xi}_{\rho,0}(\zeta) = \sqrt{-\tilde{x}(\rho)}\,\dd \tilde{\Xi}_{\rho,0}(\zeta) = \frac{Q_1 - Q_2}{2\sqrt{2}}\,\frac{\dd \zeta}{(\zeta - \rho)^2}.
\]
As we are looking at simple poles of $x$, we are in the case $\check{d}_{Q_a} = 1$ and $\check{c}_{Q_a} = -1$ in \eqref{Xcoo}. This leads to
\begin{equation}
\label{Shig}
\check{\mathsf{S}}_{\rho,m}\big[\begin{smallmatrix} Q_a \\ k \end{smallmatrix}\big] = \sqrt{-\tilde{x}(\rho)}\, k^m \,\tilde{\mathsf{S}}_{\rho,0}\big[\begin{smallmatrix} Q_a \\ k \end{smallmatrix}\big] = k^m \frac{(-1)^{a-1}}{\sqrt{2}} \left(\frac{\Lambda^r}{(Q_1 - Q_2)^2}\right)^{k} \frac{(2k)!}{k!(k - 1)!}.
\end{equation}
To get the TR class, we compute 
\begin{equation*}
\begin{split}
\check{T}(u) & = \frac{1}{\sqrt{2\pi u}} \int_{\gamma_{\rho}} \left(\frac{-4(\zeta - Q_1)(\zeta - Q_2)}{(Q_1 - Q_2)^2}\right)^{u^{-1}} \dd \zeta \\
& = -\frac{(Q_1 - Q_2)4^{u^{-1}}}{\sqrt{2\pi u}} \int_{\gamma_{\rho}} \tilde{\zeta}^{u^{-1}} (1 - \tilde{\zeta})^{u^{-1}} \dd \tilde{\zeta} \\
& = -\frac{(Q_1 - Q_2)4^{u^{-1}}}{\sqrt{2\pi u}} \frac{\big(\Gamma(u^{-1} + 1)\big)^2}{\Gamma(2u^{-1} + 2)} = -\frac{Q_1 - Q_2}{2\sqrt{2}} \,\frac{1}{1 + u/2}\,\frac{\big(\Gamma_{\text{reg}}(u^{-1})\big)^2}{\Gamma_{\text{reg}}(2u^{-1})},
\end{split} 
\end{equation*}
In the line before the last, we use the integral representation of the beta function.\footnote{The orientation of the contour should be chosen such that $\check{R}(u) = 1 + O(u)$ in the next computation.} Finally, using integration by parts and $\frac{x'(\zeta)}{x(\zeta)} = \frac{-2(\zeta - Q_1)}{(\zeta - Q_1)(\zeta - Q_2)}$, we compute
\begin{equation*}
\begin{split}
\check{R}(u) & = \frac{Q_1 - Q_2}{2\sqrt{2}}\,\frac{1}{\sqrt{2\pi u^{-1}}} \int_{\gamma_{\rho}} \left(\frac{x(\rho)}{x(\zeta)}\right)^{u^{-1}} \frac{\dd \zeta}{(\zeta - \rho)^2} \\
& = \frac{Q_1 - Q_2}{\sqrt{2}}\,\frac{1}{\sqrt{2\pi u}} \int_{\gamma_{\rho}} \left(\frac{x(\rho)}{x(\zeta)}\right)^{u^{-1}} \frac{\dd \zeta}{(\zeta - Q_1)(\zeta - Q_2)} \\
& = \frac{Q_1 - Q_2}{\sqrt{2}} \left(\frac{-4}{(Q_1 - Q_2)^2}\right)^{u^{-1}} \frac{1}{\sqrt{2\pi u}} \int_{\gamma_{\rho}} \big((\zeta - Q_1)(\zeta - Q_2)\big)^{u^{-1} - 1}\dd \zeta \\ 
& = \frac{4^{u^{-1}}}{2\sqrt{\pi u}}\,\frac{\big(\Gamma(u^{-1})\big)^2}{\Gamma(2u^{-1})} = \frac{\big(\Gamma_{\text{reg}}(u^{-1})\big)^2}{\Gamma_{\text{reg}}(2u^{-1})}.
\end{split}
\end{equation*}  
Except for the extra factor $(1 + u/2)^{-1}$ in $\check{T}(u)$, we recognise products of the $R$ and $T$ series that appeared for the Hodge class. Using the multiplicativity properties of TR classes, we get
\[
\check{\Omega}_{g,n} \prod_{i = 1}^{n} \check{R}(\psi_i) = \left(-\frac{Q_1 - Q_2}{2\sqrt{2}}\right)^{2 - 2g - n}  \exp\left(\sum_{m \geq 1} \frac{(-1)^{m + 1}}{m\,2^m}  \kappa_m \right)  \boldsymbol{\Lambda}[-1]^2 \boldsymbol{\Lambda}\big[\tfrac{1}{2}\big] .
\]
Putting this together with \eqref{Shig} in the master formula yields the second claim.
\end{proof}

\begin{rem}
	\cref{pr:intGai2} can also be derived by combining \cref{th:main} with the results of \cite{YZ24,AN24,CGG22}, and applying the techniques described in part \textbf{C.} of \cref{S511}.  First, the correlators of the generalized BGW $\tau$-function  introduced by \cite{Ale18} can be obtained by expanding the topological correlators $\omega_{g,n}$  on the $r = 2$ half Seiberg--Witten curve at the point $\zeta = \infty$ with the identification of parameters $s = Q_1 - Q_2$, as proved in \cite[Section 5]{AN24}. (It would be possible to give a proof of the aforementioned statement using the techniques developed in this paper, as the generalized BGW $\tau$-function satisfies a full set of Virasoro constraints.) Second, the generalized BGW correlators are proved to admit an interpretation in terms of triple Hodge integrals $\boldsymbol{\Lambda}[-1]^2 \boldsymbol{\Lambda}\big[\tfrac{1}{2}\big]$ in \cite{YZ24}. Third, the expansion coefficients of the $\omega_{g,n}$ near the ramification point $\zeta = \frac{Q_1 + Q_2}{2}$ encode descendant integrals of the deformed Theta class \cite{CGG22}. Combining all these results, \cref{pr:intGai2} can be proved by relating the expansion coefficients in different bases of expansion using the techniques described in part \textbf{C.} of \cref{S511}.
\end{rem}

\begin{rem}
	As alluded to in the previous remark, it is worth stressing that the very same $\omega_{g,n}$ on the $r =2$ half Seiberg--Witten curve (up to twisting) calculates the Gaiotto vector correlators, triple Hodge integrals, deformed Theta integrals or generalized BGW correlators depending on whether one expands at $\zeta = Q_a$, $\zeta = \frac{Q_1+Q_2}{2} $, $\zeta = \frac{Q_1+Q_2}{2}$ or $\zeta = \infty$ respectively, in an appropriate basis of differentials. 
\end{rem}

\vspace{0.5cm}

\subsection{Relation to Hurwitz theory}\label{sec:Hurwitz} 
\label{S5H}

\medskip

\subsubsection{Hurwitz theory and topological recursion}

\medskip

Let us briefly review the formalism of weighted Hurwitz numbers, referring to \cite{GPH17} for the details. Consider a formal power series of the form 
\[
	\mathcal{G}(\zeta) =  \sum_{m \geq 0} \mathcal{G}_m \zeta^m, \qquad \mathcal{G}_0 \neq 0.
\] 
Weighted single Hurwitz numbers $H_{\mathcal{G};g}(\mu_1,\ldots, \mu_n)$ of genus $g \geq 0$ with $\mu_1,\dots, \mu_n \in \mathbb Z_{> 0}$ are weighted sums of ramified coverings of $\mathbb P^1 $ by a smooth genus $ g$ curve of degree $d = \sum_{i=1}^n \mu_i$ with prescribed ramification profile $\{\mu_1,\ldots, \mu_n\}$ over $\infty \in \mathbb P^1$. The weight depends on the profile of the other ramification points in a way specified by $\mathcal{G}$.

To define them, we start by defining the disconnected weighted single Hurwitz numbers via the following calculation in the center of the group algebra of $\mathfrak{S}_d$
\[
H_{\mathcal{G};\chi}^{\bullet}(\mu_1,\ldots,\mu_n) = \frac{1}{d!} [\beta^{\chi + n - d}.\text{id}]\, C_{\mu} \prod_{i = 1}^{d - 1} \mathcal{G}(\beta\mathsf{J}_i)
\]
where $\mathsf{J}_i = \sum_{j < i} (j\,i)$ are the Jucys--Murphy elements, $C_{\mu}$ the indicator of the conjugacy class of a permutation with cycles of lengths $\mu_1,\ldots,\mu_n$, and $[\beta^{b}.\text{id}]$ extracts the coefficient of $\beta^b.\text{id}$ from the expression to its right. The interpretation as enumeration of branched covers of $\mathbb{P}^1$ of Euler characteristic $\chi$ is well-known.
The (connected) weighted single Hurwitz numbers $H_{\mathcal{G};g}(\mu_1,\ldots,\mu_n)$ are then defined by inclusion-exclusion from the disconnected ones. Classical choices of weights are
\begin{itemize} 
\item $\mathcal{G}(\zeta) = \exp(\zeta)$: simple Hurwitz numbers with simple ramification away from $\infty$;
\item $\mathcal{G}(\zeta) = \frac{1}{1-\zeta}$: strictly monotone Hurwitz numbers;
\item $\mathcal{G}(\zeta) = (1+\zeta)$: weakly monotone Hurwitz numbers, closely related to dessins d’enfants or bipartite maps.
\end{itemize}

Weighted Hurwitz numbers are governed by topological recursion: this has been proved in \cite{ACEH3} for polynomial $\mathcal{G}$ and in \cite{BDBKS} in largest possible generality, including rational-exponential $\mathcal{G}$.  For rational $\mathcal{G}$, an alternative proof that relates the cut-and-join equation for rationally weighted Hurwitz numbers with $\mathcal W$-constraints also appeared recently in \cite{CDO24}. The precise statement is the following.

\begin{thm}[\cite{ACEH3, BDBKS}]\label{thm:weightedH}
	Assume that $\mathcal{G}$ is an exponential times a rational function, and let $(\hat{\omega}_{g,n})_{g,n}$ be the correlators constructed by running topological recursion on the spectral curve $\mathbb{P}^1$ parametrised as
\begin{equation}
\label{spcurveH}
\hat{x}(\zeta) = \frac{\zeta}{\gamma \mathcal{G}(\zeta)},\qquad \hat{y}(\zeta) = \frac{\zeta}{\hat{x}(\zeta)},
\end{equation}
and equipped with $\hat{\omega}_{0,1} = \hat{y}\dd \hat{x}$ and $\hat{\omega}_{0,2}(\zeta_1,\zeta_2) = \frac{\dd \zeta_1\dd \zeta_2}{(\zeta_1 - \zeta_2)^2}$, assuming that $\hat{x}$ has simple ramification points. Then, we have  for any $g,n$ the all-order series expansion as $\zeta_i \rightarrow 0$ in the variable $x_i = x(\zeta_i)$:
\begin{multline*}
\hat{\omega}_{g,n}(\zeta_1,\dots, \zeta_n) - \frac{\delta_{g,0}\delta_{n,2} \dd \hat{x}_1\dd \hat{x}_2}{(\hat{x}_1 - \hat{x}_2)^2} \\
	  \approx	\sum_{\mu_1,\dots,\mu_n \in \mathbb Z_{> 0}} \#\Aut(\mu) \cdot \gamma^{\mu_1+\cdots+\mu_n}\cdot H_{\mathcal{G};g}(\mu_1,\dots,\mu_n)  \prod_{i=1}^n \dd(\hat{x}(z_i)^{\mu_i})\,,
	\end{multline*}	
\end{thm}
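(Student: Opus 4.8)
The statement for polynomial $\mathcal{G}$ is due to \cite{ACEH3} and in full generality to \cite{BDBKS}; we outline how it can also be deduced within the framework of this paper, from \cref{th:main2} together with the Hurwitz-theoretic interpretation of the CDO vector established in \cite{CDO24}. The plan is to treat first the case of rational $\mathcal{G}$ and then reach exponential-times-rational $\mathcal{G}$ by a limiting argument in the spirit of \cref{rem:limits}.

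For $\mathcal{G}$ rational, write $\mathcal{G} = P/Q$ with $P,Q$ polynomials and $P(0)Q(0)\neq 0$; after the twist $(x,y)\mapsto(x^{-1},-x^2 y)$ and a M\"obius change of the global coordinate $\zeta$, the curve \eqref{spcurveH} becomes one of the curves \eqref{eq:CDOcurve}. Concretely, one matches the zeros $\{0\}\cup\{\text{roots of }Q\}$ and the poles $\{\infty\}\cup\{\text{roots of }P\}$ of $\hat{x}$ with the zeros and poles of $1/x$ in \eqref{eq:CDOcurve}, thereby reading off a point $(\mathbf P,\mathbf Q,\Lambda)$ which for generic $\mathcal{G}$ lies in the set $\mathcal{R}$ of \cref{Rset}, together with the value of $\gamma$ from the leading coefficients of $P$ and $Q$. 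By \cite{CDO24}, the correlators $\phi_{g,n}$ of $\ket{\Gamma_\Lambda^{\text{CDO}}}$ from \eqref{eq:phidef2}, restricted to the point of $x^{-1}(\infty)\cong L$ that corresponds to $\zeta=0$ and expanded in $1/x$, match the generating series $\sum_{\mu}\#\Aut(\mu)\,\gamma^{\mu_1+\cdots+\mu_n}H_{\mathcal{G};g}(\mu_1,\dots,\mu_n)\prod_i\dd(\hat{x}(z_i)^{\mu_i})$ up to the $\delta_{g,0}\delta_{n,2}$ term, while the unstable data match $\hat{\omega}_{0,1},\hat{\omega}_{0,2}$ by \cref{lem:01new} and \cref{lem:02new}. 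Then \cref{th:main2} identifies $\phi_{g,n}$ with the topological recursion correlators $\omega_{g,n}$ of \eqref{eq:CDOcurve}, which under the above dictionary are the $\hat{\omega}_{g,n}$ of \eqref{spcurveH}; this proves the theorem for rational $\mathcal{G}$.

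For $\mathcal{G}$ an exponential times a rational function, I would realise the exponential factor as an $N\to\infty$ limit of rational factors, using $e^{f(\zeta)}=\lim_N\prod_{k=1}^N(1+f(\zeta)/N)$ and factoring each polynomial $N+f(\zeta)$ into linear terms, which produces a sequence $\mathcal{G}_N$ of rational weights with $\mathcal{G}_N\to\mathcal{G}$. On the enumerative side, the Jucys--Murphy presentation of the disconnected numbers $H^{\bullet}_{\mathcal{G}_N;\chi}$ shows that in each fixed degree $d$ of the symmetric group only finitely many coefficients of $\mathcal{G}_N$ intervene, so $H_{\mathcal{G}_N;g}(\mu)\to H_{\mathcal{G};g}(\mu)$ for every $\mu$. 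On the spectral-curve side, passing from $\mathcal{G}_N$ to $\mathcal{G}$ is an admissible degeneration of the family \eqref{eq:CDOcurve} in the sense of \cite{TRlimits} (the extra parameters $P_a$ escape to infinity while $\Lambda$ is rescaled, with no ramification point colliding with a pole of $\hat{x}$), so the correlators of \eqref{spcurveH} for $\mathcal{G}$ are the limits of those for $\mathcal{G}_N$. Combining the two limits with the rational case yields the theorem.

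The main obstacle is making this last limit precise, and in particular verifying that for an arbitrary exponential-times-rational $\mathcal{G}$ the degeneration does fall within the admissibility hypotheses of \cite{TRlimits}, as well as tracking the normalisation factors $\#\Aut(\mu)$ and $\gamma^{\mu_1+\cdots+\mu_n}$ through the M\"obius reparametrisation and the twist. One could avoid the limit altogether by proving the abstract loop equations of \cite{BEO,BSblob} directly on \eqref{spcurveH} for the generating series $\sum_\mu\#\Aut(\mu)\gamma^{\mu_1+\cdots+\mu_n}H_{\mathcal{G};g}(\mu)\prod_i\dd(\hat{x}(z_i)^{\mu_i})$, starting from the cut-and-join equation for the hypergeometric tau function $\sum_\chi H^{\bullet}_{\mathcal{G};\chi}$; the non-trivial step there is exactly the passage from the cut-and-join equation to the loop equations, which is the reformulation as $\mathcal{W}(\mathfrak{gl}_r)$-constraints carried out in \cite{CDO24} and then fed into \cref{th:main2} as above --- which is why the route via the CDO vector is, in the end, the conceptually cleanest one.
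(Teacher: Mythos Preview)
The paper does not give its own proof of this theorem: it is quoted as a result of \cite{ACEH3} (polynomial case) and \cite{BDBKS} (general case), with the remark preceding the statement that \cite{CDO24} provides an alternative proof for rational $\mathcal{G}$ using precisely \cref{th:main2}. So there is nothing to compare against; your proposal is an attempt to supply an argument where the paper simply cites one.

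Your outline for rational $\mathcal{G}$ is the right idea and matches what the paper says \cite{CDO24} does, but you glossed over a genuine constraint. The CDO curve \eqref{eq:CDOcurve} as written in this paper has a fixed shape: $r$ linear factors in the numerator of $x(\zeta)$ and $r-1$ in the denominator. After setting one $Q_a=0$ and twisting by $(x,y)\mapsto(x^{-1},-x^2y)$ as in the proof of \cref{cor:HforZ}, the resulting weight is $\mathcal{G}(\zeta)=\frac{\prod_{a=1}^{r}(P_a+\zeta)}{\prod_{a=2}^{r-1}(Q_a-\zeta)}$, i.e.\ a rational function whose numerator degree exceeds its denominator degree by exactly two. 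A M\"obius change of $\zeta$ does not alter this degree difference, so your claim that an arbitrary rational $\mathcal{G}=P/Q$ is reached ``after the twist and a M\"obius change'' is not justified by what is in this paper; one needs either the more general Airy structures in \cite{CDO24} or a further degeneration of parameters.

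For the exponential factor you are honest about the difficulty, and that honesty is warranted: writing $e^{f(\zeta)}=\lim_N\prod_{k=1}^N(1+f(\zeta)/N)$ sends infinitely many $P_a$ to infinity simultaneously while their number grows, which is not the kind of finite-parameter limit covered by \cite{TRlimits} as used elsewhere in the paper (cf.\ \cref{rem:limits}, where only finitely many $P_a$ escape). So the exponential step remains a genuine gap in your outline, not merely a verification to be filled in.
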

The factor $\# \text{Aut}( \mu) = \prod_{i \geq 1} i^{m_i} m_i!$ with $m_i = \#\{j \,\,|\,\,\mu_j = i\}$ is necessary for the comparison with the normalisation of $H_{\mathcal{G};g}$ in \cite{GPH17}. The assumption that the spectral curve has simple ramification points can be waived using the limit arguments of \cite{TRlimits}.

\medskip

\subsubsection{Application to \texorpdfstring{$\ket{\Gamma_{\Lambda}}$}{GammaLambda} and \texorpdfstring{$\ket{\Gamma_{\Lambda}^{\text{CDO}}}$}{GammaLambdaCDO}}

\label{sec:H}

\medskip

Combining \cref{thm:weightedH} and our main results \cref{th:main} and \cref{th:main2}, we can give a Hurwitz theory interpretation to a part of the two Whittaker vectors, after specialisation to the parameter $Q_1 = 0 $. Let us start with the Gaiotto vector $\ket{\Gamma_{\Lambda}}$, whose expansion coefficients were denoted $\Phi_{g,n} $ in \eqref{Gformalexp}. We have the following  corollary of \cref{th:main} --- its assumption at $Q_1 = 0$ imposes that the other $Q$s are non-zero and pairwise distinct.
\begin{cor}\label{cor:HforG}
	Consider the Gaiotto vector $\ket{\Gamma_{\Lambda}}$, and substitute $Q_1 = 0$. Then, for any pairwise distinct $Q_2,\ldots, Q_{r} \in \mathbb C^*$, any $(g,n) \in \mathbb Z_{\geq 0} \times \mathbb Z_{>0}$ and $\mu \in \mathbb{Z}_{> 0}^n$, we have
\[
	 \Phi_{g,n}\big[\begin{smallmatrix} 1 & \cdots & 1\\ \mu_1 & \cdots & \mu_n \end{smallmatrix}\big]  = \mu_1 \cdots \mu_n \cdot  \#\Aut(\mu) \cdot \Lambda^{r(\mu_1 + \cdots + \mu_n)} \cdot H_{\mathcal{G};g}(\mu_1,\dots,\mu_n),
\]
 with the weight generating series
\[
\mathcal{G}(\zeta) = \frac{1}{\prod_{a=2}^r(Q_a-\zeta)}.
\]
\end{cor}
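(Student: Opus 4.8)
The plan is to recognise the half Seiberg--Witten curve \eqref{eq:Gcurve} at $Q_1 = 0$ as the weighted Hurwitz spectral curve \eqref{spcurveH} of \cref{thm:weightedH}, up to the coordinate change $x \mapsto 1/x$, and then to identify the coefficients $\Phi_{g,n}$ by expanding one and the same family of topological recursion correlators near the point lying over $x = \infty$.

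First I would substitute $Q_1 = 0$ into \eqref{eq:Gcurve}: writing $D_0(\zeta) = \prod_{a = 2}^{r}(Q_a - \zeta)$ one gets $x(\zeta) = \Lambda^r/(\zeta D_0(\zeta))$, hence $1/x(\zeta) = \zeta D_0(\zeta)/\Lambda^r = \zeta/(\Lambda^r \mathcal{G}(\zeta))$ with $\mathcal{G}(\zeta) = 1/D_0(\zeta)$ exactly as in the statement; this $\mathcal{G}$ is rational with $\mathcal{G}(0) \neq 0$, so it falls under \cref{thm:weightedH}. Comparing with \eqref{spcurveH}, the function $1/x$ is precisely the Hurwitz function $\hat{x}$ with $\gamma = \Lambda^r$. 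Since $x$ and $1/x$ have the same critical points away from $\{0,\infty\}$, the $(r - 1)$ simple ramification points forming $\operatorname{Ram}(S_\Lambda)$ are exactly the simple ramification points of $\hat{x}$ used in \cref{thm:weightedH} (they are simple because $Q_2,\ldots,Q_r$ are nonzero and pairwise distinct), while the excluded index-$r$ point at $x = 0$, i.e. $\zeta = \infty$, goes to $\hat{x} = \infty$ and is equally irrelevant there. A short computation from the parametrisations gives $\hat{y}(\zeta)\dd\hat{x}(\zeta) = -\,y(\zeta)\dd x(\zeta)$, i.e. the two curves differ by the twist $(x,y)\mapsto(x^{-1},-x^2 y)$ followed by $y \mapsto -y$. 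The twist leaves every $\omega_{g,n}$ produced by \eqref{TRome} unchanged, and $y \mapsto -y$ multiplies $\omega_{g,n}$ by $(-1)^{2 - 2g - n} = (-1)^n$ and fixes $\omega_{0,2}$; hence $\hat{\omega}_{g,n} = (-1)^n\omega_{g,n}$ for all $(g,n)$ with $n \geq 1$, where $\omega_{g,n}$ are the correlators of \cref{th:main} with $Q_1 = 0$ and $\hat{\omega}_{g,n}$ those of \cref{thm:weightedH} with $\mathcal{G}$ and $\gamma$ as above.

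It then remains to match coefficients. By \cref{th:main}, the $\Lambda \to 0$ expansion of $\omega_{g,n}$ at $\infty_1$, which is the point $\zeta = 0$ of the Hurwitz curve, in the local coordinate $1/x(\zeta) = \hat{x}(\zeta)$, equals $\phi_{g,n}$. For all labels equal to $1$ and $Q_1 = 0$, the unstable term $Q_{\mathfrak{c}(z_1)}\tfrac{\dd x(z_1)}{x(z_1)}$ of \eqref{phidef} vanishes, while the $(0,2)$ term keeps the form $\tfrac{\dd\hat{x}_1\dd\hat{x}_2}{(\hat{x}_1 - \hat{x}_2)^2}$ (being invariant under $x \mapsto 1/x$), matching the term subtracted in \cref{thm:weightedH}. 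Using $\tfrac{\dd x(z)}{x(z)^{k+1}} = -\tfrac{1}{k}\dd(\hat{x}(z)^{k})$, the expansion of $\phi_{g,n}$ reads off the coefficient of $\prod_j \dd(\hat{x}(z_j)^{\mu_j})$ as $(-1)^n\,\Phi_{g,n}\big[\begin{smallmatrix} 1 & \cdots & 1 \\ \mu_1 & \cdots & \mu_n \end{smallmatrix}\big]/(\mu_1\cdots\mu_n)$, which is exact because the $\Phi_{g,n}$ are already monomials in $\Lambda^r$ by \eqref{homogen}. On the other hand, \cref{thm:weightedH} identifies the coefficient of $\prod_j \dd(\hat{x}(z_j)^{\mu_j})$ in $\hat{\omega}_{g,n}$ with $\#\Aut(\mu)\,\Lambda^{r(\mu_1 + \cdots + \mu_n)}\,H_{\mathcal{G};g}(\mu_1,\ldots,\mu_n)$. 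Equating the two through $\hat{\omega}_{g,n} = (-1)^n\omega_{g,n}$, the two factors $(-1)^n$ cancel and one is left with $\Phi_{g,n}\big[\begin{smallmatrix} 1 & \cdots & 1 \\ \mu_1 & \cdots & \mu_n \end{smallmatrix}\big] = \mu_1\cdots\mu_n \cdot \#\Aut(\mu) \cdot \Lambda^{r(\mu_1 + \cdots + \mu_n)} \cdot H_{\mathcal{G};g}(\mu_1,\ldots,\mu_n)$, as claimed; the cases $(0,1)$ and $(0,2)$ are covered directly, since \cref{thm:weightedH} only subtracts the $(0,2)$ Bergman kernel.

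I expect the only real difficulty to be bookkeeping rather than conceptual: one must pin down unambiguously the sign convention for $y$ on both curves, the ensuing $(-1)^n$ coming from $y \mapsto -y$, and the factor $1/\mu_j$ arising from rewriting $\dd x/x^{\mu + 1}$ in terms of $\dd(\hat{x}^{\mu})$, and check that these combine to leave no spurious sign or constant in the final identity. A secondary, purely geometric point to verify carefully is that passing from $x$ to $1/x$ creates no new branch point over $x = 0$ or $x = \infty$, so that the two excluded index-$r$ ramification points correspond and $\operatorname{Ram}$ is genuinely the same on both sides, allowing \cref{thm:weightedH} to be applied verbatim.
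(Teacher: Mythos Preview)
Your proposal is correct and follows essentially the same route as the paper's proof: identify the $Q_1=0$ half Seiberg--Witten curve with the Hurwitz spectral curve via $\hat{x}=x^{-1}$, use the relation $\hat{\omega}_{g,n}=(-1)^n\omega_{g,n}$ coming from the sign flip of $\omega_{0,1}$, and then extract coefficients. The paper phrases the extraction via residues $\Res_{\zeta_i=0}\,\omega_{g,n}\prod x_i^{\mu_i}$ rather than by reading off the coefficient of $\prod_j\dd(\hat{x}_j^{\mu_j})$, but this is the same computation; your explicit treatment of the $(0,1)$ and $(0,2)$ unstable terms is a small addition the paper leaves implicit.
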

\begin{cor}\label{cor:HforZ}
	Consider the Whittaker vector $\ket{\Gamma_{\Lambda}^{\text{CDO}}}$ with $r \geq 2$ and substitute $Q_1 = 0$. Then, for any $(P_1,\ldots,P_{r},0,Q_2,\ldots,Q_{r - 1})$ in the set $\mathcal{R}$ of \cref{Rset}, and $(g,n) \in \mathbb{Z}_{\geq 0} \times \mathbb{Z}_{> 0}$ and $\mu \in \mathbb{Z}_{> 0}^{n}$, we have
\[
		\Phi_{g,n}\big[\begin{smallmatrix} 1 & \cdots & 1\\ \mu_1 & \cdots & \mu_n \end{smallmatrix}\big]  = \mu_1 \cdots \mu_n \cdot \# \Aut(\mu) \cdot \Lambda^{r(\mu_1+\cdots+\mu_n)} \cdot H_{\mathcal{G};g}(\mu_1,\dots,\mu_n),
	\]
with the weight generating series
\[
\mathcal{G}(\zeta) = \frac{\prod_{a=1}^r(P_a  +\zeta)}{\prod_{a=2}^{r-1}(Q_a-\zeta)}.
\]
\end{cor}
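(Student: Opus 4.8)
The plan is to derive \cref{cor:HforZ} from \cref{th:main2} together with \cref{thm:weightedH}, by matching the CDO spectral curve at $Q_1=0$ with a weighted Hurwitz spectral curve; this is entirely parallel to the argument one would use for \cref{cor:HforG} and could be carried out simultaneously with it.

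First I would specialise \eqref{eq:algcurve} to $Q_1=0$. Since $\prod_{a=1}^{r-1}(Q_a-\zeta)=-\zeta\prod_{a=2}^{r-1}(Q_a-\zeta)$, the uniformisation becomes $x(\zeta)=\Lambda^r\mathcal{G}(\zeta)/\zeta$ and $y(\zeta)=\zeta/x(\zeta)$ with $\mathcal{G}(\zeta)=\prod_{a=1}^{r}(P_a+\zeta)/\prod_{a=2}^{r-1}(Q_a-\zeta)$, which is precisely the weight generating series in the statement; the hypothesis $(\mathbf{P},0,\mathbf{Q})\in\mathcal{R}$ guarantees $\mathcal{G}(0)\neq 0$, that $x$ has only simple ramification points, and that $x=\infty$ is unramified, so \cref{thm:weightedH} applies with $\hat{x}(\zeta)=1/x(\zeta)=\zeta/(\Lambda^r\mathcal{G}(\zeta))$ and $\gamma=\Lambda^r$. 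The curve $(x,\zeta/x)$ of \cref{th:main2} differs from the Hurwitz curve $(\hat{x},\zeta/\hat{x})$ of \cref{thm:weightedH} by the twist $(x,y)\mapsto(x^{-1},-x^2y)$ of part \textbf{D} of \cref{S511} (which leaves every $\omega_{g,n}$ unchanged) composed with the reflection $y\mapsto-y$; a short induction on $2g-2+n$ using the homogeneity $\omega_{g,n}\mapsto\lambda^{2-2g-n}\omega_{g,n}$ under $y\mapsto\lambda y$ (read off from \eqref{TRome}) shows the reflection multiplies $\omega_{g,n}$ by $(-1)^n$. Hence the Hurwitz correlators $\hat{\omega}_{g,n}$ and the correlators $\omega_{g,n}$ of \cref{th:main2} obey $\hat{\omega}_{g,n}=(-1)^n\,\omega_{g,n}$.

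Next I would pin down the expansion point. By \eqref{eq:Jrep} we have $\widetilde{J}^{\,1}_0=Q_1=0$, so $\omega_{0,1}$ has vanishing residue precisely at the pole of $x$ sitting at $\zeta=0$; consequently, under the identification $L\cong x^{-1}(\infty)$ the sheet $a=1$ of $C$ corresponds to the point $\zeta=0$ of $\mathcal{S}_\Lambda$, and the local coordinate $1/x(\zeta)$ from \cref{th:main2} is exactly $\hat{x}(\zeta)$, which vanishes to first order there --- matching the expansion variable of \cref{thm:weightedH}. Restricting \eqref{eq:phidef2} to the components $a_1=\cdots=a_n=1$ (the unstable correction in the $(0,1)$ case contributes nothing to the $\Phi_{0,1}$ indexed by $\mu_1>0$), substituting $\omega_{g,n}=(-1)^n\hat{\omega}_{g,n}$, inserting the expansion of \cref{thm:weightedH} with $\gamma=\Lambda^r$, and using $\dd x(\zeta)/x(\zeta)^{k+1}=-k^{-1}\dd(\hat{x}(\zeta)^k)$ in each variable together with $\dd\hat{x}_1\dd\hat{x}_2/(\hat{x}_1-\hat{x}_2)^2=\dd x_1\dd x_2/(x_1-x_2)^2$ (so that the $(0,2)$ subtraction is harmless), I would equate coefficients of $\prod_i\dd(\hat{x}(\zeta_i)^{\mu_i})$ in each $\Lambda$-graded piece. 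The factor $(-1)^n$ from $\hat{\omega}_{g,n}=(-1)^n\omega_{g,n}$ and the factor $(-1)^n$ from the $n$ changes of variable cancel, leaving $\Phi_{g,n}\big[\begin{smallmatrix}1&\cdots&1\\\mu_1&\cdots&\mu_n\end{smallmatrix}\big]=\mu_1\cdots\mu_n\cdot\#\Aut(\mu)\cdot\Lambda^{r(\mu_1+\cdots+\mu_n)}\cdot H_{\mathcal{G};g}(\mu_1,\dots,\mu_n)$, as claimed.

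The hard part is purely bookkeeping: (i) identifying the sheet $a=1$ with the zero $\zeta=0$ of $\hat{x}$, which I would do through \eqref{eq:Jrep} and the residue of $\omega_{0,1}$; and (ii) tracking the sign $(-1)^n$ through the reflection $y\mapsto-y$, the inversion $x\mapsto1/x$, and the change of expansion variable, and checking that these cancel so that the final formula carries no stray sign. There is no conceptual difficulty beyond what is already present in \cref{cor:HforG}.
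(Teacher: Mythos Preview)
Your proposal is correct and follows essentially the same approach as the paper: the paper proves \cref{cor:HforG} by the identical residue/sign argument (specialise $Q_1=0$, pass to $\hat{x}=x^{-1}$, $\hat{y}=x^2y$, use $y\,\dd x=-\hat{y}\,\dd\hat{x}$ to get $\omega_{g,n}=(-1)^n\hat{\omega}_{g,n}$, then extract coefficients via \cref{thm:weightedH}) and explicitly says the proof of \cref{cor:HforZ} is ``similar and omitted.'' Your decomposition of the transformation into the twist of \cref{S511}\textbf{D} plus a reflection, and your sheet identification via the residue of $\omega_{0,1}$, are slightly more explicit than the paper but amount to the same computation.
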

\begin{proof} 
	For \cref{cor:HforG}, consider the correlators $\phi_{g,n}$ defined in \eqref{phidef} using the coefficients  $\Phi_{g,n}$ of $\ket{\Gamma_{\Lambda}}$. \cref{th:main} states that these correlators can be analytically continued to the curve defined by 
	 \[
	x(\zeta) = - \dfrac{\Lambda^r}{\prod_{a = 1}^{r} (Q_a - \zeta)}, \qquad y(\zeta) =  \dfrac{\zeta}{x(\zeta)} = -\dfrac{\zeta}{\Lambda^r} \prod_{a = 1}^{r} (Q_a - \zeta), \qquad \zeta \in \mathbb{P}^1.
	 \]
	and their analytic continuations coincide with the correlators $(\omega_{g,n})_{g,n}$ of the topological recursion on this curve, considered as a spectral curve with $\omega_{0,1} = y \dd x$ and $\omega_{0,2}(\zeta_1,\zeta_2) = \frac{\dd \zeta_1 \dd \zeta_2}{(\zeta_1 - \zeta_2)^2}$. Setting $Q_1 = 0$ and  define $ \hat{x} = x^{-1}$ and $\hat{y} = x^2y$ recovers the curve \eqref{spcurveH} associated to weighted Hurwitz numbers with $\mathcal{G}(\zeta) =\frac{1}{\prod_{a=2}^r(Q_a-\zeta)}$ and $\gamma = \Lambda^r$. Let $\hat{\omega}_{g,n}$ be the topological recursion correlators on this curve.
	
	Since $y\dd x = -\hat{y}\dd \hat{x}$ and the ramification points of $x$ are those of $\hat{x}$, the recursive definition \eqref{TRome} implies $\omega_{g,n} = (-1)^{2g - 2 + n} \hat{\omega}_{g,n} = (-1)^n \omega_{g,n}$. 
Then, for any $g,n$ and $\mu_1,\ldots,\mu_n > 0$ we have
\begin{equation*}
\begin{split}
 \Phi_{g,n}\big[\begin{smallmatrix} 1 & \cdots & 1\\ \mu_1 & \cdots & \mu_n \end{smallmatrix}\big] & \mathop{=}_{\text{Thm.}\,\, \ref{th:main}} \quad  (-1)^n \Res_{\zeta_1 = 0} \cdots \Res_{\zeta_n = 0} \omega_{g,n}(\zeta_1,\ldots,\zeta_n) \prod_{i = 1}^{n} x(\zeta_i)^{\mu_i} \\ 
 & \mathop{=}_{\phantom{Thm. 1.2}} \quad \Res_{\zeta_1 = 0} \cdots \Res_{\zeta_n = 0} \hat{\omega}_{g,n}(\zeta_1,\ldots,\zeta_n) \prod_{i = 1}^{n} \hat{x}(\zeta_i)^{-\mu_i} \\
 & \mathop{=}_{\text{Thm.}\,\,\ref{thm:weightedH}} \quad \mu_1 \cdots \mu_n \cdot \# \text{Aut}(\mu) \cdot \Lambda^{r(\mu_1 + \cdots + \mu_n)} \cdot H_{\mathcal{G};g}(\mu_1,\ldots,\mu_n).
 \end{split}
\end{equation*} 
The proof of Corollary~\ref{cor:HforZ} is similar and omitted.
 \end{proof}

\begin{rem}
The presence of the prefactor $\mu_1 \cdots \mu_n$ is due to the choice of normalisation in the definition of the coefficients of the Whittaker vectors, see e.g. \eqref{Gformalexp}. The choice of setting $Q_1 = 0$ in the above  \cref{cor:HforG} and \cref{cor:HforZ} is arbitrary. Indeed, if one were to set one of the other $Q_a $ to $0$ instead, the only difference in the corollaries would be that the Hurwitz numbers would appear as $\Phi_{g,n}\big[\begin{smallmatrix} a & \cdots & a\\ \mu_1 & \cdots & \mu_n \end{smallmatrix}\big] $ instead of 	$\Phi_{g,n}\big[\begin{smallmatrix} 1 & \cdots & 1\\ \mu_1 & \cdots & \mu_n \end{smallmatrix}\big] $. 
\end{rem}
 
\medskip
 
 \subsubsection{Comments on gauge/Hurwitz correspondences}
 
 \label{SGT}
 
 \medskip
 
In this section, we comment on the relation between gauge theory and Hurwitz theory that has appeared previously in the physics literature \cite{GrossTaylor, GrossTaylor2}.  Gross and Taylor consider the large $N$ expansion of two-dimensional $\text{U}_N$ Yang--Mills theory on a target Riemann surface, and interpret the expansion coefficients as certain  Hurwitz numbers counting branched coverings of the target Riemann surface. A precise mathematical statement of the former is proved in the recent paper \cite{GTNovak}. On the other hand, the large $N$ limit of  two-dimensional $\text{U}_N$ Yang--Mills theory on $\mathbb S_2$ can  be interpreted as the instanton partition function for four-dimensional $\mathcal N =2$ supersymmetric gauge theories \cite{MMO04}. Putting these two facts together gives a Hurwitz-theoretic interpretation of the Nekrasov instanton partition function. 

The connection between gauge theory and Hurwitz theory that we proved in the previous section \cref{sec:H} is different from the aforementioned one derived from two-dimensional Yang--Mills theory. Indeed, \cref{cor:HforG} shows that the expansion coefficients of the Gaiotto vector themselves coincide with certain rationally-weighted Hurwitz numbers, as opposed to the instanton partition function (which is the squared-norm of the Gaiotto vector). In the specific case of two-dimensional Yang--Mills theory on $\mathbb S_2$, which corresponds to our case of interest, the associated Hurwitz numbers are (simple) double Hurwitz numbers, see \cite[Theorem 4.4]{GTNovak}. It may be possible to recover this result from our \cref{cor:HforG}, but we are not aware of a derivation of \cref{cor:HforG} from the observations of Gross and Taylor.

\vspace{0.5cm}

\subsection{Quantum curves} 
\label{S53}
\medskip

\subsubsection{Background and definitions}

\medskip
The Gaiotto and the CDO curves belong to the class of spectral curves for which Bouchard and Eynard have constructed associated quantum curves in \cite{BE17}, so we can directly apply their results. We first review the context.

\medskip

\noindent \textbf{A. Wave functions and quantum curves}

Given the correlators $(\omega_{g,n})_{g,n}$ constructed by topological recursion on a genus zero spectral curve, say $S$, uniformised by the coordinate $\zeta \in \mathbb P^1$, we can define the associated \textit{wave function}  (sign $+$) and the \textit{dual wave function} (sign $-$) 
\begin{equation}
\label{def:wave}
	\psi_{\beta}^{\pm} := \exp \left(\sum_{\substack{g \in \mathbb Z_{\geq 0}  \\ n \in \mathbb Z_{>0}}} \frac{(-1)^n\hslash^{2g-2+n}}{n!}\int^\zeta_{\beta} \cdots \int^\zeta_{\beta} \left(\omega_{g,n} (\zeta_1,\cdots, \zeta_n)- \delta_{g,0}\delta_{n,2} \frac{\dd x(\zeta_1) \dd x(\zeta_2)}{(x(\zeta_1)-x(\zeta_2))^2} \right) \right)\,,
\end{equation}
It depends on the choice of a base point $\beta \in S$, which is typically chosen in $x^{-1}(\infty)$. The integral $\int_{\beta}^\zeta \omega_{0,1}$ may need to be regularised (see \cref{01regintdef}). The dual wave function is obtained from the wave function by replacing $\hslash$ with $-\hslash$.

A quantisation of the spectral curve whose underlying defining polynomial is $P(x,y) = 0$ is a differential operator $\widehat{P}(\hslash,x,\hslash \frac{\dd}{\dd x})$ which is polynomial in $\hslash$, and whose symbol --- obtained by replacing $\hslash \frac{\dd}{\dd x} $ by $ y $ and then setting $\hslash $ to $ 0$ --- is the polynomial $P(x,y)$.  There is no unique way to upgrade $P(x,y) $ to a quantum curve as there are always issues of ordering --- in particular $xy - yx = 0$ gets quantised to $x \hslash  \frac{\dd}{\dd x} - \hslash   \frac{\dd}{\dd x} x \neq 0 $. The main theorem of \cite{BE17} is that, as long as the curve is of genus $0$ and satisfies a key admissibility condition (which is valid for the Gaiotto and CDO curves), there exists a differential operator which is a quantisation of the original spectral curve and whose solution --- considered as a formal expansion as $\hslash \rightarrow 0$ --- is the wave function $\psi_{\beta}^{\pm}$. Such a quantisation is usually called a \textit{quantum curve}, and \cite{BE17} provides an explicit algorithm to compute it.

\medskip

\noindent \textbf{B. Wave functions for Whittaker vectors}

Recall that the Gaiotto vector takes the form
\[
\ket{\Gamma_{\Lambda}} = \exp\left(\sum_{(g,n) \in \mathbb{Z}_{\geq 0} \times \mathbb{Z}_{> 0}} \frac{\hslash^{g - 1}}{n!} \sum_{\substack{a_1,\ldots,a_n \in [r] \\ k_1,\ldots,k_n \in \mathbb{Z}_{> 0}}} \Phi_{g,n}\big[\begin{smallmatrix} a_1 & \cdots & a_n \\ k_1 & \cdots & k_n \end{smallmatrix}\big] \prod_{i = 1}^{n} \frac{J_{-k_j}^{a_j}}{k_j}\right).
\] 
Consider as well the correlators $\omega_{g,n}$ of the associated spectral curve, related to the Gaiotto vectors by \eqref{phidef}. If we introduce the algebra homomorphism called  \emph{principal specialisation}
\begin{equation}
\label{evaop}
\text{ev}^\pm_{a} : \begin{array}{cll} J_{-k}^{b} &  \longmapsto & \mp \hslash \delta_{a,b} kx^{-k} \\
\hslash & \longmapsto & \hslash^2 \end{array},
\end{equation}
we then have as $\zeta \rightarrow Q_a$ for each $a \in [r]$
\begin{equation}
\label{principsi} 
\psi_a^{\pm}(\zeta) := \psi^\pm_{Q_a}(\zeta)  \mathop{\approx}  \text{ev}^{\pm}_{a} \left( \ket{\Gamma_{\Lambda}}\right).
\end{equation}
The same holds for the CDO vector, provided we take the convention $Q_r = \infty$ in \eqref{principsi} for the remaining pole of $x$.

\medskip
 
 \noindent \textbf{C. The $(0,1)$ and $(0,2)$ terms}
 
The $(0,1)$ and $(0,2)$ terms in the wave functions \eqref{def:wave} must be carefully examined due to the singularities in the integrand. We explain the regularisation for $(0,1)$, that should be part of the definition \eqref{def:wave} for \eqref{principsi} to hold, and show that the $(0,2)$ term is well-defined. This will also be useful to evaluate the asymptotics of the wave functions as $\zeta$ approaches the $r$ poles of $x$ (Corollary~\ref{leadingpsi1} and \ref{leadingpsi2}).

In the Gaiotto case, we have $x^{-1}(\infty) = \{Q_1,\ldots,Q_r\}$ and $\omega_{0,1} = y \dd x$ has a simple pole at $\zeta = Q_a$, with residue $-Q_a$. For the CDO case, we have $x^{-1}(\infty) = \{Q_1,\ldots,Q_{r - 1},\infty\}$ and $\omega_{0,1}$ has a simple pole at $\zeta = Q_a$ with residue $-Q_a$ for $a \in [r-1]$. At $\zeta = \infty$ we rather have a double pole with behaviour
$$
\omega_{0,1}(\zeta) = \left((-\Lambda)^{-r} - \frac{|\mathbf{P}| + |\mathbf{Q}|}{x(\zeta)} + O\big(x(\zeta)^{-2}\big)\right)\dd x(\zeta).
$$ 
Therefore, we take the following definition of the regularised integrals.
\begin{defn}
\label{01regintdef} For the $(0,1)$ term in the wave function \eqref{def:wave} for the half Seiberg--Witten  curve, we take 
\begin{equation}
\label{01reg} \int_{Q_a}^{\zeta} \omega_{0,1} :=  Q_a\ln(x(\zeta))  +  \int_{Q_a}^{\zeta}  \left(\omega_{0,1} - \frac{Q_a \dd x}{x}\right),
\end{equation}
and in the CDO case
 \begin{equation*}
\int_{\infty}^{\zeta} \omega_{0,1}  := (-\Lambda)^{-r}x(\zeta) - \left(|\mathbf{P}| + |\mathbf{Q}|\right)\ln x(\zeta)  
+ \int_{\infty}^{\zeta} \left(\omega_{0,1} - \bigg((-\Lambda)^{-r} - \frac{|\mathbf{P}| + |\mathbf{Q}|}{x}\bigg)\dd x\right).
\end{equation*}
\end{defn}

Since $\omega_{g,n}$ for $2g - 2 + n > 0$ only has poles at ramification points, those multiple integrals in \eqref{def:wave} are well-defined. The double integral of $\omega_{0,2}$ is also well-defined and explicitly computable.
\begin{lem}
\label{02intlem} We have
\[
\int_{Q_a}^{\zeta} \int_{Q_a}^{\zeta} \left( \omega_{0,2}(\zeta',\zeta'') - \frac{\dd x(\zeta')\dd x(\zeta'')}{(x(\zeta') - x(\zeta''))^2}\right)  = \ln\left(\frac{A_a}{x'(\zeta)(\zeta - Q_a)^2}\right).
\]
with
\begin{equation}
\label{AAq}
A_a =  -\frac{\Lambda^r}{\prod_{b \neq a} (Q_b - Q_a)}\qquad \text{or} \qquad A_a = - \Lambda^r \frac{\prod_{b = 1}^{r} (P_b + Q_a)}{\prod_{b \neq a} (Q_b - Q_a)}.
\end{equation}
And, in the CDO case:
\[
\int_{\infty}^{\zeta} \int_{\infty}^{\zeta} \left( \omega_{0,2}(\zeta',\zeta'') - \frac{\dd x(\zeta')\dd x(\zeta'')}{(x(\zeta') - x(\zeta''))^2}\right)  = \ln\left(\frac{(-\Lambda)^r}{x'(\zeta)}\right).
\]
\end{lem}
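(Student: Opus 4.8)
The plan is to reduce everything to elementary primitives. On $\mathbb P^1$ with global coordinate $\zeta$ one has $\omega_{0,2}(\zeta_1,\zeta_2) = \dd_{\zeta_1}\dd_{\zeta_2}\ln(\zeta_1-\zeta_2)$, and since $x$ is meromorphic, $\frac{\dd x(\zeta_1)\dd x(\zeta_2)}{(x(\zeta_1)-x(\zeta_2))^2} = \dd_{\zeta_1}\dd_{\zeta_2}\ln(x(\zeta_1)-x(\zeta_2))$; hence the integrand equals $\dd_{\zeta_1}\dd_{\zeta_2}\,G(\zeta_1,\zeta_2)$ with $G(\zeta_1,\zeta_2) = \ln\frac{\zeta_1-\zeta_2}{x(\zeta_1)-x(\zeta_2)}$, which extends holomorphically across the diagonal with value $-\ln x'(\zeta)$ by l'H\^opital. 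One cannot simply substitute $G$ into the iterated integral, because the common base point (a pole $\beta$ of $x$, namely $\beta=Q_a$ or, in the CDO case, $\beta=\infty$) is a point where $\ln(x(\zeta_1)-x(\zeta_2))$ diverges, and the naive evaluation produces an indeterminate $\infty-\infty$; so the computation must be arranged so that this divergence cancels.

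Concretely, I would fix the order of integration: first over $\zeta'\in[\beta,\zeta]$, then over $\zeta''\in[\beta,\zeta]$. The inner integral of $\omega_{0,2}(\zeta',\zeta'')$ produces the $1$-form $\bigl(\tfrac{1}{\zeta''-\zeta}-\tfrac{1}{\zeta''-\beta}\bigr)\dd\zeta''$ in $\zeta''$ (the second summand is absent if $\beta=\infty$), which has a simple pole of residue $-1$ at $\zeta''=\beta$ and residue $+1$ at $\zeta''=\zeta$. The inner integral of $\frac{\dd x(\zeta')\dd x(\zeta'')}{(x(\zeta')-x(\zeta''))^2}$ produces $\frac{\dd x(\zeta'')}{x(\zeta'')-x(\zeta)}$ --- the lower endpoint contributing nothing because $x(\beta)=\infty$ --- and near $\zeta''=\beta$ this behaves as $\dd\ln x(\zeta'')$, so it too has residue $-1$ there and $+1$ at $\zeta''=\zeta$. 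Subtracting, the poles at $\zeta''=\beta$ and $\zeta''=\zeta$ cancel, so the resulting $1$-form $\eta(\zeta'')$ is regular there and the remaining integral converges along a suitable path from $\beta$ to $\zeta$. This cancellation is the step I expect to require the most care, since it is what makes the double integral meaningful without invoking any additional regularisation.

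With $\eta$ in hand, the last step is to integrate it over $\zeta''\in[\beta,\zeta]$ using the primitive $\ln\frac{\zeta''-\zeta}{(\zeta''-\beta)(x(\zeta'')-x(\zeta))}$ (drop the factor $\zeta''-\beta$ when $\beta=\infty$) and to evaluate the two endpoints. At $\zeta''=\zeta$, l'H\^opital gives the finite value $-\ln\bigl(x'(\zeta)(\zeta-Q_a)\bigr)$ when $\beta=Q_a$, and $-\ln x'(\zeta)$ in the CDO case. At $\zeta''=\beta$ the limit is governed by the leading Laurent coefficient of $x$ at its pole: for $\beta=Q_a$ one uses $\lim_{\zeta\to Q_a}(Q_a-\zeta)x(\zeta)=A_a$, which is read off directly from the parametrisations \eqref{eq:Gcurve} and \eqref{eq:algcurve} and produces exactly the two values of $A_a$ recorded in \eqref{AAq}, yielding the endpoint value $\ln\frac{\zeta-Q_a}{A_a}$; for $\beta=\infty$ one uses $\lim_{\zeta\to\infty}x(\zeta)/\zeta=(-\Lambda)^r$, yielding the endpoint value $-\ln\bigl((-\Lambda)^r\bigr)$. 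Taking the difference of the two endpoints then gives $\ln\frac{A_a}{x'(\zeta)(\zeta-Q_a)^2}$ and $\ln\frac{(-\Lambda)^r}{x'(\zeta)}$ respectively; everything past the cancellation of the previous paragraph is routine bookkeeping.
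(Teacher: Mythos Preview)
Your proposal is correct and rests on the same primitive as the paper, namely $G(\zeta_1,\zeta_2)=\ln\frac{\zeta_1-\zeta_2}{x(\zeta_1)-x(\zeta_2)}$, together with the Laurent coefficient of $x$ at its pole. The only procedural difference is in how the singular base point is handled: the paper first computes the integral with four \emph{distinct} endpoints,
\[
\int_{\zeta_2}^{\zeta_1}\!\int_{\zeta_4}^{\zeta_3}\!\Bigl(\omega_{0,2}-\tfrac{\dd x\,\dd x}{(\Delta x)^2}\Bigr)
=\ln\!\left(\frac{(\zeta_1-\zeta_3)(\zeta_2-\zeta_4)}{(\zeta_1-\zeta_4)(\zeta_2-\zeta_3)}\cdot\frac{(x(\zeta_1)-x(\zeta_4))(x(\zeta_2)-x(\zeta_3))}{(x(\zeta_1)-x(\zeta_3))(x(\zeta_2)-x(\zeta_4))}\right),
\]
and then lets $\zeta_1=\zeta_3=\zeta$, $\zeta_2=\zeta_4=\beta$ (each $0/0$ factor resolved by l'H\^opital), followed by $\beta\to Q_a$ or $\beta\to\infty$. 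Your route instead keeps $\beta$ at the pole from the start and checks residue cancellation in the intermediate $1$-form $\eta$. The paper's cross-ratio packaging makes the symmetry and the well-definedness of the limit transparent in one line, sparing the separate residue bookkeeping at $\zeta''=\beta$ and $\zeta''=\zeta$; your version is slightly more hands-on but equally rigorous. Either way the endpoint evaluations and the identification of $A_a$ are identical.
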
 
\begin{proof}
Since $\omega_{0,2}(\zeta_1,\zeta_2) = \frac{\dd \zeta_1 \dd \zeta_2}{(\zeta_1 - \zeta_2)^2} = \dd_{\zeta_1}\dd_{\zeta_2} \ln(\zeta_1 - \zeta_2)$, we have
\begin{equation*}
\begin{split}
& \quad \int_{\zeta_2}^{\zeta_1} \int_{\zeta_4}^{\zeta_3} \left( \omega_{0,2}(\zeta',\zeta'') - \frac{\dd x(\zeta')\dd x(\zeta'')}{(x(\zeta') - x(\zeta''))^2}\right) \\
& = \ln\left(\frac{\zeta_1 - \zeta_3}{x(\zeta_1) - x(\zeta_3)}\,\frac{\zeta_2 - \zeta_4}{x(\zeta_2) - x(\zeta_4)}\,\frac{x(\zeta_1) - x(\zeta_4)}{\zeta_1 - \zeta_4}\,\frac{x(\zeta_2) - x(\zeta_3)}{\zeta_2 - \zeta_3}\right).
\end{split}
\end{equation*}
Taking $\zeta_1 = \zeta_3 = \zeta$ and $\zeta_2 = \zeta_4 = \beta$ yields
\[
\int_{\beta}^{\zeta} \int_{\beta}^{\zeta} \left( \omega_{0,2}(\zeta',\zeta'') - \frac{\dd x(\zeta')\dd x(\zeta'')}{(x(\zeta') - x(\zeta''))^2}\right)  = \ln\left(\frac{1}{x'(\beta)x'(\zeta)}\,\frac{(x(\zeta) - x(\beta))^2}{(\zeta - \beta)^2}\right).
\]
As $\beta \rightarrow Q_a$ we have $x(\beta) \sim \frac{-A_a}{\beta - Q_a}$ for  $A_a \in \mathbb{C}^*$ as given in \eqref{AAq}, and thus
\[
\int_{Q_a}^{\zeta} \int_{Q_a}^{\zeta} \left( \omega_{0,2}(\zeta',\zeta'') - \frac{\dd x(\zeta')\dd x(\zeta'')}{(x(\zeta') - x(\zeta''))^2}\right)  = \ln\left(\frac{A_a}{x'(\zeta)(\zeta - Q_a)^2}\right).
\]
In the CDO case, as $\beta \rightarrow \infty$ we have $x(\beta) \sim (-\Lambda)^{r}\beta$ and this leads to the second claim.
\end{proof}

\medskip

 \noindent \textbf{D. Asymptotics of wave functions near poles of $x$}

\begin{defn}
\label{stablepsi}The stable part of the wave function is
\[
\widetilde{\psi}_{a}^{\pm}(\zeta) = \exp\left(\sum_{\substack{(g,n) \in \mathbb{Z}_{\geq 0}\times\mathbb{Z}_{>0} \\ 2g - 2 + n > 0}} \frac{(\pm\hslash)^{2g - 2 + n}}{n!} \int_{Q_a}^{\zeta} \cdots \int_{Q_a}^{\zeta} \omega_{g,n}\right)
\]
\end{defn}
In other words, it is \eqref{def:wave} where we omit the $(0,1)$ and the $(0,2)$ terms. The properties of the $\omega_{g,n}$ for $2g - 2 + n > 0$ imply that the expression inside is a formal power series in $\hslash$ of meromorphic functions of $\zeta$, with poles only at the ramification points. By construction, $\widetilde{\psi}_{a}^{\pm}(Q_a) = 1$ for any $a \in [r]$.

\begin{cor}
\label{leadingpsi1}
For the Gaiotto wave functions, for any $a,b \in [r]$ we have
\[
\psi_{a}^{+}(\zeta)\,\,\mathop{\sim}_{\zeta \rightarrow Q_b}\,\, B_{a,b} \cdot \widetilde{\psi}_{a}^+(Q_b) \cdot (x(\zeta))^{\hslash^{-1}Q_b - \delta_{b \neq a}},
\]
where
\begin{equation*}
\begin{split}
 B_{a,b} - \delta_{a,b} & = \pm \delta_{a \neq b} \cdot {\rm i} (\Lambda^r)^{\hslash^{-1}(Q_a - Q_b) + 1} \cdot e^{\hslash^{-1}r(Q_a - Q_b)} \cdot (Q_a - Q_b)^{2\hslash^{-1}(Q_b - Q_a) - 2} \\
 & \quad   \times \prod_{c \neq a,b} (Q_c - Q_a)^{\hslash^{-1}(Q_c - Q_a) - \frac{1}{2}}(Q_c - Q_b)^{\hslash^{-1}(Q_b - Q_c) - \frac{1}{2}}
 \end{split}
\end{equation*}
\end{cor}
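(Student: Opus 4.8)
The plan is to exploit the product structure of the wave function \eqref{def:wave}. With base point $\beta=Q_a$, write $\psi_a^{+}(\zeta)$ as the product of three factors: the $(g,n)=(0,1)$ contribution $\exp\!\bigl(-\hslash^{-1}\int_{Q_a}^{\zeta}\omega_{0,1}\bigr)$, the $(g,n)=(0,2)$ contribution $\exp\!\bigl(\tfrac12\int_{Q_a}^{\zeta}\!\int_{Q_a}^{\zeta}\bigl(\omega_{0,2}-\tfrac{\dd x_1\dd x_2}{(x_1-x_2)^2}\bigr)\bigr)$, and the stable part $\widetilde{\psi}_a^{+}(\zeta)$ of \cref{stablepsi}. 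One then computes the $\zeta\to Q_b$ asymptotics of each factor separately and multiplies them.

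The stable part is immediate: by \cref{prop:stablecorrelators} the $\omega_{g,n}$ with $2g-2+n>0$ have poles only on $\operatorname{Ram}(S_\Lambda)$, and since $x=\infty$ is not a branch point of the half Seiberg--Witten curve \eqref{eq:Gcurve}, the point $Q_b$ lies outside $\operatorname{Ram}(S_\Lambda)$; hence $\widetilde{\psi}_a^{+}$ is holomorphic near $Q_b$ and simply contributes the finite value $\widetilde{\psi}_a^{+}(Q_b)$.

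For the $(0,1)$ factor we use the regularisation of \cref{01regintdef}, which isolates a prefactor that is a power of $x(\zeta)$ proportional to $Q_a/\hslash$, leaving $\int_{Q_a}^{\zeta}\bigl(\omega_{0,1}-\tfrac{Q_a\,\dd x}{x}\bigr)$. Since $\dd\ln x(\zeta)=\sum_{c}\tfrac{\dd\zeta}{Q_c-\zeta}$, the integrand equals $(\zeta-Q_a)\,\dd\ln x(\zeta)$ and integrates in closed form to $-r(\zeta-Q_a)+\sum_{c\neq a}(Q_c-Q_a)\ln\tfrac{Q_c-Q_a}{Q_c-\zeta}$. As $\zeta\to Q_b$ with $b\neq a$ this develops a logarithmic singularity with coefficient $Q_a-Q_b$, plus a finite part which supplies the exponential $e^{\hslash^{-1}r(Q_a-Q_b)}$ and the factors $(Q_c-Q_a)^{\hslash^{-1}(Q_c-Q_a)}$ and $(Q_c-Q_b)^{\hslash^{-1}(Q_b-Q_c)}$ for $c\neq a,b$. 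Combining with the prefactor and the leading behaviour $x(\zeta)\sim A_b/(Q_b-\zeta)$ (with $A_b$ from \eqref{AAq}, as in the proof of \cref{02intlem}), the powers of $(\zeta-Q_b)$ recombine into the power $x(\zeta)^{\hslash^{-1}Q_b}$ times an explicit constant carrying a power of $\Lambda^r$ and the factor $(Q_a-Q_b)^{2\hslash^{-1}(Q_b-Q_a)}$; when $b=a$ there is no singularity and the factor is $x(\zeta)^{\hslash^{-1}Q_a}$ with constant $1$.

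For the $(0,2)$ factor, \cref{02intlem} evaluates the double integral so that this factor is $\sqrt{A_a/\bigl(x'(\zeta)(\zeta-Q_a)^2\bigr)}$. For $b=a$ one has $x'(\zeta)(\zeta-Q_a)^2\to A_a$, so the factor tends to $1$ and $B_{a,a}=1$. For $b\neq a$, using $x'(\zeta)\sim A_b/(\zeta-Q_b)^2$ and $(\zeta-Q_a)^2\to(Q_b-Q_a)^2$, this factor equals $\sqrt{A_a/A_b}\,(\zeta-Q_b)/(Q_b-Q_a)$ to leading order, which --- via $(\zeta-Q_b)\sim -A_b/x(\zeta)$ --- contributes the extra $x(\zeta)^{-\delta_{b\neq a}}$ as well as the remaining $\Lambda^r$, a further $(Q_a-Q_b)^{-2}$, the products $\prod_{c\neq a,b}(Q_c-Q_a)^{-1/2}(Q_c-Q_b)^{-1/2}$, and the factor $\sqrt{-1}=\mathrm{i}$, which originates from the factor $(Q_a-Q_b)/(Q_b-Q_a)$ sitting inside the ratio $A_a/A_b$. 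Multiplying the three factors produces $\psi_a^{+}(\zeta)\sim B_{a,b}\,\widetilde{\psi}_a^{+}(Q_b)\,x(\zeta)^{\hslash^{-1}Q_b-\delta_{b\neq a}}$, and collecting the constants from the $(0,1)$ and $(0,2)$ factors gives the stated formula for $B_{a,b}-\delta_{a,b}$; the overall $\pm$ records the branch choices of the square roots and logarithms and the orientations of the integration paths. The only genuine work --- and the likeliest place for sign errors --- is precisely this recombination of the two finite parts and the careful tracking of the powers of $-1$, $\mathrm{i}$, $\Lambda^r$ and of $(Q_c-Q_a)$ and $(Q_c-Q_b)$; the structural input (the threefold factorisation of the wave function together with \cref{01regintdef} and \cref{02intlem}) is routine.
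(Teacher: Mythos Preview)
Your approach is exactly the paper's: the paper's proof is the single sentence ``Both corollaries follow immediately from \cref{01regintdef} and \cref{02intlem}'', and what you have written is precisely the elaboration of that sentence --- split $\psi_a^+$ into the $(0,1)$, $(0,2)$ and stable factors, handle the first via \cref{01regintdef}, the second via \cref{02intlem}, and observe the third is regular at $Q_b$. Your own caveat about tracking powers of $-1$, $\mathrm{i}$, $\Lambda^r$ and the various $(Q_c-Q_d)$ is well placed, but the structural argument is complete and matches the paper.
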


\begin{cor}
\label{leadingpsi2}
In the CDO case, for any $a,b \in [r - 1]$ we have
\[
\psi_{a}^+(\zeta) \,\,\mathop{\sim}_{\zeta \rightarrow Q_b}\,\, B_{a,b}\,x^{\hslash^{-1}Q_b - \delta_{b \neq a}},
\]
where
\begin{multline*}
B_{a,b} - \delta_{a,b} = \pm \delta_{a \neq b}\cdot {\rm i} e^{\hslash^{-1}(Q_b - Q_a)} \cdot (\Lambda^r)^{\hslash^{-1}(Q_a - Q_b) + 1} \cdot (Q_a - Q_b)^{2\hslash^{-1}(Q_b - Q_a) - 2}
 \\ 
  \times \prod_{c \neq a,b} (Q_c - Q_a)^{\hslash^{-1}(Q_c - Q_a) - \frac{1}{2}} (Q_c - Q_b)^{\hslash^{-1}(Q_b - Q_c) - \frac{1}{2}}
    \\
 \times \prod_{c = 1}^{r} (Q_b + P_c)^{-\hslash^{-1}(Q_b + P_c) + 1} (Q_a + P_c)^{\hslash^{-1}(Q_a + P_c)} \times \widetilde{\psi}^+_a(Q_b) 
\end{multline*}
For $a \in [r - 1]$ and $b = r$, we have
\[
\psi_{a}^+(\zeta) \,\,\mathop{\sim}_{\zeta \rightarrow \infty} B_{a,r}\,e^{\hslash^{-1}(-\Lambda)^{-r}x} \,x^{-\hslash^{-1}(|\mathbf{P}| + |\mathbf{Q}|) - 1} ,
\]
where
\begin{multline*}
B_{a,r}  = \pm {\rm i}  (-\Lambda^r)^{\hslash^{-1}(Q_a + |\mathbf{P}| +|\mathbf{Q}|) + 1} e^{-\hslash^{-1}Q_a} 
\\
 \times \prod_{c \neq a} (Q_a - Q_c)^{\hslash^{-1}(Q_c - Q_a) -\frac{1}{2}} \prod_{c = 1}^{r} (P_c + Q_a)^{\hslash^{-1}(P_c + Q_a) + \frac{1}{2}} \cdot  \widetilde{\psi}_{a}^+(\infty).
\end{multline*}
For $a = r$ and $b \in [r - 1]$, we have
\[
\psi_r^+(\zeta)\,\,\mathop{\sim}_{\zeta \rightarrow Q_b} \,\,B_{r,b} \,x^{\hslash^{-1} Q_b - 1},
\]
where
\begin{multline*}
B_{r,b} = \pm {\rm i} e^{\hslash^{-1}(|\mathbf{P}| + |\mathbf{Q}| + Q_b)}\,((-\Lambda)^r)^{-\hslash^{-1}(|\mathbf{P}| + |\mathbf{Q}| + Q_b) + 1} \cdot  
\\
 \times \prod_{c \neq b} (Q_b - Q_c)^{\hslash^{-1}(Q_b - Q_c) - \frac{1}{2}} \cdot \prod_{c = 1}^r (Q_b+P_c)^{-\hslash^{-1}(Q_b + P_c) + \frac{1}{2}} \cdot \widetilde{\psi}_r^+(Q_b).
\end{multline*}
For $a = b = r$, we have
\[
\psi_r^+(\zeta)\,\,\mathop{\sim}_{\zeta \rightarrow \infty} \,\, \,B_{r,r} e^{\hslash^{-1}(-\Lambda)^{-r}x}\,x^{-\hslash^{-1}(|\mathbf{P}| + |\mathbf{Q}|)}.
\]
with $B_{r,r} = 1$.
\end{cor}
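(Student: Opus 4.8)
The plan is to read each asymptotics off directly from the defining series \eqref{def:wave}. First I would split the sum over $(g,n)$ into its $(0,1)$-summand $\exp(\mp\hslash^{-1}\int_{Q_a}^{\zeta}\omega_{0,1})$, its $(0,2)$-summand $\exp\big(\tfrac12\int_{Q_a}^{\zeta}\!\int_{Q_a}^{\zeta}(\omega_{0,2}(\zeta',\zeta'')-\tfrac{\dd x(\zeta')\dd x(\zeta'')}{(x(\zeta')-x(\zeta''))^2})\big)$ --- with the regularisation of \cref{01regintdef} built into the $(0,1)$ integral --- and the remaining stable part $\widetilde{\psi}_a^+(\zeta)$ of \cref{stablepsi}, so that $\psi_a^+$ is the product of these three (with the convention $Q_r=\infty$ for the base point of $\psi_r^+$). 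The stable factor causes no difficulty: by \cref{th:main2} the $\omega_{g,n}$ with $2g-2+n>0$ have poles only along $\operatorname{Ram}(\mathcal{S}_\Lambda)$, while the poles of $x$ --- namely $Q_1,\dots,Q_{r-1}$ and $\infty$ --- are \emph{not} ramification points ($x=\infty$ is not a branch point), so each iterated integral is holomorphic near every pole of $x$ and $\widetilde{\psi}_a^+(\zeta)\to\widetilde{\psi}_a^+(Q_b)$, which is $1$ when $b=a$. Everything thus reduces to the leading behaviour of the two unstable factors near each pole of $x$.

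For the $(0,2)$ factor I would invoke \cref{02intlem}: it equals $\big(A_a/(x'(\zeta)(\zeta-Q_a)^2)\big)^{1/2}$ for $a\in[r-1]$ and $\big((-\Lambda)^r/x'(\zeta)\big)^{1/2}$ for $a=r$. Expanding near $\zeta=Q_b$ with $x(\zeta)\sim-A_b/(\zeta-Q_b)$ --- so $x'(\zeta)\sim A_b/(\zeta-Q_b)^2$ and $\zeta-Q_b\sim-A_b/x(\zeta)$ --- resp.\ near $\zeta=\infty$ with $x(\zeta)\sim(-\Lambda)^r\zeta$, collapses this to $\sqrt{A_aA_b}\,(Q_b-Q_a)^{-1}x(\zeta)^{-1}$ when $b\neq a$ and to $1$ when $b=a$; substituting the explicit $A$'s from \eqref{AAq}, the square root $\sqrt{A_aA_b}$ is precisely what produces the factor $\mathrm i$ (the product of the $A$'s carries a sign), one power of $\Lambda^r$, the half-integer exponents $-\tfrac12$ on the $(Q_c-Q_\bullet)$, the exponents $+\tfrac12$ on the $(P_c+Q_\bullet)$, and the shift $(Q_a-Q_b)^{-2}$. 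The $(0,1)$ factor carries the entire $\hslash^{-1}$-dependence and is the computational heart; the key point is that it is completely elementary, because $\omega_{0,1}=y\,\dd x=\zeta\,\dd\log x(\zeta)$ and $x(\zeta)=-\Lambda^r\prod_{c=1}^r(P_c+\zeta)/\prod_{c=1}^{r-1}(Q_c-\zeta)$ give $\dd\log x(\zeta)=\sum_{c=1}^r\frac{\dd\zeta}{P_c+\zeta}+\sum_{c=1}^{r-1}\frac{\dd\zeta}{Q_c-\zeta}$, which integrates termwise to the antiderivative $\zeta-\sum_{c=1}^rP_c\log(P_c+\zeta)-\sum_{c=1}^{r-1}Q_c\log(Q_c-\zeta)$; the linear part collapses to a single $\zeta$ because $r-(r-1)=1$, which is exactly why --- unlike the Gaiotto case of \cref{leadingpsi1} --- no overall factor $r$ appears in the exponential $e^{\hslash^{-1}(Q_b-Q_a)}$. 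Combining this antiderivative with the base-point subtraction of \cref{01regintdef}, and then letting $\zeta\to Q_b$ with $\log(Q_b-\zeta)=\log A_b-\log x(\zeta)+o(1)$ (resp.\ $\zeta\to\infty$, where $\int\omega_{0,1}\sim(-\Lambda)^{-r}x(\zeta)-(|\mathbf{P}|+|\mathbf{Q}|)\log x(\zeta)+\mathrm{const}$), yields $\int_{Q_a}^{\zeta}\omega_{0,1}=Q_b\log x(\zeta)+C_{a,b}+o(1)$ with $C_{a,b}$ an explicit finite combination of $\log\Lambda^r$, of the $\log(Q_c\mp Q_\bullet)$ and $\log(P_c+Q_\bullet)$, and of a polynomial in the $P$'s and $Q$'s. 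Exponentiating produces the announced $\hslash^{-1}$-order power of $x(\zeta)$ (resp.\ the prefactor $e^{\hslash^{-1}(-\Lambda)^{-r}x}x^{-\hslash^{-1}(|\mathbf{P}|+|\mathbf{Q}|)}$ at $\infty$) together with $e^{\mp\hslash^{-1}C_{a,b}}$, which unpacks into all the $\hslash^{-1}$-weighted monomials $(\Lambda^r)^{\hslash^{-1}(Q_a-Q_b)}$, $(Q_c-Q_\bullet)^{\pm\hslash^{-1}(\cdots)}$, $(P_c+Q_\bullet)^{\pm\hslash^{-1}(\cdots)}$ and the exponential prefactors $e^{\hslash^{-1}(Q_b-Q_a)}$, $e^{-\hslash^{-1}Q_a}$ of the statement.

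Finally I would multiply the three factors and collect terms: the power of $x(\zeta)$ is $\hslash^{-1}Q_b-\delta_{b\neq a}$ ($\hslash^{-1}Q_b$ from the $(0,1)$ factor, $-\delta_{b\neq a}$ --- resp.\ the extra $-1$ at $\infty$ --- from the $(0,2)$ factor), the $\hslash^{-1}$-weighted constants come from the $(0,1)$ factor, and the remaining $\hslash$-free constant is the square-root prefactor of the $(0,2)$ factor times $\widetilde{\psi}_a^+(Q_b)$; reading this off gives each of the four expressions for $B_{a,b}$, the sub-cases differing only in which base-point and target regularisations of \cref{01regintdef} and \cref{02intlem} are invoked. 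I expect the main obstacle to be organisational rather than conceptual: carefully tracking the branch choices for the numerous logarithms --- they are responsible for the overall sign $\pm$ and for the factor $\mathrm i$, and reflect the intrinsic two-valuedness of the wave function --- cleanly separating the $\hslash$-free contribution of the $(0,2)$ term from the $\hslash^{-1}$-weighted contribution of the $(0,1)$ term, and verifying that the half-integer exponents $\pm\tfrac12$ (produced solely by the $(0,2)$ square root) and the integer shifts in the exponents land exactly as stated, in particular that on the diagonal everything collapses to $B_{a,a}=1$.
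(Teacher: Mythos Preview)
Your proposal is correct and follows essentially the same approach as the paper, whose proof is the single line ``Both corollaries follow immediately from \cref{01regintdef} and \cref{02intlem}.'' You have simply unpacked what that line means: factor $\psi_a^+$ into the $(0,1)$ piece (governed by \cref{01regintdef}), the $(0,2)$ piece (governed by \cref{02intlem}), and the stable piece $\widetilde{\psi}_a^+$, then read off the leading behaviour of each at the poles of $x$ --- your additional observation that $\omega_{0,1}=\zeta\,\dd\log x(\zeta)$ integrates in closed form is precisely the elementary computation the paper leaves implicit.
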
 
\begin{proof}
	Both corollaries follow immediately from \cref{01regintdef} and \cref{02intlem}.
\end{proof}

\subsubsection{Application to Gaiotto and CDO vectors}

\medskip

The half Seiberg--Witten and CDO spectral curves are of genus $0$ and all the poles of $x$ are simple. Therefore, we can apply \cite[Lemma 5.14]{BE17} to obtain the quantum curves. It turns out that these are generalised hypergeometric differential equations. We denote $D = \hslash x\partial_x$ and introduce the Pochhammer symbol
\[
[x]_k = x(x + 1) \cdots (x + k - 1) = \frac{\Gamma(x + k)}{\Gamma(x)}.
\]
We also note that
\[
x(x-1) \cdots (x - k + 1) = (-1)^k [-x]_{k}.
\] 

The following propositions give quantum curves for the Gaiotto and CDO wave-functions respectively.
\begin{prop}
\label{pr:QGai}
The Gaiotto wave functions satisfy the differential equations\footnote{A different quantum curve (without the term $\hbar \delta_{c,a}$ in \eqref{QcurveGai}) in the Gaiotto case  appears  in \cite{DHS09} where the authors use the corresponding wave function to build the dual Nekrasov instanton partition function for pure gauge theory.}, for any $a \in [r]$
\begin{equation}
\label{QcurveGai}\left(\prod_{c = 1}^{r} (Q_c + \hslash \delta_{c,a} - D) + \frac{\Lambda^r}{x}\right) x\psi^+_{a} = 0.
\end{equation}
For $b \in [r]$, this function has the explicit series representation as $\zeta \rightarrow Q_b$ 
\begin{equation}
\label{basispsi}
\psi^{+}_a \approx \left\{\begin{array}{lll} x^{\hslash^{-1}Q_a}\left(1 + \displaystyle\sum_{k \geq 1} \dfrac{(-\Lambda^{r})^k}{k!\,(\hslash^r x)^k}  \dfrac{1}{\prod_{c \neq a} \big[\frac{Q_c - Q_a}{\hslash}\big]_k}\right) && \text{if} \,\,b = a \\[10pt] B_{a,b}\,x^{\hslash^{-1}Q_b - 1}\left(1 + \displaystyle\sum_{k \geq 1} \dfrac{(-\Lambda^r)^k}{k!\,(\hslash^r x)^k} \dfrac{1}{\big[\frac{Q_a - Q_b}{\hslash} + 2\big]_k} \dfrac{1}{\prod_{c \neq a,b}\big[\frac{Q_c - Q_b}{\hslash} + 1\big]_k}\right) & & \text{if} \,\,b \neq a \end{array}\right. 
\end{equation} 
with the constants $B_{a,b}$ appearing in Corollary~\ref{leadingpsi1}.
\end{prop}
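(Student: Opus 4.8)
The plan is to apply the Bouchard--Eynard quantisation theorem \cite[Lemma 5.14]{BE17} directly to the half Seiberg--Witten curve, which we have already verified is a genus $0$ spectral curve all of whose ramification points are simple and whose poles of $x$ are simple (so the admissibility condition of \cite{BE17} holds). The symbol of the quantum curve is the defining polynomial, which after clearing denominators reads $\prod_{c=1}^r(xy - Q_c) + \Lambda^r/x^{\,r-1}$ (using $xy = \zeta$); the content of \cite[Lemma 5.14]{BE17} is that the correct ordered quantisation, annihilating the wave function $\psi_a^+$ based at the pole $Q_a$, is obtained by replacing $y$ by $\hslash x^{-1} D$ with $D = \hslash x\partial_x$ read from the \emph{left}, and that the $(0,1)$ and $(0,2)$ regularisations fixed in \cref{01regintdef} and \cref{02intlem} produce precisely the shift $Q_c \mapsto Q_c + \hslash\delta_{c,a}$ coming from the residue $-Q_a$ of $\omega_{0,1}$ at $Q_a$ together with the $\delta_{b\ne a}$ exponent twist. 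I would first recall this statement, then carry out the ordering bookkeeping: conjugating $D$ past powers of $x$ gives $x^{-1}D\,x = x^{-1}(D + \hslash)$, so $\prod_c(Q_c - D)\cdot x^{-1}$ becomes $x^{-1}\prod_c(Q_c + \hslash - D)$ only on the ``diagonal'' factor indexed by $a$; tracking which factor absorbs the shift is exactly the role of $\delta_{c,a}$, and this reproduces \eqref{QcurveGai} after multiplying through by $x$.

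The second half is the explicit series solution \eqref{basispsi}. Here I would \emph{not} go through topological recursion at all but instead solve the ODE \eqref{QcurveGai} directly by a Frobenius-type ansatz. Writing $\Psi = x\psi_a^+$ and seeking $\Psi = x^{s}\sum_{k\ge0} c_k\,(\hslash^r x)^{-k}$, one notes $D(x^{s-k}) = \hslash(s-k)x^{s-k}$, so the operator $\prod_c(Q_c + \hslash\delta_{c,a} - D)$ acts on the term $x^{s-k}$ by the scalar $\prod_c\bigl(Q_c + \hslash\delta_{c,a} - \hslash(s-k)\bigr)$, while the $\Lambda^r/x$ term shifts $k \to k+1$. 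The indicial equation (coefficient of the top power, $k=0$) forces $\prod_c(Q_c + \hslash\delta_{c,a} - \hslash s) = 0$, giving the $r$ characteristic exponents $s = Q_b/\hslash + \delta_{b,a}$ for $b\in[r]$; for $s = Q_a/\hslash + 1$ the recursion reads $c_k \prod_c\bigl(Q_c + \hslash - \hslash(Q_a/\hslash + 1 - k)\bigr) = -\Lambda^r c_{k-1}$, i.e.\ $c_k\,\hslash^{r}k\prod_{c\ne a}\bigl[\tfrac{Q_c-Q_a}{\hslash}\bigr]_k /\bigl[\tfrac{Q_c-Q_a}{\hslash}\bigr]_{k-1}\cdots$ — after telescoping one gets the product of Pochhammer symbols displayed in the first line of \eqref{basispsi}; the case $b\ne a$ is identical with the shifted exponent, producing the factor $\bigl[\tfrac{Q_a-Q_b}{\hslash}+2\bigr]_k$ from the $c=a$ factor (which carries the $\hslash\delta_{c,a}$) and $\bigl[\tfrac{Q_c-Q_b}{\hslash}+1\bigr]_k$ from the others. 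Finally, the normalisation constant $B_{a,b}$ is fixed by matching the leading coefficient of the $x^{\hslash^{-1}Q_b - 1}$ solution, which is exactly the asymptotic computed in \cref{leadingpsi1} via \cref{01regintdef} and \cref{02intlem}; so the $B_{a,b}$ here are literally those constants, and no new computation is needed.

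The main obstacle I anticipate is the ordering bookkeeping in the first step: one must be careful that \cite[Lemma 5.14]{BE17} is stated for a specific placement of the differential operator relative to $x$, and that the regularisation conventions of \cref{01regintdef} (the $Q_a\ln x$ subtraction) match the normalisation used there, so that the shift lands on the factor indexed by $a$ with coefficient exactly $\hslash$ rather than, say, $\hslash/r$ or with the opposite sign. Once that placement is pinned down, the rest is the routine Frobenius computation sketched above together with an invocation of \cite{BE17} for the assertion that the resulting power series is the asymptotic expansion of $\psi_a^+$ (which we already identified with $\mathrm{ev}_a^+(\ket{\Gamma_\Lambda})$ in \eqref{principsi}) and not merely a formal solution of the ODE. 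A secondary subtlety is checking that the $r$ Frobenius exponents $Q_b/\hslash + \delta_{b,a}$ are non-resonant for generic $\hslash$ and pairwise distinct $Q_b$, so that the solution space is spanned by these series and the identification of $\psi_a^+$ with the $b=a$ branch is unambiguous; this follows from $Q_b - Q_a \notin \hslash\Ints$ generically.
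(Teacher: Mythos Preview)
Your overall strategy matches the paper's: apply \cite[Lemma 5.14]{BE17} to obtain the quantum curve, then solve it by a Frobenius ansatz. The Frobenius half is essentially what the paper does and is fine.

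The gap is in your derivation of the $\hslash\delta_{c,a}$ shift. You attribute it to ``conjugating $D$ past powers of $x$'' and claim this shifts ``only the diagonal factor indexed by $a$''. But the commutation $[D,x^{-1}] = -\hslash x^{-1}$ does not distinguish any factor: moving an $x^{-1}$ past $\prod_c(Q_c - D)$ shifts \emph{every} factor by the same amount. There is nothing in that operation that singles out $c=a$. The base-point dependence enters through a different mechanism in \cite[Lemma 5.14]{BE17}: the differential operator contains extra terms $-\sum_{i=1}^{r-1} C_{r-i}(Q_a)\,D^{i-1}\hslash x$, where
\[
C_{r-i}(Q_a) \;=\; \lim_{\zeta\to Q_a} x^{-i}\sum_{\ell=1}^{r-i} p_{r-i-\ell}(x)\,y^{\ell}
\]
are constants depending on the base point. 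For the half Seiberg--Witten curve, the paper computes these explicitly (a telescoping sum in elementary symmetric polynomials) to get $C_{r-i}(Q_a) = (-1)^{i-1}Q_a\,e_{r-1-i}(\mathbf{Q}^{[a]})$. It is only after inserting these constants and using the identity $e_{r-i}(\mathbf{Q}) - Q_a\,e_{r-1-i}(\mathbf{Q}^{[a]}) = e_{r-i}(\mathbf{Q}^{[a]})$ that the operator collapses to $\prod_b(Q_b - D) + \hslash\prod_{b\neq a}(Q_b - D)$, which is $\prod_c(Q_c + \hslash\delta_{c,a} - D)$. So the computation you labelled ``ordering bookkeeping'' is in fact a computation of the $C_i$'s, and your proposal does not engage with it. You correctly flagged this step as the main obstacle, but the mechanism you sketched for resolving it is wrong.
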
 
The series in brackets in \eqref{basispsi} define entire functions of $x^{-1}$, namely the following generalised hypergeometric functions:
\begin{equation*}
\begin{split}
{}_{r - 1}F_0\Big[\begin{smallmatrix} \emptyset \\ \big(\hslash^{-1}(Q_c - Q_a)\big)_{c \neq a}\end{smallmatrix}\Big]\big(-\hslash^{-r}\Lambda^r x^{-1}\big)  & \quad \text{for}\,\, b = a, \\
{}_{r - 1}F_0\Big[\begin{smallmatrix} \emptyset \\ \big(\hslash^{-1}(Q_b - Q_a) + 1 + \delta_{c,b}\big)_{c \neq a} \end{smallmatrix}\Big]\big(-\hslash^{-r}\Lambda^r x^{-1}\big) & \quad \text{for}\,\, b \neq a.
\end{split}
\end{equation*}
Therefore, the right-hand sides of \eqref{basispsi} --- indexed by $b \in [r]$ --- provide a basis of analytic solutions of \eqref{QcurveGai} in the domain $\mathbb{C} \setminus \gamma$ for any fixed choice of path $\gamma$ between $0$ and $\infty$. The presence of the branch cut $\gamma$ is solely due to the power of $x$ in the prefactor.

\begin{prop}
\label{pr:CDOQ} 
The CDO wave functions satisfy the differential equations for any $a \in [r - 1]$
\begin{equation}
\label{Qrdiff0}\left(\prod_{c = 1}^{r - 1} (Q_c + \hslash \delta_{c,a}  - D)\cdot x + \Lambda^r \prod_{c = 1}^{r} (P_c + D)\right)\psi_{a}^{+} = 0\,.
\end{equation}  We have the explicit series representation as $\zeta \rightarrow Q_b$ for any $b \in [r-1]$, or as $\zeta \to \infty$:
\begin{equation}\label{eqpsiCDO1}
\psi_{a}^+ \approx \left\{\begin{array}{ll} x^{\hslash^{-1}Q_a }\left(1 + \displaystyle\sum_{k \geq 1} \dfrac{(-\hslash (-\Lambda)^r)^k}{k!x^k}\,\dfrac{\prod_{c = 1}^{r} \big[-\frac{P_c + Q_a}{\hslash} \big]_{k}}{\prod_{c \neq a} \big[\frac{Q_c - Q_a}{\hslash} \big]_{k}}\right) & \text{if}\,\,b = a, \\
	[15pt] B_{a,b}\,x^{\hslash^{-1}Q_b - 1}\left(1 + \displaystyle\sum_{k \geq 1} \dfrac{(-\hslash(- \Lambda)^r)^k}{k! x^k}\,\dfrac{\prod_{c = 1}^{r} \big[-\frac{P_c + Q_b}{\hslash} + 1\big]_{k}}{\prod_{c \neq b} \big[\frac{Q_c - Q_b}{\hslash} + 1 + \delta_{c,a}\big]_k}\right)  & \text{if}\,\,b \neq a, \\
	 B_{a,r} \,e^{\hslash^{-1} (-\Lambda)^{-r} x} x^{-\hslash^{-1}\left( |\mathbf P| + |\mathbf Q|\right) -1 } \left(1 + \displaystyle\sum_{k \geq 1} c_{k,a} \left(\dfrac{\hslash (-\Lambda)^r}{x}\right)^k  \right) & \text{if}\, b = r,
\end{array}\right.
\end{equation} with 
\begin{equation}\label{eq:ckdef}
	c_{k,a} = \sum_{\substack{ k_1,\ldots, k_{r-1} \geq 0 \\ k_1+\cdots+k_{r-1}= k}}  \dfrac{\prod_{c = 1}^{r-1} \big[\frac{P_{c+2} + Q_c}{\hslash}+ \delta_{a,c} \big]_{k_c}  \big[k_1 + \cdots + k_{c-1} + \sum_{d=1}^c \big(\frac{P_{d+1} + Q_d}{\hslash} + \delta_{d,a}\big) \big]_{k_c}}{k_1!\cdots k_{r-1}!},
\end{equation} where we take $P_{r+1} := P_1$ by convention.

For $a = r$, we rather have the differential equation:
\begin{equation}
\label{Qrdiff} \left(\prod_{c = 1}^{r - 1} (Q_c - D) \cdot x + \Lambda^r  \prod_{c = 1}^{r} (P_c + D)\right) \psi_{r}^+ = 0.
\end{equation}
We have the following explicit series representation as $\zeta \rightarrow x^{-1}(\infty)$:
\begin{equation} \label{eqpsiCDO2}
	\psi_{r}^+ \approx \left\{\begin{array}{ll} B_{r,b} x^{\hslash^{-1}Q_b - 1} \left(1 + \displaystyle\sum_{k \geq 1} \dfrac{(-\hslash(- \Lambda)^r)^k}{k! x^k}\,\dfrac{\prod_{c = 1}^{r} \big[-\frac{P_c + Q_b}{\hslash} + 1\big]_k}{\prod_{c \neq b} \big[\frac{Q_c - Q_b}{\hslash} + 1\big]_k} \right), & \text{if}\, \zeta \rightarrow Q_b,\\
		e^{\hslash^{-1} (-\Lambda)^{-r} x} x^{-\hslash^{-1}\left( |\mathbf P| + |\mathbf Q|\right) } \left(1 + \displaystyle\sum_{k \geq 1} c_{k,r} \left(\dfrac{\hslash (-\Lambda)^r}{x}\right)^k  \right)& \text{if}\,  \zeta \to \infty\,,
	\end{array}\right.
\end{equation} 
 with 
\begin{equation*}
	c_{k,r} = \sum_{\substack{ k_1,\ldots, k_{r-1} \geq 0 \\ k_1+\cdots+k_{r-1}= k}}  \dfrac{ \prod_{c = 1}^{r-1} \big[\frac{P_{c+2} + Q_c}{\hslash}\big]_{k_c}  \big[k_1 + \cdots + k_{c-1} + \sum_{d=1}^c \frac{P_{d+1} + Q_d}{\hslash}\big]_{k_c}}{k_1!\cdots k_{r-1}!},
\end{equation*}where we take $P_{r+1} := P_1$ by convention in the above formula.
\end{prop}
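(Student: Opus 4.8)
The plan is to obtain the differential equations first, by feeding the CDO spectral curve \eqref{eq:curve} into the Bouchard--Eynard quantisation theorem \cite[Lemma 5.14]{BE17}, and then to solve the resulting linear ODEs near each pole of $x$ by a Frobenius analysis, matching the leading behaviours recorded in \cref{leadingpsi2}. For the first step I would verify that $(\mathcal S_\Lambda, x, \omega_{0,1}, \omega_{0,2})$ is in the admissible class of \cite{BE17}: it has genus $0$, the ramification points counted in \eqref{TRome} are simple, $x=\infty$ is not a branch point, and the poles of $x$ — at $\zeta = Q_1,\ldots,Q_{r-1}$ and $\zeta=\infty$ — are all simple. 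The Bouchard--Eynard algorithm then returns, for each base point, a polynomial differential operator annihilating $\psi^+_\beta$ whose semiclassical symbol is the defining polynomial of \eqref{eq:curve}; multiplying through by $x$ to clear denominators and rewriting in terms of $D=\hslash x\partial_x$ produces \eqref{Qrdiff0} when $\beta=Q_a$ with $a\in[r-1]$, and \eqref{Qrdiff} when $\beta=\infty$ (the ``$a=r$'' case, where $\omega_{0,1}$ has a double pole). As a sanity check, replacing $D$ by $xy$ and dividing by $x^r$ recovers $\prod_{c=1}^{r-1}(Q_c/x-y)+\Lambda^r\prod_{c=1}^{r}(P_c/x+y)=0$, i.e. \eqref{eq:curve} up to the overall factor $\Lambda^r$. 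The $\hslash\delta_{c,a}$ appearing in the $Q_c$-factors of \eqref{Qrdiff0} is not cosmetic: it is produced by the Bouchard--Eynard normal ordering at the simple pole $Q_a$ of $\omega_{0,1}$ (residue $-Q_a$), and it is precisely what makes $x^{\hslash^{-1}Q_a}$, rather than $x^{\hslash^{-1}Q_a-1}$, an admissible indicial exponent — this is the discrepancy with \cite{DHS09} flagged in the footnote to \cref{pr:QGai}.

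For the second step, fix $a\in[r-1]$ and work near a pole $\zeta=Q_b$, $b\in[r-1]$, where $1/x$ is a local coordinate. Using $Dx=x(D+\hslash)$ one moves the factor $x$ in \eqref{Qrdiff0} to the left, turning the operator into $x\prod_{c=1}^{r-1}(Q_c+\hslash\delta_{c,a}-\hslash-D)+\Lambda^r\prod_{c=1}^{r}(P_c+D)$, which acts diagonally on powers of $x$. Substituting $\psi=x^{\alpha}\sum_{k\ge 0}c_k x^{-k}$, the coefficient of the top power $x^{\alpha+1}$ gives the indicial equation $\prod_{c=1}^{r-1}(Q_c+\hslash\delta_{c,a}-\hslash-\hslash\alpha)=0$, whose roots are $\alpha=\hslash^{-1}Q_a$ and $\alpha=\hslash^{-1}Q_b-1$ for $b\in[r-1]\setminus\{a\}$ — matching the leading powers $\hslash^{-1}Q_b-\delta_{b\ne a}$ of \cref{leadingpsi2} — and the remaining coefficients satisfy a first-order recursion $c_{k+1}\prod_{c=1}^{r-1}(\cdots)=-\Lambda^r c_k\prod_{c=1}^{r}(\cdots)$. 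Iterating it expresses $c_k$ as the ratio of $r$ and $r-1$ Pochhammer symbols displayed in \eqref{eqpsiCDO1}, the factor $k!$ in the denominator being contributed by the single $c=a$ factor, which specialises to $[1]_k=k!$; the bracketed series are then the ${}_{r-1}F_0$'s named after \cref{pr:QGai}.

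The expansion near $\zeta=\infty$ for $a\in[r-1]$ (the $b=r$ row of \eqref{eqpsiCDO1}), and both rows of \eqref{eqpsiCDO2} for $a=r$, require the ansatz $\psi=e^{\hslash^{-1}(-\Lambda)^{-r}x}x^{\beta}\sum_{k\ge 0}c_k x^{-k}$: through the exponential prefactor $D$ acts as $(-\Lambda)^{-r}x+D$ on the power-series factor, the exponential rate $(-\Lambda)^{-r}$ is fixed by demanding that the top-degree-in-$x$ terms of $\prod_{c=1}^{r}(P_c+(-\Lambda)^{-r}x+D)$ cancel, and what remains is a recursion of depth $r-1$ whose closed-form solution is the convolution over compositions $k_1+\cdots+k_{r-1}=k$ recorded in \eqref{eq:ckdef} (with $P_{r+1}:=P_1$). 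To finish, I would normalise: for each base point the previous steps produce the $r$ formal solutions of the ODE near the relevant singular point of $x$; \cite[Lemma 5.14]{BE17} guarantees that $\psi^+_\beta$ is one of them, the asymptotics of \cref{leadingpsi2} together with the normalisation $\widetilde\psi^+_a(Q_b)=1$ from \cref{stablepsi} single out which one and fix the multiplicative constants $B_{a,b}$ and $B_{a,r}$, and assembling these identifications yields \eqref{eqpsiCDO1} and \eqref{eqpsiCDO2}; the case $a=r$ is identical with \eqref{Qrdiff0} replaced by \eqref{Qrdiff}.

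I expect the main obstacle to be the closed-form solution of the irregular recursion at $\zeta=\infty$. At the finite poles the recursion is first order and collapses neatly to a single ratio of Pochhammer products; at $\infty$, by contrast, the $r$ sheets of $x$ interact through the full expansion of $\prod_{c=1}^{r}(P_c+(-\Lambda)^{-r}x+D)$, so the recursion couples up to $r-1$ previous coefficients, and carrying the bookkeeping through to the explicit composition sum \eqref{eq:ckdef} — with the correct index shifts $\delta_{a,c}$ and the cyclic convention $P_{r+1}:=P_1$ — is genuinely delicate. A secondary, more mechanical difficulty is ensuring that the operator output by the Bouchard--Eynard algorithm is exactly \eqref{Qrdiff0}/\eqref{Qrdiff}, i.e. that the $\hslash\delta_{c,a}$-shift is reproduced and no spurious gauge-conjugate of it, which requires tracking the normal ordering at base points that are poles of $\omega_{0,1}$, including the double pole at $\zeta=\infty$.
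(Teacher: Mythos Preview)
Your overall strategy matches the paper's: apply \cite[Lemma 5.14]{BE17} to get the quantum curve, then Frobenius analysis near each pole of $x$, matching leading behaviours via \cref{leadingpsi2}. The paper carries out the first step in more detail than you sketch --- it explicitly computes the constants $C_{r-i}(Q_a)$ from the Bouchard--Eynard algorithm via a telescoping sum (as in the Gaiotto proof), and then uses $\hslash x = Dx - xD$ to reach \eqref{Qrdiff0} and \eqref{Qrdiff} --- but your plan to do this is sound.

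The one substantive divergence is your treatment of the irregular expansion at $\zeta\to\infty$. You propose to substitute the exponential ansatz, derive a depth-$(r-1)$ recursion, and solve it by hand to reach the composition sum \eqref{eq:ckdef}; you correctly flag this as the delicate step. The paper sidesteps it entirely: it observes that conjugating \eqref{Qrdiff0} by $x^{-\hslash^{-1}P_1}$ produces the generalised hypergeometric equation, so that $x^{-\hslash^{-1}P_1}\,{}_{(r-1)}F_{(r-1)}[\cdots](\hslash^{-1}(-\Lambda)^{-r}x)$ is an analytic solution, and then imports the known large-argument asymptotics of ${}_{(r-1)}F_{(r-1)}$ from \cite[Theorem 4.1]{VW14} to read off \eqref{eq:ckdef} directly. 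This is both shorter and explains the provenance of the convolution structure and the cyclic convention $P_{r+1}:=P_1$ (they are inherited from the standard asymptotic expansion of ${}_pF_p$). Your direct attack would presumably also succeed, but the paper's route avoids the bookkeeping you were worried about.

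One small slip: in the CDO case the bracketed series near finite poles are ${}_{r}F_{r-2}$'s with zero radius of convergence, not the ${}_{r-1}F_0$'s of \cref{pr:QGai}; the extra $r$ Pochhammer factors in the numerator come from the $\prod_c(P_c+D)$ piece, which has no Gaiotto counterpart.
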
  
In contrast with the Gaiotto case, now the series in brackets in the first two lines of \eqref{eqpsiCDO1} and the first line of \eqref{eqpsiCDO2} have zero radius of convergence, and they correspond to generalised hypergeometrics ${}_{r}F_{r - 2}$. In principle, a Borel resummation would be necessary to construct analytic solutions. This is related to the fact that $\omega_{0,1}$ has a double pole at a simple pole of $x$, and thus the differential equations have an irregular singularity at $\infty$. However, the quantum curve \eqref{Qrdiff0} for $a \in [r-1]$ admits an explicit basis of solutions, indexed by $b \in [r]$
\begin{equation} 
\label{basisrm1}
 x^{-\hslash^{-1} P_b} {}_{(r-1)}F_{(r-1)} \left[ \begin{smallmatrix} \big(1 - \delta_{c,a} -\hslash^{-1}(P_b+Q_c)  \big)_{c =1}^{r - 1} \\ \big(1 -\hslash^{-1}(P_b-P_{c+\delta_{c \geq b}})  \big)_{c = 1}^{r - 1}  \end{smallmatrix} \right]  \left( \hslash^{-1} (-\Lambda)^{-r} x \right).
\end{equation}
The differential equation \eqref{Qrdiff} also admits an explicit basis of solutions, indexed by $b \in [r]$
\[
x^{-\hslash^{-1} P_d} {}_{(r-1)}F_{(r-1)} \left[ \begin{smallmatrix} \big(1  -\hslash^{-1}(P_b+Q_c)  \big)_{c = 1}^{r - 1} \\ \big(1 -\hslash^{-1}(P_b-P_{c+\delta_{c \geq b}})  \big)_{c = 1}^{r - 1}  \end{smallmatrix} \right]  \left( \hslash^{-1} (-\Lambda)^{-r} x \right).
\]
To summarise, we obtain a basis of analytic solutions for the differential equations \eqref{Qrdiff0} and \eqref{Qrdiff0} in $\mathbb C \setminus \gamma$ for a fixed choice of a  branch cut $\gamma $ between $0$ and $\infty$. As before, the branch cut $\gamma$ is only due to the power of $x$ in the prefactor. We discuss the relation between this analytic basis and the formal basis obtained from wave functions in Section~\ref{sec:r-1Fr-1}.

\begin{figure}[h!]
\begin{center}
\includegraphics[width=0.65\textwidth]{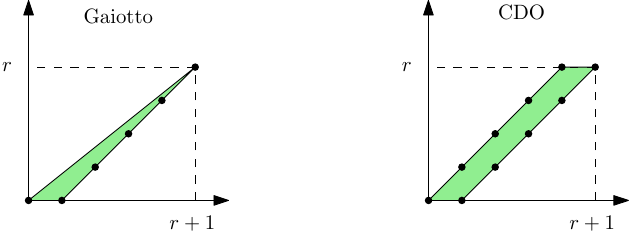}
\caption{\label{fig:Npoly} Newton polygons of the half Seiberg--Witten and CDO curves.}
\end{center}
\end{figure}

\begin{proof}[Proof of \cref{pr:QGai}]
	The equation for the half Seiberg--Witten  curve reads
	\[
	P(x,y) = \sum_{i = 0}^{r} y^{r - i}p_i(x) = 0 \qquad  \text{with} \qquad p_i(x) = (-1)^{r - i}x^{r - i + 1} e_i(\mathbf{Q}) + \delta_{i,r}\Lambda^r.
	\]
	For $a \in \{0\} \cup [r]$, we need to calculate the $\alpha_i$ defined in \cite[Section 2.2]{BE17}, which is the height of the leftmost point in the Newton polygon (see Figure~\ref{fig:Npoly}). For the half Seiberg--Witten  curve, this is $\alpha_i = (1 + 1/r)i$ and thus $\lfloor \alpha_i \rfloor = i + \delta_{i,r}$ for any $i \in \{0\} \cup [r]$. Then, \cite[Lemma 5.14]{BE17} yields
	\begin{equation}
		\label{qdiff} \left(p_r(x) + \sum_{i = 1}^{r} D^{i - 1} x^{-i} p_{r - i}(x) D - \sum_{i = 1}^{r - 1} C_{r - i}(Q_a) D^{i - 1} \hslash x\right) \psi_{Q_a}^{+} = 0,
	\end{equation}
	with the constants: 
	\begin{equation}
		\label{CQa}
		C_{r - i}(Q_a)  = \lim_{\zeta \rightarrow Q_a} x^{-i} \left( \sum_{\ell = 1}^{r - i } p_{r - i - \ell}(x) y^{\ell}\right) = \sum_{\ell = 1}^{r - i} (-1)^{\ell + i} Q_a^{\ell} e_{r - i - \ell}(\mathbf{Q}),
	\end{equation}
	recalling that $xy = \zeta$. Introducing $\mathbf{Q}^{[a]} = (Q_b)_{b \neq a}$ and writing
	\[
	e_{k}(\mathbf{Q}) = Q_a e_{k - 1}(\mathbf{Q}^{[a]}) + e_k(\mathbf{Q}^{[a]}) \qquad \text{with}\qquad e_{-1} = 0,
	\]
	we recognise in \eqref{CQa} a telescopic sum:
	\begin{equation*}
		\begin{split}
			C_{r - i}(Q_a) & = (-1)^{i - 1} \sum_{\ell = 1}^{r - i} \left((-Q_a)^{\ell + 1} e_{r - 1 - i - \ell}(\mathbf{Q}^{[a]}) - (-Q_a)^{\ell} e_{r - i - \ell}(\mathbf{Q}^{[a]})\right) \\
			& = (-1)^{i - 1}Q_a e_{r - 1 - i}(\mathbf{Q}^{[a]}).
		\end{split}
	\end{equation*}
	We insert this in \eqref{qdiff}. Then, writing $xD = Dx - \hslash x$ in the first sum, we obtain
	\[
	\left(\Lambda^r + \sum_{i = 0}^{r} (-1)^{i} e_{r - i}(\mathbf{Q}) D^{i} x -  \sum_{i = 1}^{r}  (-1)^{i}D^{i - 1}\big(e_{r - i}(\mathbf{Q}) - Q_a e_{r -1  - i}(\mathbf{Q}^{[a]})\big)\hslash x\right)\psi_{a}^{+} = 0,
	\] 
	Therefore: 
	\[
	\left(\frac{\Lambda^r}{x} + \sum_{i = 0}^{r} (-1)^i e_{r - i}(\mathbf{Q}) D^i - \hslash \sum_{i = 1}^{r} (-1)^i D^{i - 1} e_{r - i}(\mathbf{Q}^{[a]})\right)x \psi_{a}^+ = 0.
	\]
	which can be re-summed as
	\begin{equation}
		\label{diffeqn}
		\left(\frac{\Lambda^r}{x} + \prod_{b = 1}^{r} (Q_b - D) + \hslash \prod_{\substack{b = 1 \\ b \neq a}}^{r} (Q_b - D) \right) x \psi_{a}^+ = 0.
	\end{equation}
	From \cref{01regintdef}, \cref{02intlem} and \eqref{principsi}, the wave function has an expansion as $\zeta \rightarrow Q_a$ in the local coordinate $x^{-1}$, of the form
	\[
	\psi_{a}^+ \approx x^{\hslash^{-1} Q_a}\left( \sum_{k \geq 0} c_{k} x^{-k}\right) \qquad c_0 = 1
	\]
	with $c_k$ a formal Laurent series in $\hslash$. Inserting this in the differential equation \eqref{diffeqn} we get for any $k \in \mathbb{Z}_{\geq 0}$
	\[
	\Lambda^r c_k + \left(\prod_{b = 1}^{r} (Q_b - Q_a + \hslash k) + \hslash \prod_{b \neq a} (Q_b - Q_a + \hslash k)\right)c_{k + 1} = 0.
	\]
	In other words
	\[
	c_{k + 1} = \frac{-\Lambda^r c_k}{\hslash (k + 1) \prod_{b \neq a} (Q_b - Q_a + \hslash k)}.
	\] 
	With the initial condition $c_0 = 1$, we get
	\[
	c_k = \frac{(-\Lambda^{r})^{k}}{ k!\, \hslash^{r}}\,\frac{1}{\prod_{b \neq a} \prod_{\ell = 0}^{k - 1} (Q_b - Q_a - \hslash \ell)} = \frac{\Lambda^{rk}}{k! \,\hslash^{kr}} \,\frac{1}{\prod_{b \neq a} \big[\frac{Q_b - Q_a}{\hslash}\big]_k}.
	\]
	Near $\zeta \rightarrow Q_b$ for $b \neq a$, the wave function rather has an expansion in the local coordinate $x^{-1}$ of the form
	\[
	\psi_{a}^+ \approx B_{a,b} x^{\hslash^{-1}Q_b - 1} \left(\sum_{k \geq 0} c_k x^{-k}\right)\qquad c_0 = 1
	\]
	Inserting this expansion in the differential equation \eqref{diffeqn} leads again to a recursion for $c_k$ which can be solved explicitly and gives the announced result.
\end{proof}

\begin{proof}[Proof of \cref{pr:CDOQ}]
As the proof is similar to the Gaiotto case, we will be brief. The equation for the CDO curve reads
\[
P(x,y) = \sum_{i = 0}^{r} y^{r - i} p_i(x) = 0 \qquad \text{with}\qquad p_i(x) = (-1)^{r - i}x^{r + 1 - i}e_{i - 1}(\mathbf{Q}) + x^{r - i} \Lambda^r e_i(\mathbf{P}).
\]
We first observe that $\alpha_i = i$ for $i \in \{0\} \cup [r]$, hence $D_i = \hslash x \partial_x := D$. For $a \in [r - 1]$ we compute the constants
\begin{equation}\label{CforCDO}
\begin{split}
C_{r - i}(Q_a) & = \lim_{\zeta \rightarrow Q_a} \left(\sum_{\ell = 0}^{r  - 1 - i} p_{r - 1 -  i -\ell}(x)y^{r - i - \ell} x^{-(i + 1)}\right) \\
& = \sum_{\ell = 1}^{r -1 - i} (-1)^{r - \ell} Q_a^{r - i - \ell} e_{\ell - 1}(\mathbf{Q}) \\
& = \sum_{\ell = 0}^{r - i - 2} (-1)^{r - 1 - \ell} Q_a^{r - i - 1 - \ell} \big(e_\ell(\mathbf{Q}^{[a]}) + Q_a e_{\ell - 1}(\mathbf{Q}^{[a]})\big) \\
& = (-1)^{i - 1} Q_a e_{r - i - 2}(\mathbf{Q}^{[a]}).
\end{split}
\end{equation}
Then, the differential equation for the wave function is \cite[Lemma 5.14]{BE17}
\begin{equation}
\label{thedffc}
\left(p_r(x) + \sum_{i = 1}^{r} D^{i - 1} x^{-i} p_{r - i}(x) D - \sum_{i = 1}^{r - 1} D^{i - 1} C_{r - i}(Q_a) \hslash x\right) \psi^+_{a} = 0.
\end{equation}
Writing $\hslash x = Dx - xD$, the last sum combines with the second sum, and easy algebraic manipulations lead to the differential equations \eqref{Qrdiff0}. For $a = r$, we rather need the constants
\begin{equation*}
\begin{split} 
C_{r - i}(\infty) & = \lim_{\zeta \rightarrow \infty} \sum_{\ell = 0}^{r - 1 -  i} p_{\ell}(x) y^{r - i - \ell} x^{-(i + 1)} \\
& = \lim_{\zeta \rightarrow \infty} \sum_{\ell = 0}^{r - 1 - i}  \zeta^{r - i - \ell} \left((-1)^{r - \ell} e_{\ell - 1}(\mathbf{Q}) + \Lambda^r x^{-1} e_{\ell}(\mathbf{P})\right) \\
& = \lim_{\zeta \rightarrow \infty} \frac{\sum_{\ell = 0}^{r - 1 - i} \big( (-1)^{r - \ell} e_{\ell - 1}(\mathbf{Q}) \prod_{a = 1}^{r} (\zeta + P_a) - e_{\ell}(\mathbf{P}) \prod_{a = 1}^{r - 1} (Q_a - \zeta)\big)}{\prod_{a = 1}^{r} (P_a + \zeta)} \\
& =  \lim_{\zeta \rightarrow \infty} \frac{1}{\prod_{a = 1}^{r} (\zeta + P_a)} \left(\sum_{\substack{0 \leq j \leq r \\ 0 \leq \ell \leq r - i - 2}} - \sum_{\substack{0 \leq j \leq r - i - 1 \\ 0 \leq \ell \leq r - 1}}\right) e_{j}(\mathbf{P})e_{\ell}(\mathbf{Q}) \zeta^{2r - 1 - i - j - \ell} \\
& = (-1)^{i - 1}e_{r - i - 1}(\mathbf{Q}).
\end{split} 
\end{equation*} 
 Inserting this into  \eqref{thedffc} and using the previous tricks yields the differential equation \eqref{Qrdiff}. 
 
 It is then straightforward to compute the basis of series solutions  for those corresponding the expansions as $\zeta \rightarrow Q_a$ with $a \neq r$. As for the expansion of $\psi_{a}^+$ when $\zeta  \rightarrow \infty$, we first notice that a solution to the differential equation \eqref{Qrdiff0} is given by the function
 \[
x^{-\hslash^{-1}P_1} {}_{(r-1)}F_{(r - 1)}\left[\begin{smallmatrix} \big(1 - \delta_{\ell,a} - \hslash^{-1}(P_1 + Q_{b})\big)_{b = 1}^{r - 1} \\ \big(1 - \hslash^{-1}(P_1 - P_{b + 1})\big)_{b = 1}^{r - 1}\end{smallmatrix}\right]\left(\hslash^{-1}(-\Lambda)^{-r}x\right)
 \] Indeed, conjugating \eqref{Qrdiff0} by $x^{-\hslash^{-1} P_1}$ yields precisely the differential equation satisfied by the generalized hypergeometric series above  which is an analytic function in $x$. Then, the asymptotic expansion of $ { }_{(r-1)} F _{(r-1)} $ as $x \to \infty $ is computed in \cite[Theorem 4.1]{VW14}, and plugging in the exponentially growing term in the asymptotic expansion yields the result. The calculation for the expansion of $\psi^+_r$ as $\zeta \to \infty$ is analogous --  the relevant analytic solution of the differential equation is
\[
x^{-\hslash^{-1} P_1} {}_{(r-1)}F_{(r-1)} \left[ \begin{smallmatrix} \big(1  -\hslash^{-1}(P_1+Q_c)  \big)_{c = 1}^{r - 1} \\ \big(1 -\hslash^{-1}(P_1-P_{c+1})  \big)_{c = 1}^{r - 1}  \end{smallmatrix} \right]  \left( \hslash^{-1} (-\Lambda)^{-r} x \right). \qedhere
\]
 \end{proof}
 
 \subsubsection{Relating the formal and analytic bases of CDO differential equations}

 \label{sec:r-1Fr-1}
 
Fix $a \in [r]$. For the $a$-th CDO differential equation of Proposition~\ref{pr:CDOQ} we have encountered two bases of solutions.
We focus on the case $a \in [r - 1]$, as the $a = r$ case is similar.
 
 The first one is a formal WKB solution given by the wave function $\psi_{a}^{+}(\zeta)$: it is a formal series of exponential type in the formal parameter $\hslash$ whose coefficients are meromorphic functions of $\zeta$ on the CDO curve, and we obtain locally in the $x$-plane $r$ solutions by choosing $\zeta$ among the $r$ preimages of $x$ in the CDO curve. In particular, doing so in a neighborhood of $x = \infty$ gives $(r - 1)$ solutions that are up to a (explicit) power of $x$ and (not explicit) constants $B_{a,b}$ in prefactor generalised hypergeometric series ${}_{r}F_{r - 2}$ in the variable $1/x$, while the last one is more complicated. These $r$ series have zero radius of convergence. Up to a change of normalisation indicated in \eqref{comparb} below, we denote $(\overline{\psi}_{a,b}(x))_{b = 1}^{r}$ this basis, given for $b \in [r - 1]$ by
 \begin{multline}
 \overline{\psi}_{a,b}(x)  =  (-1)^{\delta_{b \neq a}} \frac{\prod_{c \neq b} \Gamma\big( \frac{Q_b - Q_c}{\hslash} + \delta_{b,a}-\delta_{c,a} \big) }{\prod_{c = 1}^{r} \Gamma\big( \frac{Q_b + P_c}{\hslash} + \delta_{b,a} \big)}   \left(   \frac{x}{\hslash  (-\Lambda)^{r}} \right) ^{\hslash^{-1}Q_b -1}
  \\
  \times  {}_{r}F_{r - 2}\left[\begin{smallmatrix} \big(-\frac{Q_b + P_c}{\hslash} + 1 - \delta_{b,a}\big)_{c = 1}^{r} \\  \big(\frac{Q_c - Q_b}{\hslash} + 1 + \delta_{c,a} - \delta_{b,a}\big)_{c \neq b} \end{smallmatrix}\right]\left(-\hslash (-\Lambda)^{r} x^{-1}\right)
 \end{multline}
and for $b = r$ by
\[
\overline{\psi}_{a,r}(x) = e^{\hslash^{-1} (-\Lambda)^{-r} x}  \left(   \frac{x}{\hslash  (-\Lambda)^{r}} \right) ^{-\hslash^{-1}(|\mathbf P| + |\mathbf Q| ) - 1} \sum_{k=0}^\infty c_{k,a}  \left(   \frac{\hslash  (-\Lambda)^{r}} {x}\right) ^k
\] 
with the constants $c_{k,a}$ introduced in \eqref{eq:ckdef}.  The new normalisation was chosen so that, for $\zeta \rightarrow Q_b$ 
 \begin{equation}
 \label{comparb}
 \psi_{a}^+(\zeta) \approx \left\{\begin{array}{lll}  \dfrac{B_{a,b} \cdot \prod_{c = 1}^{r} \Gamma\big(\frac{Q_b + P_c}{\hslash} + \delta_{b,a}\big)}{\prod_{c \neq b} \Gamma\big(\frac{Q_b - Q_c}{\hslash} + \delta_{b,a} - \delta_{c,a}\big)} \left(\hslash^{-1}(-\Lambda)^{-r}\right)^{1 - \hslash^{-1}Q_b} \cdot \overline{\psi}_{a,b}(x) && \text{if} \,\,b \in [r - 1] \\[15pt]
B_{a,r} \left(\hslash^{-1}(-\Lambda)^{-r}\right)^{\hslash^{-1}(|\mathbf{P}| + |\mathbf{Q}|) + 1} \cdot \overline{\psi}_{a,r}(x) && \text{if}\,\,b = r  \end{array}\right.
\end{equation}

The second one is a basis of analytic solutions $(\chi_{a,b}(x))_{b = 1}^{r}$, which is an entire function of $x$ multiplied by a power of $x$ that creates a branch cut from $0$ to $\infty$.  The entire functions in question are generalised hypergeometric functions ${}_{(r - 1)}F_{(r - 1)}$.  Here $\hslash$ can take any value in $\mathbb{C}^*$. We choose to normalise this basis as
\begin{multline*}
\chi_{a,b}(x)  = \frac{\prod_{c = 1}^{r - 1} \Gamma\big(1 - \frac{P_b + Q_c}{\hslash} - \delta_{c,a}\big)_{c = 1}^{r - 1}}{\prod_{c \neq b} \Gamma\big(1 - \frac{P_b - P_c}{\hslash}\big)} \left( \frac{x}{\hslash  (-\Lambda)^{r}} \right)^{-\hslash^{-1} P_b} 
\\
 \times {}_{(r-1)}F_{(r-1)} \left[ \begin{smallmatrix} \big(1 - \delta_{c,a} -\hslash^{-1}(P_b+Q_c)  \big)_{c = 1}^{r -1} \\ \big(1 -\hslash^{-1}(P_b-P_{c+\delta_{c\geq b}})  \big)_{c = 1}^{r  -1}  \end{smallmatrix} \right]  \left(   \frac{x}{\hslash  (-\Lambda)^{r}} \right) 
\end{multline*}

The all-order asymptotic expansion of this basis as $x \rightarrow \infty$ must be a linear combination of the formal basis $(\overline{\psi}_{a,b}(x))_{b = 1}^{r}$. With the asymptotic expansion of the hypergeometric functions ${}_{(r-1)} F_{(r-1)}(z)$ as $z \to \infty$ found in \cite[Chapter 16.11]{DLMF}, we find as $x \rightarrow \infty$
\begin{equation}
\label{lineartoinv}
\chi_{a,b}(x) \approx \overline{\psi}_{a,r}(x) + \sum_{d = 1}^{r - 1} \frac{\pi\,e^{-{\rm i}\pi \hslash^{-1}(P_b+Q_d)}}{\sin\left(\frac{\pi(P_b + Q_d)}{\hslash}\right)}\,\overline{\psi}_{a,d}(x), 
\end{equation}
where we have used the reflection formula $\Gamma(z)\Gamma(1-z) = \frac{\pi}{\sin(\pi z)}$.  Upon inversion of this linear system, we obtain analytic functions (instead of formal series) that are asymptotic to the formal series specified by the wave functions.

\begin{prop}
\label{psichi} As  $x \rightarrow \infty$, we have for $b \in [r - 1]$
\[
\overline{\psi}_{a,b}(x) \approx \frac{1}{\pi} \sum_{d = 1}^{r} \frac{e^{-2{\rm i}\pi \hslash^{-1} Q_b} \cdot \prod_{c=1}^{r - 1} \sin\left(\frac{\pi(P_{d} + Q_c)}{\hslash}\right) \cdot \prod_{c = 1}^{r} \sin\left(\frac{\pi(P_{c} + Q_b)}{\hslash}\right)}{\sin\left(\frac{\pi(P_d + Q_b)}{\hslash}\right) \cdot \prod_{c \neq b} \sin\left(\frac{\pi(Q_b - Q_c)}{\hslash}\right) \prod_{c \neq d} \sin\left(\frac{\pi(P_c - P_d)}{\hslash}\right)} \cdot \chi_{a,d}(x),
\]
and for $b = r$:
\[
\overline{\psi}_{a,r}(x) \approx \sum_{d = 1}^{r} \frac{e^{-{\rm i}\pi \hslash^{-1}(|\mathbf{P}| + |\mathbf{Q}|)} \cdot \prod_{c = 1}^{r - 1} \sin\left(\frac{\pi(P_d + Q_c}{\hslash}\right)}{\prod_{c \neq d} \sin\left(\frac{\pi(P_d - P_c)}{\hslash}\right)}\cdot \chi_{a,d}(x).
\]
\end{prop}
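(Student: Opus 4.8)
The plan is to treat \cref{psichi} as what it is: the inversion of the $r \times r$ connection matrix appearing in \eqref{lineartoinv}. First I would record that \eqref{lineartoinv}, which was obtained from the large–argument asymptotics of ${}_{r-1}F_{r-1}$ in \cite[Chapter 16.11]{DLMF}, says precisely that, as $x \to \infty$,
\[
\chi_{a,b}(x) \approx \sum_{d = 1}^{r} M_{b,d}\,\overline{\psi}_{a,d}(x), \qquad M_{b,d} = \begin{cases} \dfrac{\pi\,e^{-{\rm i}\pi\hslash^{-1}(P_b + Q_d)}}{\sin\big(\tfrac{\pi(P_b + Q_d)}{\hslash}\big)}, & d \in [r - 1], \\[10pt] 1, & d = r, \end{cases}
\]
for every $b \in [r]$. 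Since $(\chi_{a,d})_{d}$ and $(\overline{\psi}_{a,d})_{d}$ are two bases of the $r$-dimensional solution space of the $a$-th CDO differential equation of \cref{pr:CDOQ}, the matrix $M$ lies in $\mathrm{GL}_r(\mathbb{C})$, so $\overline{\psi}_{a,b}(x) \approx \sum_{d} (M^{-1})_{b,d}\,\chi_{a,d}(x)$; the whole content of the proposition is the identification of $(M^{-1})_{b,d}$ with the coefficients in the statement.

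To carry out the inversion I would pass to exponential variables $u_c = e^{2\pi{\rm i}P_c/\hslash}$ ($c \in [r]$) and $v_c = e^{2\pi{\rm i}Q_c/\hslash}$ ($c \in [r - 1]$), so that $M_{b,d} = 2\pi{\rm i}/(u_b v_d - 1)$ for $d \in [r - 1]$ and $M_{b,r} = 1$. This exhibits $M$ as a bordered Cauchy matrix: after rescaling the last column it is, up to the left diagonal factor $\mathrm{diag}(u_b^{-1})$, a Cauchy matrix in the nodes $(v_1, \dots, v_{r-1}, \infty)$ and $(u_1^{-1}, \dots, u_r^{-1})$, the constant last column being the $v_r \to \infty$ degeneration of $2\pi{\rm i}/(u_b v_r - 1)$. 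Its inverse is then given by the classical Cauchy(--Vandermonde) inversion formula, producing $(M^{-1})_{b,d}$ as an explicit ratio of products of the differences $u_b v_d - 1$, $u_b - u_{b'}$, $v_d - v_{d'}$. Converting each difference back via $2{\rm i}\sin(\pi(P + Q)/\hslash) = (uv)^{-1/2}(uv - 1)$, $2{\rm i}\sin(\pi(P - P')/\hslash) = (uu')^{-1/2}(u - u')$ and $2{\rm i}\sin(\pi(Q - Q')/\hslash) = (vv')^{-1/2}(v - v')$, one checks that all powers of $2{\rm i}$ cancel and the leftover half-integer powers of the $u_c$, $v_c$ assemble into the claimed prefactors $e^{-2{\rm i}\pi\hslash^{-1}Q_b}$ ($b \in [r-1]$) and $e^{-{\rm i}\pi\hslash^{-1}(|\mathbf{P}| + |\mathbf{Q}|)}$ ($b = r$), reproducing the two formulae of the statement.

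An alternative --- and probably the cleaner way to actually write it down --- would be to bypass the inversion formula and verify directly that the matrix $N$ with entries the displayed coefficients satisfies $M N = \mathrm{Id}$. For fixed $b, b' \in [r]$ the common factor $\prod_{c = 1}^{r - 1}\sin(\pi(P_{b'} + Q_c)/\hslash) / \prod_{c \neq b'}\sin(\pi(P_c - P_{b'})/\hslash)$ can be pulled out of every $N_{d,b'}$, and after the exponential substitution the identity $\sum_{d = 1}^{r} M_{b,d} N_{d,b'} = \delta_{b,b'}$ becomes a rational identity in $u_b$ whose only poles sit at $u_b = v_d^{-1}$ for $d \in [r - 1]$; it then follows by summing the residues of the associated rational function at these points together with $u_b = 0$ and $u_b = \infty$, i.e. by a standard partial-fraction argument. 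The cases $b \neq b'$ and $b = b'$, and within the latter the sub-cases $b' \in [r - 1]$ and $b' = r$, are all dealt with in the same way.

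The mathematical content is thus no more than a finite-dimensional linear-algebra computation; the only genuine obstacle is the bookkeeping --- tracking signs, the factors of $2{\rm i}$, and in particular the square-root branch factors $(uv)^{\pm 1/2}$ that appear when translating between the exponential and trigonometric presentations, together with the correct handling of the node at infinity in the Cauchy matrix.
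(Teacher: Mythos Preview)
Your proposal is correct and follows essentially the same route as the paper: pass to exponential variables (the paper uses $p_c = e^{{\rm i}\pi P_c/\hslash}$, $q_c = e^{{\rm i}\pi Q_c/\hslash}$, i.e.\ your $u_c = p_c^2$, $v_c = q_c^2$), recognise $M$ as a degenerate Cauchy matrix --- the paper realises the last column as the limit $q_r \to 0$ of a genuine Cauchy matrix rather than as a ``node at infinity'', but this is the same idea --- invert via the standard Cauchy-inverse formula, and translate the products of differences back to sines. Your alternative of checking $MN = \mathrm{Id}$ by a residue/partial-fraction argument would also work and is a reasonable way to avoid the bookkeeping.
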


\begin{proof}
Let us introduce $p_c = e^{{\rm i}\pi \hslash^{-1} P_c}$ and $q_c = e^{{\rm i}\pi \hslash^{-1} Q_c}$. The linear system \eqref{lineartoinv} takes the form
\[
\chi_{a,b}(x) \approx \sum_{d = 1}^{r} M_{b,d} \overline{\psi}_{a,d}(x),
\]
with the matrix
\[
M_{b,d} = \left\{\begin{array}{cll} \dfrac{2{\rm i}\pi}{p_b^2 - q_d^2} & &  \text{if}\,\,d  \in [r - 1], \\[10pt] 1 && \text{if}\,\,d = r. \end{array}\right.
\]
If we introduce an auxiliary variable $q_r$ and the Cauchy matrix of size $r$
\[
\widetilde{M}_{b,d} = \frac{1}{p_b^2 - q_d^{-2}},
\]
we observe that
\[
\mathbf{M} = \lim_{q_r \rightarrow 0} \widetilde{\mathbf{M}} \cdot \mathbf{K} \qquad \text{where}\,\,\mathbf{K} = \text{diag}(2{\rm i}\pi,\ldots,2{\rm i}\pi,-q_r^{-2}).
\]
Using the well-known formula for the inverse of Cauchy matrices, we get
\[
(\widetilde{\mathbf{M}} \cdot \mathbf{K})^{-1}_{d,b} = K_{d,d}^{-1} \cdot (p_b^2 - q_d^{-2}) \prod_{\substack{c = 1 \\ c \neq b}}^{r} \frac{p_b^2 - q_c^{-2}}{q_c^{-2} - q_d^{-2}} \prod_{\substack{c = 1 \\ c \neq b}}^{r} \frac{-q_b^{-2} + p_{c}^2}{p_b^2 - p_c^2}.
\]
Taking the limit $q_r \rightarrow 0$ yields
\[
M^{-1}_{d,b} = \left\{\begin{array}{lll} \dfrac{p_b^2 - q_d^{-2}}{2{\rm i}\pi} \prod_{\substack{c = 1 \\ c \neq d}}^{r - 1} \dfrac{p_b^2 - q_c^{-2}}{- q_d^{-2} + q_c^{-2}} \prod_{\substack{c = 1 \\ c \neq b}}^{r} \dfrac{-q_d^{-2} + p_c^2}{p_b^2 - p_c^2} && \text{if} \,\,d \in [r-1] \\[15pt]
\dfrac{\prod_{c = 1}^{r - 1} (p_b^2 - q_c^{-2})}{\prod_{\substack{c = 1 \\ c \neq b}}^{r} (p_b^2 - p_c^2)} && \text{if}\,\, d = r \end{array}\right.
\]
This rational expression can be simplified back to trigonometric functions and leads to the claimed formulae.
\end{proof}

\subsubsection{Lax form} 
In this section, we write down the quantum curves obtained  for the Gaiotto and CDO curves in  \cref{pr:QGai} and \cref{pr:CDOQ} in Lax form,  i.e., as first-order matrix-valued linear ordinary differential equations. For this purpose, consider the column vector 
\[
\Psi^+_a = \left( \psi^{+,1}_a(\zeta),\psi^{+,2}_a(\zeta),\ldots, \psi^{+,r}_a(\zeta)\right)^{T},
\] where $\psi^{+,r}_a(\zeta) = \psi^+_a(\zeta)$ is the wave function obtained from topological recursion as defined in  \eqref{principsi}. The other functions   $\psi^{+,1}_a(\zeta),\ldots,\psi^{+,r-1}_a(\zeta),$  are defined in terms of the $\omega_{g,n}$ following  \cite[Section 5.2]{BE17}, but we do not recall the precise form here --- it can be deduced from the proof of \cref{prop:Lax} --- as we will not use it. From the following Lax form of the quantum curve, one can easily write $\psi^{+,j}_a(\zeta)$ for $j \neq r$ in terms of the $ \psi^+_a(\zeta)$ and its derivatives. Recall the notation $\mathbf{Q}^{[a]}$ for the tuple $(Q_b)_{b \neq a}$. 

Applying the results of \cite{BE17}, we obtain the following Lax forms.

\begin{prop}\label{prop:Lax}
	In the Gaiotto case, for any $a \in [r]$ we have
	\begin{equation}
		D \Psi^+_a = 
		\begin{pmatrix}
			e_1(\mathbf Q^{[a]}) & 1 & 0& \cdots &0 \\[4pt]
			- e_2(\mathbf Q^{[a]}) & 0 &\ddots &  &\vdots \\[4pt]
			\vdots & \vdots & & 1& 0 \\[4pt]
			(-1)^r e_{r-1}(\mathbf Q^{[a]}) & 0 & \cdots &0 & - \Lambda^r \\[4pt]
			(-1)^r x^{-1} & 0   & \cdots &0 & Q_a
		\end{pmatrix}
		\Psi^+_a\,.
	\end{equation}
	In the CDO case, for  $a \in [r-1]$ we have
	\begin{equation}
		D \Psi^+_a = 
		\begin{pmatrix}
			(-\Lambda)^{-r}e_0(\mathbf{Q}^{[a]})x - e_1(\mathbf{P}) & 1 & 0& \cdots & (-1)^{r-1} Q_a e_0\left(\mathbf Q^{[a]}\right) x \\[4pt]
			-(-\Lambda)^{-r}e_1(\mathbf{Q}^{[a]})x  - e_2(\mathbf{P}) & 0 &\ddots &  & \vdots \\[4pt]
			\vdots & \vdots & & 1&  Q_a e_{r-3}\left(\mathbf Q^{[a]}\right) x  \\[4pt]
			(-1)^{r - 2}(-\Lambda)^{-r}e_{r - 1}(\mathbf{Q}^{[a]})x  - e_{r - 1}(\mathbf{P}) & 0 & \cdots & 0& - x e_{r-1}(\mathbf Q)- \Lambda^r e_r(\mathbf P) \\[4pt]
			\Lambda^{-r} & 0   & \cdots &0 & 0
		\end{pmatrix}
		\Psi^+_a\,,
	\end{equation}
	while for $a = r$ we have 
	\begin{equation}
		D \Psi^+_r = 
		\begin{pmatrix}
			-e_1(\mathbf P) & 1 & 0& \cdots & (-1)^{r-1} \left(  e_1(\mathbf Q)+e_1(\mathbf P)\right)x  \\[4pt]
			-e_2(\mathbf P) & 0 &\ddots &  & \vdots \\[4pt]
			\vdots & \vdots & & 1&  (-1)^{r-1}\big(e_{r-2}(\mathbf P) + (-1)^{r-1} e_{r-2}(\mathbf Q)\big)x  \\[4pt]
			-e_{r-1}(\mathbf P) & 0 & \cdots & 0& \big((-1)^{r-1}e_{r-1}(\mathbf P)- e_{r-1}(\mathbf Q)\big)x - \Lambda^r e_r(\mathbf P) \\[4pt]
			\Lambda^{-r} & 0   & \cdots &0 & (-\Lambda)^{-r}x
		\end{pmatrix}
		\Psi^+_r\,.
	\end{equation}
\end{prop}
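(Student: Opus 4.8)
\textbf{Proposal for the proof of Proposition~\ref{prop:Lax}.}

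The plan is to derive all three Lax forms by specialising the general construction of \cite[Section 5.2]{BE17}. For a genus-zero spectral curve whose defining polynomial is $P(x,y) = \sum_{i=0}^r y^{r-i} p_i(x) = 0$, Bouchard--Eynard attach to the wave function $\psi^+_a$ an auxiliary column vector $\Psi^+_a = (\psi^{+,1}_a,\ldots,\psi^{+,r}_a)^T$, with $\psi^{+,r}_a = \psi^+_a$, such that $D\Psi^+_a = \mathsf{M}_a(x)\,\Psi^+_a$ for an explicit $r\times r$ matrix $\mathsf{M}_a(x)$ whose entries are built from the coefficients $p_i(x)$ and the constants $C_{r-i}$ that already appeared in the proofs of \cref{pr:QGai} and \cref{pr:CDOQ}. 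So the first step is simply to recall/transcribe the shape of $\mathsf{M}_a(x)$ from \cite{BE17}: the companion-type structure (a superdiagonal of $1$'s, a first column holding the ``$x$-free'' coefficients, and a last column holding the $C$-constants together with the coefficient of the top $y$-power), specialised to the case at hand where all the $\alpha_i$ are integers so that $D_i = D$ uniformly.

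Second, I would plug in the two polynomials already written down in the proofs of the quantum-curve propositions. For the Gaiotto curve one has $p_i(x) = (-1)^{r-i} x^{r-i+1} e_i(\mathbf{Q}) + \delta_{i,r}\Lambda^r$, together with the telescoped constants $C_{r-i}(Q_a) = (-1)^{i-1} Q_a\, e_{r-1-i}(\mathbf{Q}^{[a]})$ computed in the proof of \cref{pr:QGai}; substituting these into the companion matrix, and using $e_k(\mathbf{Q}) = Q_a e_{k-1}(\mathbf{Q}^{[a]}) + e_k(\mathbf{Q}^{[a]})$ to rewrite the last column, collapses the matrix to exactly the stated form (the first column becomes $(e_1(\mathbf{Q}^{[a]}), -e_2(\mathbf{Q}^{[a]}),\ldots,(-1)^r e_{r-1}(\mathbf{Q}^{[a]}), (-1)^r x^{-1})^T$ and the bottom-right block is $\mathrm{diag}(-\Lambda^r, Q_a)$ up to the off-diagonal entries). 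For the CDO curve one uses instead $p_i(x) = (-1)^{r-i} x^{r+1-i} e_{i-1}(\mathbf{Q}) + x^{r-i}\Lambda^r e_i(\mathbf{P})$ and the constants $C_{r-i}(Q_a) = (-1)^{i-1} Q_a e_{r-i-2}(\mathbf{Q}^{[a]})$ for $a\in[r-1]$ and $C_{r-i}(\infty) = (-1)^{i-1} e_{r-i-1}(\mathbf{Q})$ for $a=r$, both already established in the proof of \cref{pr:CDOQ}; the same substitution-and-regroup step yields the two CDO matrices, the only novelty being the extra $x$-dependence in the last column and the appearance of $(-\Lambda)^{-r}$-rescaled $e_j(\mathbf{Q}^{[a]})$ entries in the first column coming from the $\Lambda^r e_i(\mathbf{P})$ piece of $p_i$.

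Third, one should check compatibility: that the bottom row of each $D\Psi^+_a = \mathsf{M}_a\Psi^+_a$ reproduces, after eliminating $\psi^{+,1}_a,\ldots,\psi^{+,r-1}_a$ via the other rows, precisely the scalar ODEs \eqref{QcurveGai}, \eqref{Qrdiff0} and \eqref{Qrdiff}. This is the bookkeeping that guarantees the stated matrices are the correct quantisations, and it is a finite linear-algebra verification using $D(xf) = xDf + \hslash x f$ — i.e.\ exactly the manipulation $\hslash x = Dx - xD$ used repeatedly in the earlier proofs. The main obstacle I anticipate is purely notational rather than conceptual: matching the indexing and sign conventions of \cite{BE17} (their $\alpha_i$, their ordering of the auxiliary components, and whether the companion matrix is written for $P$ or its ``reversed'' polynomial) to the conventions of this paper, so that the off-diagonal entries in the last column land with the displayed signs and the $(-1)^{r-1}$ factors; there is no analytic difficulty, since all the needed series expansions and constants were already produced in \cref{pr:QGai} and \cref{pr:CDOQ}. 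Once the dictionary with \cite{BE17} is fixed, the three formulas follow by direct substitution, and I would present only the Gaiotto computation in detail and indicate that the CDO cases are identical up to replacing the data $(p_i, C_{r-i})$.
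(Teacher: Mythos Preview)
Your overall strategy --- specialise the Bouchard--Eynard Lax system using the coefficients $p_i(x)$ and the constants $C_{r-i}$ already computed in \cref{pr:QGai} and \cref{pr:CDOQ} --- is correct, and it is exactly what the paper does. However, there is a genuine gap in your second step.

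You write that plugging the data into the companion matrix from \cite{BE17} ``collapses the matrix to exactly the stated form''. This is not true as stated. The raw Lax system of \cite[Theorem 5.11]{BE17} is written for a vector $(\varphi_1,\ldots,\varphi_r)^T$ whose last component is $\varphi_r = -p_r(x)\,\psi_a^+$, \emph{not} $\psi_a^+$ itself, and the resulting matrix has entries of the form $-\hslash\,\frac{x\,C_i(Q_a)}{p_r(x)}$ in the last column and $\hslash\,\frac{x\,e_r(\mathbf{Q})}{p_r(x)}$ in the bottom-right corner. These rational functions with denominator $p_r(x) = x e_r(\mathbf{Q}) + \Lambda^r$ (Gaiotto) or $p_r(x) = x e_{r-1}(\mathbf{Q}) + \Lambda^r e_r(\mathbf{P})$ (CDO) do not simplify away by the identity $e_k(\mathbf{Q}) = Q_a e_{k-1}(\mathbf{Q}^{[a]}) + e_k(\mathbf{Q}^{[a]})$ alone. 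What is missing is an explicit gauge transformation: one must set
\[
\Psi_a^+ \;=\; \begin{pmatrix} \mathbf{Id}_{r-1} & \big(\tfrac{x\,C_i(Q_a)}{p_r(x)}\big)_{i=1}^{r-1} \\[4pt] \mathbf{0} & -\tfrac{1}{p_r(x)} \end{pmatrix}\,(\varphi_1,\ldots,\varphi_r)^T,
\]
and then compute how the connection matrix transforms. Only after this conjugation (and the extra $D$-term coming from differentiating the gauge matrix) do the $p_r(x)$ denominators cancel and the telescoping identity for $C_i(Q_a)$ produces the clean entries $e_j(\mathbf{Q}^{[a]})$, $Q_a$, $-\Lambda^r$, etc. The paper carries this out explicitly; your proposal skips it, and without it the claimed matrices will not appear.
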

\begin{proof} 
	We start the Lax form of the quantum curve obtained by Bouchard and Eynard in \cite[Theorem 5.11]{BE17} for the half Seiberg--Witten  curve with the choice of base point $\beta = Q_a$:
		\begin{equation*}
		D 	\begin{pmatrix}
		 \varphi_1 \\[4pt]
		 \varphi_2\\[4pt]
			\vdots \\[4pt]
			\varphi_{r-1}\\[4pt]
			\varphi_r
		\end{pmatrix} = 
		\begin{pmatrix}
			e_1(\mathbf Q) & 1 & 0&  \cdots & -\hslash \frac{x C_1(Q_a)}{x e_r(\mathbf Q) + \Lambda^r} \\[4pt]
			- e_2(\mathbf Q) & 0 &\ddots &  &\vdots \\[4pt]
			\vdots & \vdots & & 1& 0 \\[4pt]
			(-1)^r e_{r-1}(\mathbf Q) & 0 & \cdots &0 & 1-\hslash \frac{x C_{r-1}(Q_a)}{x e_r(\mathbf Q) + \Lambda^r} \\[4pt]
			\frac{(-1)^{r+1} x e_r(\mathbf Q) + \Lambda^r}{x} & 0   & \cdots &0 & \hslash \frac{x e_r(\mathbf Q) }{x e_r(\mathbf Q) + \Lambda^r}
		\end{pmatrix}
	\begin{pmatrix}
		\varphi_1 \\
		\varphi_2\\
		\vdots \\
		\varphi_{r-1}\\
		\varphi_r
	\end{pmatrix},
	\end{equation*} where $\varphi_r := - (x e_r(\mathbf Q) + \Lambda^r) \psi^+_a(\zeta)$. The  remaining  $\varphi_i $ for $i \in [r-1]$ are defined in \cite[Section 5.2]{BE17} --- where they are denoted as $\psi_m(x;D)$ with the divisor $D = [\zeta] - [Q_a]$. Also recall the constants $C_i (Q_a)$ for $i \in [r-1]$ calculated in \eqref{CQa}.
	By defining 
	\[
	\Psi^+_a := \begin{pNiceArray}{ccc|c}
		\Block{3-3}{\textbf{Id}_{r - 1}} & &  & \frac{C_1(Q_a) x}{xe_r(\mathbf Q) + \Lambda^r}\\[4pt]
		  & & & \vdots \\[4pt]
		  &&& \frac{C_{r-1}(Q_a) x}{xe_r(\mathbf Q) + \Lambda^r}\\[4pt]
		\hline
		\Block{1-3}{\textbf{0}} &&  & -\frac{1}{xe_r(\mathbf Q) + \Lambda^r}
	\end{pNiceArray} \begin{pmatrix}
	\varphi_1 \\[4pt]
	\vdots \\[4pt]
	\varphi_{r-1}\\[4pt]
	\varphi_r
	\end{pmatrix},
	\]
the differential equation for $\Psi^+_a$ takes the claimed Lax form, after some simplifications arising from the form of the $C_i(Q_a)$.
	
	In the CDO case, with base point $\beta  = Q_a$ and $a \in [r-1]$, we get
	\begin{equation*}
		D 	\begin{pmatrix}
			\varphi_1 \\[4pt]
			\varphi_2\\[4pt]
			\vdots \\[4pt]
			\varphi_{r-1}\\[4pt]
			\varphi_r
		\end{pmatrix} = 
		\begin{pmatrix}
			-\frac{p_1(x)}{x^{r-1}\Lambda^r}& 1 & 0&  \cdots & -\hslash \frac{x C_1(Q_a)}{p_r(x)} \\[4pt]
			-\frac{p_2(x)}{x^{r-2}\Lambda^r}& 0 &\ddots &  &\vdots \\[4pt]
			\vdots & \vdots & & 1& 0 \\[4pt]
				-\frac{p_{r-1}(x)}{x\Lambda^r} & 0 & \cdots &0 & 1-\hslash \frac{x C_{r-1}(Q_a)}{p_r(x)} \\[4pt]
				-\frac{p_r(x)}{\Lambda^r} & 0   & \cdots &0 & \hslash \frac{x e_{r-1}(\mathbf Q) }{p_r(x)} 
		\end{pmatrix}
		\begin{pmatrix}
			\varphi_1 \\[4pt]
			\varphi_2\\[4pt]
			\vdots \\[4pt]
			\varphi_{r-1}\\[4pt]
			\varphi_r
		\end{pmatrix},
	\end{equation*} where $\varphi_r := - p_r(x) \psi^+_a(\zeta)$. Again, the remaining  $\varphi_i $ for $i \in [r-1]$ are defined in \cite[Section 5.2]{BE17}. Also recall the $C_i (Q_a)$ for $i \in [r-1]$ calculated in \eqref{CforCDO} and that
	\[
	\forall i \in \{0\} \sqcup [r] \qquad p_i(x) = (-1)^{r - i}x^{r + 1 - i}e_{i - 1}(\mathbf{Q}) + x^{r - i} \Lambda^r e_i(\mathbf{P}).
	\]
		With the definition
\[
\Psi^+_a := \begin{pNiceArray}{ccc|c}
	\Block{3-3}{\mathbf{Id}_{r-1}} & &  & \frac{x C_1(Q_a) }{p_r(x)}\\[4pt]
	& & & \vdots \\[4pt]
	&&& \frac{x C_{r-1}(Q_a) }{p_r(x)}\\[4pt]
	\hline
	\Block{1-3}{\mathbf{0}} &&  & -\frac{1}{p_r(x)}
\end{pNiceArray} \begin{pmatrix}
	\varphi_1 \\[4pt]
	\vdots \\[4pt]
	\varphi_{r-1}\\[4pt]
	\varphi_r
\end{pmatrix},
\]
we obtain the claimed Lax form. We omit the details in the case with base point $\beta = \infty$, which is similar.
\end{proof}

\vspace{0.5cm}

\subsection{Determinantal formulae}
\label{sec:detform}

The partition function of topological recursion on genus $0$ spectral curves, which includes the half Seiberg--Witten and CDO spectral curves, satisfies KP integrability. More precisely, we have

\begin{prop}
\label{prop:hKP} The Gaiotto and the CDO vectors are $\hslash$-KP tau functions, separately for each $a \in [r]$ in the series of times $(J_{-k}^{a})_{k \in \mathbb{Z}_{> 0}}$.
\end{prop}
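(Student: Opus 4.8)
The plan is to prove, for each fixed $a \in [r]$, that freezing $J^{b}_{-k} = 0$ for $b \neq a$ turns $\ket{\Gamma_{\Lambda}}$ (resp. $\ket{\Gamma_{\Lambda}^{\text{CDO}}}$), regarded as a formal series $Z_{a}$ in the times $t_{k} := J^{a}_{-k}/k$, $k \geq 1$, into the $\hslash$-KP tau function attached to the topological-recursion wave functions on the half Seiberg--Witten (resp. CDO) spectral curve. By \eqref{evaop} the principal specialisation $\mathrm{ev}^{\pm}_{a}$ sends $t_{k} \mapsto \mp\hslash\, x^{-k}$, so the $t_{k}$ are, up to the scalar $\mp\hslash$, exactly the KP times dual to the coordinate $x^{-1}$ at the pole $Q_{a}$ of $x$ (at $\infty$ when $a = r$ in the CDO case), and \eqref{principsi} reads $\mathrm{ev}^{+}_{a}(Z_{a}) = \psi^{+}_{a}(\zeta)$, $\mathrm{ev}^{-}_{a}(Z_{a}) = \psi^{-}_{a}(\zeta)$ with $\psi^{\pm}_{a}$ the wave and dual wave functions of \eqref{def:wave}.

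First I would feed in the determinantal formulae of \cref{bispiq} and \cref{bispiq2}: expanded at $Q_{a}$, the disconnected correlators assembled from the $\omega_{g,n}$ of \cref{th:main} (resp. \cref{th:main2}) equal $\det\big(K_{a}(\zeta_{i},\zeta_{j})\big)_{1\le i,j\le n}$ for a kernel $K_{a}$ built bilinearly from the bases of solutions of the quantum curves of \cref{pr:QGai} (resp. \cref{pr:CDOQ}), equivalently from the components of the Lax vector $\Psi^{+}_{a}$ of \cref{prop:Lax} and its dual. Comparing this determinantal expansion with the genus expansion \eqref{Gformalexp} identifies $\partial_{t_{i_{1}}}\!\cdots\partial_{t_{i_{n}}}\log Z_{a}$ with the matching coefficients of $\omega_{g,n}$, so the claim reduces to the general principle — the genus-zero case of the Berg\`ere--Eynard / Borot--Eynard determinantal/integrability correspondence — that a system of correlators admitting such a Christoffel--Darboux-type determinantal representation is the logarithmic derivative of an $\hslash$-KP tau function.

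The content of that principle is a Hirota bilinear identity for $K_{a}$, i.e. that the span of $\{x^{j}\psi^{+}_{a} : j \geq 0\}$ in $\mathbb{C}(\!(x^{-1})\!)$ is a point of the Sato Grassmannian and $\psi^{-}_{a}$ spans its formal orthogonal complement. I would obtain this from the linear system of \cref{prop:Lax}: the Lax connection $D\Psi^{+}_{a} = M_{a}(x)\Psi^{+}_{a}$ is rational in $x$ with poles only at the branch points of $x$ and at $x = \infty$, so the $x$-deformation flows it generates act on the Grassmannian point through the standard KP vertex operators, while the linear loop equation of \cref{prop:ALE} (the $i = 1$ $\mathcal{W}$-constraint) rules out spurious singularities and places the orbit in the big cell, i.e. a genuine KP orbit. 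The $\hslash$-dependence is homogeneous (weight $1$ for $\hslash$, $0$ for $x$, $-1$ for $\partial_{x}$), which promotes the ordinary Hirota equations to their $\hslash$-deformed form; the argument is identical for $\ket{\Gamma_{\Lambda}^{\text{CDO}}}$ with the CDO Lax forms of \cref{prop:Lax}, and it is consistent with the classical fact, visible through \cref{cor:HforG}--\cref{cor:HforZ}, that weighted Hurwitz partition functions are KP tau functions.

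The step I expect to require real care is the $(0,1)$ contribution when $a = r$ in the CDO case: there $\omega_{0,1}$ has a \emph{double} pole at the base point $\infty$, so $\psi^{+}_{r}$ carries the exponential prefactor $e^{\hslash^{-1}(-\Lambda)^{-r}x}$ and its local expansion is merely asymptotic (the divergent ${}_{r}F_{r-2}$-type series of \cref{pr:CDOQ}). One must therefore run the whole discussion at the level of formal power series in the $t_{k}$ — legitimate since the $\Phi_{g,n}$ of \cref{thm:genZ} are honest formal power series in $\Lambda^{r}$ — and verify, via the regularisations of \cref{01regintdef} and \cref{02intlem}, that this prefactor merely records the leading linear term of $\int^{\zeta}\omega_{0,1}$ and does not displace the Grassmannian point. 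Everything else is a direct application of the determinantal-formula machinery already set up; the bookkeeping around this irregular singularity is where the real work lies.
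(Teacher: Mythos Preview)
Your argument is circular within the paper's logical structure. You invoke \cref{bispiq} and \cref{bispiq2} as inputs, but if you look at their proofs in the paper, both of them \emph{use} \cref{prop:hKP}: the proof of \cref{bispiq} says ``Since $\hslash$-KP integrability is known to be equivalent to Hirota bilinear difference equations, Proposition~\ref{prop:hKP} justifies that the conjecture holds\ldots'', and the proof of \cref{bispiq2} says the determinantal formulae are ``equivalent to the $\hslash$-KP integrability stated in Proposition~\ref{prop:hKP}''. So you cannot take the determinantal formulae as already established and deduce KP from them; in this paper the logical flow is the other way round.

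Even setting the circularity aside, the detour through the Lax system and the Sato Grassmannian is both much heavier than needed and not actually carried out: the assertions that ``the $x$-deformation flows act on the Grassmannian point through the standard KP vertex operators'' and that homogeneity in $\hslash$ ``promotes the ordinary Hirota equations to their $\hslash$-deformed form'' are exactly the non-trivial content one would have to prove, and you have only named them. The paper's proof, by contrast, is a one-line citation: both spectral curves have genus $0$ with $x$ rational and $xy = \zeta$, and the coefficients $\Phi_{g,n}\big[\begin{smallmatrix} a & \cdots & a \\ k_1 & \cdots & k_n \end{smallmatrix}\big]$ are, by \cref{th:main} and \cref{th:main2}, precisely the expansion coefficients of the TR correlators $\omega_{g,n}$ at the pole $\zeta = Q_a$ of $x$; the $\hslash$-KP property for such expansions is then the main theorem of \cite{ABDBKS} (see also \cite{Zhou15}). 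No determinantal formulae, no Lax matrices, no Grassmannian analysis, and no special treatment of the irregular singularity at $a = r$ in the CDO case are needed at this stage --- those come afterwards as consequences.
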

\begin{proof}
For the half Seiberg--Witten or CDO spectral curve, $x(\zeta)$ is a rational function and $x(\zeta)y(\zeta) = \zeta$. The series expansions of $\big(\omega_{g,n}\,\,:\,\,(g,n) \in \mathbb{Z}_{\geq 0} \times \mathbb{Z}_{> 0}\big)$ for these spectral curves at $\zeta = Q_a$ in the variable $x(\zeta)$ is encoded in $\ket{\Gamma_{\Lambda}}$ seen as a generating series in the formal variables $(J_{-k}^{a})_{k \in \mathbb{Z}_{> 0}}$. $\hslash$-KP integrability is then covered by \cite{ABDBKS} --- see also \cite{Zhou15}.
\end{proof}

This has several remarkable consequences which we now explain.

\subsubsection{Bispinor in terms of wave functions}

\begin{defn}
We define the bispinor
\[
K(\zeta_1,\zeta_2) = \frac{\sqrt{\dd \zeta_1 \dd \zeta_2}}{\zeta_1 - \zeta_2} \exp\left( \hslash^{-1} \int_{\zeta_2}^{\zeta_1} \omega_{0,1} + \sum_{(g,n) \in \mathbb{Z}_{\geq 0} \times \mathbb{Z}_{> 0}} \frac{\hslash^{2g - 2 + n}}{n!} \int_{\zeta_2}^{\zeta_1} \cdots \int_{\zeta_2}^{\zeta_1} \omega_{g,n}\right).
\]
\end{defn}

There is an analogue of the Christoffel--Darboux formula for this bispinor.
\begin{prop}
\label{bispiq} For the half Seiberg--Witten or the CDO spectral curve, we have
\[
K(\zeta_1,\zeta_2) = - \frac{\sum_{a = 1}^{r} \psi_a^+(\zeta_1)\psi_a^-(\zeta_2) \,\sqrt{\dd x(\zeta_1) \dd x(\zeta_2)}}{x(\zeta_1) - x(\zeta_2)}.
\]
\end{prop}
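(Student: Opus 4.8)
The plan is to deduce the Christoffel--Darboux-type formula for the bispinor from the KP integrability established in \cref{prop:hKP} together with the principal specialisation identity \eqref{principsi} relating the wave functions to the partition function. The starting point is the general fact, available because $x(\zeta)$ is rational and $x(\zeta)y(\zeta) = \zeta$ for both the half Seiberg--Witten and the CDO spectral curves, that the system of correlators $(\omega_{g,n})_{g,n}$ produced by topological recursion on a genus $0$ curve assembles into an $\hslash$-KP tau function when expanded at any of the poles of $x$; this is what \cite{ABDBKS} provides and what we recorded in \cref{prop:hKP}. For such tau functions there is a standard bilinear (Hirota/Sato) description of the two-point kernel, and the content of the proposition is to identify the abstract kernel built from the $\omega_{g,n}$ with the concrete sum $\sum_a \psi^+_a(\zeta_1)\psi^-_a(\zeta_2)$ over the $r$ sheets.

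First I would recall the definition of the wave functions \eqref{def:wave}, \eqref{principsi}: each $\psi^\pm_a$ is a formal WKB series whose exponent involves $\hslash^{-1}\int_{Q_a}^\zeta \omega_{0,1}$ (regularised as in \cref{01regintdef}) plus the stable integrals. The bispinor $K(\zeta_1,\zeta_2)$ has a completely parallel structure, with the single integral $\int_{\zeta_2}^{\zeta_1}$ replacing the two base-point integrals $\int_{Q_a}^{\zeta_1}$ and $\int_{Q_a}^{\zeta_2}$. The key algebraic input is the ``integration-by-parts'' identity for nested integrals of the $\omega_{g,n}$: writing $\int_{\zeta_2}^{\zeta_1} = \int_{Q_a}^{\zeta_1} - \int_{Q_a}^{\zeta_2}$ inside the exponential and expanding, one gets a double sum over how the $n$ arguments of each $\omega_{g,n}$ are distributed between the two endpoints. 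Resumming this is exactly the combinatorial manipulation that converts the ``connected'' generating series in $K$ into the product $\psi^+_a(\zeta_1)\psi^-_a(\zeta_2)$ — the sign flip $\hslash \mapsto -\hslash$ in $\psi^-$ accounts for the orientation of the $\zeta_2$-integral. The prefactor $\frac{\sqrt{\dd\zeta_1\dd\zeta_2}}{\zeta_1-\zeta_2}$ versus $\frac{\sqrt{\dd x(\zeta_1)\dd x(\zeta_2)}}{x(\zeta_1)-x(\zeta_2)}$ is reconciled using $x(\zeta)y(\zeta)=\zeta$ together with \cref{02intlem}: the difference between the two kernels, at the level of the $(0,2)$ contribution, is precisely $\ln\big(\frac{(\zeta_1-\zeta_2)^2}{(x(\zeta_1)-x(\zeta_2))^2}\cdot\frac{1}{x'(\zeta_1)x'(\zeta_2)}\big)$-type terms, which exponentiate to the Jacobian factors; and the sum over $a$ of $(x(\zeta)-x(\zeta_a))^{-1}$-type poles collapses, by the partial fraction / residue identity for the rational function $x$, to the single pole at $x(\zeta_1)=x(\zeta_2)$.

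Concretely the steps are: (i) invoke \cref{prop:hKP} to know we are dealing with an $\hslash$-KP tau function expanded at each $Q_a$; (ii) recall from \cite{ABDBKS} (or \cite{Zhou15}) the bilinear formula expressing the two-point function of such a tau function as a sum of wave-function $\times$ dual-wave-function over the sheets of $x$; (iii) match normalisations by comparing the leading $(0,1)$ and $(0,2)$ terms — here \cref{01regintdef}, \cref{02intlem}, and the identities $x(\zeta)y(\zeta)=\zeta$, $\omega_{0,2}=\frac{\dd\zeta_1\dd\zeta_2}{(\zeta_1-\zeta_2)^2}$ do all the work, producing the Jacobian $\sqrt{x'(\zeta_1)x'(\zeta_2)}$ and converting $\frac{1}{\zeta_1-\zeta_2}$ into $\frac{1}{x(\zeta_1)-x(\zeta_2)}$ after summation over $a$; (iv) verify the overall sign, which is fixed by the $(0,2)$ normalisation and the convention $K \sim \frac{\sqrt{\dd\zeta_1\dd\zeta_2}}{\zeta_1-\zeta_2}$ as $\zeta_1\to\zeta_2$. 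The main obstacle I anticipate is step (iii): tracking the regularisations carefully enough that the prefactors, the powers of $x$ coming from the $\hslash^{-1}\int\omega_{0,1}$ terms at each $Q_a$, and the sheet-sum all combine cleanly into the stated closed form — in particular one must check that the ``anomalous'' contributions from the double pole of $\omega_{0,1}$ at $\zeta=\infty$ in the CDO case (the $e^{\hslash^{-1}(-\Lambda)^{-r}x}$ factors visible in \cref{leadingpsi2}) cancel between $\psi^+$ and $\psi^-$ in the product, so that $K$ remains a genuine bispinor with only the expected singularity on the diagonal. This cancellation is forced by the structure of \cref{02intlem}, but it requires a short explicit check.
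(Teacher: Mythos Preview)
Your proposal has a genuine gap at the heart of the argument. You claim that writing $\int_{\zeta_2}^{\zeta_1} = \int_{Q_a}^{\zeta_1} - \int_{Q_a}^{\zeta_2}$ inside the exponent of $K$ and expanding will ``resum'' into $\psi_a^+(\zeta_1)\psi_a^-(\zeta_2)$. This is false for any fixed $a$: the cross terms, where some of the $n$ arguments of $\omega_{g,n}$ are integrated to $\zeta_1$ and the others to $\zeta_2$, do not vanish and do not factor. So for a single base point one simply does \emph{not} obtain a product of two wave functions, and no partial-fraction identity in $x$ will fix this, since the obstruction lives in the stable $(g,n)$ terms, not in the $(0,1)$ or $(0,2)$ prefactors. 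The Christoffel--Darboux formula is precisely the nontrivial statement that these cross terms can be reorganised as a sum over the $r$ sheets; it is equivalent to (a consequence of) the Hirota bilinear identity, not a bookkeeping exercise.

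The paper's proof proceeds differently and contains a step you are missing entirely. It invokes \cite[Theorem~8.3]{BE12} (conditional on Hirota, which holds here by \cref{prop:hKP}), which gives the bispinor in the form
\[
K(\zeta_1,\zeta_2) = -\,\frac{\sum_{a,b} \psi_a^+(\zeta_1)\,A_{a,b}\,\psi_b^-(\zeta_2)\,\sqrt{\dd x(\zeta_1)\dd x(\zeta_2)}}{x(\zeta_1)-x(\zeta_2)},
\qquad
A^{-1}_{a,b} = \sum_{\zeta\in x^{-1}(x_0)} \psi_a^+(\zeta)\psi_b^-(\zeta),
\]
with an a priori unknown constant matrix $A$. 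The actual content of the proposition is that $A=\mathrm{Id}$, and the paper proves this by evaluating $A^{-1}$ in the limit $x_0\to\infty$ using the explicit leading asymptotics of Corollaries~\ref{leadingpsi1}--\ref{leadingpsi2}: for $a\neq b$ every term in the fibre sum is $O(1/x_0)$, while for $a=b$ only the sheet $\zeta\to Q_a$ contributes, and there $B_{a,a}\cdot B_{a,a}|_{\hslash\mapsto -\hslash}=1$. Your step~(iii) (``match normalisations via the $(0,1)$ and $(0,2)$ terms'') cannot substitute for this: the diagonal behaviour of $K$ as $\zeta_1\to\zeta_2$ constrains only a single scalar, not the $r\times r$ matrix $A$.
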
 
\begin{proof}
According to \cite[Conjecture 7.4]{BE12}, the partition function of topological recursion for compact spectral curves with $x$ and $y$ meromorphic satisfies Hirota bilinear difference equations. Since $\hslash$-KP integrability is known to be equivalent to Hirota bilinear difference equations, Proposition~\ref{prop:hKP} justifies that the conjecture holds for the Gaiotto and CDO spectral curves (since these curves have genus $0$, all Theta corrections in \cite{BE12} can be ignored). Therefore, we can use \cite[Theorem 8.3]{BE12}, which shows that the bispinor can be expressed as a bilinear expression in the wave function and the dual wave function\footnote{There is a missing minus sign in \cite[Theorem 8.3]{BE12} --- compare to \cite[Proposition 3.5]{BE12} where the sign appears correctly.}
\[
K(\zeta_1,\zeta_2) = - \frac{\sum_{a = 1}^{r} \psi_a^+(\zeta_1)A_{a,b}\psi_a^-(\zeta_2) \,\sqrt{\dd x(\zeta_1) \dd x(\zeta_2)}}{x(\zeta_1) - x(\zeta_2)},
\]
where
\begin{equation}
\label{Amatrix}
A^{-1}_{a,b} = \sum_{\zeta \in x^{-1}(x_0)} \psi_{a}^+(\zeta)\psi_{b}^-(\zeta)
\end{equation}
is a matrix which is independent of $x_0 \in \mathbb{P}^1$. Recall that $\psi_a^-$ is simply $\psi_a^+$ with $\hslash$ replaced by $-\hslash$. We compute the matrix $A$ by evaluating the right-hand side of \eqref{Amatrix} as $x_0 \rightarrow \infty$ using Corollaries~\ref{leadingpsi1} and \ref{leadingpsi2}. For $b \neq a$ or for $b = a$ but $\zeta$ not approaching $Q_a$, we have $\psi_{a}^+(z)\psi_{b}^-(z) = O(1/x_0)$ as $x_0 \rightarrow \infty$. Therefore $A$ is a diagonal matrix and the only term surviving for $a = b$ is the one corresponding to $\zeta \rightarrow Q_a$: it involves the product of constants $C_{a,a} \cdot C_{a,a}|_{\hslash \mapsto -\hslash}$ which is always $1$. Thus $A = \text{Id}$.
\end{proof}

\begin{defn}
The correlators are defined at least as formal series in $\hslash$ as
\[
\omega_{n}(\zeta_1,\ldots,\zeta_n) = \sum_{g \geq 0} \hslash^{2g - 2 + n} \omega_{g,n}(\zeta_1,\ldots,\zeta_n).
\]
The disconnected correlators are then defined as formal Laurent series in $\hslash$
\[
\omega^{\bullet}_{n}(\zeta_1,\ldots,\zeta_n) = \sum_{\mathbf{L} \vdash [n]} \prod_{L \in \mathbf{L}} \omega_{|L|}(\zeta_L).
\]
\end{defn}

\begin{prop}
\label{bispiq2} For the half Seiberg--Witten or the CDO spectral curve, we have for $n \geq 2$
\[
\omega_{n}(\zeta_1,\ldots,\zeta_n) = (-1)^{n + 1} \sum_{\sigma  = n\text{-cycle}} \prod_{i = 1}^n K(\zeta_i,\zeta_{\sigma(i)})
\]
in terms of the bispinor of Proposition~\ref{bispiq}. For $n = 1$, the bispinor is singular on the diagonal so this formula would not make sense, but a regularisation of it holds
\[
\omega_{1}(\zeta) = \hslash^{-1} \omega_{0,1}(\zeta) +  \left(\lim_{\zeta' \rightarrow \zeta}  \frac{\psi(\zeta',\zeta) e^{-\hslash^{-1} \int_{\zeta}^{\zeta'} \omega_{0,1}}}{\sqrt{\dd x(\zeta)\dd x(\zeta')}} - \frac{1}{x(\zeta) - x(\zeta')}\right) \dd x(\zeta).
\] 
\end{prop}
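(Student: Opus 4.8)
The statement to be proved is Proposition~\ref{bispiq2}: the $n$-point connected correlators $\omega_n$ (summed over genus, as formal $\hslash$-series) of the half Seiberg--Witten or CDO spectral curve are given by a connected-permutation sum over products of the bispinor $K$ of Proposition~\ref{bispiq}, together with the $n=1$ regularised version. The plan is to deduce this entirely from the already-established $\hslash$-KP integrability of the Gaiotto and CDO tau functions (Proposition~\ref{prop:hKP}) together with the expression of the bispinor in terms of wave functions (Proposition~\ref{bispiq}). The core of the argument is that the formula is a general feature of KP tau functions expressed through their Baker--Akhiezer (wave) functions --- it is the determinantal/free-fermionic representation of the correlation functions --- so the work is to cite the appropriate general statement and then check that our normalisations and the genus-zero hypotheses match.

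First I would recall the general determinantal formula. Given a pair of wave/dual-wave spinors $\psi^{\pm}$ for an $\hslash$-KP tau function, form the kernel $K(\zeta_1,\zeta_2)$ and note that by Proposition~\ref{bispiq} it coincides (for our curves) with the Christoffel--Darboux kernel $-\tfrac{\sum_a \psi_a^+(\zeta_1)\psi_a^-(\zeta_2)\sqrt{\dd x(\zeta_1)\dd x(\zeta_2)}}{x(\zeta_1)-x(\zeta_2)}$. The determinantal formula of Bergère--Eynard (see \cite[Section 3]{BE12} and the references therein for the general KP statement) expresses the disconnected correlators $\omega^{\bullet}_n$ as a determinant $\det_{1\le i,j\le n} K(\zeta_i,\zeta_j)$ --- more precisely as the ``connected determinant'' after removing the diagonal singular term --- and the connected correlators $\omega_n$ are obtained by taking the connected part of this determinant, which expands as the sum over $n$-cycles $\sigma$ of $\prod_i K(\zeta_i,\zeta_{\sigma(i)})$ with the sign $(-1)^{n+1}=\mathrm{sgn}(\sigma)$ for an $n$-cycle. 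Thus the second step is: invoke this general expansion together with Proposition~\ref{prop:hKP}, which guarantees we are in the KP setting where it applies (and, the curves being genus $0$, all theta-function corrections in \cite{BE12} vanish, exactly as in the proof of Proposition~\ref{bispiq}).

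For $n=1$ the formula $\prod_i K(\zeta_i,\zeta_{\sigma(i)})$ degenerates to $K(\zeta,\zeta)$, which is singular because $K$ has the $\tfrac{\sqrt{\dd\zeta_1\dd\zeta_2}}{\zeta_1-\zeta_2}$ pole on the diagonal. The third step is therefore a short direct computation: write $K(\zeta',\zeta)=\tfrac{\sqrt{\dd\zeta'\dd\zeta}}{\zeta'-\zeta}\exp\!\big(\hslash^{-1}\!\int_\zeta^{\zeta'}\omega_{0,1}+\sum_{2g-2+n>0}\tfrac{\hslash^{2g-2+n}}{n!}\int\!\cdots\!\int\omega_{g,n}\big)$, subtract off the $\omega_{0,1}$-exponential and the $\tfrac{1}{x(\zeta)-x(\zeta')}$ singular piece, and take $\zeta'\to\zeta$; expanding $\int_\zeta^{\zeta'}\omega_{0,1}$ to first order in $\zeta'-\zeta$ and comparing with the expansion of $\tfrac{\sqrt{\dd x(\zeta)\dd x(\zeta')}}{x(\zeta)-x(\zeta')}$ one recovers precisely $\hslash^{-1}\omega_{0,1}(\zeta)+\omega_1(\zeta)$ in the limit, which is the claimed regularised identity. (One uses here that $\psi(\zeta',\zeta)$ in the statement denotes the full bispinor-type object $K(\zeta',\zeta)$ up to the explicit $\sqrt{\dd x}$ and exponential factors, and that $x(\zeta)y(\zeta)=\zeta$ for both curves, so the diagonal expansions of the two kernels match to the required order.)

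The main obstacle I anticipate is bookkeeping rather than conceptual: making sure the normalisation conventions for $\psi_a^{\pm}$ coming from the principal specialisation \eqref{principsi}, for $K$, and for the correlators $\omega_n$ all line up with those in \cite{BE12} (including the sign issue flagged in the footnote to Proposition~\ref{bispiq}, namely the missing minus sign in \cite[Theorem 8.3]{BE12}), so that the sign $(-1)^{n+1}$ and the absence of any extra constant come out correctly. A secondary point to verify carefully is that the connected part of the Bergère--Eynard determinant is literally the $n$-cycle sum: this follows from the standard combinatorial identity that a determinant $\det(K(\zeta_i,\zeta_j))$ expands over all permutations with sign, its cumulant (connected part) retains only the transitive permutations, i.e.\ single $n$-cycles, and each $n$-cycle carries sign $(-1)^{n-1}=(-1)^{n+1}$; once the $\hslash$-KP statement of \cite{BE12} is in hand, this is purely formal. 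No genuinely new estimate is needed, so the proof should be short once the references are marshalled.
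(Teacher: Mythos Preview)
Your proposal is correct and takes essentially the same approach as the paper: invoke the determinantal formulae of \cite{BE12} (Theorem~8.1 for $n\geq 2$, Lemma~8.1 for $n=1$), noting that the conditional hypothesis (Conjecture~7.4 of \cite{BE12}, equivalently $\hslash$-KP integrability) holds by Proposition~\ref{prop:hKP}, exactly as already argued in the proof of Proposition~\ref{bispiq}. The paper's proof is even more terse---it simply cites these results and, as an alternative, points to \cite[Sections 3.1--3.2]{ABDBKS} for the equivalence between the determinantal formulae and $\hslash$-KP integrability---so your combinatorial unpacking of the $n$-cycle sum and the explicit $n=1$ diagonal computation are more detail than the paper provides, but not a different route.
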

For the disconnected correlators, this leads to the determinantal formulae
\[
\omega^{\bullet}_n(\zeta_1,\ldots,\zeta_n) = \,\,:\,\det_{1 \leq i,j \leq n} K(\zeta_i,\zeta_j)\,: \,,
\]
where the colons indicate that all factors $K(\zeta_i,\zeta_i)$ appearing in the determinant should be replaced by $\omega_{1}(\zeta_i)$.

\begin{proof}
For $n = 1$ this is e.g. \cite[Lemma 8.1]{BE12}. For $n \geq 2$ this is \cite[Theorem 8.1]{BE12} conditionally to \cite[Conjecture 7.4]{BE12}, but the conjecture holds in our case as explained in the proof of Proposition~\ref{bispiq}. Alternatively: the determinantal formulae are known by \cite[Section 3.1 and 3.2]{ABDBKS} to be equivalent to the $\hslash$-KP integrability stated in Proposition~\ref{prop:hKP}.
\end{proof}

\begin{rem} \label{connecrem} In the Gaiotto case, Propositions~\ref{bispiq} and \ref{bispiq2} together with the comment after Proposition~\ref{pr:QGai} would offer a complete description of the correlators $\omega_n$ as analytic functions of $\hslash$, if the constants $B_{a,b}$ of Corollary~\ref{leadingpsi2} could be explicitly computed. In the CDO case, one should rather use Proposition~\ref{psichi} for the analytic description, but again, the constants $B_{a,b}$ of Corollary~\ref{leadingpsi2} would have to be computed. These constants are expressed in terms of the connection coefficients $\widetilde{\psi}_{a}^+(Q_b)$ and have to do with the way the wave function computed by topological recursion is normalised. As of writing, we do not know how to obtain a description of the connection coefficients as \emph{analytic functions} of $\hslash$ or how to compute them, but we will relate them to the topological recursion free energies in Proposition~\ref{FGcon}.
\end{rem}
\vspace{0.5cm}

\subsection{Free energies}
\label{S54}
\medskip

\subsubsection{In topological recursion}

\medskip

\textbf{A. Definition.} Given a spectral curve $(S,x,y,\omega_{0,2})$, besides the correlators, the topological recursion also has a natural definition of \textit{free energies} $F_{g} = \omega_{g,0}$  \cite{EORev}:
\begin{equation}\label{eq:Fgdef}
\forall g \in \mathbb{Z}_{\geq 2} \qquad 	F_g =  \frac{1}{2-2g} \sum_{\rho \in \operatorname{Ram}(S)} \Res_{ \zeta = \rho } \left[  \left( \frac{1}{2} \int_{\sigma_{\rho}(\zeta)}^\zeta \omega_{0,1} \right) \omega_{g,1}(\zeta)\right].
\end{equation}
The free energies $F_0$ and $F_1$ are defined differently, see \cite{EORev}.

\medskip

\textbf{B. Deformations.} The free energies satisfy a functional equation that relate them to the connection matrix $\widetilde{\psi}_{a}^+(Q_b)$ that appeared in Definition~\ref{stablepsi} and Corollary~\ref{leadingpsi1} or \ref{leadingpsi2}.

\begin{prop}
\label{FGcon}Let $F_g = F_g(\mathbf{Q})$ be the topological recursion free energies of the half Seiberg--Witten or the CDO spectral curve, seen as functions of the charges $\mathbf{Q} = (Q_1,\ldots,Q_r)$. Let $\mathbf{e}_1,\ldots,\mathbf{e}_r$ denote the standard basis of $\mathbb{C}^r$. For any $a \neq b$, 
\[ 
\widetilde{\psi}_{a}^+(Q_b) = \widetilde{\psi}_{b}^-(Q_a) = \exp\left(\sum_{g \geq 2} \hslash^{2g - 2}\big(F_g(\mathbf{Q} + \hslash(\mathbf{e}_b - \mathbf{e}_a)) - F_g(\mathbf{Q})\big) \right). 
\]
\end{prop}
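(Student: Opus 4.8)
The plan is to read \cref{FGcon} as an instance of the deformation theory of topological recursion. The key point is that, in the uniformisation of $S_\Lambda$ (resp.\ $\mathcal S_\Lambda$) used in \cref{S53}, the charge $Q_a$ is minus the residue of $\omega_{0,1}=y\,\dd x$ at the pole $\infty_a$ of $x$, so the tangent vector $\mathbf e_b-\mathbf e_a$ on the space of charges is a ``residue-type'' deformation of the spectral curve data. First I would compute the infinitesimal variation $(\partial_{Q_b}-\partial_{Q_a})\omega_{0,1}$ and express it as $\int_{\gamma}\omega_{0,2}(\cdot,\zeta')$ for a generalised cycle $\gamma$ consisting of the path $\infty_a\rightarrow\infty_b$ together with second-kind cycles at $\infty_a,\infty_b$ (the latter encode the motion of the poles of $x$ in the chosen uniformisation and will have to be tracked carefully; in any case $\gamma$ avoids $\operatorname{Ram}(S_\Lambda)$, so the deformation is admissible). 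The Eynard--Orantin variational formula (\cite{EORev}) then propagates this to $(\partial_{Q_b}-\partial_{Q_a})\omega_{g,n}=\int_{\gamma}\omega_{g,n+1}$ for $2g-2+n>0$, and to $(\partial_{Q_b}-\partial_{Q_a})F_g=\int_{\gamma}\omega_{g,1}$ for $g\geq 2$; the cases $g=0,1$ are treated separately using the special definitions of $F_0,F_1$ together with the regularisations of \cref{01regintdef} and \cref{02intlem}.

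Second, I would integrate this to a finite shift. Since every $\omega_{g,m}$ with $2g-2+m>0$ is holomorphic on $S_\Lambda\setminus\operatorname{Ram}(S_\Lambda)$, the iterated integrals along $\gamma$ make sense and one gets by induction $(\partial_{Q_b}-\partial_{Q_a})^m F_g=\int_{\gamma}\cdots\int_{\gamma}\omega_{g,m}$ ($m$ factors). Taylor-expanding $F_g(\mathbf Q+\hslash(\mathbf e_b-\mathbf e_a))$ in the shift parameter and resumming gives the all-order identity $F_g(\mathbf Q+\hslash(\mathbf e_b-\mathbf e_a))-F_g(\mathbf Q)=\sum_{m\geq 1}\frac{\hslash^m}{m!}\int_{\gamma}\cdots\int_{\gamma}\omega_{g,m}$. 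Multiplying by $\hslash^{2g-2}$, summing over $g\geq 2$, and comparing with \cref{stablepsi} yields the claimed formula for $\widetilde{\psi}_a^+(Q_b)$, provided one checks that the part of $\gamma$ contributing to the stable expansion is exactly the path $\infty_a\rightarrow\infty_b$ (i.e.\ $\int_{Q_a}^{Q_b}$): the second-kind pieces of $\gamma$ and the $g\in\{0,1\}$ contributions should be precisely what has been stripped off, along with the prefactors $B_{a,b}$ and the powers of $x$, in passing from $\psi_a^+$ to $\widetilde{\psi}_a^+$ (compare \cref{leadingpsi1}, \cref{leadingpsi2}, \cref{02intlem}). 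An equivalent route to the first two steps is to invoke directly the relation between connection coefficients of the TR wave function and free-energy deformations of the spectral curve (e.g.\ \cite{BE12}), of which the proposition is a specialisation.

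Third, the symmetry $\widetilde{\psi}_a^+(Q_b)=\widetilde{\psi}_b^-(Q_a)$ is immediate from \cref{stablepsi}: in $\widetilde{\psi}_b^-(Q_a)$ each of the $n$ integrations runs from $Q_b$ to $Q_a$ and the $\hslash$-weight is $(-\hslash)^{2g-2+n}$; reversing the $n$ endpoints produces a sign $(-1)^n$, the correlators $\omega_{g,n}$ are symmetric in their arguments, and $(-\hslash)^{2g-2+n}(-1)^n=\hslash^{2g-2+n}$ since $2g-2$ is even, so the two iterated-integral series coincide.

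The main obstacle is the second step: turning the infinitesimal variational formulas into an exact statement for the finite shift, controlling the divergences of $\int\omega_{0,1}$ and $\int\!\int\omega_{0,2}$, and checking that the second-kind corrections together with the $g\leq 1$ terms reorganise so that exactly $\sum_{g\geq 2}$ survives. I expect the cleanest implementation is to fix the uniformisation so as to keep the poles of $x$ as stationary as possible (so that essentially only the residues of $\omega_{0,1}$ vary) and to read off the conclusion from the tau-function/wave-function formalism already used in \cref{sec:detform}.
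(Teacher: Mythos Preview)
Your proposal is essentially the paper's proof: variation of the spectral curve in the charges, the Eynard--Orantin variational formula to propagate to $F_g$ and $\omega_{g,n}$, Taylor expansion of the finite shift, and comparison with \cref{stablepsi}; your symmetry argument in step three is exactly right and the paper leaves it implicit. The one place you overcomplicate is the choice of deformation operator. You anticipate a generalised cycle with second-kind pieces at $\infty_a,\infty_b$ to track the motion of the poles, and flag the resulting bookkeeping as the main obstacle. The paper sidesteps this by differentiating at fixed $x$ rather than at fixed $\zeta$: with $\nabla_{Q_a}:=\tfrac{\dd}{\dd Q_a}\big|_{x\ \text{fixed}}$ and $\omega_{0,1}=\zeta\,\dd\ln x$, one finds directly $\nabla_{Q_a}\omega_{0,1}=-(\partial_{Q_a}\ln x)\,\dd\zeta=\tfrac{\dd\zeta}{Q_a-\zeta}=\int_{Q_a}^{\infty}\omega_{0,2}(\cdot,\zeta')$, a pure path integral with no second-kind correction. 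Since the $x$-projection of the path $Q_a\to\infty$ is a closed loop based at $x=\infty$, the contour does not move under further $\nabla$-derivatives and iteration is immediate. This dissolves your ``main obstacle'' and is precisely the clean implementation of your closing instinct to keep the poles of $x$ stationary.
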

\begin{proof} We use the properties of  topological recursion under deformations of spectral curves. For $a \in [r]$, we introduce the operator
\[
\nabla_a f = \frac{\dd}{\dd Q_a} f \Big|_{x\,\,\text{fixed}},
\]
which acts on functions or $n$-differentials on the Gaiotto (or CDO) curve. In contrast, we will use $\partial_{Q_a}$ to denote the partial derivative at fixed $\zeta$. Since $\omega_{0,1}(\zeta) = \zeta \,\dd \ln x(\zeta)$ in both the Gaiotto or the CDO case, we compute
\[
\Omega_{a}(\zeta) := \nabla_{Q_a} \omega_{0,1}(\zeta) = -( \partial_{Q_a} \ln x) \dd z = \frac{\dd z}{Q_a - z} = \int_{Q_a}^{\infty}  \omega_{0,2}(z,\cdot),
\]
where we recall that $\omega_{0,2}(\zeta_1,\zeta_2) = \frac{\dd \zeta_1 \dd \zeta_2}{(\zeta_1 - \zeta_2)^2}$.  By \cite[Theorem 5.1]{EOFg}, we have for any $(g,m) \in \mathbb{Z}_{\geq 0}^2$
\begin{equation}
\label{nablaa}
\nabla_{Q_a} \omega_{g,m}(\zeta_1,\ldots,\zeta_m) = \int_{Q_a}^{\infty} \omega_{g,m+ 1}(\cdot,\zeta_1,\ldots,\zeta_m),
\end{equation}
where for $(g,m) = (0,0)$ one should use  the regularised integral of $\omega_{0,1}$ from Definition \ref{01regintdef} on the right hand side.
The $x$-projection of the integration path from $\zeta' = Q_a$ to $\zeta' = \infty$ is a loop based at $x = \infty$. Therefore, deforming $Q_a$ does not act on the integration contour and we can iterate \eqref{nablaa} to obtain for any $m \in \mathbb{Z}_{\geq 0}$
\[
\nabla_{Q_a}^n \omega_{g,m}(\zeta_1,\ldots,\zeta_m) = \int_{Q_a}^{\infty} \cdots \int_{Q_a}^{\infty} \omega_{g,m+n}(\cdot,\zeta_1,\ldots,\zeta_m).
\]
The same formulae hold for $\nabla_{Q_b} - \nabla_{Q_a}$ if we rather integrate from $Q_b$ to $Q_a$. We apply this to $F_g = \omega_{g,m=0}$ for $g \geq 2$. These are analytic functions of $Q_1,\ldots,Q_r$ in the domain where they are pairwise distinct, and by Taylor expansion we find
\[
F_g\big(\mathbf{Q} + \hslash(\mathbf{e}_b - \mathbf{e}_a)\big) = F_g(\mathbf{Q}) + \sum_{n = 1}^{\infty} \frac{\hslash^n}{n!} \int_{Q_b}^{Q_a}\cdots \int_{Q_b}^{Q_a} \omega_{g,n}.
\]
Multiplying by $\hslash^{2g - 2}$ and summing over $g \geq 2$ gives the claim after comparison with Definition~\ref{stablepsi}.
\end{proof}

In view of Propositions~\ref{bispiq}-\ref{bispiq2} and \ref{FGcon} it would be interesting to find the Stokes matrices, the connection coefficients $\widetilde{\psi}_{a}^+(Q_b)$ and perform the resurgence analysis for the solutions of the differential equations of Propositions~\ref{pr:QGai}-\ref{pr:CDOQ}. In particular, this would give analytic solutions that can be used for non-perturbative (with respect to $\hslash$) computations of the wave functions, the bispinor, the correlators and the free energies. Conversely, if we had closed formulae for the free energies, we could obtain information about the connection coefficients $\widetilde{\psi}_{a}^+(Q_b)$.

\subsubsection{In gauge theory} \label{sec:gauge}

The instanton (or non-perturbative) part of the Nekrasov partition function is obtained from the Gaiotto vector by computing the square-norm:
\[
Z_{\text{Nek}} = \langle \Gamma_{\Lambda}\!\ket{\Gamma_{\Lambda}}
\]
The full partition function of the underlying $\mathcal{N} = 2$ supersymmetric gauge theory also has a perturbative part:
\[
\overline{Z}_{\text{Nek}} = \exp\left(\sum_{g \geq 0} \hslash^{2g - 2} F_g^{\text{pert}}\right) \cdot Z_{\text{Nek}}
\]
The expression for the perturbative part can be found in \cite[Equation 3.5 and 3.8]{NO06} --- see also \cite{NY03}, but beware of the opposite global sign compared to  \cite{NO06}. To compare our notations and the notations between the different references: our $r$ is also $r$ in \cite{NY03} but $N$ in \cite{NO06}, our $\hslash$ is their $\hslash^2$ (the transformation was already met in \eqref{evaop}), our $(Q_1,\ldots,Q_r)$ is their $(a_1,\ldots,a_r)$, and comparing the degenerate limit of their Seiberg--Witten curve with our half Seiberg--Witten  curve \eqref{eq:Gaiottocurve}, our $\Lambda^r$ is their $(-1)^{N + 1} \Lambda^N$. With our notations, the expressions in \cite{NO06} yield
\begin{equation}
\label{nonpertZ}
\begin{split}
F_0^{\text{pert}} & =  \sum_{1 \leq a < b \leq r} (Q_a - Q_b)^2\left[\frac{3}{2} + \frac{1}{2}\ln\left(-\frac{\Lambda^r}{(Q_a - Q_b)^2}\right)\right], \\ 
F_1^{\text{pert}} & = -\sum_{1 \leq a < b \leq r} \frac{1}{12}\ln\left(-\frac{\Lambda^r}{(Q_a - Q_b)^2}\right), \\ 
F_g^{\text{pert}} & = - \sum_{1 \leq a < b \leq r} \frac{2B_{2g} (Q_a - Q_b)^{2 - 2g}}{2g(2g - 2)}.   
\end{split}
\end{equation}
Although the Whittaker vectors have been normalised in \eqref{Gformalexp} to have no constant terms, a more natural normalisation from the topological recursion perspective would be
\[
\ket{\widetilde{\Gamma}_{\Lambda}} = \exp\left(\sum_{g \geq 0} \hslash^{g - 1} F_g\right) \ket{\Gamma_{\Lambda}},
\]
where $F_g$ are the free energies of the spectral curve. This would give
\[
\langle \widetilde{\Gamma}_{\Lambda}\!\ket{\widetilde{\Gamma}_{\Lambda}} = \exp\left(\sum_{g \geq 0} \hslash^{g - 1} 2F_g\right) Z_{\text{Nek}}.
\]
We expect that this agrees with the partition function of the supersymmetric gauge theory --- up to the change $\hslash \mapsto \hslash^2$ already met in \eqref{evaop}. In other words, we expect
\begin{equation}
\label{FGpertF}
\forall g \in \mathbb{Z}_{\geq 0} \qquad F_g^{\text{pert}} = 2F_g. 
\end{equation}

\medskip

\subsubsection{Comparisons and conjectures}

\medskip

The half Seiberg--Witten and CDO curves for $r = 2$ already appeared in disguise in \cite{IKT19}, where their topological recursion free energies are computed. In the half Seiberg--Witten case, the results match the expectation from \eqref{nonpertZ}-\eqref{FGpertF}.
\begin{prop}
	For the $r = 2$ half Seiberg--Witten curve \eqref{eq:Gcurve}, we have
	\begin{equation*}
	\begin{split}
	F_0 & = \frac{3}{4}(Q_1-Q_2)^2 + \frac{(Q_1-Q_2)^2}{4} \ln \left(\frac{\Lambda^2}{(Q_1-Q_2)^2}\right),\\ 
	F_1 & = -\frac{1}{24} \ln \left(\frac{\Lambda^2}{(Q_1-Q_2)^2}\right), 
	\end{split}
	\end{equation*}
	and for $g \geq 2$
	\[
	F_{g} = -\frac{B_{2g} (Q_1 - Q_2)^{2 - 2g}}{2g(2g-2)}. 
	\]
	\end{prop}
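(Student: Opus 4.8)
The plan is to treat the stable free energies $F_g$ ($g\ge 2$), which are given by the residue formula \eqref{eq:Fgdef}, separately from $F_0$ and $F_1$, which have their own definitions (see \cite{EORev}).

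For $g\ge 2$ I would first extract the shape of $F_g$ from three invariances of the topological recursion, which by \eqref{TRome} depends on the spectral curve only through $\operatorname{Ram}(S_\Lambda)$, the local involutions $\sigma_\rho$, the bidifferential $\omega_{0,2}$, and $\omega_{0,1}=y\,\dd x=\zeta\,\dd\ln x(\zeta)$. First, the substitution $\Lambda\mapsto\mu\Lambda$ multiplies $x$ by the constant $\mu^2$, hence leaves $\dd\ln x$, $\omega_{0,2}$ and the ramification data unchanged; therefore all $\omega_{g,1}$ and, via \eqref{eq:Fgdef}, $F_g$ are independent of $\Lambda$ when $g\ge 2$. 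Second, the simultaneous shift $Q_a\mapsto Q_a+c$, $\zeta\mapsto\zeta+c$ preserves $x$ and changes $\omega_{0,1}$ by the exact form $c\,\dd\ln x$, which cancels in the kernel denominator $\omega_{0,1}(\zeta)-\omega_{0,1}(\sigma_\rho(\zeta))$ because $x\circ\sigma_\rho=x$ near a ramification point; so $F_g$ depends on $(Q_1,Q_2)$ only through $s:=Q_1-Q_2$. Third, the scaling $(\zeta,Q_1,Q_2,\Lambda)\mapsto(\lambda\zeta,\lambda Q_1,\lambda Q_2,\lambda\Lambda)$ induces an isomorphism of spectral curves under which $\omega_{0,1}$ has weight $1$ and $\omega_{0,2}$ weight $0$; an easy induction on \eqref{TRome} then gives $\omega_{g,n}$ weight $2-2g-n$, so $F_g$ has weight $2-2g$. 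Combining the three, $F_g=c_g\,s^{2-2g}$ for a universal constant $c_g$.

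It then remains to identify $c_g=-\dfrac{B_{2g}}{2g(2g-2)}$. I would do this on the symmetric slice $Q_1=-Q_2=\tfrac12$ (so $s=1$), where $x(\zeta)=(\tfrac14-\zeta^2)^{-1}$ is an even function, the only ramification point relevant to \eqref{TRome} is $\rho_0=0$ with exact involution $\sigma_{\rho_0}(\zeta)=-\zeta$, and $\omega_{0,1}=\dfrac{2\zeta^2\,\dd\zeta}{\frac14-\zeta^2}$, $\omega_{0,2}=\dfrac{\dd\zeta_1\dd\zeta_2}{(\zeta_1-\zeta_2)^2}$. As recalled before the statement, this is, up to a twist and an explicit reparametrisation, precisely the spectral curve whose topological recursion free energies are computed in \cite{IKT19} (equivalently, it is the curve whose expansion at $\zeta=\infty$ produces the generalised BGW $\tau$-function, cf.\ the remark after \cref{pr:intGai2} and \cite{AN24}), and the value of $c_g$ follows from their computation. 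Alternatively, $c_g$ can be obtained in a self-contained way by computing $\omega_{2,1},\omega_{3,1},\dots$ from \eqref{TRome}, evaluating \eqref{eq:Fgdef}, and recognising the Bernoulli pattern; this amounts to a single family of residue computations at $\zeta=0$ whose generating series is of Stirling type.

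For the unstable cases, $F_0$ and $F_1$ are not $\Lambda$-invariant because their definitions in \cite{EORev} involve $y$ itself, and $y(\zeta)=-\Lambda^{-2}\zeta(Q_1-\zeta)(Q_2-\zeta)$ depends on $\Lambda$. For $F_1$ I would apply the genus-$0$ formula of \cite{EORev} expressing it through the Bergman $\tau$-function and a power of $y'$ at the ramification point; every ingredient is elementary on the rational curve $S_\Lambda$, and one obtains $F_1=-\tfrac{1}{24}\ln\bigl(\Lambda^2/(Q_1-Q_2)^2\bigr)$ up to the standard normalisation constant. For $F_0$ I would use the prepotential formula of \cite{EORev} for genus-$0$ curves with poles: with $s=Q_1-Q_2$ the only modulus, $\partial_s F_0$ is an explicit residue of $\omega_{0,1}$, and integrating (fixing the constant by the usual normalisation) yields $F_0=\tfrac34(Q_1-Q_2)^2+\tfrac14(Q_1-Q_2)^2\ln\bigl(\Lambda^2/(Q_1-Q_2)^2\bigr)$. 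As a consistency check, all three formulae coincide with $\tfrac12 F_g^{\mathrm{pert}}$ from \eqref{nonpertZ} specialised to $r=2$, in agreement with the expected identity \eqref{FGpertF} (which we do not use in the proof). The main obstacle is the careful bookkeeping of the additive and multiplicative normalisation constants hidden in the definitions of $F_0$ and $F_1$ and, for $g\ge 2$, carrying out — or correctly importing from \cite{IKT19} — the universal residue computation that produces the Bernoulli numbers; the scaling argument reduces everything else to that single input.
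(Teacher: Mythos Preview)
Your proposal is correct and ultimately rests on the same external input as the paper, namely the computation of the Bessel/Kummer free energies in \cite{IKT19}. The paper, however, takes a shorter route: it simply observes that after an explicit change of variables the $r=2$ half Seiberg--Witten curve \emph{is} the Bessel curve of \cite{IKT19} with parameter $\lambda_0=\tfrac{Q_1-Q_2}{2}$, and then imports all of $F_0,F_1,F_g$ at once from their tables, adjusting only for the sign convention in \eqref{eq:Fgdef} and the explicit $\Lambda$-dependence.

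Your three invariance arguments ($\Lambda$-rescaling, simultaneous shift, homogeneous scaling) are valid and give a clean conceptual explanation of why $F_g$ for $g\ge2$ must be a pure power $c_g(Q_1-Q_2)^{2-2g}$, reducing the problem to a single universal constant. This is genuinely more informative than the paper's one-line identification, and the invariances you isolate are exactly those that make the change of variables to the Bessel curve work. On the other hand, the paper's direct identification handles $F_0$ and $F_1$ simultaneously with the stable range, whereas you have to treat them separately via the \cite{EORev} formulas and track normalisation constants by hand---a place where, as you note, bookkeeping errors are the main risk. Either way the Bernoulli numbers enter only through the imported \cite{IKT19} computation (or an equivalent direct residue calculation), so the two approaches have the same essential content.
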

	\begin{prop} 
	For the $r = 2$ CDO curve \eqref{eq:algcurve}, we have 
	\begin{equation*}
	\begin{split}
	F_0 & = \frac{3}{4}\big((P_1 - P_2)^2 - (Q_1 - P_1)^2 - (Q_1 - P_2)^2\big) + \frac{1}{4}\left[(P_1 - P_2)^2\ln\left(\frac{\Lambda^2}{(P_1 - P_2)^2}\right)\right. \\
	& \quad \left. - (Q_1 - P_1)^2\ln\left(\frac{\Lambda^2}{(Q_1 - P_1)^2}\right) - (Q_1 - P_2)^2\ln\left(\frac{\Lambda^2}{(Q_1 - P_2)^2}\right)\right] \\ 
	F_1 & = \frac{1}{24}\ln\left(\frac{\Lambda^2 (P_1 - P_2)^2}{(Q_1 - P_1)^2(Q_1 - P_2)^2}\right), 
	\end{split} 
	\end{equation*} 
	and for $g \geq 2$:
	\[
	F_{g} = \frac{B_{2g}}{2g(2g-2)}\left((Q_1 + P_1)^{2 -2g} + (Q_1 + P_2)^{2 - 2g} - (P_1 - P_2)^{2 - 2g}\right). 
	\] 
\end{prop}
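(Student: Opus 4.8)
The plan is to identify the $r=2$ CDO spectral curve \eqref{eq:algcurve} with one of the confluent-hypergeometric (Kummer) spectral curves whose topological recursion free energies are computed in \cite{IKT19}, and then translate. For $r=2$ the quantum curves \eqref{Qrdiff0} and \eqref{Qrdiff} are Kummer operators: they have a regular singularity at $x=0$ with characteristic exponents $-\hslash^{-1}P_1$ and $-\hslash^{-1}P_2$, and an irregular singularity at $x=\infty$ whose regular sub-branch has exponent $\hslash^{-1}Q_1$ (see the series solutions \eqref{eqpsiCDO1}--\eqref{eqpsiCDO2}). So the first step is to exhibit explicitly the M\"obius change of the uniformising coordinate $\zeta$, together with the rescaling of $x$ and $y$ and the twist (if any) needed to put $\big(x(\zeta),y(\zeta)\big)$ of \eqref{eq:algcurve} into the normal form of \cite{IKT19}, while recording how $\Lambda$ and the rescaling $\hslash\mapsto\hslash^2$ already met in \eqref{evaop} enter the dictionary.

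With the dictionary fixed, the case $g\ge 2$ follows at once. The $F_g$ with $g\ge 2$ are invariant under M\"obius reparametrisations of $\zeta$, rescalings of $x,y$, and twists, so substituting the dictionary into the closed formula of \cite{IKT19} yields a sum $\sum_{\delta}(\pm)\,\tfrac{B_{2g}}{2g(2g-2)}\,\delta^{\,2-2g}$ indexed by the unordered pairs among the three characteristic exponents $\{-P_1,-P_2,Q_1\}$, i.e.\ $\delta\in\{P_1-P_2,\ Q_1+P_1,\ Q_1+P_2\}$, with the pair sitting at the single regular singular point carrying one sign and the two ``mixed'' pairs the other; this is the stated formula. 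A self-contained alternative avoiding \cite{IKT19} is to differentiate: by the variational formula of \cite{EOFg} (already used in the proof of \cref{FGcon}, cf.\ \eqref{nablaa}), the derivatives of $F_g$ in $P_1,P_2,Q_1,\Lambda$ are integrals of the explicit rational differential $\omega_{g,1}$ on $\mathcal S_\Lambda\simeq\mathbb P^1$ around the poles of $x$; evaluating and integrating back (the integration constants being fixed by the $\Lambda\to0$ degeneration of \cref{rem:limits} or by direct expansion) gives the same answer. One could also feed the all-order $\hslash$-expansion of the connection coefficient $\widetilde\psi_a^+(Q_b)$ --- the Voros coefficient of the Kummer equation, with its known Bernoulli-number expansion --- into \cref{FGcon} and invert the resulting finite-difference relation.

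Finally, $F_0$ and $F_1$ must be treated separately: they are defined by the special prescriptions of \cite{EORev} rather than by \eqref{eq:Fgdef}, and, contrary to the $g\ge2$ ones, are \emph{not} invariant under the twists and rescalings used above. For $F_0$ one invokes special geometry, $\partial_{P_a}F_0=\oint\omega_{0,1}$ and likewise in $Q_1$ and $\Lambda$, integrating the residues of $\omega_{0,1}=y\,\dd x$ at the poles of $x$ (residue $-Q_1$ at $\zeta=Q_1$, plus the leading irregular data at $\zeta=\infty$ recorded in \cref{01regintdef}); the $\Lambda$-independent constant $\tfrac34\big((P_1-P_2)^2-(Q_1-P_1)^2-(Q_1-P_2)^2\big)$ is then pinned down by matching the gauge-theory form \eqref{nonpertZ}. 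For $F_1$ one uses its standard expression in terms of the ramification data (essentially the discriminant of the spectral curve, see \cite{EORev}), evaluated at the ramification points $\rho_\pm=Q_1\pm\sqrt{(Q_1+P_1)(Q_1+P_2)}$ with the values of $\dd x,\dd y$ there, which collapses to the stated logarithms. The main obstacle is precisely this last step: getting the non-logarithmic constant in $F_0$ and the branch-sensitive arguments of the logarithms in $F_0$ and $F_1$ right demands careful bookkeeping of the twist anomalies and of the regularisation conventions for the unstable integrals of \cref{01regintdef} and \cref{02intlem}, and --- if one goes through \cite{IKT19} --- pinning down exactly which curve, and with which normalisation of $\hslash$ and of the square roots, corresponds to \eqref{eq:algcurve}. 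By contrast, once the identification is made the $g\ge2$ part is routine, in agreement with the general-$r$ conjectures of \cref{S54}, now theorems of \cite{Hockproo}.
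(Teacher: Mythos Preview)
Your approach is essentially the same as the paper's: identify the $r=2$ CDO curve with the Kummer curve of \cite{IKT19} and read off the free energies from their tables. The paper does this in one line, giving the explicit dictionary $\lambda_0=\tfrac{P_1-P_2}{2}$, $\lambda_\infty=Q_1-\tfrac{P_1+P_2}{2}$, and citing Tables~1.1 and~1.2 of \cite{IKT19} for all $g\ge 0$ (after accounting for the sign convention in \eqref{eq:Fgdef} and the $\Lambda$-dependence). Your treatment of $F_0$ and $F_1$ via special geometry and the discriminant, and your alternative routes through the variational formula or Voros coefficients, are valid but unnecessary here: \cite{IKT19} already records $F_0$ and $F_1$ for the Kummer curve, so once the identification is made there is nothing left to compute.
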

\begin{proof}
	Up to a change of variables, the half Seiberg--Witten curve coincides with the Bessel curve with the parameter $\lambda_0 = \frac{Q_1-Q_2}{2}$, cf. \cite[Table 1.2 and Section 2.3.5]{IKT19}\footnote{Note that the Bessel curve  is defined as $y^2 = \frac{x + 4 \lambda_0^2}{4 x^2}$ in \cite[Section 2.3.5]{IKT19}, but there is a typo in \cite[Table 1.1]{IKT19} where the  curve appears as $y^2 = \frac{x +  \lambda_0^2}{4 x^2} $.}. Taking into account the extra minus sign in the definition of the $F_g$ in \eqref{eq:Fgdef} and the $\Lambda $ in the half Seiberg--Witten  curve gives the result. Likewise, up to a change of variables, the CDO curve is identified with the Kummer curve with $\lambda_0 = \frac{P_1-P_2}{2}$ and $\lambda_\infty = Q_1 - \frac{P_1+P_2}{2}$, for which the free energy appears in \cite[Table 1.1 and Table 1.2]{IKT19}.
\end{proof}

For $r > 2$, closed formulae for the topological free energy for the Gaiotto or the CDO curves are not available. Based on the expectation \eqref{nonpertZ}-\eqref{FGpertF} for the half Seiberg--Witten  curve, we are led to propose the following conjecture.

\begin{conj}
For the half Seiberg--Witten  curve, we have
\begin{equation*}
\begin{split}
F_0 & = \sum_{1 \leq a < b \leq r} \frac{3}{4}(Q_a - Q_b)^2 + \frac{1}{4}(Q_a - Q_b)^2\ln\left(\frac{\Lambda^{r}}{(Q_a - Q_b)^2}\right), \\ 
F_1 & = -\sum_{1 \leq a < b \leq r} \frac{1}{24}\ln\left(\frac{\Lambda^{r}}{(Q_a - Q_b)^2}\right), 
\end{split}
\end{equation*}
and for $g \geq 2$:
\[
F_g = -\frac{B_{2g}}{2g(2g - 2)} \sum_{1 \leq a < b \leq r - 1} (Q_a - Q_b)^{2 - 2g}. 
\]
\end{conj}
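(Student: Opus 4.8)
The plan is to treat $F_0$, $F_1$ and $\{F_g\}_{g\ge 2}$ separately. For $F_0$ and $F_1$ I would compute directly on the rational parametrisation \eqref{eq:Gcurve}: writing $D(\zeta)=\prod_{a=1}^r(Q_a-\zeta)$ one has the $\Lambda$-free expression $\omega_{0,1}=y\,\dd x=-\zeta\,\dd\ln D(\zeta)$, whose only poles are the simple poles at $\zeta=Q_a$ with residue $-Q_a$ and a residueless double pole at $\zeta=\infty$. The definitions of $F_0$ and $F_1$ from \cite{EORev}, together with the regularisation of \cref{01regintdef}, then reduce to residue computations at these poles and at the $r-1$ simple ramification points, i.e.\ the zeros of $D'$. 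The logarithmic $\Lambda$-dependence is produced entirely by the term $Q_a\ln x(\zeta)=Q_a\bigl(r\ln\Lambda-\ln D(\zeta)\bigr)$ in the regularised integral of $\omega_{0,1}$ near $\zeta=Q_a$. This is a finite calculation which I expect to reproduce the stated formulae for $F_0$ and $F_1$.

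\textbf{Genus $\ge 2$: symmetries.} For $g\ge 2$, three elementary observations about \eqref{eq:Fgdef} already pin down the shape of $F_g$. First, since $\omega_{0,1}=-\zeta\,\dd\ln D(\zeta)$, $\omega_{0,2}$ and the ramification points (the zeros of $D'$) are all independent of $\Lambda$, the topological recursion produces $\omega_{g,n}$ with $2g-2+n>0$, and hence $F_g$ with $g\ge 2$, that do not depend on $\Lambda$. Second, under the simultaneous scaling $(\zeta,Q_1,\dots,Q_r)\mapsto\lambda(\zeta,Q_1,\dots,Q_r)$ one has $\omega_{0,1}\mapsto\lambda\,\omega_{0,1}$, $\omega_{0,2}\mapsto\omega_{0,2}$ and $\operatorname{Ram}(S_\Lambda)\mapsto\lambda\operatorname{Ram}(S_\Lambda)$, so $\omega_{g,n}$ is homogeneous of degree $2-2g-n$ and $F_g$ of degree $2-2g$ in $\mathbf Q$. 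Third, under the translation $(\zeta,\mathbf Q)\mapsto(\zeta+c,\mathbf Q+c\mathbf 1)$ the function $x$ is invariant and $\omega_{0,1}\mapsto\omega_{0,1}+c\,\dd x/x$; since $\dd x/x$ is pulled back from $\mathbb P^1_x$ it takes equal values at $\zeta$ and $\sigma_\rho(\zeta)$, so the recursion kernel in \eqref{TRome} is unchanged and the $\omega_{g,n}$ with $2g-2+n>0$ are unchanged, while $F_g$ is unchanged because $\int_{\sigma_\rho(\zeta)}^{\zeta}\dd x/x$ vanishes by parity under $\sigma_\rho$. Thus for $g\ge 2$, $F_g(\mathbf Q)$ is a permutation-symmetric, translation-invariant, $\Lambda$-independent function, homogeneous of degree $2-2g<0$, holomorphic on the locus where the $Q_a$ are pairwise distinct.

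\textbf{Genus $\ge 2$: rigidity.} The spectral curve is admissible where the $Q_a$ are pairwise distinct (the remaining degenerations of \eqref{eq:Gaiottocurve}, such as colliding ramification points, being mild in the sense of \cite{TRlimits} and keeping $F_g$ finite), so $F_g$ extends to a meromorphic function of $\mathbf Q$ on $\Cplx^r$ whose polar locus is contained in $\bigcup_{a<b}\{Q_a=Q_b\}$. It remains to compute the principal part along the generic point of each such hyperplane. As $Q_a\to Q_b$, exactly one of the $r-1$ ramification points is squeezed between the two merging poles of $x$ at $\zeta=Q_a$ and $\zeta=Q_b$, the other $r-2$ staying bounded away; rescaling around it, the data $(x,\omega_{0,1},\omega_{0,2})$ converges to the rank-two half Seiberg--Witten curve \eqref{eq:Gcurve} with charges $\pm\tfrac12(Q_a-Q_b)$, and, crucially, the combination $\omega_{0,1}(\zeta)-\omega_{0,1}(\sigma_\rho(\zeta))$ and the primitive $\int_{\sigma_\rho(\zeta)}^{\zeta}\omega_{0,1}$ entering \eqref{TRome} and \eqref{eq:Fgdef} lose their dependence on the ``centre of mass'' of the two charges. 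Controlling the topological recursion near a ramification point that decouples from the rest then shows that the polar part of $F_g$ as $Q_a\to Q_b$ equals $F_g$ of that rank-two curve, namely $-\tfrac{B_{2g}}{2g(2g-2)}(Q_a-Q_b)^{2-2g}$ by the rank-two Proposition above, the contributions of the remaining ramification points being holomorphic at $Q_a=Q_b$. By permutation symmetry this holds for every pair, so
\[
\widetilde F_g:=F_g+\frac{B_{2g}}{2g(2g-2)}\sum_{1\le a<b\le r}(Q_a-Q_b)^{2-2g}
\]
is meromorphic on $\Cplx^r$ with empty polar divisor, hence holomorphic on $\Cplx^r$; being homogeneous of negative degree it vanishes identically, which gives the claimed formula, the sum ranging over $1\le a<b\le r$.

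\textbf{Main difficulty.} The delicate step is the last one: making rigorous that, as $Q_a\to Q_b$, the polar part of the free energy \eqref{eq:Fgdef} is captured exactly by the ramification point that decouples. This requires controlling $\omega_{g,1}$ near a ramification point that separates from the others, together with an additivity statement for topological recursion free energies under such degenerations; the limit theory of \cite{TRlimits} is the natural framework, but the colliding-poles degeneration at hand is not a standard nodal one and must be treated directly. An alternative that avoids this point is to run the Givental/intersection-theoretic computation behind \cref{pr:intGai2} in rank $r$: compute $T_\rho(u)$, $R_{\rho,\lambda}(u)$ and the descendant expansion coefficients at the $r-1$ ramification points of \eqref{eq:Gcurve} and insert them into \eqref{eq:Fgdef}; there the obstacle becomes the combinatorics of the $(r-1)\times(r-1)$ $R$-matrix and its off-diagonal couplings, and one would need multiplicativity identities as in \cref{S511} to resum the answer into $\sum_{a<b}(Q_a-Q_b)^{2-2g}$.
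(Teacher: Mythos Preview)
The statement is a \emph{conjecture} in the paper; there is no proof to compare against. The paper motivates it solely by matching to the gauge-theoretic perturbative free energy via the expectation \eqref{FGpertF}, and notes that it was subsequently proved by Hock. Your proposal is therefore an attempted proof of an open (at the time) statement and should be judged on its own merits.

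Your symmetry observations for $g\ge 2$ (independence of $\Lambda$, homogeneity of degree $2-2g$ in $\mathbf Q$, translation invariance under $\mathbf Q\mapsto\mathbf Q+c\mathbf 1$) are correct and constitute real progress: they reduce the problem to identifying the singular behaviour of $F_g(\mathbf Q)$ along the diagonals $\{Q_a=Q_b\}$. The closing step --- a function holomorphic on $\mathbb{C}^r$ and homogeneous of negative degree must vanish --- is also sound, provided holomorphy on all of $\mathbb{C}^r$ has been established.

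The genuine gap is the one you yourself flag. The degeneration $Q_a\to Q_b$ on the half Seiberg--Witten curve sends a simple ramification point into a collision with two merging simple poles of $x$; this is \emph{not} a degeneration treated in \cite{TRlimits}, which handles collisions of ramification points among themselves, not with poles of $x$. To close the argument you must prove (i) that $F_g$ is meromorphic on $\mathbb{C}^r$, i.e.\ that its singularities along $\{Q_a=Q_b\}$ are poles of bounded order rather than essential, and (ii) that the polar part is \emph{exactly} $-\tfrac{B_{2g}}{2g(2g-2)}(Q_a-Q_b)^{2-2g}$, with no subleading poles. Your heuristic that the squeezed ramification point ``decouples'' to a rank-two curve is plausible, but the other $r-2$ ramification points still feed into $\omega_{g,1}$ at the decoupling point through $\omega_{0,2}$-propagators in the recursion, and controlling this requires a quantitative estimate. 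Additivity of $F_g$ under this degeneration is precisely what must be proved, not assumed. Note also that removing the polar part only at generic points of each hyperplane leaves the behaviour at deeper strata (three or more $Q_a$ coinciding) to be handled, though Hartogs would close this if (i) holds.

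Two smaller remarks. First, your $F_0,F_1$ paragraph is a plan, not a computation; the definitions in \cite{EORev} involve delicate regularisations and the answer is not automatic. Second, your argument produces the sum over all pairs $1\le a<b\le r$, consistent with \eqref{nonpertZ} and the $r=2$ case; the upper bound $r-1$ in the displayed $g\ge 2$ formula of the conjecture is presumably a typo.
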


\newpage

\bibliographystyle{alpha}
\bibliography{BibliographyTRonSW}

\end{document}